\documentclass[11pt]{amsart}
\usepackage{fullpage}
\usepackage[foot]{amsaddr}
\usepackage{microtype}
\usepackage[OT1]{fontenc}
\usepackage{eulervm}
\usepackage[tt=false]{libertine} %% tt is ugly
\usepackage{bbold}
\usepackage{amsmath}
\usepackage{amssymb}
\usepackage{amsthm}
\usepackage{thmtools} % Adds cases* and dcases environments
\usepackage[linesnumbered,boxed,ruled,vlined]{algorithm2e}
\usepackage{algpseudocode}
\usepackage{enumitem}
\usepackage{bm}
\usepackage{xifthen}
\usepackage{xspace}
\usepackage{tikz}
\usetikzlibrary{arrows.meta,positioning,calc,fit,backgrounds}

\usepackage[margin=1cm]{caption} % Adds an addition margin on either side of figure captions. This must come before "\usepackage{subfig}" (otherwise there are errors).
\usepackage{subfig}

\usepackage[thinlines]{easytable}

\usepackage[bookmarks=true,hypertexnames=false,pagebackref]{hyperref}
\hypersetup{colorlinks=true, citecolor=blue, linkcolor=red, urlcolor=blue}

\usepackage{pgfplots}
\pgfplotsset{compat=1.16}
\usepackage{tikz}
\usetikzlibrary{arrows,arrows.meta,backgrounds,calc,fit,decorations.pathreplacing,decorations.markings,shapes.geometric}

\tikzstyle{internal} = [draw, fill, shape=circle]
\tikzstyle{external} = [shape=circle]
\tikzstyle{square}   = [draw, fill, rectangle]
\tikzstyle{triangle} = [draw, fill, regular polygon, regular polygon sides=3, inner sep=3pt]
\tikzstyle{pentagon} = [draw, fill, regular polygon, regular polygon sides=5, inner sep=2pt, minimum size=14pt]
\tikzset{every fit/.append style=text badly centered}

\usetikzlibrary{positioning,chains,fit,shapes,calc}
\usetikzlibrary{trees}
\usetikzlibrary{decorations.pathreplacing}
\usetikzlibrary{decorations.pathmorphing}
\usetikzlibrary{decorations.markings}
\tikzset{>=latex} % arrow tips
\usepackage{ifthen}

\usepackage{cleveref}

\usepackage[textsize=tiny]{todonotes}

\usepackage[normalem]{ulem}

\usepackage{mleftright}

\usepackage{cool}
\Style{DSymb={\mathrm d},DShorten=true,IntegrateDifferentialDSymb=\mathrm{d}}

\newcommand{\tp}[1]{{\left( #1 \right)}}

\newcommand{\Ex}{\mathop{\mathbb{{}E}}\nolimits}
\renewcommand{\Pr}{\mathop{\mathrm{Pr}}\nolimits}

\def\*#1{\mathbf{#1}}
\def\+#1{\mathcal{#1}}
\def\-#1{\mathrm{#1}}
\def\=#1{\mathbb{#1}}
\def\^#1{\mathbb{#1}}

\newcommand{\abs}[1]{\ensuremath{\left\vert#1\right\vert}}

\newcommand{\eps}{\varepsilon}

\newcommand{\Var}[2]{\ensuremath{\textnormal{Var}_{#1}\left(#2\right)}}

\newcommand{\cl}[1]{\operatorname{cl}\left(#1\right)}

\newcommand{\dist}{\operatorname{dist}}

\newcommand{\defeq}{:=}

\newcommand{\NP}{\textnormal{\textbf{NP}}}
\newcommand{\BIS}{\#\textnormal{\textbf{BIS}}}

\newcommand{\DTV}[2]{\-D_{\mathrm{TV}}\left({#1},{#2}\right)}

\newtheorem{theorem}{Theorem}

\newtheorem{lemma}[theorem]{Lemma}
\newtheorem{claim}[theorem]{Claim}
\newtheorem{observation}[theorem]{Observation}
\newtheorem{proposition}[theorem]{Proposition}
\newtheorem{corollary}[theorem]{Corollary}
\theoremstyle{definition}
\newtheorem{condition}[theorem]{Condition}
\newtheorem{definition}[theorem]{Definition}

\theoremstyle{remark}
\newtheorem*{remark}{Remark}

\crefname{theorem}{Theorem}{Theorems}
\crefname{observation}{Observation}{Observations}
\crefname{claim}{Claim}{Claims}
\crefname{condition}{Condition}{Conditions}
\crefname{algorithm}{Algorithm}{Algorithms}
\crefname{property}{Property}{Properties}
\crefname{example}{Example}{Examples}
\crefname{fact}{Fact}{Facts}
\crefname{lemma}{Lemma}{Lemmas}
\crefname{corollary}{Corollary}{Corollaries}
\crefname{definition}{Definition}{Definitions}
\crefname{remark}{Remark}{Remarks}
\crefname{proposition}{Proposition}{Propositions}
\crefname{equation}{equation}{equations}
\crefname{enumi}{Case}{Case}
\creflabelformat{enumi}{(#2#1#3)}

\definecolor{HGcolor}{RGB}{255,50,50}

\makeatletter
\def\prob#1#2#3{\goodbreak\begin{list}{}{\labelwidth\z@ \itemindent-\leftmargin
      \itemsep\z@  \topsep6\p@\@plus6\p@
      \let\makelabel\descriptionlabel}
  \item[\textbf{Name}]#1
  \item[\textbf{Instance}]#2
  \item[\textbf{Output}]#3
  \end{list}}
\makeatother

\makeatletter
\providecommand\@dotsep{5}
\def\listtodoname{Todo list}
\def\listoftodos{\@starttoc{tdo}\listtodoname}
\makeatother

\usepackage{nicefrac,comment}

\usepackage{silence}
\newboolean{doubleblind}
\setboolean{doubleblind}{false}

\begin{document}

\title{Fast counting and sampling for ferromagnetic two-spin systems}

\author{Weiming Feng}
\author{Heng Guo}
\author{Yichun Yang}
\address[Weiming Feng]{School of Computing and Data Science, The University of Hong Kong, Hong Kong, China}
\address[Heng Guo]{School of Informatics, University of Edinburgh, Informatics Forum, Edinburgh, EH8 9AB, United Kingdom}
\address[Yichun Yang]{School of Computer Science, Beijing Institute of Technology, Beijing, China}

\begin{abstract}
  We introduce two new models equivalent to ferromagnetic two-spin systems: a weighted subgraph model and a random cluster type model.
  Using these new connections, we obtain an efficient sampling algorithm and a new randomised algorithm that efficiently approximates the partition function of ferromagnetic two-spin systems in certain parameter regimes. 
  No efficient sampling algorithms are known before in this regime,
  and our new estimation algorithm runs in near-quadratic time for bounded degree graphs and in polynomial time for general graphs,
  improving upon the previous algorithm of Guo, Liu, and Lu (2020).
\end{abstract}

\maketitle

\section{Introduction}

Spin systems model nearest neighbour interactions.
They originate from statistical physics, and are now widely studied in other areas including probability theory, machine learning, and theoretical computer science, sometimes under different names such as Markov random fields.
The most important computation task in their study is to estimate the so-called partition function,
which is a central quantity linking to many of the system's macroscopic properties.

A particularly important case is when the number of spin is 2. 
These two-state spin systems (or 2-spin systems for short) show drastically different behaviours depending on whether the interaction is repulsive (anti-ferromagnetic) or attractive (ferromagnetic).
For anti-ferromagnetic systems, great progress has been made regarding partition function estimation --
it is \NP-hard beyond the tree-uniqueness threshold \cite{Sly10,SS14,GSV16}, and has an efficient algorithm otherwise \cite{weitz2006counting,SST14,LLY13,CLV23,CCYZ25}. 
The complexity landscape for the anti-ferromagnetic case is completely mapped out.

On the other hand, ferromagnetic systems have demonstrated a more involved complexity landscape.
Here the most well-known system is the Ising model, and the pioneer work of Jerrum and Sinclair \cite{JerrumS93} showed that efficient algorithms for the ferromagnetic Ising model exist regardless of the tree-uniqueness threshold.
Their approach is direct Markov chain Monte Carlo (MCMC), and this approach was taken further by Goldberg, Jerrum, and Paterson \cite{GoldbergJP03} for general ferromagnetic 2-spin systems under certain conditions.
Later, Guo and Lu \cite{GuoL18} applied Weitz's correlation decay method \cite{weitz2006counting} to push the algorithmic threshold even further.
In certain regimes, their algorithm is almost optimal, up to a threshold beyond which the problem becomes \BIS{}-hard \cite{LiuLZ14}.
Here \BIS{} stands for the problem of counting independent sets in bipartite graphs, which is conjectured to be intractable to approximate \cite{DGGJ04}.
However, Guo and Lu's algorithm does not extend to the case where both spins are attracting.
This is not a barrier to algorithms though, as subsequently Guo, Liu, and Lu \cite{GLL20} successfully adapted Barvinok's method \cite{Bar16,PR17} to this setting.
These works \cite{JerrumS93,GoldbergJP03,GuoL18,GLL20} constitute the current algorithmic frontier for ferromagnetic 2-spin systems,
but there is still a large gap between known algorithmic and hardness thresholds.

A main drawback of the correlation decay method and Barvinok's method is that the running time of the resulting algorithm is often a polynomial with a large exponent. 
In particular, the exponent for the algorithm in \cite{GLL20} scales logarithmically in the maximum degree, and it runs within polynomial-time only for bounded degree graphs.
In this paper, we address this issue by giving a new algorithm that runs in near-quadratic time for bounded degree graphs, and in polynomial-time when the degree is unbounded.
We note that a similar speedup for \cite{GuoL18} was obtained earlier this year \cite{feng2026rapid},
which relies on a new analysis for direct MCMC.
In contrast, our result is quite different.
We obtain an equivalent (weighted subgraph) formulation of the problem first, and then apply MCMC to the new setting. 
Our new method applies to the regime in which both spin types are attractive, a setting where the techniques of \cite{GuoL18,feng2026rapid} encounter some fundamental barriers.

Another drawback of \cite{GLL20} is that due to lack of self-reducibility, no efficient sampling algorithm is known.
We also address this via introducing an intermediate random cluster type model so that we can transform a sample from the weighted subgraph model to a sample of spin systems.

Next, we formally define the problems and state our result in more detail.

\subsection{Sampling and counting for ferromagnetic two-spin systems}

Let $G=(V,E)$ be a graph with $n=|V|$ vertices and $m=|E|$ edges. A two-spin system on $G$ is defined by an interaction matrix $A$ and an external field vector $B$ as follows:
\[
    A = \begin{pmatrix}
        \beta & 1 \\
        1 & \gamma
    \end{pmatrix}, \qquad
    B = \begin{pmatrix}
        \lambda \\
        1
    \end{pmatrix}.
\]
For any configuration $\sigma: V \to \{0,1\}$, its weight is defined by
\begin{align*}
  w^{\text{spin}}_{G}(\sigma; \beta, \gamma, \lambda) = \prod_{(u,v) \in E} A_{\sigma_u, \sigma_v} \prod_{v\in V} B_{\sigma_v} = \beta^{m_0(\sigma)} \gamma^{m_1(\sigma)} \lambda^{n_0(\sigma)},
\end{align*}
where $m_0(\sigma)$ and $m_1(\sigma)$ are the numbers of edges with both endpoints in state $0$ and in state $1$, respectively, and $n_0(\sigma)$ is the number of vertices in state $0$.
The associated Gibbs distribution is given by $\mu^{\text{spin}}(\sigma) \propto w^{\text{spin}}_{G}(\sigma; \beta, \gamma, \lambda)$.
The partition function, namely the normalising factor for the Gibbs distribution, is
\begin{align}  \label{eqn:Z-spin}
  Z^{\text{spin}}_G(\beta, \gamma, \lambda) \defeq \sum_{\sigma: V \to \{0,1\}} w^{\text{spin}}_{G}(\sigma; \beta, \gamma, \lambda).
\end{align}
By symmetry between the two states, we will assume without loss of generality throughout the paper that $\beta \leq \gamma$.
The 2-spin system is said to be ferromagnetic if $\beta \gamma > 1$ (and anti-ferromagnetic if $\beta\gamma<1$).

Consider a ferromagnetic two-spin system with parameters $\beta$, $\gamma$, and $\lambda$ such that $\beta\le\gamma$. 
Goldberg, Jerrum, and Paterson \cite{GoldbergJP03} first showed that a fully polynomial-time randomised approximation scheme (FPRAS) for $Z^{\text{spin}}_G$ exists if $\lambda\le\frac{\gamma}{\beta}$, by extending the MCMC approach of Jerrum and Sinclair \cite{JerrumS93}.
Later, Guo and Lu \cite{GuoL18} identified a threshold $\lambda_c=\lambda_c(\beta,\gamma)\defeq\left(\frac{\gamma}{\beta}\right)^{\frac{\sqrt{\beta\gamma}}{\sqrt{\beta\gamma}-1}}$,
and introduced an fully polynomial-time approximation scheme (FPTAS)\footnote{Note that the difference between FPTAS and FPRAS is that FPTAS is deterministic.} if $\lambda<\lambda_c$ with the further assumption of $\beta\le 1$,
based on establishing strong spatial mixing (SSM) and the correlation decay method \cite{weitz2006counting}.
This threshold is almost optimal as Liu, Lu, and Zhang \cite{LiuLZ14} showed that if $\lambda>\left(\frac{\gamma}{\beta}\right)^{\left\lfloor\frac{2\sqrt{\beta\gamma}}{\sqrt{\beta\gamma}-1}\right\rfloor/2+1}$, the problem becomes \BIS{}-hard,
where approximating \BIS{} is conjectured to be intractable \cite{DGGJ04}.

On the other hand, the regime where $\beta>1$ is still poorly understood.
Here SSM does not hold any more for any constant $\lambda>0$.\footnote{Roughly speaking, this is because there is no degree bound. In fact, even the so-called weak spatial mixing does not hold. One can verify this by considering a regular tree of sufficiently large degrees.} 
Thus the approach of \cite{GuoL18} (as well as the recent improvement of \cite{feng2026rapid}) cannot be extended to this case.
However, the lack of correlation decay does not block efficient algorithms, which makes the picture much less clear.
Guo, Liu, and Lu \cite{GLL20} used Barvinok's method \cite{Bar16} to identify yet another threshold:
\begin{align}\label{eq:lambda-star}
    \lambda^\star = \lambda^\star(\beta, \gamma) \defeq \left(\frac{\gamma}{\beta}\right)^{\frac{\pi}{2\tan^{-1}\sqrt{\beta\gamma-1}}},
\end{align}
and introduced an FPTAS for bounded degree graphs if $\lambda<\lambda^{\star}$.\footnote{It can be verified that $\lambda^{\star}<\lambda_c$ \cite{GLL20}.}

Among all ferromagnetic 2-spin systems, the most well-understood special case is the Ising model (namely $\beta=\gamma > 1$).
There are rich structures for ferromagnetic Ising models, leading to various equivalent formulations such as the even-subgraph model \cite{Wae41} (also known as high-temperature expansion) and the random cluster model \cite{FK72}.
Using these connections, polynomial-time sampling and approximate counting algorithms are obtained \cite{JerrumS93,GuoJ18}.
Furthermore, in the presence of external field $\lambda \neq 1$, there are a near-linear time sampler and a near-quadratic time FPRAS \cite{CZ23}.

Part of the reason that the parameter regime $\gamma > \beta > 1$ is not very well-understood is the lack of these equivalent formulations.
Our main conceptual contribution is to fill in this gap.
We introduced new weighted subgraph models and random cluster type models, and generalise previous equivalences to the whole $\gamma > \beta > 1$ regime.
These new connections help us find fast approximate counting and sampling algorithms.
We believe they will be useful in fully resolving the approximate counting and sampling complexity of ferromagnetic $2$-spin systems in the future.

Our first result is a fast sampling algorithm for $\lambda<\lambda^{\star}$.
In this regime, no efficient sampling algorithm is previously known.
The FPTAS of \cite{GLL20} cannot be transformed to a sampler due to lack of self-reducibility (basically, by the same reason why SSM fails).

\begin{theorem}  \label{thm:sampling}
  Let $\beta,\gamma,\lambda > 0$ be parameters such that $\gamma > \beta > 1$ and $\lambda < \lambda^\star$.
  There exists an algorithm that, given any $\epsilon > 0$ and any graph $G=(V,E)$, outputs a random sample $\sigma\in\{0,1\}^V$ in time $O\left(\Delta^C n \log \frac{n}{\epsilon}\right)$ that is at most $\epsilon$ away from $\mu^{\text{spin}}$ in total variation distance, where $n = |V|$ is the number of vertices, $\Delta$ is the maximum degree of the graph, and $C = C(\beta, \gamma, \lambda)$ is a constant.
\end{theorem}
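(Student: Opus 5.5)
The plan follows the route announced in the introduction: rewrite the ferromagnetic two-spin system with $\gamma>\beta>1$ as a weighted subgraph model, sample that model with a rapidly mixing Markov chain, and convert the subgraph sample into a spin configuration through an intermediate random-cluster-type joint model.

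\textbf{Step 1: a weighted subgraph expansion.} Write $\beta=\sqrt{\beta\gamma}\,b$ and $\gamma=\sqrt{\beta\gamma}\,b^{-1}$ with $b=\sqrt{\beta/\gamma}<1$. The edge interaction then factorises, up to the global factor $(\sqrt{\beta\gamma})^m$, as $A_{\sigma_u\sigma_v}=\sqrt{\beta\gamma}\,\tilde A_{\sigma_u\sigma_v}$. I would expand $\tilde A$ in a basis chosen so that each edge factor equals $1+t\,f(\sigma_u)f(\sigma_v)$ times vertex-only factors, where $t=\tan^2$-type quantity related to $\sqrt{\beta\gamma-1}$. This is a high-temperature expansion with a complex or rotated character $f$. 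Summing over $\sigma$ then turns $Z^{\text{spin}}_G$ into a sum over edge subsets $S\subseteq E$. The weight of $S$ is $t^{|S|}\prod_v g(\deg_S(v))$, where $g(k)=\sum_{s}B_s h_s f(s)^k$.

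The condition $\lambda<\lambda^\star$, with the $\frac{\pi}{2\tan^{-1}\sqrt{\beta\gamma-1}}$ exponent, should be exactly what makes all $g(k)$ nonnegative. Writing $f$ through an angle $\theta=\tan^{-1}\sqrt{\beta\gamma-1}$, one has $g(k)\propto \lambda\cos(k\theta)\cdot(\dots)+\cos((k-\cdot)\theta)$-like terms. Positivity for every $k$ up to $\pi/(2\theta)$, combined with the $(\gamma/\beta)$ power, gives $\lambda^\star$. I would verify this by direct trigonometric computation, and would also record a ratio bound $g(k+1)/g(k)\le c$ together with a lower bound relative to $g(k)$.

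\textbf{Step 2: a Markov chain on subgraphs.} Run a Jerrum--Sinclair style chain on $S$: add or delete a single edge, with a Metropolis filter. Since the weights are local, each step costs $O(1)$. I would prove a spectral gap bound, or better a modified log-Sobolev or coupling bound giving mixing time $O(\Delta^{C}m\log(n/\epsilon))$. The natural route is either path coupling or spectral independence via the edge-weight ratios. These ratios are bounded by $t\,g(k+1)/g(k)$, and I expect them to be uniformly $<1$-ish when $\lambda<\lambda^\star$, which would give a contractive chain with degree-dependent constants.

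To hit $O(\Delta^C n\log(n/\epsilon))$ rather than a bound in $m$, I would note that $m\le \Delta n$ and absorb this into $\Delta^C$.

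\textbf{Step 3: a random-cluster-type joint model.} Define a joint distribution on pairs $(\sigma,S)$ whose $S$-marginal is the subgraph model from Step 1 and whose $\sigma$-marginal is $\mu^{\text{spin}}$. To make this work, split each edge weight $A_{\sigma_u\sigma_v}$ into a nonnegative sum of a "consistent" part and a "free" part. Given $S$, the conditional law of $\sigma$ should then factorise over connected components, or over vertices, so that $\sigma$ can be sampled exactly in linear time.

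This requires every coefficient to be nonnegative, and checking that is again where $\lambda<\lambda^\star$ enters. If a direct coupling between $S$ and $\sigma$ fails, I would pass through an intermediate cluster model: sample $S$, then sample a cluster configuration given $S$, then sample $\sigma$ given the clusters, with each conditional being product-form. Total variation error is then inherited from Step 2 alone.

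\textbf{Main obstacle.} The hardest part is Step 3, namely finding the intermediate model with nonnegative weights whose conditionals are explicitly samplable. The subgraph expansion involves oscillating (trigonometric) terms, so a naive joint measure has signed weights. I would first try a per-vertex decomposition, writing $g(\deg_S(v))$ as a positive mixture over local spin choices, and check its positivity using the same inequality that defines $\lambda^\star$.
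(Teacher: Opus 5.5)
Your Step 2 is where the proof actually lives, and as written it does not go through.

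\textbf{Where $\lambda^\star$ enters, and why local ratios do not give mixing.} You have misplaced where $\lambda^\star$ enters. The right expansion is real, not trigonometric. The holographic transformation by $\begin{pmatrix}1&-\rho\\ 1&1\end{pmatrix}$, with $\rho=\frac{\beta-1}{\gamma-1}$, turns each edge into the weighted equality $[1,0,\theta]$ (times $\frac{1}{1-\rho\theta}$), where $\theta=\frac{(\gamma-1)^2}{\beta\gamma-1}$. It turns each vertex into $g(k)=1+\lambda(-\rho)^k$. These weights are positive as soon as $\rho\lambda<1$, which is strictly weaker than $\lambda<\lambda^\star$ (one checks $\rho\lambda^\star<1$). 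So positivity is not the source of the threshold, and there are no signed weights to fight.

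The local-ratio route to mixing also fails. The parameter $\theta$ can exceed $1$ (e.g.\ $\beta=2,\gamma=10$ gives $\theta=81/19$), and $g(k+1)/g(k)$ alternates above and below $1$ for small $k$. Hence flipping one edge at $u$ changes the conditional law of an edge $e\ni u$ by an amount that is constant in $\Delta$ whenever the $S$-degree of $u$ is small. A Dobrushin or path-coupling estimate then sums $2(\Delta-1)$ constant influences and cannot contract for large $\Delta$. These degree-dependent penalties are also exactly what block the known coupling arguments for the Ising subgraph world.

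\textbf{The missing idea: spectral independence via stability.} Every pinning reduces to a weighted subgraph instance on a subgraph with fields $\lambda(-\rho)^k\in[-\rho\lambda,\lambda]$. So it suffices to show that $Z(\boldsymbol z)=\sum_{S}\prod_v\bigl(1+\lambda_v(-\rho)^{\deg_S(v)}\bigr)\prod_{e\in S}z_e$ has no zeros on $\prod_e U_\delta(\theta)$ for all such fields, with $\delta$ independent of $\Delta$. The Chen--Liu--Vigoda stability theorem then gives $\frac{2}{b\delta^2}$-spectral independence for every conditional. Combined with the $\Delta$-free marginal lower bound $b$, their entropy-factorisation theorem yields $O(\Delta^C n\log\frac{n}{\epsilon})$.

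The zero-freeness is proved by Asano contraction from the vertex stars.
\begin{itemize}
\item In variables $q=1/z$ the star polynomial is $\prod_i(q_i+1)+\lambda_v\prod_i(q_i-\rho)$. It is nonzero when every $q_i$ avoids $\Gamma_\epsilon=\psi^{-1}\bigl(D(c^\star,r^\star-\epsilon)\bigr)$, where $\psi(q)=-\frac{1+q}{q-\rho}$ and $D(c^\star,r^\star)$ is the Guo--Liu--Lu disk.
\item This is exactly where $\lambda<\lambda^\star$ is used, through $(-1)^{d+1}K^d\cap[0,\lambda]=\emptyset$ for $d\ge 2$, with $K=\mathbb{C}\setminus D(c^\star,r^\star)$, strengthened to a strip around $[-\rho\lambda,\lambda]$.
\item Each contraction step needs $1/\theta\notin-\Gamma_\epsilon^2$. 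Via Grace--Szeg\H{o}--Walsh this reduces to the roots of $\beta x^2+2x+\gamma$ lying on the boundary of that disk.
\item Degree-one vertices with positive field violate the avoidance statement, so they must first be pruned. The Guo--Lu potential keeps the resulting complex fields in a strip.
\end{itemize}
None of this is in your plan.

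\textbf{Step 3 is the easy part.} It involves no threshold, contrary to your expectation. The Edwards--Sokal split $[\beta,1,\gamma]=[1,1,1]+[\beta-1,0,\gamma-1]$ gives cluster weights $\prod_C\bigl(\lambda^{|V(C)|}(\beta-1)^{|E(C)|}+(\gamma-1)^{|E(C)|}\bigr)$. A subgraph sample becomes a cluster sample by adding each edge of $E\setminus S$ independently with probability $\rho\theta$, which one more holographic computation verifies.
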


To achieve \Cref{thm:sampling}, we use Markov chains. 
However, directly MCMC is not efficient here.
To see this, for any constants $\beta>1$, $\gamma>1$, and $\lambda>0$,
consider a sufficiently large complete graph.
It is easy to see that the all-$1$ and all-$0$ configurations contribute weights exponentially larger than mixed configurations,
and therefore the mixing time for any locally defined Markov chains such as the Glauber dynamics is exponentially large.
This is also why the recent work of \cite{feng2026rapid}, which is a direct MCMC, is restricted to the case of $\beta\le1<\gamma$.

Our solution to bypass this issue is via the aforementioned new connections.
We consider an equivalent weighted subgraph formulation, which we will explain in \Cref{sec:Holant} shortly.
We show that Glauber dynamics for the weighted subgraphs mixes within the time bound of \Cref{thm:sampling}.
However, there appears to be no easy way to transform this subgraph sample to a spin sample.
We instead introduce an intermediate random cluster type model.
For the three models, we not only establish equivalence among their partition functions, but also show new couplings among their Gibbs distributions.
With the couplings, we first transform the subgraph sample to a random cluster sample, and then to a spin sample, to establish \Cref{thm:sampling}.
Details can be found in \Cref{sec:RC}.

Using the sampler above and a simple $\lambda$-annealing reduction, we also obtain the following FPRAS.

\begin{theorem}\label{thm:fpras-spin}
Let $\beta,\gamma,\lambda > 0$ be parameters such that $\gamma > \beta > 1$ and $\lambda < \lambda^\star$.
There exists an FPRAS that, given any $\epsilon > 0$ and any graph $G=(V,E)$, outputs a random quantity $\widehat{Z}$ in time $O\left(\Delta^C \cdot \frac{n^2}{\epsilon^2} \cdot \log^2 \frac{n}{\epsilon}\right)$ such that $\widehat{Z} \in [Z(1-\epsilon), Z(1+\epsilon)]$ with probability at least $\frac{2}{3}$, where $Z = Z^{\text{spin}}_G(\beta, \gamma, \lambda)$ is the partition function, $n = |V|$ is the number of vertices, $\Delta$ is the maximum degree of the graph, and $C = C(\beta, \gamma, \lambda)$ is a constant.
\end{theorem}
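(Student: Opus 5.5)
The plan is to reduce counting to sampling through a telescoping product over a sequence of external fields, using \Cref{thm:sampling} as the sampling subroutine.

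\textbf{Endpoint.} Write $Z(\lambda') = Z^{\text{spin}}_G(\beta,\gamma,\lambda')$. At $\lambda'=0$ only the all-$1$ configuration survives, so $Z(0)=\gamma^m$. We cannot start the product at $\lambda'=0$, so we start at a tiny $\lambda_0>0$ with $\lambda_0 \le \epsilon/(10n)$. Then
\[
Z(\lambda_0)=\sum_\sigma \beta^{m_0}\gamma^{m_1}\lambda_0^{n_0} = \gamma^m\,(1\pm O(\epsilon)),
\]
because every $\sigma$ with $n_0\ge 1$ must be weighed against the all-$1$ term. This comparison is where I expect the main difficulty. Since $\beta<\gamma$, the factor $\beta^{m_0}\gamma^{m_1}$ can exceed $\gamma^m$ only through the cut edges having weight $1$, which is an issue when $m_0$ is large. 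To avoid a direct bound, I would instead exploit monotonicity in $\lambda'$ as follows. Let $N$ be the number of $0$-spins. Then $\frac{d}{d\lambda'}\log Z = \Ex[N]/\lambda'$, and also $Z(\lambda_0)/Z(0) = 1/\Pr_{\lambda_0}[\sigma\equiv 1]$. So it suffices to show that $\Pr_{\lambda_0}[\sigma \equiv 1]\ge 1/2$ for some $\lambda_0 = 1/\mathrm{poly}(n)$, and then to estimate this probability itself with the sampler. To bound it, map each $\sigma$ to the all-$1$ configuration. The weight ratio is at most $\lambda_0^{n_0}(\beta/\gamma)^{m_0}\gamma^{(\text{cut})}\le (\lambda_0\gamma^{\Delta})^{n_0}$. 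Summing over the at most $\binom{n}{k}$ configurations with $n_0 = k$ gives a total of at most $\sum_k (n\lambda_0\gamma^\Delta)^k$. Choosing $\lambda_0 = \gamma^{-\Delta}/(4n)$ makes this at most $1/3$. The cost is that the schedule length picks up a factor $\Delta\log\gamma$, which is absorbed into $\Delta^C$.

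\textbf{Schedule.} Take $\lambda_i = \lambda_0 (1+1/n)^i$ for $i=0,\dots,\ell$, with the last step truncated so that $\lambda_\ell=\lambda$. Then $\ell = O(n\log(n\gamma^\Delta/\lambda_0)) = O(n(\Delta+\log n))$. Every $\lambda_i$ satisfies $\lambda_i\le\lambda<\lambda^\star$, so \Cref{thm:sampling} applies at every intermediate field. I would check that the constant $C$ in \Cref{thm:sampling} is uniform over $\lambda'\in(0,\lambda]$; this should hold, since the analysis presumably depends on $\lambda'$ only through its distance from $\lambda^\star$. We telescope
\[
Z(\lambda)=Z(\lambda_0)\prod_i \frac{Z(\lambda_{i+1})}{Z(\lambda_i)},\qquad
\frac{Z(\lambda_{i+1})}{Z(\lambda_i)}=\Ex_{\mu_{\lambda_i}}\!\left[(\lambda_{i+1}/\lambda_i)^{n_0(\sigma)}\right].
\]
Each random variable $W_i=(\lambda_{i+1}/\lambda_i)^{n_0}$ lies in $[1,e]$, so its relative variance is $O(1)$.

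\textbf{Estimation and running time.} I would use the standard product estimator of Dyer--Frieze / \v{S}tefankovi\v{c}--Vempala--Vigoda. Draw $s=O(\ell/\epsilon^2)$ samples per level, each from \Cref{thm:sampling} with accuracy $\epsilon/(C'\ell s)$ in total variation. By Chebyshev's inequality on the product of the $\ell+1$ independent means (including the estimate of $\Pr[\sigma\equiv1]$ at $\lambda_0$), the output lies in $(1\pm\epsilon)Z$ with probability at least $3/4$, and the total variation errors cost a further $1/12$. The total time is $\ell\cdot s\cdot O(\Delta^C n\log(n\ell s/\epsilon))$. With $\ell=O(n\,\mathrm{poly}(\Delta)\log n)$, the $\ell^2$ factor appears too large. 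I therefore expect the real schedule must use a geometric step of order $1/\sqrt{n}$ or coarser, together with variance control, so that $\ell\cdot s = O(n\log n/\epsilon^2)$ up to $\Delta$ factors. This follows the analysis of annealing with $W_i$ of bounded relative variance $1+O(n\delta^2)$: with step $\delta=1/\sqrt n$ we get $\ell=\tilde O(\sqrt n)$ and $s=O(\ell/\epsilon^2)$, so $\ell s=\tilde O(n/\epsilon^2)$. Multiplying by the per-sample cost $\Delta^C n\log(n/\epsilon)$ gives the claimed $O(\Delta^C n^2\epsilon^{-2}\log^2(n/\epsilon))$ bound.
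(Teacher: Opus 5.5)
\textbf{Gap in the variance analysis.} Your scheme matches the paper's: anneal geometrically in $\lambda$, sample each $\mu_{\lambda_i}$ with \Cref{thm:sampling}, and anchor at a tiny $\lambda_0$ where $Z\approx\gamma^m$. But the variance analysis, which is what delivers the claimed $n^2$ running time, has a genuine gap.

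With step $1+1/n$ you treat the $\ell$ levels as contributing independent $O(1)$ relative variances. That forces $s=O(\ell/\epsilon^2)$ samples per level and $\Theta(\ell^2/\epsilon^2)$ samples overall, i.e.\ about $n^3$ time, as you noticed.

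Your remedy (step $\delta=1/\sqrt n$ with relative variance $1+O(n\delta^2)$) is unsupported. Note that $\log\bigl(\Ex[W_i^2]/\Ex[W_i]^2\bigr)$ is a second difference of $t\mapsto\log Z(e^t)$, whose second derivative is $\mathrm{Var}_{\mu_{e^t}}(n_0)$. So your claim amounts to $\mathrm{Var}(n_0)=O(n)$ uniformly along the schedule. Nothing you prove, and nothing in the sampling theorem, gives this. Without it, a step of $1+1/\sqrt n$ lets $W_i$ reach $e^{\sqrt n}$, and you have no control of its second moment.

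\textbf{The missing idea: second moments telescope.} For a geometric schedule this needs no variance assumption. The weight depends on $\lambda$ only through $\lambda^{n_0(\sigma)}$, and $\lambda_{i-1}\lambda_{i+1}=\lambda_i^2$. Hence $\Ex_{\mu_{i-1}}[W_i^2]=Z_{i+1}/Z_{i-1}$, and with ratio $q=1-\frac{1}{5n}$ and $Z$ nondecreasing in $\lambda$,
\[
\prod_{i=1}^{\ell}\frac{\Ex[W_i^2]}{\Ex[W_i]^2}=\frac{Z_{\ell+1}Z_0}{Z_\ell Z_1}\le\frac{Z_{\ell+1}}{Z_\ell}\le q^{-n}=O(1).
\]
So the product $W=\prod_i W_i$, with one sample per level, already has relative variance $O(1)$. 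Then $N=O(\epsilon^{-2})$ independent copies of $W$ suffice, equivalently $O(\epsilon^{-2})$ samples per level rather than $O(\ell/\epsilon^2)$. That is $O(\ell/\epsilon^2)=O(n\log(n/\epsilon)/\epsilon^2)$ samples, each costing $O(\Delta^C n\log(n/\epsilon))$, which gives the stated bound. This is exactly the paper's argument.

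\textbf{Endpoint.} Your worry here is unfounded, and the $\gamma^{\Delta}$ loss is unnecessary. Each edge weight is at most $\gamma$. Writing $c(\sigma)$ for the number of bichromatic edges, the weight ratio of $\sigma$ to the all-$1$ configuration is
\[
\lambda_0^{n_0}(\beta/\gamma)^{m_0}\gamma^{-c(\sigma)}\le\lambda_0^{n_0};
\]
you have the sign of the cut exponent reversed. Hence $\gamma^m\le Z(\lambda_0)\le\gamma^m(1+\lambda_0)^n$. Taking $\lambda_0\le\epsilon/(20n)$ pins $Z(\lambda_0)$ deterministically, so you need not estimate $\Pr[\sigma\equiv1]$.
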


The algorithm in \cite{GLL20} applies to parameters of the same regime as in \Cref{thm:fpras-spin}, 
but it runs in time $O\left(\left(\frac{n}{\eps}\right)^{O(\log\Delta)}\right)$, which is no longer a polynomial if the graph has unbounded degrees.
This type of running time is typical for FPTASes \cite{weitz2006counting,Bar16,PR17}.
Utilising randomness, the algorithm in \Cref{thm:fpras-spin} runs in polynomial-time for any graph, and in near-quadratic time for bounded degree graphs.
We note that near-quadratic time is the best known running time for partition function estimation in many settings \cite{SVV09,Kol18},
although this barrier has been broken very recently when the parameters are lower than the computational transition threshold \cite{AFFGW25,CCLZ26}.
Their techniques do not apply in our setting, and it is not clear if sub-quadratic time can be achieved here.

\Cref{thm:sampling} and \Cref{thm:fpras-spin} cover only the case of $\gamma > \beta > 1$.
When $\beta\le 1\le \gamma$ but $\beta\gamma>1$,
we refer the interested reader to a recent work by Feng, Guo, and Yang~\cite{feng2026rapid}.
They showed an $\widetilde{O}(n^2)$ mixing time bound for Glauber dynamics in the spin system from the all $1$ configuration, when $\lambda<\lambda_c$,
and an $\widetilde{O}(n)$ mixing time bound when $\lambda<\sqrt{\gamma/\beta}$.
For the Ising model ($\beta = \gamma > 1$), there are the classical sampling algorithm for $\lambda=1$ by Jerrum and Sinclair~\cite{JerrumS93} and the near-linear time sampler for $\lambda \neq 1$ by Chen and Zhang~\cite{CZ23}.
Approximate counting algorithms in all these settings can be obtained similarly via simulated annealing.

Since the weighted subgraph model is the key to our whole argument,
we explain it next.

\subsection{The subgraph world}
\label{sec:Holant}

We introduce a new weighted subgraph (WSG) model and relate its partition function to that of the two-spin system.
This is a generalisation to the classical subgraph world models \cite{JerrumS93}. 
Let $G = (V,E)$ be a graph. For any subset of edges $S \subseteq E$ and any vertex $v$, let $\deg_S(v)$ denote the number of edges in $S$ incident to $v$; in other words, $\deg_S(v)$ is the degree of $v$ in the subgraph $(V,S)$. 
The model involves a number of parameters. 
Each edge $e \in E$ is associated with a parameter $\theta_e > 0$,
and each vertex $v \in V$ is associated with $\lambda_v > 0$.
Denote $\boldsymbol{\theta} = (\theta_e)_{e \in E}$ and $\boldsymbol{\lambda} = (\lambda_v)_{v \in V}$.
Let $\rho > 0$ be another parameter.
In our applications later, we will have $\rho \lambda_v < 1$ for all $v \in V$.
For any subset of edges $S \subseteq E$, the weight of $S$ is given by
\begin{align}\label{eq:weight-wsg}
w^{\text{wsg}}_{G}(S; \boldsymbol{\theta}, \boldsymbol{\lambda}, \rho) = \prod_{e \in S} \theta_e \prod_{v \in V} (1 + \lambda_v (-\rho)^{\deg_S(v)}).
\end{align}
The partition function of the WSG model is then given by
\begin{align*}
    Z^{\text{wsg}}_G(\boldsymbol{\theta}, \boldsymbol{\lambda}, \rho) = \sum_{S \subseteq E} w^{\text{wsg}}_{G}(S; \boldsymbol{\theta}, \boldsymbol{\lambda}, \rho).
\end{align*}
Note that in the definitions above, we allow non-uniform weights for later convenience.

The following equivalence is shown via the holographic transformation~\cite{Valiant08} in~\Cref{sec:WSG}. 
\begin{theorem}\label{thm:equivalence}
  Consider a two-spin system on a graph $G = (V,E)$ with parameters $\beta, \gamma, \lambda$ such that $\gamma\neq 1$, $\beta+\gamma\neq 2$, and $\beta \gamma \neq 1$. %Let $Z_{\text{spin}}=Z^{\text{spin}}_G(\beta, \gamma, \lambda)$ be the partition function of the two-spin system. 
Let $\rho = \frac{\beta-1}{\gamma-1}$ and $\theta = \frac{(\gamma-1)^2}{\beta\gamma-1}$, and let $\boldsymbol{1}_V = (1)_{v \in V}$ and $\boldsymbol{1}_E = (1)_{e \in E}$ denote the $V$-dimensional and $E$-dimensional all-ones vectors, respectively. Then
\begin{align*}
    Z^{\text{spin}}_G(\beta, \gamma, \lambda) = (1 - \rho \theta)^{-|E|} \cdot Z^{\text{wsg}}_G(\theta \boldsymbol{1}_E, \lambda \boldsymbol{1}_V, \rho). 
\end{align*}
\end{theorem}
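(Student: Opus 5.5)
The plan is a direct holographic-transformation argument: factor the interaction matrix $A$ as a sum of two rank-one terms chosen so that, after expanding the product over edges, the vertex sums collapse exactly to the factors $1+\lambda(-\rho)^{\deg_S(v)}$ in \eqref{eq:weight-wsg}.

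First, set $u=(1,1)^{\top}$ and $w=(-\rho,1)^{\top}$, with $\rho=\frac{\beta-1}{\gamma-1}$ and $\theta=\frac{(\gamma-1)^2}{\beta\gamma-1}$, which are well defined since $\gamma\neq1$ and $\beta\gamma\neq1$. Let $c=(1-\rho\theta)^{-1}$. We claim
\[
A = c\left(u u^{\top} + \theta\, w w^{\top}\right),
\qquad\text{that is,}\qquad
A_{ab}=c\,(u_a u_b+\theta w_a w_b)\ \text{for } a,b\in\{0,1\}.
\]
Entrywise, this is the three identities $c(1+\theta\rho^2)=\beta$, $c(1-\theta\rho)=1$, and $c(1+\theta)=\gamma$. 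The middle one holds by the definition of $c$. For the others, a routine computation gives
\[
1-\rho\theta=\frac{\beta+\gamma-2}{\beta\gamma-1},\qquad
1+\theta=\frac{\gamma(\beta+\gamma-2)}{\beta\gamma-1},\qquad
1+\theta\rho^2=\frac{\beta(\beta+\gamma-2)}{\beta\gamma-1}.
\]
The condition $\beta+\gamma\neq2$ ensures $1-\rho\theta\neq0$, so $c$ is defined, and dividing by $1-\rho\theta$ yields $\gamma$ and $\beta$ respectively.

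Next, substitute this decomposition into \eqref{eqn:Z-spin}. For each $\sigma$, expand
\[
\prod_{(x,y)\in E} c\left(u_{\sigma_x}u_{\sigma_y}+\theta w_{\sigma_x}w_{\sigma_y}\right)
\]
by choosing, for each edge, either the $u$-term or the $\theta w$-term. Let $S\subseteq E$ be the set of edges where the $w$-term is chosen. This gives
\[
Z^{\text{spin}}_G = c^{|E|}\sum_{S\subseteq E}\theta^{|S|}\sum_{\sigma}\prod_{v\in V} B_{\sigma_v}\, u_{\sigma_v}^{\deg(v)-\deg_S(v)}\, w_{\sigma_v}^{\deg_S(v)}.
\]
The inner sum over $\sigma$ factorises over vertices, because each vertex's spin now appears only in its own factor. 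Since $u_a=1$ for both $a$, vertex $v$ contributes
\[
\sum_{a\in\{0,1\}} B_a w_a^{\deg_S(v)} = \lambda(-\rho)^{\deg_S(v)}+1,
\]
which is exactly the vertex factor in \eqref{eq:weight-wsg} with $\lambda_v=\lambda$ and $\theta_e=\theta$. Hence
\[
Z^{\text{spin}}_G(\beta,\gamma,\lambda) = (1-\rho\theta)^{-|E|}\, Z^{\text{wsg}}_G(\theta\boldsymbol{1}_E,\lambda\boldsymbol{1}_V,\rho).
\]

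The only nontrivial step is guessing the decomposition itself. I would find it by fixing $u=(1,1)^{\top}$ and $w=(-\rho,1)^{\top}$ so that the vertex sums have the target form, and then solving the three linear conditions for $c$, $\theta$, and $\rho$. Once the decomposition is in hand, the verification above is mechanical.
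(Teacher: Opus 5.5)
Your proposal is correct and is essentially the paper's proof. Your decomposition $A=c\,(uu^{\top}+\theta\,ww^{\top})$ is exactly $A=\boldsymbol{T}\,\mathrm{diag}(c,c\theta)\,\boldsymbol{T}^{\top}$ for the paper's matrix $\boldsymbol{T}=\begin{pmatrix}1&-\rho\\ 1&1\end{pmatrix}$, whose columns are $u$ and $w$; your edge-by-edge expansion is the holographic transformation with Valiant's Holant theorem written out directly, avoiding the need for $\boldsymbol{T}^{-1}$.
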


\Cref{thm:equivalence} applies to a wide range of parameters due to its algebraic nature.
It is easy to check that the conditions of \Cref{thm:equivalence} are satisfied in our application for $\gamma>\beta>1$ and $\beta \gamma > 1$.

\begin{remark}[Comparison with existing subgraph-world models]
It is well known that the ferromagnetic Ising model ($\beta=\gamma$) is equivalent to certain subgraph-world models~\cite{JerrumS93,FGW23}. Our result in \Cref{thm:equivalence} generalises this equivalence to non-Ising ferromagnetic two-spin systems with $\beta < \gamma$. The Ising model without fields, namely when $\beta=\gamma$ and $\lambda=1$, is equivalent to the \emph{even-subgraph} model, where the subgraph $(V,S)$ must satisfy that $\deg_S(v)$ is even for every $v \in V$. In this case, our parameter $\rho=1$, and only even subgraphs have positive weights in~\eqref{eq:weight-wsg}.
When the external field is not $1$, say $\lambda<1$, the Ising model is equivalent to a subgraph world in which each odd-degree vertex incurs a penalty; this is also captured by~\eqref{eq:weight-wsg} with $\rho=1$. For general ferromagnetic two-spin systems with $\beta < \gamma$, our model imposes a penalty based on the degree of the vertex in the subgraph. 
We believe this new subgraph-world model is of independent interest.
\end{remark}

The main ingredient of our algorithm is an efficient sampling algorithm for the Gibbs distribution of the WSG model.
%By the standard counting-to-sampling reduction~\cite{JVV86,SVV09}, this gives an FPRAS for the partition function of the WSG model, which, by \Cref{thm:equivalence}, also yields an FPRAS for the partition function of the 2-spin system. 
%
Consider a WSG model on $G=(V,E)$ with parameters $\boldsymbol{\theta}, \boldsymbol{\lambda}, \rho$ that correspond to a ferromagnetic 2-spin system.
When $\lambda<\lambda^{\star}$, the weight function $w^{\text{wsg}}_{G}(S; \boldsymbol{\theta}, \boldsymbol{\lambda}, \rho)$  in~\eqref{eq:weight-wsg} is positive for all $S \subseteq E$ (see \Cref{lem:lambda-v-tau}).
Let $\mu = \mu^{\text{wsg}}_{G,\boldsymbol{\theta}, \boldsymbol{\lambda}, \rho}$ over $2^E$ be the Gibbs distribution induced by the weighted subgraph model. For any subset of edges $S \subseteq E$, the probability of $S$ is given by
\begin{align}\label{eqn:wsg-dist}
  \mu(S) = \frac{w^{\text{wsg}}_{G}(S; \boldsymbol{\theta}, \boldsymbol{\lambda}, \rho)}{Z^{\text{wsg}}_G(\boldsymbol{\theta}, \boldsymbol{\lambda}, \rho)}.
\end{align}
We often view $\mu$ as a distribution over $E$-dimensional Boolean vectors. For any $\sigma \in \{0, 1\}^E$, let $S_\sigma \subseteq E$ denote the subset of edges corresponding to $\sigma$. Then we write 
\begin{align*}
\forall \sigma \in \{0, 1\}^E, \quad \mu(\sigma) = \mu(S_\sigma) = \frac{w^{\text{wsg}}_{G}(S_\sigma; \boldsymbol{\theta}, \boldsymbol{\lambda}, \rho)}{Z^{\text{wsg}}_G(\boldsymbol{\theta}, \boldsymbol{\lambda}, \rho)}.
\end{align*}

To generate a random sample from the distribution $\mu$, we employ the standard Glauber dynamics. Starting from an arbitrary initial configuration $X_0 \in \{0, 1\}^E$, in each step $t\ge 0$ we update $X_t$ as follows:
\begin{itemize}
    \item sample an edge $e \in E$ uniformly at random;
    \item sample $X_{e} \sim \mu_e^{X_{E - e}}$, where $\mu_e^{X_{E - e}}$ is the conditional distribution on $e$ given $X_{E - e}$, and update $X_t$ by changing the state of $e$ to $X_e$ to get $X_{t+1}$.
\end{itemize}
It is easy to simulate Glauber dynamics: since the weight in~\eqref{eq:weight-wsg} is a product of terms, each of which depends only on a single edge or a single vertex, the conditional distribution $\mu_e^{X_{E - e}}$ can be computed in time $O(\Delta)$, where $\Delta$ is the maximum degree of the graph.

It is well known that the stationary distribution of the Glauber dynamics is $\mu$. 
To analyse the efficiency of Glauber dynamics, we study its \emph{mixing time}.
Let $(X_t)_{t \geq 0}$ be the random configurations generated by Glauber dynamics over time.
Given any $\epsilon > 0$, the mixing time is the number of steps required for the chain to reach its stationary distribution up to total variation distance $\epsilon$. Formally,
\begin{align*}
    T_{\text{mix}}(\epsilon) = \max_{X_0 \in \{0, 1\}^E} \min \{t \geq 0 \mid \DTV{X_t}{\mu} < \epsilon\},
\end{align*}
where $\DTV{X_t}{\mu} = \frac{1}{2}\sum_{\sigma \in \{0, 1\}^E} |\Pr[X_t = \sigma] - \mu(\sigma)|$ is the \emph{total variation distance} between (the law of) $X_t$ and $\mu$.
Our main technical contribution is the following mixing time bound for the WSG model.
Recall that $\lambda^\star  = \lambda^\star(\beta, \gamma) = \left(\frac{\gamma}{\beta}\right)^{\frac{\pi}{2\tan^{-1}\sqrt{\beta\gamma-1}}}$ is the threshold in~\eqref{eq:lambda-star}.
\begin{theorem}\label{thm:mixing-time-sgw}
Let $\beta, \gamma, \lambda > 0$ be constants such that $\gamma > \beta > 1$ and $\lambda < \lambda^\star$.
Let $\theta = \frac{(\gamma-1)^2}{\beta\gamma-1}$ and $\rho = \frac{\beta-1}{\gamma-1}$.
Then, the weight function $w^{\text{wsg}}_{G}(S; \theta \boldsymbol{1}_E, \lambda \boldsymbol{1}_V, \rho) \geq 0$ for all $S \subseteq E$.
Let $\mu$ denote the Gibbs distribution induced by the weighted subgraph model on a graph $G = (V,E)$ with parameters $\theta \boldsymbol{1}_E$, $\lambda \boldsymbol{1}_V$, and $\rho$. Then the mixing time of the Glauber dynamics on $\mu$ is at most $O\left( \Delta^C \cdot n \log \frac{n}{\epsilon} \right)$, where $n = |V|$, $\Delta$ is the maximum degree of the graph, and $C = C(\beta, \gamma, \lambda)$ is a constant.
\end{theorem}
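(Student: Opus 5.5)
The plan has three parts: positivity of the weights, a spectral-independence-type bound for $\mu$, and a conversion of that bound into the stated mixing time.

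\textbf{Positivity.} The weight is a product of $\theta^{|S|}$ and vertex factors $1+\lambda(-\rho)^{\deg_S(v)}$. Since $0<\rho<1$ when $\gamma>\beta>1$, each vertex factor is at least $1-\lambda\rho$ (attained at $\deg_S(v)=1$). So it suffices to show $\lambda\rho<1$, i.e.\ $\lambda<\frac{\gamma-1}{\beta-1}$. I would get this by checking $\lambda^\star\le\frac{\gamma-1}{\beta-1}$ (this is what the forward reference \Cref{lem:lambda-v-tau} should supply). In fact each factor lies in $[1-\lambda\rho,\,1+\lambda\rho^2]$, so vertex factors are uniformly bounded away from $0$ and $\infty$.

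\textbf{Main step: spectral independence.} I would prove spectral independence (equivalently, a bound on the total influence) for $\mu$ and for every pinning and every external-field perturbation, with a bound $O(\Delta^{C'})$. I would use the complex-analytic route (zero-freeness implies spectral independence, in the style of Chen–Liu–Mani–Moitra), since the threshold $\lambda^\star$ comes exactly from the zero-free region of \cite{GLL20}.
\begin{enumerate}
\item Introduce edge activities $\theta_e=\theta z_e$ with complex $z_e$.
\item Show that the WSG partition function, under arbitrary pinnings of edges, does not vanish for all $z_e$ in a neighbourhood of $[0,1]$ (or of the positive real segment). Under the holographic equivalence of \Cref{thm:equivalence}, pinning edges of the WSG model corresponds to modifying vertex fields and edge interactions. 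So the GLL20 zero-freeness for $\lambda<\lambda^\star$ should transfer, possibly with non-uniform $\boldsymbol{\lambda}$ and $\boldsymbol{\theta}$. This is why the definitions allow non-uniform parameters.
\item Conclude that marginals are analytic and bounded on this neighbourhood, which bounds the influences.
\end{enumerate}

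The degree dependence enters because the zero-free disc of GLL20 has radius depending on $\Delta$. That in turn should yield influence bounds polynomial in $\Delta$, with exponent $C$ depending on $\beta,\gamma,\lambda$.

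\textbf{Converting to mixing time.} With spectral independence and the marginal boundedness from the first part, I would apply the standard machinery to obtain approximate tensorisation of entropy with constant $\Delta^{O(C')}$:
\begin{itemize}
\item local-to-global entropic independence for Glauber dynamics (Anari–Jain–Koehler–Pham–Vuong / Chen–Liu–Vigoda);
\item since edges have degree $2\Delta$ in the dependency graph, the block-factorisation argument of Chen–Feng–Yin–Zhang, in the form for bounded-degree-like settings with a $\Delta^{O(1)}$ loss.
\end{itemize}
Approximate tensorisation gives a modified log-Sobolev constant of order $\Delta^{-C}/|E|$. Hence
\[
T_{\text{mix}}(\epsilon)=O\bigl(\Delta^{C}|E|\log(\log(1/\mu_{\min})/\epsilon)\bigr),
\]
and $|E|\le n\Delta$ together with $\log(1/\mu_{\min})=O(n\Delta\log\Delta)$ yields the stated bound after absorbing powers of $\Delta$ into $C$.

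\textbf{Main obstacle.} The hard part is the second step: establishing zero-freeness (or contraction) robust to arbitrary pinnings and complex perturbations of the WSG edge weights, uniformly in $\Delta$ up to polynomial loss. The GLL20 zero-free region is for the spin model with uniform field. So the first thing I would try is to verify that edge pinnings in the subgraph world translate back to spin systems with the same $\lambda<\lambda^\star$ but possibly altered edge interactions, and then redo the GLL20 argument in that slightly more general family.
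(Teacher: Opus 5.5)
Your outer architecture matches the paper: positivity via $\rho\lambda^\star<1$, then stability $\Rightarrow$ spectral independence, then the spectral-independence-plus-marginal-bound conversion of \cite{CLV21}. The quantitative plan does not close, however.

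\textbf{The $\Delta$-dependence.} You aim for $\eta=O(\Delta^{C'})$ and assert that the standard machinery turns this into a $\Delta^{O(C')}$ tensorisation constant. It does not: these local-to-global conversions are exponential in $\eta$. In \cite{CLV21} the constant has the form $(\Delta/b)^{O(1+\eta/b)}$, so a $\Delta$-dependent $\eta$ gives something like $\Delta^{\Delta^{O(C')}}$. What is needed, and what the paper proves, is zero-freeness of $Z_G(\boldsymbol z;\boldsymbol\lambda)$ on a polydisk $\prod_e U_\delta(\theta)$, with $\delta=\delta(\beta,\gamma,\lambda)$ uniform over all graphs and all pinnings. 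Then \Cref{prop:stability-to-spectral-independence} gives the constant $\eta=2/(b\delta^2)$, and the only $\Delta^C$ loss is the one inside \cite{CLV21}. Your premise that the zero-free region of \cite{GLL20} shrinks with $\Delta$ is also wrong. That region is degree-independent; the $(n/\epsilon)^{O(\log\Delta)}$ cost there comes from Barvinok's interpolation. A neighbourhood of the whole segment $[0,\theta]$ is not required, since only a neighbourhood of $\theta$ itself is needed. It is also not available uniformly in $\Delta$. On a star $K_{1,d}$ one has $Z=(1+\lambda)^d\bigl[(1+w)^d+\lambda(1-\rho w)^d\bigr]$ with $w=z(1-\rho\lambda)/(1+\lambda)$, and its zeros $w\approx(\log\lambda+i\pi)/((1+\rho)d)$ accumulate at $0$. (Minor: the vertex factor at $\deg_S(v)=0$ is $1+\lambda$, so your upper bound $1+\lambda\rho^2$ is wrong, though harmlessly.)

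\textbf{The main step.} More seriously, the zero-freeness step is not supplied, and the route you sketch for it does not work. Pinning an edge to $0$ deletes it, and pinning it to $1$ multiplies the fields at both endpoints by $-\rho$. So pinnings leave edge interactions unchanged but produce arbitrary subgraphs with real fields in $[-\rho\lambda,\lambda]$, including negative fields and new leaves. Moreover, \cite{GLL20} controls zeros in the \emph{external field}, whereas you need zeros in the WSG \emph{edge activities}. Pulled back through $\boldsymbol T$, a complex $z_e$ is a complex edge signature proportional to $[1+z_e\rho^2,\,1-z_e\rho,\,1+z_e]$. \cite{GLL20} says nothing about that regime, so there is nothing to transfer.

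The paper instead runs a new Asano contraction directly in the Holant picture, with $q=1/z$ and $\psi(q)=-(1+q)/(q-\rho)$:
\begin{itemize}
\item A star polynomial $\prod_i(q_i+1)+\lambda_v\prod_i(q_i-\rho)$ with $d\ge 2$ is nonzero when every $\psi(q_i)$ avoids $D(c^\star,r^\star-\epsilon)$. This is where Lemma 8 of \cite{GLL20} enters, strengthened in \Cref{lem:lambda_avoid} to a strip around $[-\rho\lambda,\lambda]$.
\item Each contraction keeps $q_e\in U_{\delta_3}(1/\theta)$ safe. The equation $-q_1q_2=1/\theta$ becomes $\beta u_1u_2+u_1+u_2+\gamma=0$, which Grace--Szeg\H{o}--Walsh rules out because the roots of $\beta x^2+2x+\gamma$ lie on $\partial D(c^\star,r^\star)$.
\item The avoidance fails for $d=1$, so leaves with positive fields must first be pruned. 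Each is absorbed into its parent via $\Phi_z(a)\approx(1+\beta a)/(\gamma+a)$, and the resulting complex fields are kept inside a tube defined by the potential of \cite{GuoL18} (\Cref{lemma:leaf-contraction}).
\end{itemize}
These are the substance of the proof: the duality between the two contraction arguments, the leaf pruning, and the control of complex fields. None of them appears in your plan.
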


For comparison, in the Ising model, where $\beta=\gamma>1$, the analogous threshold is $\lambda^{\star}=1$.
When $\lambda<1$, Chen and Zhang \cite{CZ23} proved a fast mixing time bound for the WSG dynamics induced by the Ising model.
However, our method is based on stability of polynomials,
which is completely different from the coupling-based method in \cite{CZ23}.
Their coupling crucially relies on the fact that, in the weighted subgraph model induced by the Ising model, every vertex of odd degree receives the same \emph{uniform} penalty.
In contrast, in our setting with $\beta < \gamma$, vertices of different degrees receive different penalties in~\eqref{eq:weight-wsg}.
Chen and Gu~\cite{CG24} developed a more general coupling-based technique, but our setting does not meet their requirements either.

%On the other hand, for the Ising model with no field, namely $\beta=\gamma>1$ and $\lambda=1$,
%the subgraph world allows only even subgraphs, and Glauber dynamics is not even ergodic.
%Thus we cannot strengthen \Cref{thm:mixing-time-sgw} to allow $\lambda=\lambda^{\star}$.

%Note that our results only cover the case of $\gamma > \beta > 1$.
%When $\beta\le 1\le \gamma$ but $\beta\gamma>1$,
%we refer the interested reader to a recent work by Feng, Guo, and Yang~\cite{feng2026rapid}.
%They showed an $\widetilde{O}(n^2)$ mixing time bound for Glauber dynamics in the spin system from the all $1$ configuration, when $\lambda<\lambda_c$,
%and an $\widetilde{O}(n)$ mixing time bound when $\lambda<\sqrt{\gamma/\beta}$.
%For the Ising model when $\beta = \gamma > 1$, one can refer the classical sampling algorithm by Jerrum and Sinclair~\cite{JerrumS93} or near-linear time algorithm when $\lambda < 1$ by Chen and Zhang~\cite{CZ23}.

\section{Proof outline of the mixing time}

%\subsection{Weighted subgraph-world model}

%We now relate the partition function of the two spin system to the partition function of the weighted subgraph-world model. Consider a slightly generalized weighted ferromagnetic two spin system with interaction matrix $A_e$ for each edge $e \in E$ and external field vector $b_v$ for each vertex $v \in V$ as follows:
%\[
%    A_e = \begin{pmatrix}
%        \beta_e & 1 \\
 %       1 & \gamma_e
%    \end{pmatrix}, \qquad
%    b_v = \begin{pmatrix}
%        \lambda_v \\
%        1
%    \end{pmatrix}.
%\]
%The partition function and the weight of the configuration can be naturally generalized as follows:
%\begin{align*}
%Z_{\text{spin}} = \sum_{\sigma: V \to \{0,1\}} w_{\text{spin}}(\sigma) = \prod_{e = \{u,v\} \in E: \sigma_u = \sigma_v = 0} \beta_e \prod_{e = \{u,v\} \in E: \sigma_u = \sigma_v = 1} \gamma_e \prod_{v \in V: \sigma_v = 0} \lambda_v.
%\end{align*}

%\subsection{Counting-to-sampling reduction}

In this section we outline the proof of \Cref{thm:mixing-time-sgw}. Under the parameter assumption in the theorem, the positivity of the weight function is guaranteed can be verified by a straightforward calculation, which is deferred to \Cref{sec:validify-gibbs}. Assuming the distribution $\mu$ is well-defined, we outline the proof of the mixing time bound.

We use the following notations.
For any subset of edges $\Lambda \subseteq E$ and any pinning $\tau \in \{0,1\}^{E\setminus\Lambda}$, we write $\mu^{\tau}$ for the conditional distribution of $\mu$ given $\tau$. In other words, in $\mu^\tau$, the edges in $E \setminus \Lambda$ are pinned by $\tau$ and the edges in $\Lambda$ follow the conditional distribution.
Moreover, we write $\mu_{S}$ for the marginal distribution of $S\subseteq E$ under $\mu$.
When $S=\{e\}$, we may also write $\mu_e$ instead.
In particular, $\mu^{\tau}_e$ means the marginal distribution of $e$ under the conditional distribution $\mu^{\tau}$.

\subsection{Mixing of Glauber dynamics via spectral independence}

We prove the mixing time bound via the spectral independence technique introduced by Anari, Liu, and Oveis Gharan~\cite{ALO24}. Let $\mu$ be a distribution over $\{0,1\}^E$. For any two variables $i,j \in E$, the influence from $i$ to $j$ is defined as 
\begin{align*}
\Psi_\mu(i,j) = \Pr_{X \sim \mu}[X_j = 1 \mid X_i = 1] - \Pr_{X \sim \mu}[X_j = 1 \mid X_i = 0].
\end{align*}

\begin{definition}[spectral independence~\text{\cite{ALO24}}]
The distribution $\mu$ is said to be $\eta$-spectrally independent if the maximum eigenvalue of the influence matrix $\Psi_\mu$ is at most $\eta$.
\end{definition}

\begin{definition}[marginal lower bound~\text{\cite{CLV21}}]
A distribution $\mu$ over $\{0,1\}^E$ has a marginal lower bound $b > 0$ if, for any subset of edges $\Lambda \subseteq E$, any pinning $\tau \in \{0,1\}^{E \setminus \Lambda}$, and any edge $e \in \Lambda$, 
\begin{align*}
\forall c \in \{0,1\}, \quad \mu_e^\tau(c) = \Pr_{X \sim \mu}[X_e = c \mid X_{E \setminus \Lambda} = \tau] \geq b.
\end{align*}
\end{definition}

Using the definition of the weighted subgraph model in~\eqref{eq:weight-wsg}, the following lemma can be established by a straightforward calculation. The proof is deferred to \Cref{sec:lb}.

\begin{lemma}\label{lem:marginal-lower-bound}
    Let $\theta, \rho, \lambda > 0$ be three constants such that $\rho < 1$ and $\lambda \rho < 1$. The weighted subgraph model on any graph $G$ with parameters $\theta \boldsymbol{1}_E$, $\lambda \boldsymbol{1}_V$, and $\rho$ has a marginal lower bound $b = b(\theta, \lambda, \rho) > 0$.
\end{lemma}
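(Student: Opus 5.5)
Since the weight in \eqref{eq:weight-wsg} factorises over edges and vertices, the conditional law of a single edge given everything else depends only on its two endpoints. I would reduce to this single-site computation. Fix $\Lambda$, $\tau$, and $e=\{u,v\}\in\Lambda$. The conditional marginal $\mu_e^\tau$ is a convex combination, over the values of the other free edges, of the single-site conditionals $\mu_e^{\sigma}$ with $\sigma\in\{0,1\}^{E\setminus\{e\}}$ extending $\tau$:
\[
  \mu_e^\tau(c)=\sum_{\sigma}\mu^\tau_{E-e}(\sigma)\,\mu_e^{\sigma}(c).
\]
So it suffices to lower bound $\mu_e^{\sigma}(c)$ uniformly over all full pinnings $\sigma$ of $E\setminus\{e\}$.

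Given $\sigma$, let $a=\deg_{S_\sigma}(u)$ and $d=\deg_{S_\sigma}(v)$, counting only edges other than $e$. All factors not involving $e$, $u$, $v$ cancel, and we get
\[
  \frac{\mu_e^\sigma(1)}{\mu_e^\sigma(0)}=\theta\cdot\frac{(1+\lambda(-\rho)^{a+1})(1+\lambda(-\rho)^{d+1})}{(1+\lambda(-\rho)^{a})(1+\lambda(-\rho)^{d})}.
\]
Each factor $1+\lambda(-\rho)^k$ with $k\ge 0$ lies in $[\,1-\lambda\rho,\;1+\lambda\,]$. For $k\ge1$ it lies in $[1-\lambda\rho,1+\lambda\rho]$, because $|(-\rho)^k|\le\rho$ when $\rho<1$. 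For $k=0$ it equals $1+\lambda$. Under the hypothesis $\lambda\rho<1$ all these factors are strictly positive, so the ratio is bounded above and below:
\[
  \theta\Big(\tfrac{1-\lambda\rho}{1+\lambda}\Big)^2\le \frac{\mu_e^\sigma(1)}{\mu_e^\sigma(0)}\le \theta\Big(\tfrac{1+\lambda}{1-\lambda\rho}\Big)^2 .
\]
Write $r_-$ and $r_+$ for these two bounds. Then $\mu_e^\sigma(1)\ge r_-/(1+r_-)$ and $\mu_e^\sigma(0)\ge 1/(1+r_+)$. Taking $b=\min\{r_-/(1+r_-),\,1/(1+r_+)\}>0$ gives a bound that depends only on $\theta,\lambda,\rho$, as required.

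The only point needing care is that the overall weight $w^{\text{wsg}}$ is nonnegative, so that conditioning on $\tau$ is meaningful. The convex-combination step also needs the conditional weights to form a genuine probability distribution. Both follow from the same positivity of every vertex factor under $\lambda\rho<1$ and $\rho<1$. In particular $\mu_e^\sigma$ is well defined for every $\sigma$, since its normaliser is positive. Beyond this, I expect no real obstacle; the argument is a direct calculation.
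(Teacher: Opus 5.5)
Your proposal is correct and follows essentially the same route as the paper. The paper likewise reduces to a full pinning of $E\setminus\{e\}$; you justify this explicitly via the convex-combination identity, while the paper only asserts that ``it suffices to consider the worst pinning''. It then bounds the factors $1+\lambda(-\rho)^{k}$ at $u$ and $v$ within $[1-\lambda\rho,\,1+\lambda]$, giving exactly your bounds $r_-/(1+r_-)$ and $1/(1+r_+)$.
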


In particular, Chen, Liu, and Vigoda \cite[Theorem 1.12]{CLV21} showed the following.
For a Gibbs distribution $\mu$ on a graph $G=(V,E)$ with maximum degree $\Delta$, if every conditional distribution induced by $\mu$ is $\eta$-spectrally independent, and $\mu$ has a constant marginal lower bound $b$, then the mixing time of the Glauber dynamics is bounded by $O\left(\Delta^C \cdot n \cdot (\log\log \frac{1}{\mu_{\min}}+ \log \frac{1}{\epsilon})\right)$ for some $C>0$ depending on~$\eta$ and~$b$, where $\mu_{\min} = \min_{\sigma \in \{0, 1\}^E} \mu(\sigma)$. 

We summarise and specialise this result for the WSG model in the next lemma.
In particular, the constant marginal lower bound is guaranteed by \Cref{lem:marginal-lower-bound}, which is independent from $\Delta$.
\begin{lemma}\label{lem:spectral-independence-to-mixing}
Let $\beta, \gamma, \lambda > 0$ be constants such that $\gamma > \beta > 1$ and $\lambda < \lambda^\star$.
Let $\theta = \frac{(\gamma-1)^2}{\beta\gamma-1}$ and $\rho = \frac{\beta-1}{\gamma-1}$.
Let $\mu$ denote the Gibbs distribution induced by the WSG model on a graph $G = (V,E)$ with parameters $\theta \boldsymbol{1}_E$, $\lambda \boldsymbol{1}_V$, and $\rho$.
Suppose there exists a constant $\eta > 0$ such that for any $\Lambda \subseteq E$ and any $\tau \in \{0,1\}^{E \setminus \Lambda}$, the conditional distribution $\mu^\tau$ is $\eta$-spectrally independent. Then the mixing time of the Glauber dynamics on $\mu$ is at most $O\left(\Delta^C \cdot n \log \frac{n}{\epsilon}\right)$, where $C = C(\beta, \gamma, \lambda,\eta)$ is a constant.
\end{lemma}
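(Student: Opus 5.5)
This lemma is essentially a specialisation of \cite[Theorem 1.12]{CLV21}, so the plan is to verify its hypotheses and then simplify the resulting bound. There are three ingredients. The hypotheses are spectral independence for all pinnings, a marginal lower bound, and a degree bound on the underlying dependency structure. The simplification is converting $\log\log\frac{1}{\mu_{\min}}$ into $O(\log n)$.

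First, I will confirm that the parameters satisfy the hypotheses of \Cref{lem:marginal-lower-bound}. Since $\gamma>\beta>1$, we have $0<\rho=\frac{\beta-1}{\gamma-1}<1$ and $\theta>0$. The condition $\lambda\rho<1$ comes from \Cref{lem:lambda-v-tau} / \Cref{sec:validify-gibbs}: for $\lambda<\lambda^\star$ it shows the weights are positive, which in particular requires $1-\lambda\rho>0$ at degree-one vertices. Hence $\mu$ has a marginal lower bound $b=b(\beta,\gamma,\lambda)>0$ independent of $G$ and $\Delta$.

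Second, I will identify the dependency graph. The variables are edges, and the weight \eqref{eq:weight-wsg} factorises into vertex factors, each depending on the edges incident to one vertex. So $\mu$ is a Gibbs distribution on the line graph of $G$, whose maximum degree is at most $2(\Delta-1)$. Applying \cite[Theorem 1.12]{CLV21} with spectral independence $\eta$ for every conditional $\mu^\tau$ and marginal bound $b$ then gives mixing time $O\bigl(\Delta^{C}\, m\,(\log\log\frac{1}{\mu_{\min}}+\log\frac1\epsilon)\bigr)$, with $C$ depending only on $\eta,b$. The number of variables is $m=|E|\le n\Delta/2$, so the extra factor of $\Delta$ is absorbed by increasing $C$ by one, yielding a factor $\Delta^{C}n$.

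Third, I will bound $\mu_{\min}$. Using the marginal lower bound together with the chain rule, fixing edges one at a time gives $\mu(\sigma)\ge b^{m}$. Therefore $\log\frac{1}{\mu_{\min}}\le m\log\frac1b$, and so $\log\log\frac1{\mu_{\min}}=O(\log m)=O(\log(n\Delta))$. If $\Delta\ge n$ the graph is not simple, but in general $\Delta\le n$ and $\log(n\Delta)=O(\log n)$. Combining, $\log\log\frac1{\mu_{\min}}+\log\frac1\epsilon=O(\log\frac n\epsilon)$, which gives the claimed bound.

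I expect the main obstacle to be bookkeeping rather than mathematics. The point to check is that \cite{CLV21}'s theorem is stated for general product-of-local-factors distributions whose degree parameter is the degree of the variable-interaction graph. Here that graph is the line graph, so its degree is $O(\Delta)$ rather than $\Delta$, and the constant $C$ in \Cref{lem:spectral-independence-to-mixing} must absorb this change.
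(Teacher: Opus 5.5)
Your proposal is correct and takes the same route as the paper. The paper simply invokes \cite[Theorem 1.12]{CLV21} with the marginal lower bound from \Cref{lem:marginal-lower-bound}. The hypothesis $\rho\lambda<1$ there follows directly from \Cref{lem:lambda-v-tau}, since $\rho\lambda<\rho\lambda^\star<1$, so you do not need to go through positivity of weights. Your extra bookkeeping is what the paper leaves implicit, and it is sound: the line graph has degree at most $2(\Delta-1)$, the bound $m\le n\Delta/2$ turns the number of variables into a $\Delta^C n$ factor, and $\mu_{\min}\ge b^{m}$ gives $\log\log\frac{1}{\mu_{\min}}=O(\log n)$.
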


\subsection{Spectral independence via the stability}

By \Cref{lem:spectral-independence-to-mixing}, it remains to show that every conditional distribution $\mu^\tau$ is $\eta$-spectrally independent, for every subset $\Lambda \subseteq E$ and every pinning $\tau \in \{0,1\}^{E \setminus \Lambda}$. 
We show this by showing stability on the complex plane for a suitable polynomial, and then applying the techniques from \cite{CLV24}.

For any configuration $\sigma \in \{0,1\}^E$, let $S^\sigma$ denote the subset of edges selected by $\sigma$:
$S^\sigma = \{e \in E \mid \sigma_e = 1\}$.
By the definition of the weighted subgraph model in~\eqref{eq:weight-wsg}, the weight of the configuration $\sigma$ is
$
    \prod_{e \in S^\sigma} \theta_e
    \prod_{v \in V}\tp{1 + \lambda_v(-\rho)^{\deg_{S^\sigma}(v)}}.
$
%To show this, we need to define conditional partition function and write it as a polynomial. 
%Let $S^\tau$ denote
%\begin{align*}
%S^\tau = \{e \in E \setminus \Lambda \mid \tau_e = 1\}.
%\end{align*}
%For any subset of edges $S \subseteq E$ such that $S_\tau \subseteq S$, recall that the weight of the configuration $S$ in the weighted subgraph-world model is given by in \eqref{eq:weight-wsg} as $\prod_{e \in S \cup S^\tau} \theta_e \prod_{v \in V}\tp{1 + \lambda_v(-\rho)^{\deg_{S \cup S^\tau}(v)}}$.
We separate out the vertex contribution and define
\begin{align}\label{eq:w-weight-function}
\forall  \sigma \in \{0,1\}^E, \quad
w(\sigma) \defeq \prod_{v \in V}\tp{1 + \lambda_v(-\rho)^{\deg_{S^\sigma}(v)}}.
\end{align}
Given a pinning $\tau \in \{0,1\}^{E \setminus \Lambda}$, we treat the edge activities on the unpinned edges as complex variables $z_e$ and define the conditional partition function polynomial
\begin{align}\label{eq:conditional-partition-function-polynomial}
Z^\tau(\boldsymbol{z})
= \sum_{\substack{\sigma \in \{0,1\}^E: \sigma_{E \setminus \Lambda} = \tau}}
w(\sigma) \prod_{\substack{e \in \Lambda: \sigma_e = 1}} z_e.
\end{align}
The product only ranges over edges in $\Lambda$, since the edges in $E\setminus \Lambda$ have already been fixed by $\tau$.
Thus $Z^\tau(\boldsymbol z)$ is a polynomial in the variables $z_e$, $e\in \Lambda$. 
When the pinning is empty, we write $Z(\boldsymbol z)$ for $Z^\emptyset(\boldsymbol z)$.
Substituting $z_e=\theta_e$ recovers the conditional distribution on the unpinned variables:
%To emphasize the dependence on $w$ and $\boldsymbol{\theta}$, we write this distribution as $\mu^\tau_{w,\boldsymbol{\theta}}$:\HG{What's the point of the subscript here? It contradicts the marginal distribution notation.}
\begin{align*}
\forall \sigma \in \{0,1\}^E \text{ with } \sigma_{E\setminus\Lambda}=\tau,
\quad
\mu^\tau(\sigma)
\defeq
\frac{w(\sigma) \prod_{\substack{e \in \Lambda: \sigma_e = 1}} \theta_e}{Z^\tau(\boldsymbol{\theta})}.
\end{align*}
Again, when the pinning is empty, $\mu^\emptyset=\mu$.

The polynomial in~\eqref{eq:conditional-partition-function-polynomial} allows us to complexify the edge activities and prove spectral independence through polynomial stability, following~\cite{CLV24}. 
Throughout the paper, $\mathbb{C}$ denotes the complex plane.
%The following theorem states that the spectral independence of the conditional distribution $\mu^\tau_{\Lambda}$ can be established by the stability of the polynomial in~\eqref{eq:conditional-partition-function-polynomial}.

\begin{definition}[stability of polynomial]\label{def:stability-of-polynomial}
For regions $\Gamma_i\subseteq \mathbb{C}$, $i\in [n]$, a multivariate polynomial $p\in \mathbb{C}[x_1, \ldots, x_n]$ is $\left(\prod_{i=1}^{n}\Gamma_i\right)$-stable if $p(x_1, \ldots, x_n) \neq 0$ whenever $x_i \in \Gamma_i$ for all $i\in [n]$.
\end{definition}

We specialise \cite[Theorem 3.2]{CLV24} to our setting as follows.

\begin{proposition}\label{prop:stability-to-spectral-independence}
Let $E$ be a set of variables and $w: \{0,1\}^E \to \mathbb{R}$ be a weight function.
For any $e\in E$, let $\Gamma_e\subset \mathbb{C}$ be a non-empty open connected region, and let $\boldsymbol{\theta}:E\to \mathbb{R}$ satisfy $\theta_e\in \mathbb{R}_+\cap \Gamma_e$ for every $e\in E$. 
Suppose that, for every pinning $\tau$, the conditional partition function $Z^\tau(\boldsymbol{z})$ is $\left(\prod_{e\in \Lambda} \Gamma_e\right)$-stable. %\footnote{$Z^\tau$ is essentially a polynomial in the unpinned variables $z_e$, $e\in \Lambda$.}
Then the distribution $\mu^{\tau}$ is spectrally independent with constant
\begin{align*}
  \eta= \frac{2}{b\delta^2},
\end{align*}
where $b$ is the marginal lower bound of $\mu^{\tau}$, $\delta= \min_{e\in E} \frac{1}{\theta_e}\operatorname{dist}(\theta_e,\partial \Gamma_e)$,
$\partial \Gamma_e$ is the boundary of $\Gamma_e$, and $\operatorname{dist}(x, \partial \Gamma_e) \defeq \inf_{z \in \partial \Gamma_e} |x-z|$.
\end{proposition}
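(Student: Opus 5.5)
The plan is to follow the route of \cite[Theorem 3.2]{CLV24}: express each influence as a logarithmic derivative of a stable polynomial, bound that derivative by a Cauchy-type estimate on the zero-free region, and then convert the resulting row-sum bound into a bound on the largest eigenvalue.

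\textbf{Step 1: influences as log-derivatives.} Fix a pinning $\tau$ on $E\setminus\Lambda$ and an edge $i\in\Lambda$. For each $j\in\Lambda\setminus\{i\}$, the quantity $\Pr_{\mu^\tau}[X_j=1\mid X_i=c]$ is the marginal of $j$ under the further pinning $\tau\cup\{i\mapsto c\}$. This marginal equals
\[
\theta_j\,\partial_{z_j}\log Z^{\tau,i\mapsto c}(\boldsymbol z)\big|_{\boldsymbol z=\boldsymbol\theta}.
\]
Hence $\Psi(i,j)$ is the difference of these log-derivatives for $c=1$ and $c=0$. The hypothesis applies to every pinning, so both $Z^{\tau,i\mapsto 1}$ and $Z^{\tau,i\mapsto 0}$ are stable on $\prod_{e\in\Lambda\setminus\{i\}}\Gamma_e$.

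\textbf{Step 2: the key bound.} I would bound the absolute row sum $\sum_{j}|\Psi(i,j)|$ rather than individual entries. Write
\[
\sum_j |\Psi(i,j)| = \sum_j \epsilon_j \Psi(i,j), \qquad \epsilon_j\in\{\pm1\}.
\]
Set $z_j=\theta_j(1+\epsilon_j t)$ for a single complex variable $t$ with $|t|<\delta$. Since $\theta_j\delta\le\operatorname{dist}(\theta_j,\partial\Gamma_j)$ and $\Gamma_j$ is connected, this disc lies in $\Gamma_j$.

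Now consider the one-variable function
\[
f(t)=\frac{Z^{\tau,i\mapsto1}}{\theta_i Z^{\tau,i\mapsto 0}}
\]
along this line. It is a ratio of nonvanishing analytic functions on the disc, so $\log f$ is analytic there. The weighted sum $\sum_j\epsilon_j\Psi(i,j)$ is exactly $\frac{d}{dt}\log f(t)\big|_{t=0}$. By stability of $Z^\tau$ with $z_i$ ranging over $\Gamma_i$, the ratio $-\theta_i f$ avoids $\Gamma_i$. This holds because
\[
Z^\tau = Z^{\tau,i\mapsto0}+z_i Z^{\tau,i\mapsto1}\neq0,
\]
so $z_i\neq -Z^{\tau,i\mapsto0}/Z^{\tau,i\mapsto1}$ for every $z_i\in\Gamma_i$. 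In particular, $f$ omits a neighbourhood of the point $-1$ of radius about $\delta$ after normalising.

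\textbf{Step 3: controlling the derivative.} To bound $|(\log f)'(0)|$ I would use the marginal lower bound, which controls the base point: $f(0)\,\theta_i=\mu^\tau_i(1)/\mu^\tau_i(0)$ lies in $[b,1/b]$. Combining this with the omitted region, I would apply a Schwarz--Pick or Koebe-type estimate to $\log f$ on the disc of radius $\delta$. This should give
\[
|(\log f)'(0)| \le \frac{C}{b\,\delta^2}.
\]
The factor $1/\delta$ comes from the radius of the disc. The remaining factor $1/(b\delta)$ comes from how close $f(0)$ can be to the excluded set relative to its size.

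This is the main obstacle. The statement does not say $\Gamma_e$ is a disc or sector, so I would first reduce to the worst case in which the omitted set is a disc of radius $\delta\theta_i$ around $-\theta_i$. I would then carefully track constants to obtain exactly $2/(b\delta^2)$. If the clean reduction fails, I would instead cite \cite[Theorem 3.2]{CLV24} directly for this analytic step and only verify that its hypotheses match the present ones.

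\textbf{Step 4: from row sums to the eigenvalue.} The row-sum bound $\max_i\sum_j|\Psi(i,j)|\le\eta$ bounds the $\infty\to\infty$ operator norm of $\Psi$. This in turn dominates the spectral radius, hence the maximum eigenvalue. This yields $\eta$-spectral independence with $\eta=\frac{2}{b\delta^2}$.
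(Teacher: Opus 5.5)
\textbf{Verdict: genuine gap in Step 3.} The paper does not prove this proposition; it quotes it as a specialisation of \cite[Theorem 3.2]{CLV24}. So your fallback of citing that theorem is exactly what the paper does. Step 1, the set-up of Step 2 (the line $z_j=\theta_j(1+\epsilon_j t)$ staying inside $\prod_j\Gamma_j$), and Step 4 are fine.

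As a self-contained proof, however, Step 3 carries all the content and is not done; the tools you name do not deliver it as stated.
\begin{itemize}
\item Koebe-type estimates need univalence, which $\log f$ does not have.
\item Schwarz--Pick needs a target domain with a controlled hyperbolic metric. All you know is that $f$ omits $0$, $\infty$ and a small disc, so $\log f$ lands in a multiply connected domain: the plane minus a periodic family of small neighbourhoods. Making that work requires Landau-type density bounds for the twice-punctured plane, and yields a bound of a different form with imported constants, not $2/(b\delta^2)$.
\item The ``reduction to the worst case'' is not an obstacle: $D(\theta_i,\delta\theta_i)\subseteq\Gamma_i$ by the definition of $\delta$, so you simply use that disc.
\item Your account of where $1/b$ and the second $1/\delta$ come from is not right; see below.
\item There are inversion slips. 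With your $f$, it is $-1/(\theta_i f)=-Z^{\tau,i\mapsto 0}/Z^{\tau,i\mapsto 1}$ that avoids $\Gamma_i$, not $-\theta_i f$. Also $\theta_i^2 f(0)=\mu^\tau_i(1)/\mu^\tau_i(0)$, not $\theta_i f(0)$.
\end{itemize}

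The missing idea is to differentiate a \emph{bounded} analytic function rather than $\log f$. Keep $z_i=\theta_i$ fixed, move the other variables along your line with $|t|<\delta$, and consider the complexified marginal
\[
p(t)=\frac{\theta_i Z^{\tau,i\mapsto 1}}{Z^{\tau,i\mapsto 0}+\theta_i Z^{\tau,i\mapsto 1}} .
\]
Its denominator is $Z^\tau$ evaluated inside the stable polydisc, so $p$ is analytic.

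For each such $t$, the linear map $z_i\mapsto Z^{\tau,i\mapsto 0}+z_i Z^{\tau,i\mapsto 1}$ has no zero in $D(\theta_i,\delta\theta_i)$. So either $Z^{\tau,i\mapsto 1}=0$ and $p=0$, or its zero $z^\ast$ satisfies $|z^\ast-\theta_i|\ge\delta\theta_i$. In the latter case $|p|=\theta_i/|\theta_i-z^\ast|\le 1/\delta$. Cauchy's estimate on the disc of radius $\delta$ then gives $|p'(0)|\le 1/\delta^2$.

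On the other hand, $\theta_j\partial_{z_j}$ of the marginal is a covariance. Hence $p'(0)=\sum_{j\ne i}\epsilon_j\operatorname{Cov}(X_i,X_j)=\mu^\tau_i(0)\mu^\tau_i(1)\sum_{j\ne i}\epsilon_j\Psi(i,j)$. Moreover $\mu^\tau_i(0)\mu^\tau_i(1)\ge b/2$, since the smaller marginal is at least $b$ and the larger at least $1/2$. Taking $\epsilon_j=\operatorname{sgn}\Psi(i,j)$ gives $\sum_{j\ne i}|\Psi(i,j)|\le 2/(b\delta^2)$, exactly the stated constant.

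So $1/b$ comes from converting covariances into influences, and the extra $1/\delta$ comes from the sup bound on $p$. This plugs straight into your framework, since $(\log f)'(0)=p'(0)/\bigl(p(0)(1-p(0))\bigr)$. It also uses only stability of $Z^\tau$ itself, not of the further pinnings.
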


We apply \Cref{prop:stability-to-spectral-independence} to the WSG model. 
Once again, recall that our setting is $\gamma > \beta > 1$, $\rho = \frac{\beta-1}{\gamma-1}$, $\theta = \frac{(\gamma-1)^2}{\beta\gamma-1}$, and $\lambda < \lambda^\star(\beta,\gamma)$.
Let $\mu$ be the Gibbs distribution induced by the WSG model on a graph $G = (V,E)$ with parameters $\theta \boldsymbol{1}_E$, $\lambda \boldsymbol{1}_V$, and $\rho$. 
We will use \Cref{prop:stability-to-spectral-independence} to prove spectral independence for every conditional distribution $\mu^\tau$.

The following self-reduction lets us handle all pinnings uniformly. Fix a pinning $\tau \in \{0,1\}^{E \setminus \Lambda}$. Let $G^\tau=(V,\Lambda)$ be the subgraph obtained by deleting the pinned edges in $E\setminus \Lambda$. For each vertex $v\in V$, let $k_v$ be the number of pinned edges incident to $v$ that are fixed to $1$ by $\tau$. We absorb the contribution of these pinned edges into the external field by setting
\begin{align*}
\lambda_v^\tau = \lambda \cdot \left(-\rho\right)^{k_v}.
\end{align*}
The following technical lemma is verified in \Cref{sec:validify-gibbs}.
\begin{lemma}\label{lem:lambda-v-tau}
For any integer $k \geq 0$, it holds that
$ -1 < -\rho \lambda^\star \leq \lambda \cdot \left(-\rho\right)^k \leq \lambda.
$
\end{lemma}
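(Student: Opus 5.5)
We need to show $-1<-\rho\lambda^\star\le \lambda(-\rho)^k\le\lambda$ for all $k\ge0$, under $\gamma>\beta>1$, $0<\lambda<\lambda^\star$, $\rho=\frac{\beta-1}{\gamma-1}\in(0,1)$. The plan is to split into the two elementary inequalities about powers of $-\rho$, and then one genuinely analytic inequality $\rho\lambda^\star<1$.

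\textbf{Middle and right inequalities.} Since $0<\rho<1$, we have $|(-\rho)^k|=\rho^k\le 1$ for all $k$. Hence $\lambda(-\rho)^k\le \lambda\rho^k\le\lambda$, which is the right inequality (with equality at $k=0$). For the lower bound, even $k$ gives a positive value, and odd $k$ gives $\lambda(-\rho)^k=-\lambda\rho^k\ge -\lambda\rho$. Thus $\lambda(-\rho)^k\ge-\lambda\rho>-\lambda^\star\rho$, using $\lambda<\lambda^\star$. This also covers the nonstrict middle inequality.

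\textbf{Key step: $\rho\lambda^\star<1$.} Write $\lambda^\star=(\gamma/\beta)^{\alpha}$ with $\alpha=\frac{\pi}{2\tan^{-1}\sqrt{\beta\gamma-1}}>1$. Taking logarithms, we need
\[
\alpha\log\frac{\gamma}{\beta}<\log\frac{\gamma-1}{\beta-1}.
\]
My first attempt is to fix the product $\beta\gamma=s>1$, so that $\alpha$ is a constant, and write $\beta=\sqrt{s}e^{-t}$, $\gamma=\sqrt{s}e^{t}$ with $t>0$ ranging up to where $\beta>1$, i.e.\ $t<\frac12\log s$. Set
\[
f(t)=\log(\sqrt s e^t-1)-\log(\sqrt s e^{-t}-1)-2\alpha t .
\]
Then $f(0)=0$, so it suffices to show $f'(t)>0$. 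Compute
\[
f'(t)=\frac{\sqrt s e^t}{\sqrt s e^t-1}+\frac{\sqrt s e^{-t}}{\sqrt s e^{-t}-1}-2\alpha=\frac{\gamma}{\gamma-1}+\frac{\beta}{\beta-1}-2\alpha .
\]
Simplifying with $\beta\gamma=s$, the first two terms equal $\frac{2s-(\beta+\gamma)}{s+1-(\beta+\gamma)}$. This expression is increasing in $\beta+\gamma$, and $\beta+\gamma\ge2\sqrt s$. So the minimum is at $t=0$, giving the bound $f'(t)\ge\frac{2\sqrt s}{\sqrt s-1}-2\alpha$.

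The remaining obstacle is therefore the one-variable inequality
\[
\frac{\sqrt s}{\sqrt s-1}\ge\frac{\pi}{2\tan^{-1}\sqrt{s-1}}.
\]
Equivalently, with $x=\sqrt s>1$, we need $\tan^{-1}\sqrt{x^2-1}\ge\frac{\pi}{2}(1-1/x)$. Substitute $x=\sec\phi$, $\phi\in(0,\pi/2)$, so that $\tan^{-1}\sqrt{x^2-1}=\phi$. The inequality becomes $\phi\ge\frac\pi2(1-\cos\phi)$. This holds on $[0,\pi/2]$ by concavity: $\frac{\pi}{2}(1-\cos\phi)$ is convex, and it agrees with the line $\phi$ at both endpoints $0$ and $\pi/2$. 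So the convex function lies below the chord, with strict inequality inside the interval.

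Hence $f'(t)>0$ for $t>0$, giving $f(t)>0$ and so $\rho\lambda^\star<1$ strictly. Combined with the first part, this yields the full chain $-1<-\rho\lambda^\star\le\lambda(-\rho)^k\le\lambda$. The main care will be in checking the monotonicity-in-$\beta+\gamma$ simplification and the domain $1<\beta$.
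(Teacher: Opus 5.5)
Your proposal is correct and is essentially the paper's proof. The paper also fixes $\sqrt{\beta\gamma}$ and sets $y=\log\sqrt{\gamma/\beta}$. It shows that $F(y)=\log\frac{\gamma-1}{\beta-1}-\log\lambda^\star$ satisfies $F(0)=0$ and $F'>0$, by bounding $\frac{\gamma}{\gamma-1}+\frac{\beta}{\beta-1}$ below by its value $\frac{2\sqrt{\beta\gamma}}{\sqrt{\beta\gamma}-1}$ at $y=0$. It then finishes with the substitution $\sqrt{\beta\gamma}=\sec\alpha$ and concavity of $\cos$ on $[0,\pi/2]$.

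The only differences are cosmetic. You get the minimum at $y=0$ from monotonicity in $\beta+\gamma$, whereas the paper notes that $r\mapsto r/(r-1)^2$ is decreasing. You also spell out the elementary middle and right inequalities, which the paper dismisses with ``by $\rho<1$''.
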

Given the pinning $\tau$, for every $S\subseteq \Lambda$, the conditional weight of the configuration $S$ is, up to the non-zero contribution of the pinned edges, equal to
\[
    \theta^{|S|}
    \prod_{v\in V}\left(1+\lambda_v^\tau(-\rho)^{\deg_S(v)}\right).
\]
By \Cref{lem:lambda-v-tau}, the above weight is still non-negative.
Hence the conditional distribution $\mu^\tau$ projected onto $\Lambda$ is the Gibbs distribution of a weighted subgraph model on the subgraph $G^\tau$ with parameters $\theta \boldsymbol{1}_\Lambda$, $(\lambda_v^\tau)_{v \in V}$, and $\rho$. We will use this self-reduction to handle all the conditional distributions.

It remains to establish the required stability condition in \Cref{prop:stability-to-spectral-independence}. For $x \in \mathbb{C}$ and $\delta>0$, write $\mathcal{U}_\delta(x) \defeq \{z \in \mathbb{C} \mid |z-x| < \delta\}$ for the open disk of radius $\delta$ centered at $x$.

\begin{theorem}\label{thm:stability-of-partition-function}
    %Fix constants $\gamma > \beta > 1$. Let $\lambda^* = \lambda^*(\beta,\gamma)$. 
    Fix three constants $\gamma > \beta > 1$ and $0 < \lambda < \lambda^*(\beta, \gamma)$.
    Let $\rho = \frac{\beta-1}{\gamma-1}$ and $\theta = \frac{(\gamma-1)^2}{\beta\gamma-1}$.
    Let $G = (V,E)$ be a graph and $\boldsymbol{\lambda} = (\lambda_v)_{v \in V}$ be parameters such that $-\rho \lambda \leq \lambda_v \leq \lambda$ for all $v \in V$.
    Let $w: \{0,1\}^E \to \mathbb{R}$ be the weight function defined in \eqref{eq:w-weight-function} with respect to graph $G$ and parameters $\boldsymbol{\lambda},\rho$.
    There exists a constant $\delta = \delta(\beta,\gamma,\lambda) > 0$ such that the partition function $Z(\bm{z})$ defined in~\eqref{eq:conditional-partition-function-polynomial} is $\prod_{e\in E} \mathcal{U}_\delta(\theta)$-stable.
\end{theorem}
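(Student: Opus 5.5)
Plan: prove stability by the contraction method of Barvinok / Guo–Liu–Lu, working on the spin side via the inverse holographic transformation, with a uniform zero-free region.

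First, reverse \Cref{thm:equivalence} edge by edge. A WSG model with edge activity $z_e$ and vertex fields $\lambda_v$ is, up to a factor $\prod_e(1-\rho z_e)$, the partition function of a two-spin system with edge matrix
\[
A(z)=\begin{pmatrix}1+\rho^2 z & 1-\rho z\\ 1-\rho z & 1+z\end{pmatrix}
\]
(after normalising), and external field $(\lambda_v,1)$. At $z=\theta$ this recovers the original $(\beta,1,\gamma)$ interaction up to scaling. The factor $1-\rho z$ is nonzero near $\theta$, since $\rho\theta<1$ follows from $\beta\gamma>1$. Zero-freeness of $Z(\boldsymbol z)$ for $z_e\in\mathcal U_\delta(\theta)$ is therefore equivalent to zero-freeness of this spin partition function with complex, edge-dependent interactions $(\beta_e,\gamma_e)$ close to $(\beta,\gamma)$ and real fields $\lambda_v\in[-\rho\lambda,\lambda]$. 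Fields outside $[-\rho\lambda,\lambda]$ never arise, because pinned edges only multiply $\lambda_v$ by powers of $-\rho$ (\Cref{lem:lambda-v-tau}).

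Second, prove zero-freeness by the standard telescoping argument on the self-avoiding-walk (or vertex-by-vertex) recursion. Order the vertices and write $Z$ as a product of ratios $Z_{G,\sigma_{<i}}/Z_{G,\sigma_{<i},v_i=1}$. Each ratio $R_v=\Pr[v=0]/\Pr[v=1]$ satisfies the tree recursion
\[
R_v=\lambda_v\prod_{i}\frac{\beta_{e_i}R_{u_i}+1}{R_{u_i}+\gamma_{e_i}}.
\]
Following \cite{GLL20}, I will find a region $D\subset\mathbb C$ (a sector/strip in log coordinates) with the following property. Whenever all $R_{u_i}\in D$ and the interactions are within $\delta$ of $(\beta,\gamma)$, the new $R_v$ again lies in $D$ and avoids $-1$ (so the denominators $Z_{\cdot,v=1}$ and $Z_{\cdot,v=0}+Z_{\cdot,v=1}$ never vanish). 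The key is that the Möbius map $f(x)=\frac{\beta x+1}{x+\gamma}$ sends the positive real axis into the interval $(1/\gamma,\beta)$. On a sector $\{|\arg x|\le\alpha\}$, the argument of $f(x)$ is at most $c\alpha$ with an explicit contraction depending on $\beta\gamma$. Meanwhile $|f|\le\beta$ in modulus, giving $|\log|$-control. Taking the product over arbitrarily many children, the modulus is handled as follows. Because the arguments add, the region must be chosen so that $\lambda\prod f$ has argument bounded irrespective of degree. This is exactly where \cite{GLL20} obtained $\lambda^\star$ via $\frac{\pi}{2\tan^{-1}\sqrt{\beta\gamma-1}}$, through a potential $\log x\mapsto$ a scaled variable in which $\arg$ contributions become proportional to $\log$ of magnitude. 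I will reuse their invariant region, which is derived for unbounded degree in terms of the ratio $\gamma/\beta$, and check two things: that it is open in the interaction parameters, so that a small $\delta$-perturbation of $z_e$ preserves it; and that the negative fields $\lambda_v\in[-\rho\lambda,0)$ are also accommodated.

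Main obstacle: the negative fields. For $\lambda_v<0$ the sector around the positive axis is not invariant; $R_v$ lands near the negative axis, and we must ensure $R_v\neq -1$ and that it feeds back into $f$ properly. My first attempt is to use the symmetric region $D\cup(-D)$: $f$ of a point near the negative real axis with $|x|\le\rho\lambda<1$ stays away from the pole $-\gamma$ and has controlled argument. Since $\rho\lambda^\star<1$ (\Cref{lem:lambda-v-tau}), the value $-1$ is strictly avoided with room to spare, and compactness of the parameter set yields a uniform $\delta(\beta,\gamma,\lambda)$. If that region fails, I would instead treat vertices with negative field by summing out their neighbours' edges directly in the WSG picture, using that $1+\lambda_v(-\rho)^k$ stays in $[1-\rho\lambda,1+\lambda]$, bounded away from $0$.
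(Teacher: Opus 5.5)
\textbf{There is a genuine gap in your second step.} Your first step is sound: undoing the holographic transformation gives the edge matrix with $\beta_e=\frac{1+\rho^2 z_e}{1-\rho z_e}$ and $\gamma_e=\frac{1+z_e}{1-\rho z_e}$, which recovers $(\beta,\gamma)$ at $z_e=\theta$, and $1-\rho\theta>0$. The trouble is that the degree-uniform invariant region for the ratio recursion is never constructed, and the source you cite for it does not contain one.

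\cite{GLL20} uses no tree recursion and no invariant region. Its ``contraction'' is Asano--Ruelle contraction, with the vertex fields as the complex variables and $\beta,\gamma$ kept real. The threshold $\lambda^\star$ comes from two facts. First, the edge polynomial $\beta x_1x_2+x_1+x_2+\gamma$ is zero-free on $D(c^\star,r^\star)^2$, by Grace--Szeg\H{o}--Walsh, because the roots of $\beta x^2+2x+\gamma$ lie on $\partial D(c^\star,r^\star)$. Second, $(-1)^{d+1}K^d$ avoids $[0,\lambda]$ for every $d\ge 2$. Since you need complex edge weights with real fields, which is a different complexification, nothing there transfers to your recursion in any case.

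\textbf{Why the recursion itself is obstructed for $\beta>1$.} The map $f(x)=\frac{\beta x+1}{x+\gamma}$ exceeds $1$ once $x>1/\rho$. So the bound ``$|f|\le\beta$'' gives no control over products of unboundedly many factors.

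On a tree with fields in $[0,\lambda]$, all ratios stay in $[0,\lambda]$, where $f\le q<1$. There the Guo--Lu potential does close the recursion, and this is exactly how the paper prunes bad leaves.

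On graphs with cycles, the situation changes. Any SAW-tree or telescoping scheme pins some copies to spin $0$, and each such pin multiplies a neighbouring ratio by $\beta_e>1$. Ratios then exceed $1/\rho$ and the factor moduli stop decaying. Meanwhile the $O(\delta)$ phases from the complex edge weights still accumulate over the $d$ children, with nothing to compensate. The Guo--Lu inequality cannot help there, because its right-hand side $\kappa\log\frac{\gamma+x}{1+\beta x}$ is negative for $x>1/\rho$. This is the same mechanism behind the failure of SSM and WSM for $\beta>1$ that the paper points out. Your handling of negative fields through $D\cup(-D)$ is likewise asserted rather than shown.

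\textbf{What is missing is the paper's route.} Use the Guo--Lu potential only to prune leaves whose fields lie near $[0,\lambda]$, where $f<1$. Then treat the remaining core by Asano contraction in the reciprocal edge variables $q_e=1/z_e$, in two pieces:
\begin{itemize}
\item Each star polynomial is zero-free off $\Gamma_\epsilon$. This uses the \cite{GLL20} product lemma, strengthened from $K$ to $K_\epsilon$ and from $I$ to a strip around $I$. Vertices of degree $1$ are covered by \Cref{condition:Glambda}, since the product lemma fails for $d=1$.
\item Each contraction stays zero-free near $1/\theta$, because $-\Gamma_\epsilon^2$ avoids $1/\theta$. Indeed, $1/\theta=-q_1q_2$ would force $\beta u_1u_2+u_1+u_2+\gamma=0$ with $u_i$ in the shrunken disk, which Grace--Szeg\H{o}--Walsh rules out.
\end{itemize}
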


This theorem is enough for the stability hypothesis in \Cref{prop:stability-to-spectral-independence}. 
Indeed, after the self-reduction above, each polynomial $Z^\tau(\boldsymbol z)$ appearing in \Cref{prop:stability-to-spectral-independence} is exactly the partition function in \Cref{thm:stability-of-partition-function} for the graph $G^\tau$ and fields $(\lambda_v^\tau)_{v\in V}$. 
This is because, in the definition of $Z^\tau(\boldsymbol z)$ in~\eqref{eq:conditional-partition-function-polynomial}, 
only the edge activities of the unpinned edges in $\Lambda$ appear. 
Thus proving \Cref{thm:stability-of-partition-function} for arbitrary graphs and all fields in the interval $(-\rho\lambda^\star,\lambda^\star)$ establishes the required stability for all conditional distributions.
\Cref{thm:mixing-time-sgw} is a consequence of combining \Cref{lem:spectral-independence-to-mixing}, \Cref{prop:stability-to-spectral-independence}, and \Cref{thm:stability-of-partition-function}.

\begin{figure}[t]
\centering
\resizebox{\textwidth}{!}{%
\begin{tikzpicture}[>=Latex, scale=1]

% --------------------------------------------------
% Colors
% --------------------------------------------------
\definecolor{myred}{RGB}{205,70,70}
\definecolor{mybrown}{RGB}{120,85,60}
\definecolor{myblue}{RGB}{60,90,210}
\definecolor{myedge}{RGB}{70,70,70}

% --------------------------------------------------
% Styles
% --------------------------------------------------
\tikzset{
  root/.style={
    circle,
    draw=myred!85!black,
    fill=myred,
    draw opacity=0.85,
    fill opacity=0.70,
    inner sep=2.5pt
  },
  halfedge/.style={
    circle,
    draw=mybrown!85!black,
    fill=mybrown,
    draw opacity=0.85,
    fill opacity=0.72,
    inner sep=2.15pt
  },
  edge/.style={
    draw=myedge,
    line width=1.0pt
  },
  arr/.style={
    ->,
    thick,
    draw=myedge
  },
  hl/.style={
    draw=myblue!90!black,
    fill=myblue!35,
    fill opacity=0.18,
    line width=1.0pt
  },
  lab/.style={
    font=\small
  },
  annot/.style={
    font=\scriptsize,
    inner sep=1pt
  },
  annotarrow/.style={
    ->,
    thin,
    draw=myedge
  }
}

% ==================================================
% H0
% ==================================================
\begin{scope}[shift={(0,0)}]
  \node[lab] at (0,2.15) {$H_0$};

  % roots
  \node[root] (T0) at (0,1.25) {};
  \node[root] (R0) at (1.15,-0.62) {};
  \node[root] (L0) at (-1.15,-0.62) {};

  % half-edge vertices
  % pair for T-R
  \node[halfedge] (T0R) at (0.36,0.64) {};
  \node[halfedge] (R0T) at (0.78,-0.04) {};

  % pair for R-L
  \node[halfedge] (R0L) at (0.42,-0.62) {};
  \node[halfedge] (L0R) at (-0.42,-0.62) {};

  % pair for L-T
  \node[halfedge] (L0T) at (-0.78,-0.04) {};
  \node[halfedge] (T0L) at (-0.36,0.64) {};

  % edges
  \draw[edge] (T0)--(T0R);
  \draw[edge] (T0)--(T0L);
  \draw[edge] (R0)--(R0T);
  \draw[edge] (R0)--(R0L);
  \draw[edge] (L0)--(L0T);
  \draw[edge] (L0)--(L0R);

  % highlight first contraction pair
  \draw[hl, rotate around={-59:(0.57,0.30)}]
    (0.57,0.30) ellipse [x radius=0.50, y radius=0.24];

  % annotation
  \node[annot] (A0) at (1.42,1.25) {$\mathbb{C}\setminus \Gamma$};

% two arrows from the label to the two brown vertices
\draw[annotarrow] ([xshift=-1pt,yshift=-2pt]A0.south west) -- (T0R);
\draw[annotarrow] ([xshift=-1pt,yshift=-2pt]A0.south west) -- (R0T);
\end{scope}

% arrow H0 -> H1
\draw[arr] (2.15,0) -- (2.95,0);

% ==================================================
% H1
% ==================================================
\begin{scope}[shift={(4.65,0)}]
  \node[lab] at (0,2.15) {$H_1$};

  % roots
  \node[root] (T1) at (0,1.25) {};
  \node[root] (R1) at (1.15,-0.62) {};
  \node[root] (L1) at (-1.15,-0.62) {};

  % contracted midpoint for T-R
  \node[halfedge] (M1TR) at (0.57,0.30) {};

  % remaining half-edge vertices
  \node[halfedge] (R1L) at (0.42,-0.62) {};
  \node[halfedge] (L1R) at (-0.42,-0.62) {};
  \node[halfedge] (L1T) at (-0.78,-0.04) {};
  \node[halfedge] (T1L) at (-0.36,0.64) {};

  % edges
  \draw[edge] (T1)--(M1TR);
  \draw[edge] (R1)--(M1TR);
  \draw[edge] (R1)--(R1L);
  \draw[edge] (L1)--(L1R);
  \draw[edge] (L1)--(L1T);
  \draw[edge] (T1)--(T1L);

  % annotation
  \node[annot] (A1) at (1.38,1.25) {$U_\delta(1/\theta)$};
  \draw[annotarrow] (A1.south west) -- (M1TR);

  % highlight second contraction pair
  \draw[hl]
    (0,-0.62) ellipse [x radius=0.58, y radius=0.24];
\end{scope}

% arrow H1 -> H2
\draw[arr] (6.80,0) -- (7.60,0);

% ==================================================
% H2
% ==================================================
\begin{scope}[shift={(9.30,0)}]
  \node[lab] at (0,2.15) {$H_2$};

  % roots
  \node[root] (T2) at (0,1.25) {};
  \node[root] (R2) at (1.15,-0.62) {};
  \node[root] (L2) at (-1.15,-0.62) {};

  % contracted midpoints
  \node[halfedge] (M2TR) at (0.57,0.30) {};
  \node[halfedge] (M2RL) at (0,-0.62) {};

  % remaining half-edge vertices
  \node[halfedge] (L2T) at (-0.78,-0.04) {};
  \node[halfedge] (T2L) at (-0.36,0.64) {};

  % edges
  \draw[edge] (T2)--(M2TR);
  \draw[edge] (R2)--(M2TR);
  \draw[edge] (R2)--(M2RL);
  \draw[edge] (L2)--(M2RL);
  \draw[edge] (L2)--(L2T);
  \draw[edge] (T2)--(T2L);

  % highlight third contraction pair
  \draw[hl, rotate around={59:(-0.57,0.30)}]
    (-0.57,0.30) ellipse [x radius=0.50, y radius=0.24];
\end{scope}

% arrow H2 -> H3
\draw[arr] (11.45,0) -- (12.25,0);

% ==================================================
% H3 = H
% ==================================================
\begin{scope}[shift={(13.95,0)}]
  \node[lab] at (0,2.15) {$H_3 = H$};

  % roots
  \node[root] (T3) at (0,1.25) {};
  \node[root] (R3) at (1.15,-0.62) {};
  \node[root] (L3) at (-1.15,-0.62) {};

  % edge-vertices
  \node[halfedge] (M3TR) at (0.57,0.30) {};
  \node[halfedge] (M3RL) at (0,-0.62) {};
  \node[halfedge] (M3LT) at (-0.57,0.30) {};

  % final graph
  \draw[edge]
    (T3)--(M3TR)--(R3)--(M3RL)--(L3)--(M3LT)--(T3);
\end{scope}

\end{tikzpicture}%
}
\caption{Illustration of our contraction procedure for a $3$-cycle $C_3$. Red vertices correspond to original vertices of the graph, and brown vertices correspond to edge variables. We start from $H_0$, the disjoint union of three disjoint stars. At each step, the blue highlighted pair of brown vertices is contracted into a single vertex. The variables of original brown vertices are \textit{safe} in $\mathbb{C}\setminus \Gamma$, while the contracted brown vertex is \textit{safe} in $U_\delta(1/\theta)$. After three contractions, we obtain $H_3=H_{C_3}$. }
\label{fig:contraction-3cycle}
\end{figure}

\Cref{thm:stability-of-partition-function} is our most technically involved ingredient and is shown in \Cref{sec:stability-of-polynomial}. 
Here we give a brief overview of the proof.
The overall strategy is via the Asano contraction \cite{Asa70,Rue71}, similar to previous zero-freeness results for Holant problems \cite{GLLZ21}.
Let $G=(V,E)$ be a graph with $\abs{V}=n$ and $\abs{E}=m$. 
Note that the variables in $Z_w^\tau(\boldsymbol z)$ are on the edges.
Let $H_G$ be the (bipartite) vertex-edge incidence graph of $G$ (see \Cref{def:graph-HG}).
We construct a sequence of graphs $H_0,\dots,H_m$, where $H_m=H_G$.
The initial graph $H_0$ is a collection of disjoint star graphs.
For each $v\in V$, construct a star whose center is $v$ and the leaves correspond to all edges adjacent to $v$.
The union of all these stars is $H_0$.
Thus, in $H_0$, each edge $e=(u,v)\in E$ corresponds two leaves connecting to $u$ and $v$, respectively.
Call these leaves half-edges, and variables are attached to the half-edges.
Order edges in $E$ as $e_1,\dots,e_m$.
Then given $H_i$, contract the two half edges corresponding to $e_{i+1}$ to obtain $H_{i+1}$.
The two corresponding variables are also identified.
After processing all edges, we recover $H_m=H_G$.
\Cref{fig:contraction-3cycle} illustrates this procedure for a $3$-cycle.

To prove the desired stability, there are two main ingredients.
The first is the stability of the initial graph $H_0$.
Since $H_0$ is a disjoint union of stars, the stability of the partition function follows from handling each star separately. 
We carefully choose a region $\Gamma$ so that the partition function is non-zero whenever no variable lies in $\Gamma$.
The second ingredient is the stability-preserving property of the contraction step. 
Namely, assuming that the partition function of $H_i$ is $\left(\mathbb{C}\setminus \Gamma\right)\times\left(\mathbb{C}\setminus \Gamma\right)$-stable for the pair of contracted half edge variables,
we prove that the new partition function for $H_{i+1}$ after contraction remains stable when the contracted variable lies in a small neighborhood $U_\delta(1/\theta)$ around $1/\theta$.
%(Due to technical reasons, we change of variables \(q=1/z\), so we consider the neighborhood of \(1/\theta\) in the \(q\)-coordinate, instead of the original \(z\)-coordinate.)
Iterating this argument propagates stability from the initial disjoint union of stars $H_0$ all the way to the final graph $H_m$. 
Consequently, the WSG partition function for $G$ is $\prod_{e\in E} \mathcal{U}_\delta(\theta)$-stable.

The most essential part of this argument is the choice of $\Gamma$.
Our choice is actually inspired by the zero-freeness result of \cite{GLL20}.
Although their result concerns the spin model and we are working in the WSG model,
we noticed that after certain variable transformations, the initial stability for $H_0$ resembles the contraction step in \cite{GLL20}.
Then, with the same variable transformation, our contraction step resembles the initial stability in \cite{GLL20}.
This is an interesting duality regarding the contraction argument for zero-freeness under holographic transformations,
and is also why our \Cref{thm:fpras-spin} works up to the same $\lambda$ threshold as in \cite{GLL20}.
However, as we need stability for a constant neighbourhood around $\theta$, we also need to strengthen various intermediate results in \cite{GLL20}.

In addition, there is one more wrinkle to our proof -- the zero-freeness result of \cite{GLL20} is optimised and only works for degrees $\ge 2$.
The way they handle degree $1$ vertices (namely leaves) is to pre-process them first.
However, since pinning edges may produce new leaves,
this strategy does not work for us.
To this end, we identify \emph{bad} leaves, whose external fields are positive and cause trouble to an avoidance lemma (\Cref{lem:lambda_avoid}).
Observe that any graph can be decomposed into a degree $\ge 2$ core and trees appended to the core (see \Cref{fig:leaf-contraction-overview} for an illustration). 
Our strategy is to recursively prune bad leaves and reduce the original graph $G$ to an equivalent bad-leaf reduced graph,
whose external fields are updated to some $\boldsymbol{\lambda}'$.
Throughout this recursive process, we must ensure that the new fields $\boldsymbol{\lambda}'$ remain in a complex region where the stability argument applies.
To control this recursion, we notice that the external fields are updated in a similar fashion to the tree recursion of Weitz \cite{weitz2006counting}.
Thus, we use the potential function from~\cite{GuoL18} to design the admissible complex region for the external fields $\boldsymbol{\lambda}'$. 
Intuitively, we show that the perturbations due to the pruning process do not drift the external fields $\boldsymbol{\lambda}'$ too far away from the real axis, 
and that they remain inside the region required by the contraction argument. 
Once we obtain a graph without bad leaves, the zero-freeness argument outlined above applies, which completes the proof of \Cref{thm:stability-of-partition-function}.

\section{Equivalent formulations}\label{sec:3-models}

In this section we show equivalence between the two-spin system and the weighted subgraph model,
introduce the random cluster type model, and show its equivalence to the other two models.

\subsection{Weighted subgraphs}\label{sec:WSG}

We use the holographic transformation~\cite{Valiant08} to prove the equivalence between the two-spin system and the weighted subgraph model.
Let $d \geq 1$ be an integer, and let $f:\{0,1\}^d \to \mathbb{C}$ be a function, where $\mathbb{C}$ denotes the complex plane. We write $f = (f_{x_0},f_{x_1},\ldots,f_{x_{2^d-1}})$ as either a column or a row vector, where for each $0 \leq i < 2^d$, $x_i \in \{0,1\}^d$ is the binary representation of $i$. 
In the symmetric case, where $f$ is invariant under permutations of the bits of $x_i$, we write $f$ succinctly as $f = [f_0,f_1,\ldots,f_{d}]$, where for each $0 \leq j \leq d$, $f_j$ is the common value of $f_x$ over all binary strings $x \in \{0,1\}^d$ with Hamming weight $|x| = j$.
This is also called the signature of $f$.

Given a bipartite graph $H=(V_1 \uplus V_2, E)$, let $\+F = (f_v)_{v \in V_1}$ and $\+G = (g_u)_{u \in V_2}$ denote the functions on the vertices of $H$. The Holant is defined as 
\begin{align*}
    \operatorname{Holant}(H; \+F \mid \+G) = \sum_{\sigma: E \to \{0,1\}} \prod_{v \in V_1} f_v(\sigma_{E(v)}) \prod_{u \in V_2} g_u(\sigma_{E(u)}),
\end{align*}
where $E(v)$ and $E(u)$ are the edges incident to $v$ and $u$, respectively.

Let $\boldsymbol{T} \in \mathbb{C}^{2 \times 2}$ be an invertible matrix, and let $f: \{0,1\}^d \to \mathbb{C}$ be a signature function. We write $\boldsymbol{T}f = \boldsymbol{T}^{\otimes d} f$ (resp. $f\boldsymbol{T} = f \boldsymbol{T}^{\otimes d}$) for the transformed function when $f$ is viewed as a column (resp. row) vector. Given $\operatorname{Holant}(H; \+F \mid \+G)$ on a bipartite graph $H=(V_1 \uplus V_2, E)$, view all $f_v \in \+F$ as row vectors and all $g_u \in \+G$ as column vectors.
Define the transformed functions as $ \+F\boldsymbol{T} = \{f_v\boldsymbol{T}\}_{v \in V_1}$ and $\boldsymbol{T}^{-1}\+G = \{\boldsymbol{T}^{-1}g_u\}_{u \in V_2}$, where $\boldsymbol{T}^{-1}$ is the inverse of $\boldsymbol{T}$. The following equivalence holds.
\begin{theorem}[\text{Valiant's Holant theorem~\cite{Valiant08}}]\label{thm:holant-theorem}
    $\operatorname{Holant}(H; \+F \boldsymbol{T} \mid \boldsymbol{T}^{-1}\+G) = \operatorname{Holant}(H; \+F \mid \+G)$.
\end{theorem}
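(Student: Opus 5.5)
The plan is to write both Holant sums as a single tensor contraction and then insert the identity $\boldsymbol{T}\boldsymbol{T}^{-1} = I$ on every edge of $H$. The fact that makes this work is that $H$ is bipartite: every edge $e$ has exactly one endpoint $v \in V_1$, which carries a row-vector signature $f_v$, and one endpoint $u \in V_2$, which carries a column-vector signature $g_u$.

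First I would fix the bookkeeping. For each $v \in V_1$ I fix an ordering of its incident edges $E(v)$, so that $f_v \in \mathbb{C}^{2^{d_v}}$ is indexed by $\sigma_{E(v)} \in \{0,1\}^{E(v)}$, and I do the same for each $u \in V_2$. Since $E = \biguplus_{v\in V_1} E(v) = \biguplus_{u\in V_2} E(u)$, the tensor $F \defeq \bigotimes_{v\in V_1} f_v$ is a row vector indexed by $\{0,1\}^E$, after the canonical reordering of tensor factors. Likewise $\mathcal{G} \defeq \bigotimes_{u\in V_2} g_u$ is a column vector indexed by $\{0,1\}^E$. With these conventions,
\[
\operatorname{Holant}(H;\+F\mid\+G) = \sum_{\sigma\in\{0,1\}^E} F_\sigma\,\mathcal{G}_\sigma = F\,\mathcal{G},
\]
because a single assignment $\sigma$ of the edges feeds both endpoints of each edge. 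This is exactly where bipartiteness, with every edge having one end on each side, is used.

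Next I would check that the transformation acts edgewise. For each $v$, $f_v\boldsymbol{T} = f_v\boldsymbol{T}^{\otimes d_v}$ applies one copy of $\boldsymbol{T}$ to each coordinate $e \in E(v)$. Hence $\bigotimes_v (f_v \boldsymbol{T}) = F\,\boldsymbol{T}^{\otimes E}$, where $\boldsymbol{T}^{\otimes E}$ means one factor of $\boldsymbol{T}$ per edge and the reordering of tensor factors commutes with this. Similarly, $\bigotimes_u (\boldsymbol{T}^{-1}g_u) = (\boldsymbol{T}^{-1})^{\otimes E}\mathcal{G}$. Therefore
\[
\operatorname{Holant}(H;\+F\boldsymbol{T}\mid\boldsymbol{T}^{-1}\+G) = F\,\boldsymbol{T}^{\otimes E}(\boldsymbol{T}^{-1})^{\otimes E}\,\mathcal{G} = F\,(\boldsymbol{T}\boldsymbol{T}^{-1})^{\otimes E}\,\mathcal{G} = F\,\mathcal{G},
\]
using the mixed-product property of the Kronecker product. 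This equals the original Holant.

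The only real obstacle is the indexing step: I must justify that permuting tensor factors, from the grouping by $V_1$-vertices to the grouping by $V_2$-vertices, commutes with applying $\boldsymbol{T}^{\otimes E}$. This holds because a permutation of tensor factors intertwines $A_1\otimes\cdots\otimes A_m$ with the correspondingly permuted product. Here all factors equal $\boldsymbol{T}$, so the product is invariant under the permutation. Alternatively, and more concretely, I would expand everything coordinatewise. I would write
\[
(f_v\boldsymbol{T})(\sigma_{E(v)}) = \sum_{\tau_{E(v)}} f_v(\tau_{E(v)})\prod_{e\in E(v)} \boldsymbol{T}_{\tau_e,\sigma_e},
\]
and similarly for $\boldsymbol{T}^{-1}g_u$. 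Exchanging the finite sums and summing over $\sigma_e$ for each edge $e$ gives $\sum_{\sigma_e}\boldsymbol{T}_{\tau_e\sigma_e}(\boldsymbol{T}^{-1})_{\sigma_e\tau'_e} = \delta_{\tau_e\tau'_e}$. This collapses the expression to the original sum over $\tau$.
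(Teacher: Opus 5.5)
Your argument is correct, and it is the standard proof of Valiant's Holant theorem. Because $H$ is bipartite, each edge joins exactly one row-vector signature $f_v$ to exactly one column-vector signature $g_u$, so inserting $\boldsymbol{T}\boldsymbol{T}^{-1}=I$ on every edge collapses the transformed Holant back to the original one; summing out $\sigma_e$ in your coordinatewise version gives $\delta_{\tau_e\tau'_e}$ and does the same thing. The paper itself gives no proof and simply cites \cite{Valiant08}, where the theorem rests on exactly this tensor-contraction argument.
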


Next, we represent the two-spin system as a Holant instance via the \emph{half-edge} decomposition. 
Consider a ferromagnetic two-spin system on a graph $G=(V,E)$ defined by parameters $\beta, \gamma, \lambda > 0$.
We define the vertex-edge incidence bipartite graph $H_G$ of $G$ by the following.

\begin{definition}\label{def:graph-HG}
    Given $G=(V,E)$, the vertex-edge incidence graph $H_G = (V_1 \uplus V_2, E_H)$ is a bipartite graph whose left part $V_1 = V$ consists of all vertices of $G$ and whose right part $V_2 = E$ consists of all edges of $G$. For each edge $e = \{u,v\} \in E$ of $G$, we add two edges $e_u = \{u,e\}$ and $e_v = \{v,e\}$ to $E_H$.
    We call $e_u$ and $e_v$ the half-edges of $e$.
\end{definition}

 Note that the degree of each $v \in V_1$ in $H_G$ equals the degree of $v$ in $G$, which we denote by $d_v = \deg_G(v) = \deg_{H_G}(v)$, while the degree of each $e \in V_2$ in $H_G$ is exactly $2$. For each vertex $v \in V_1$, we define the row vector $f_{v}:\{0,1\}^{d_v} \to \mathbb{C}$ by 
\begin{align*}
f_v = \lambda\, (1,0)^{\otimes d_v} + (0,1)^{\otimes d_v} = [\lambda, \underbrace{0,\ldots,0}_{(d_v-1) \text{ zeros}}, 1],
\end{align*}
In words, if all half-edges incident to $v$ are in state $0$, then the value of $f_v$ is $\lambda$; if all half-edges incident to $v$ are in state $1$, then the value of $f_v$ is $1$; otherwise, the value of $f_v$ is $0$. For each edge $e = \{u,v\} \in E$ of $G$, we define the column vector $g_e:\{0,1\}^2 \to \mathbb{C}$ by 
\begin{align*}
g_e = (\beta, 1, 1, \gamma)^T = [\beta, 1, \gamma].
\end{align*}
In words, if the two half-edges of $e$ are both in state $0$, then the value of $g_e$ is $\beta$; if they are both in state $1$, then the value of $g_e$ is $\gamma$; otherwise, the value of $g_e$ is $1$. We define
\begin{align}\label{eq:F-G}
\+F = \{f_v\}_{v \in V_1} \quad \text{and} \quad \+G = \{g_e\}_{e \in V_2}.
\end{align}

This construction encodes the two-spin system as a Holant instance. Indeed, each spin configuration $\sigma \in \{0,1\}^V$ of $G$ corresponds to the assignment in $H_G$ that gives every half-edge incident to a vertex $v$ the spin value $\sigma_v$. Under this correspondence, the vertex function $f_v$ contributes the external field at $v$, while the edge function $g_e$ contributes the interaction along $e = \{u,v\}$. Comparing the weights term by term, the following proposition is immediate from the definitions of $f_v$ and $g_e$.
\begin{proposition}\label{prop:spin-holant}
    The partition function $Z_G^{\rm spin}(\beta, \gamma, \lambda)$ of the two-spin system on $G$ is given by
    \begin{align*}
        Z_G^{\rm spin}(\beta, \gamma, \lambda) = \operatorname{Holant}(H_G; \+F \mid \+G),
    \end{align*}
    where $\+F$ and $\+G$ are defined in \eqref{eq:F-G}.
\end{proposition}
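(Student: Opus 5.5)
The plan is to verify the identity by expanding the Holant sum directly and matching it term by term with the spin partition function. Since $H_G$ is bipartite with parts $V_1=V$ and $V_2=E$, an assignment $\sigma:E_H\to\{0,1\}$ assigns a bit to each half-edge. The product $\prod_{v\in V_1} f_v(\sigma_{E(v)})$ is nonzero only if, for every $v\in V$, all half-edges $e_v$ incident to $v$ carry the same value. Call this common value $\tau_v$. (For $d_v=0$ we take $f_v=\lambda+1$, which is consistent with the sum over $\tau_v\in\{0,1\}$.)

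The first step is to check that these \emph{consistent} assignments are in bijection with spin configurations $\tau\in\{0,1\}^V$. Under this bijection, $\prod_v f_v$ equals $\lambda^{n_0(\tau)}$: each vertex contributes $\lambda$ if $\tau_v=0$ and $1$ if $\tau_v=1$. All inconsistent assignments contribute $0$.

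The second step handles the edge factors. For a consistent assignment and an edge $e=\{u,v\}$, the two half-edges $e_u$ and $e_v$ carry the values $\tau_u$ and $\tau_v$. By definition, $g_e(\tau_u,\tau_v)=A_{\tau_u,\tau_v}$, which is $\beta$, $1$, or $\gamma$ as appropriate. So $\prod_e g_e=\beta^{m_0(\tau)}\gamma^{m_1(\tau)}$.

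Summing over $\tau$ gives exactly $\sum_\tau w^{\text{spin}}_G(\tau;\beta,\gamma,\lambda)=Z^{\text{spin}}_G(\beta,\gamma,\lambda)$ as in \eqref{eqn:Z-spin}. There is no real obstacle here. The only point needing care is the bijection in the first step: the vertex signature $[\lambda,0,\ldots,0,1]$ acts as an equality constraint weighted by the field, and every half-edge is incident to exactly one vertex in $V_1$ and one in $V_2$, so no half-edge is double counted.
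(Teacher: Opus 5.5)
Your proposal is correct and follows the same route as the paper. The vertex signatures $[\lambda,0,\ldots,0,1]$ force all half-edges at $v$ to carry a common value $\sigma_v$, which gives a bijection between nonzero terms and spin configurations; the vertex and edge factors then match $\lambda^{n_0(\sigma)}$ and $\beta^{m_0(\sigma)}\gamma^{m_1(\sigma)}$ term by term, and your note that an isolated vertex has $f_v=\lambda+1$ is a small detail the paper leaves implicit.
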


We now apply the holographic transformation to the Holant instance of the two-spin system and compute the transformed signatures. 
Since $\gamma \neq 1$, define
\begin{align}\label{eq:H-matrix}
      \boldsymbol{T}\defeq\begin{pmatrix}1&-\rho\\ 1&1\end{pmatrix},
      \qquad \text{ where }
      \rho=\frac{\beta-1}{\gamma-1}.
\end{align}
Also, as $\beta+\gamma\neq 2$, $\boldsymbol{T}$ is invertible.
For any vertex $v \in V_1$ of $H_G$, the transformed vertex signature is
\begin{align}\label{eq:transformed-vertex-signature}
    f_v \boldsymbol{T}^{\otimes d_v}
    &= \lambda\, (1,-\rho)^{\otimes d_v} + (1,1)^{\otimes d_v}
    = [1+\lambda, 1-\lambda\rho, \ldots, 1+\lambda(-\rho)^{d_v}].
\end{align}
Thus the transformed vertex signature depends only on the number of incident transformed half-edges in state $1$: its value on assignments of Hamming weight $k$ is $1+\lambda(-\rho)^k$.
For the edge signature, recall that $\theta = \frac{(\gamma-1)^2}{\beta\gamma-1}$.
Using
\[
    \boldsymbol{T}^{-1}=\frac{1}{1+\rho}
    \begin{pmatrix}
        1 & \rho \\
        -1 & 1
    \end{pmatrix},
\]
a direct calculation gives
\begin{align}\label{eq:transformed-edge-signature}
    (\boldsymbol{T}^{-1})^{\otimes 2}g_e
    &=
    \frac{1}{(1+\rho)^2}
    \begin{pmatrix}
        \beta+2\rho+\gamma\rho^2 \\
        -(\beta-1)+\rho(\gamma-1) \\
        -(\beta-1)+\rho(\gamma-1) \\
        \beta+\gamma-2
    \end{pmatrix} = \frac{1}{1-\rho\theta}
    \begin{pmatrix}
        1 \\
        0 \\
        0 \\
        \theta
    \end{pmatrix}.
\end{align}
Thus the transformed edge signature is an \emph{equality} signature on the two transformed half-edges, with weight $1$ on state $00$ and weight $\theta$ on state $11$, multiplied by the common factor $\frac{1}{1-\rho\theta}$. This is the exact reason why we set the parameter $\rho$ in~\eqref{eq:H-matrix}.
Since every pair of half-edges are forced to be equal, by mapping the half-edges to the edges, this exactly gives the weighted subgraph model on the original graph $G$.
Therefore, by \Cref{thm:holant-theorem} and \Cref{prop:spin-holant},
\begin{align*}
    Z_G^{\rm spin}(\beta, \gamma, \lambda)
    &= \operatorname{Holant}(H_G; \+F  \mid \+G) = \operatorname{Holant}(H_G; \+F\boldsymbol{T} \mid \boldsymbol{T}^{-1}\+G) \\
    &= (1-\rho\theta)^{-|E|} \sum_{S \subseteq E} \theta^{|S|}
       \prod_{v \in V} \left(1+\lambda(-\rho)^{\deg_S(v)}\right) \\
    &= (1-\rho\theta)^{-|E|} Z^{\text{wsg}}_G(\theta\boldsymbol{1}_E, \lambda\boldsymbol{1}_V, \rho),
\end{align*}
where $\deg_S(v)$ is the degree of vertex $v$ in the subgraph $G'= (V,S) \subseteq G$.
This proves \Cref{thm:equivalence}.

\subsection{Random cluster type model}\label{sec:RC}

Next we introduce a random cluster (RC) type model \`a la Fortuin and Kasteleyn \cite{FK72},
and couple this model with the $2$-spin system and the WSG model, respectively. 
This way, we can transform a sample from the WSG model to a sample from the spin system, and establish \Cref{thm:sampling}.

The parameters are $\gamma > \beta > 1$ and $\lambda>0$.
As before, let $\rho=\frac{\beta-1}{\gamma-1}$ and $\theta=\frac{(\gamma-1)^2}{\beta\gamma-1}$.
Given a graph $G=(V,E)$, the RC distribution is defined over subsets $S$ of edges:
\begin{align}\label{eqn:mu-RC}
  \mu^{\text{rc}}(S)\propto w^{\text{rc}}(S)\defeq \prod_{C\in \kappa(V,S)}\left( \lambda^{\abs{V(C)}}(\beta-1)^{\abs{E(C)}}+(\gamma-1)^{\abs{E(C)}} \right),
\end{align}
where $\kappa(V,S)$ is the set of connected components in the subgraph $(V,S)$, $E(C)$ is the edge set of the component $C$, and $V(C)$ is the vertex set of the component $C$.
Define the corresponding partition function $Z^{\text{rc}}=Z^{\text{rc}}_{G}(\beta,\gamma,\lambda)\defeq\sum_{S\subseteq E}w^{\text{rc}}(S)$.
It is not hard to verify that for ferromagnetic Ising models ($\beta=\gamma>1$ and $\lambda=1$), \eqref{eqn:mu-RC} coincides with the classical Fortuin-Kasteleyn model \cite{FK72}.
For ferromagnetic Ising models with external fields ($\beta=\gamma>1$ and $\lambda\neq1$), \eqref{eqn:mu-RC} coincides with the weighted random cluster model \cite{FGW23}.

The next theorem is a generalisation of the equivalence by Fortuin and Kasteleyn \cite{FK72} and the coupling by Edwards and Sokal \cite{ES88} between ferromagnetic Ising models and random cluster models.

\begin{theorem}  \label{thm:spin-RC}
  For any graph $G$ and parameters $(\beta,\gamma,\lambda)$,
  $Z^{\text{rc}}_{G}(\beta,\gamma,\lambda) = Z^{\text{spin}}_{G}(\beta,\gamma,\lambda)$,
  where $Z^{\text{spin}}$ is defined in \eqref{eqn:Z-spin}.

  Moreover, given an RC sample $S\sim\mu^{\text{rc}}$,
  one can obtain a spin sample $\sigma$ obeying the law of $\mu^{\text{spin}}$ as follows:
  independently for each connected component $C\in\kappa(V,S)$, 
  assign spin $0$ to all vertices in $C$ with probability $\frac{\lambda^{\abs{V(C)}}(\beta-1)^{\abs{E(C)}}}{\lambda^{\abs{V(C)}}(\beta-1)^{\abs{E(C)}}+(\gamma-1)^{\abs{E(C)}}}$, 
  and spin $1$ otherwise.
\end{theorem}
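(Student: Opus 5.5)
We prove both parts at once with an Edwards--Sokal style joint distribution on pairs $(\sigma,S)$, where $\sigma\in\{0,1\}^V$ and $S\subseteq E$. The key is to expand each edge weight. Write $\beta=1+(\beta-1)$ and $\gamma=1+(\gamma-1)$, and set
\[
  p_e(\sigma,S)=
  \begin{cases}
    \beta-1 & \text{if } e\in S \text{ and } \sigma_u=\sigma_v=0,\\
    \gamma-1 & \text{if } e\in S \text{ and } \sigma_u=\sigma_v=1,\\
    0 & \text{if } e\in S \text{ and } \sigma_u\neq\sigma_v,\\
    1 & \text{if } e\notin S.
  \end{cases}
\]
For $e=(u,v)$ this gives $A_{\sigma_u,\sigma_v}=\sum_{s\in\{0,1\}}p_e(\sigma,s)$: on a monochromatic edge the two terms are $1$ and $\beta-1$ (or $1$ and $\gamma-1$), and on a bichromatic edge they are $1$ and $0$. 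Define the joint weight
\[
  W(\sigma,S)=\prod_{v\in V}B_{\sigma_v}\prod_{e\in E}p_e(\sigma,S).
\]
Expanding the product over edges in $w^{\text{spin}}_G(\sigma)$ then gives $\sum_{S}W(\sigma,S)=w^{\text{spin}}_G(\sigma)$.

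Next we sum over $\sigma$ for fixed $S$. The weight $W(\sigma,S)$ is nonzero only when $\sigma$ is constant on every connected component $C\in\kappa(V,S)$, because an edge of $S$ with differently coloured endpoints contributes the factor $0$. Given such a $\sigma$, the weight factorises over components.
\begin{itemize}
  \item A component coloured $0$ contributes $\lambda^{|V(C)|}(\beta-1)^{|E(C)|}$.
  \item A component coloured $1$ contributes $(\gamma-1)^{|E(C)|}$.
\end{itemize}
Edges outside $S$ contribute $1$ regardless of colouring. Choosing the colour of each component independently, the sum over $\sigma$ factorises as
\[
  \sum_\sigma W(\sigma,S)=w^{\text{rc}}(S).
\]
Isolated vertices fit this formula too: they contribute $\lambda+1$. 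Summing both marginal identities over all pairs gives $Z^{\text{rc}}=\sum_{\sigma,S}W=Z^{\text{spin}}$.

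For the sampling statement, let $\nu(\sigma,S)=W(\sigma,S)/Z$ on pairs with $W>0$; since $\gamma>\beta>1$, all weights are nonnegative. The two marginal identities say that the $\sigma$-marginal of $\nu$ is $\mu^{\text{spin}}$ and the $S$-marginal of $\nu$ is $\mu^{\text{rc}}$. Therefore drawing $S\sim\mu^{\text{rc}}$ and then $\sigma\sim\nu(\cdot\mid S)$ produces $\sigma\sim\mu^{\text{spin}}$. By the factorisation above, $\nu(\cdot\mid S)$ is exactly the product over components of the stated two-point law: colour $0$ with probability proportional to $\lambda^{|V(C)|}(\beta-1)^{|E(C)|}$, and colour $1$ otherwise. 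There is no real obstacle here. The only points that need care are checking the per-edge expansion identity in the bichromatic case and confirming that the weights are nonnegative, which is what makes $\nu$ a probability distribution.
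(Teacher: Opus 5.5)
Your proof is correct and is essentially the paper's argument. Your per-edge weights $p_e(\sigma,S)$ are exactly the paper's split $g=g_0+g_1$, with $g_0$ the all-ones function and $g_1=(\beta-1,0,0,\gamma-1)^T$; after that, both proofs swap the sums, observe that only component-wise constant $\sigma$ survive, and read off the sampling rule from the joint edge--vertex distribution (your remark that nonnegativity from $\gamma>\beta>1$, $\lambda>0$ is needed only for the sampling part, while the partition function identity is purely algebraic, is a welcome clarification).
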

\begin{proof}
  Recall that the ferromagnetic $2$-spin interaction can be represented as a Holant problem on the vertex-edge incidence graph $H_G$.
  A vertex $v$ is assigned the function $f_v=\lambda(1,0)^{\otimes d_v}+(0,1)^{\otimes d_v}$ and the edges assigned $g=(\beta, 1, 1, \gamma)^T$.
  Let $g_0=(1,1,1,1)^T$ and $g_1=(\beta-1,0,0,\gamma-1)^T$ so that $g=g_0+g_1$.
  We may also view $g$, $g_0$, and $g_1$ as symmetric functions on two variables.
  Then, we have
  \begin{align*}
    Z^{\text{spin}} & = \sum_{\sigma: V \to \{0,1\}} \lambda^{n_0(\sigma)} \prod_{(u,v) \in E} g(\sigma_u,\sigma_v)
    = \sum_{\sigma: V \to \{0,1\}} \lambda^{n_0(\sigma)} \prod_{(u,v) \in E} (g_0(\sigma_u,\sigma_v)+g_1(\sigma_u,\sigma_v))\\
    & = \sum_{\sigma: V \to \{0,1\}} \lambda^{n_0(\sigma)} \sum_{\tau: E\to \{0,1\}}\prod_{e=(u,v) \in E}g_{\tau(e)}(\sigma_u,\sigma_v)\\
    & = \sum_{\tau: E\to \{0,1\}} \sum_{\sigma: V \to \{0,1\}} \lambda^{n_0(\sigma)} \prod_{e=(u,v) \in E}g_{\tau(e)}(\sigma_u,\sigma_v).
  \end{align*}
  Given an edge assignment $\tau$, let $S$ be the set of edges that are assigned $1$. 
  In fact, since $g_0$ is the all-$1$ function, its contribution is the same as removing the edge.
  On the other hand, $g_1$ is non-zero only if both half edges are $0$ or $1$.
  This implies that, the only $\sigma$ with non-zero weight is to assign the same value for each connected component $C\in\kappa(V,S)$.
  The weight of assigning $0$ is $\lambda^{\abs{V(C)}}(\beta-1)^{\abs{E(C)}}$ and assigning $1$ is $(\gamma-1)^{\abs{E(C)}}$.
  The overall weight is exactly $w^{\text{rc}}(S)$, which proves the first part of the theorem.

  For the second part, consider a joint distribution over both edges and vertices.
  The only allowed configuration is to have the same value in each connected component induced by the edges,
  and the weight is as given above.
  It is not hard to see that this distribution projected on the vertices recovers the spin distribution,
  and on the edges recovers the RC distribution.
  The second part of the theorem follows.
\end{proof}

The reason we introduce the RC model is because we can also couple it with the WSG model,
which is a generalisation of the coupling used in \cite{GJ09,FGW23}.
Denote by $\mu^{\text{wsg}}$ the Gibbs distribution of the WSG model, as defined in \eqref{eqn:wsg-dist} with $\boldsymbol{\theta}=\theta\boldsymbol{1}$ and $\boldsymbol{\lambda}=\lambda\boldsymbol{1}$.
Also recall that $\rho\theta=\frac{(\beta-1)(\gamma-1)}{\beta\gamma-1}<1$.

\begin{lemma}  \label{lem:RC-WSG}
  For a graph $G=(V,E)$ and parameters $\gamma > \beta > 1$ and $\lambda>0$,
  let $S$ be a sample from $\mu^{\text{wsg}}$.
  Independently add each remaining edge in $E\setminus S$ with probability $\rho\theta$ to get $R$.
  Then $R$ follows the law of $\mu^{\text{rc}}$.
\end{lemma}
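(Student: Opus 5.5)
The plan is a direct computation. I will compute the law of $R$ explicitly and check that it is proportional to $w^{\text{rc}}(R)$. Normalisation then comes for free, since both sides are probability distributions on $2^E$.

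For a fixed $R\subseteq E$, the event $\{R \text{ is output}\}$ occurs exactly when $S\subseteq R$, every edge of $R\setminus S$ is added, and no edge of $E\setminus R$ is added. Hence
\begin{align*}
  \Pr[R] = \frac{1}{Z^{\text{wsg}}}\sum_{S\subseteq R}\theta^{|S|}\prod_{v\in V}\left(1+\lambda(-\rho)^{\deg_S(v)}\right)(\rho\theta)^{|R|-|S|}(1-\rho\theta)^{|E|-|R|}.
\end{align*}
Pull out the factor $(\rho\theta)^{|R|}(1-\rho\theta)^{|E|-|R|}$. What remains is $\sum_{S\subseteq R}\rho^{-|S|}\prod_v(1+\lambda(-\rho)^{\deg_S(v)})$.

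The key step is to expand the vertex product as a sum over subsets $U\subseteq V$:
\begin{align*}
  \prod_v\left(1+\lambda(-\rho)^{\deg_S(v)}\right)=\sum_{U\subseteq V}\lambda^{|U|}(-\rho)^{\sum_{u\in U}\deg_S(u)}.
\end{align*}
Exchanging the sums, the inner sum over $S\subseteq R$ factorises over the edges of $R$. An edge $e$ with $k=|e\cap U|$ endpoints in $U$ contributes the factor $1+\rho^{-1}(-\rho)^k$. This factor equals $1+\rho^{-1}$ when $k=0$, equals $0$ when $k=1$, and equals $1+\rho$ when $k=2$.

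Consequently, only those $U$ that are unions of connected components of $(V,R)$ survive, and the total factorises over components $C\in\kappa(V,R)$. Multiplying back the factor $(\rho\theta)^{|E(C)|}$, each component contributes
\begin{align*}
  (\theta(1+\rho))^{|E(C)|}+\lambda^{|V(C)|}(\rho\theta(1+\rho))^{|E(C)|}.
\end{align*}

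It remains to do the algebra. Set $c=\frac{\beta+\gamma-2}{\beta\gamma-1}$. With $\rho=\frac{\beta-1}{\gamma-1}$ and $\theta=\frac{(\gamma-1)^2}{\beta\gamma-1}$, one checks that $\theta(1+\rho)=c(\gamma-1)$ and $\rho\theta(1+\rho)=c(\beta-1)$. One also checks that $1-\rho\theta=c$. Hence
\begin{align*}
  \Pr[R]=\frac{c^{|E|-|R|}\,c^{|R|}}{Z^{\text{wsg}}}\prod_{C\in\kappa(V,R)}\left(\lambda^{|V(C)|}(\beta-1)^{|E(C)|}+(\gamma-1)^{|E(C)|}\right)=\frac{(1-\rho\theta)^{|E|}}{Z^{\text{wsg}}}\,w^{\text{rc}}(R).
\end{align*}
This is proportional to $w^{\text{rc}}(R)$, so $R\sim\mu^{\text{rc}}$. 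As a consistency check, the constant matches \Cref{thm:equivalence} and \Cref{thm:spin-RC}.

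The sampling description is valid because $0<\rho\theta<1$ and $\mu^{\text{wsg}}$ is a genuine distribution; both are assumed from earlier in the paper. The only step needing care is the edge-factor cancellation that forces $U$ to be a union of components.
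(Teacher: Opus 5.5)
Your proof is correct. Its outer skeleton matches the paper's: condition on $S\subseteq R$, factor out $(\rho\theta)^{|R|}(1-\rho\theta)^{|E|-|R|}$, and reduce to evaluating $\sum_{S\subseteq R}\rho^{-|S|}w^{\text{wsg}}_{\text{vtx}}(S)$. You differ in how that sum is evaluated and how the constant is fixed.

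The paper proves the needed identity (\Cref{lem:subgraph-weight}) by writing both $w^{\text{rc}}(R)$ and the sum as Holants on the incidence graph. It then applies \Cref{thm:holant-theorem} with the same $\boldsymbol{T}$ as in \Cref{thm:equivalence}. Finally, it obtains the normalising constant from $Z^{\text{wsg}}=(1-\rho\theta)^{|E|}Z^{\text{spin}}=(1-\rho\theta)^{|E|}Z^{\text{rc}}$, using \Cref{thm:equivalence} and \Cref{thm:spin-RC}.

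You instead expand the vertex product over $U\subseteq V$, so that the sum over $S$ factors edge by edge. This is the holographic transformation carried out by hand. Your $U$ plays the role of the spin-$0$ vertices, and your edge factors $1+\rho^{-1},\,0,\,1+\rho$ are exactly $\frac{\beta+\gamma-2}{(\beta-1)(\gamma-1)}$ times the entries $\gamma-1,\,0,\,\beta-1$ of the RC edge signature $[\beta-1,0,\gamma-1]$.

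What each route buys: yours is elementary and self-contained, and it shows directly why only assignments constant on components survive. It gets the normalisation from proportionality alone, so it needs neither Valiant's theorem nor the two partition-function identities; in fact it reproves $Z^{\text{wsg}}=(1-\rho\theta)^{|E|}Z^{\text{rc}}$ as a byproduct. The paper's route has the advantage of fitting uniformly into the Holant framework it already uses for \Cref{thm:equivalence}.
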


To prove \Cref{lem:RC-WSG}, we will need the following lemma.

\begin{lemma}  \label{lem:subgraph-weight}
  Let $G=(V,R)$ be a graph.
  Then,
  \begin{align}\label{eqn:subgraph-weight}
    \sum_{S\subseteq R}\rho^{-\abs{S}} w^{\text{wsg}}_{\text{vtx}}(S) = \left( \frac{\beta+\gamma-2}{(\beta-1)(\gamma-1)} \right)^{\abs{R}} w^{\text{rc}}(R),
  \end{align}
  where $w^{\text{wsg}}_{\text{vtx}}(S)\defeq\prod_{v\in V}\left( 1+\lambda(-\rho)^{\deg_S(v)} \right)$.
\end{lemma}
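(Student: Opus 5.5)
Both sides factor over connected components of $(V,R)$, so the plan is to prove the identity for a single connected graph and then multiply. On the left, $S\subseteq R$ splits as a disjoint union of its restrictions to the components. The factors $\rho^{-\abs{S}}$ and $w^{\text{wsg}}_{\text{vtx}}(S)$ are products over edges and vertices of the components, so the sum equals the product over components $C$ of the analogous sums. The right-hand side is already such a product by \eqref{eqn:mu-RC}. It therefore suffices to show that for a connected graph $C=(V_C,E_C)$,
\begin{align*}
  \sum_{S\subseteq E_C}\rho^{-\abs{S}}\prod_{v\in V_C}\left(1+\lambda(-\rho)^{\deg_S(v)}\right)
  = \left(\frac{\beta+\gamma-2}{(\beta-1)(\gamma-1)}\right)^{\abs{E_C}}\left(\lambda^{\abs{V_C}}(\beta-1)^{\abs{E_C}}+(\gamma-1)^{\abs{E_C}}\right).
\end{align*}

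To evaluate the left side, we expand the vertex product. Each vertex either takes the term $1$ or the term $\lambda(-\rho)^{\deg_S(v)}$. Let $U\subseteq V_C$ be the set of vertices taking the $\lambda$ term. The summand then becomes $\lambda^{\abs{U}}\rho^{-\abs{S}}\prod_{v\in U}(-\rho)^{\deg_S(v)}$, which factors over edges. An edge of $S$ with $j$ endpoints in $U$ contributes $\rho^{-1}(-\rho)^{j}$, while an edge not in $S$ contributes $1$. Summing over $S$ edge by edge, each edge contributes $1+\rho^{-1}(-\rho)^{j}$:
\begin{itemize}
  \item $j=0$ gives $1+\rho^{-1}$;
  \item $j=1$ gives $0$;
  \item $j=2$ gives $1+\rho$.
\end{itemize}
So only those $U$ with no edge crossing the cut survive. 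Since $C$ is connected, this forces $U=\emptyset$ or $U=V_C$. The left side therefore equals $(1+\rho^{-1})^{\abs{E_C}}+\lambda^{\abs{V_C}}(1+\rho)^{\abs{E_C}}$.

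It remains to substitute $\rho=\frac{\beta-1}{\gamma-1}$. This gives $1+\rho^{-1}=\frac{\beta+\gamma-2}{\beta-1}=\frac{\beta+\gamma-2}{(\beta-1)(\gamma-1)}\cdot(\gamma-1)$ and $1+\rho=\frac{\beta+\gamma-2}{\gamma-1}=\frac{\beta+\gamma-2}{(\beta-1)(\gamma-1)}\cdot(\beta-1)$. This matches the right side term by term. Isolated vertices are the case $E_C=\emptyset$, and both sides give $1+\lambda$, consistent with the formula.

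None of these steps is a real obstacle. The one point needing care is the vanishing of the mixed-edge factor ($j=1$), which is exactly what collapses the sum to the two monochromatic terms. It mirrors the Fortuin--Kasteleyn argument in \Cref{thm:spin-RC}.
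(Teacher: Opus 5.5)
Your proof is correct, and it takes a different route from the paper's.

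The paper uses a holographic transformation. It writes $w^{\text{rc}}(R)$ as a Holant on the incidence graph, with vertex signatures $\lambda(1,0)^{\otimes d_v}+(0,1)^{\otimes d_v}$ and edge signature $[\beta-1,0,\gamma-1]$. It then applies Valiant's Holant theorem with the same matrix $\boldsymbol{T}$ as in \Cref{thm:equivalence}. The transformed edge signature works out to $\frac{(\beta-1)(\gamma-1)}{\beta+\gamma-2}[1,0,\rho^{-1}]$, and together with \eqref{eq:transformed-vertex-signature} this is exactly the left-hand side.

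You instead expand each vertex factor into its two terms, indexed by a set $U$, and sum over $S$ edge by edge. The vanishing of the $j=1$ factor, $1+\rho^{-1}(-\rho)=0$, then kills every $U$ that is not a union of components. Your computation is the holographic transformation written out by hand. Choosing $1$ or $\lambda(-\rho)^{\deg_S(v)}$ at $v$ corresponds to picking one of the two rank-one terms $(1,1)^{\otimes d_v}$ and $\lambda(1,-\rho)^{\otimes d_v}$ of $f_v\boldsymbol{T}$, and $U$ plays the role of the spin-$0$ vertices.

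What your version buys is that it is elementary and self-contained. It also shows that the collapse to the two monochromatic terms holds for every $\rho\neq 0$ and any $\lambda$, with $\beta,\gamma$ entering only in the final substitution $\rho=\frac{\beta-1}{\gamma-1}$. The paper's version is shorter given the machinery already set up, and it makes the parallel with \Cref{thm:equivalence} explicit.

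One minor remark: the reduction to connected components is convenient but not needed. The global expansion over $U\subseteq V$ already forces $U$ to be a union of components of $(V,R)$, which yields the product over $\kappa(V,R)$ directly.
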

\begin{proof}
  This is a holographic transformation via \Cref{thm:holant-theorem}.
  Again, let $H=H_G$ be the vertex-edge incidence graph.
  Then, $w^{\text{rc}}(R)$ is the Holant by giving each $v\in V$ a function $f_v=\lambda(1,0)^{\otimes d_v}+(0,1)^{\otimes d_v}$, and each $e\in E$ a function $g=[(\beta-1),0,(\gamma-1)]$.
  This is because with these functions, the only configuration with non-zero weights have the same value in each connected component $C$,
  and the weight is $\lambda^{\abs{V(C)}}(\beta-1)^{\abs{E(C)}}$ if the value is $0$, and $(\gamma-1)^{\abs{E(C)}}$ if the value is $1$.

  Given this Holant, we apply the same transformation as in \Cref{thm:equivalence} by $\boldsymbol{T}\defeq\begin{pmatrix}1&-\rho\\ 1&1\end{pmatrix}$.
  Recall that $f_v \boldsymbol{T}^{\otimes d_v}$ is given in \eqref{eq:transformed-vertex-signature} already.
  On the edges, the resulting function is 
  \begin{align*}
    (\boldsymbol{T}^{-1})^{\otimes 2}g & = \frac{1}{(1+\rho)^2}
    \begin{pmatrix}
      \beta-1+(\gamma-1)\rho^2\\ 0\\ 0\\ \beta+\gamma-2
    \end{pmatrix}
    = \left( \frac{\gamma-1}{\beta+\gamma-2} \right)^2\begin{pmatrix}
      \frac{(\beta-1)(\beta+\gamma-2)}{\gamma-1}\\ 0\\ 0\\ \beta+\gamma-2
    \end{pmatrix}\\
    & = \frac{(\beta-1)(\gamma-1)}{\beta+\gamma-2} \begin{pmatrix}
      1\\ 0\\ 0\\ \rho^{-1}
    \end{pmatrix}.
  \end{align*}

  On the other hand, the left hand side of \eqref{eqn:subgraph-weight} can be represented as a Holant as well.
  The vertex function is exactly  $f_v \boldsymbol{T}^{\otimes d_v}$ for any $v\in V$,
  and the edge function is $[1,0,\rho^{-1}]$.
  This proves the lemma.
\end{proof}

Now we can prove \Cref{lem:RC-WSG}.

\begin{proof}[Proof of \Cref{lem:RC-WSG}]
  Note that $\mu^{\text{wsg}}(S)=\frac{\theta^{\abs{S}}w^{\text{wsg}}_{\text{vtx}}(S)}{Z^{\text{wsg}}}$.
  Then, the probability of getting $R$ is
  \begin{align*}
    &\;\sum_{S\subseteq R}\frac{\theta^{\abs{S}}w^{\text{wsg}}_{\text{vtx}}(S)}{Z^{\text{wsg}}}\left( \rho\theta \right)^{\abs{R}-\abs{S}}\left( 1-\rho\theta \right)^{\abs{E}-\abs{R}}\\
    =\;&\; \frac{(\rho\theta)^{\abs{R}}\left( 1-\rho\theta \right)^{\abs{E}-\abs{R}}}{Z^{\text{wsg}}} \sum_{S\subseteq R}\rho^{-\abs{S}}w^{\text{wsg}}_{\text{vtx}}(S) \\
    =\;&\; \frac{\left( 1-\rho\theta \right)^{\abs{E}}}{Z^{\text{wsg}}} \left( \frac{\rho\theta}{1-\rho\theta} \right)^{\abs{R}}\left( \frac{\beta+\gamma-2}{(\beta-1)(\gamma-1)} \right)^{\abs{R}} w^{\text{rc}}(R)
    \tag{by \protect\Cref{lem:subgraph-weight}}.
  \end{align*}
  By \Cref{thm:equivalence} and \Cref{thm:spin-RC}, $Z^{\text{wsg}}=\left( 1-\rho\theta \right)^{\abs{E}}Z^{\text{spin}}=\left( 1-\rho\theta \right)^{\abs{E}}Z^{\text{rc}}$.
  Moreover, $\frac{\rho\theta}{1-\rho\theta}=\frac{(\beta-1)(\gamma-1)}{\beta+\gamma-2}$.
  Thus, the probability above is $\frac{ w^{\text{rc}}(R)}{Z^{\text{rc}}}$, which is exactly $\mu^{\text{rc}}(R)$.
\end{proof}

We remark that one can also strengthen \Cref{thm:spin-RC} and  \Cref{lem:RC-WSG} into a grand model as in \cite{FGW23} in a straightforward way.
However, it does not seem to benefit our applications so the details are omitted here.
In any case, we conclude this section with a proof of \Cref{thm:sampling}.

\begin{proof}[Proof of \Cref{thm:sampling}]
  We run Glauber dynamics for the WSG model for $O\left( \Delta^C \cdot n \log \frac{n}{\epsilon} \right)$ steps to get a subgraph $S$,
  where $C$ is from \Cref{thm:mixing-time-sgw}.
  Since each step of Glauber dynamics can be implemented in time $O(\Delta)$, the total running time here is $O\left( \Delta^{C+1} \cdot n \log \frac{n}{\epsilon} \right)$.
  By \Cref{thm:mixing-time-sgw} and the coupling inequality,
  we may couple $S$ with a perfect sample from $\mu^{\text{wsg}}$ with probability at least $1-\epsilon$.
  Then, we use \Cref{lem:RC-WSG} to get an RC sample $R$, and subsequently use \Cref{thm:spin-RC} to get a spin sample $\sigma$.
  This $\sigma$ can be coupled with a perfect sample from $\mu^{\text{spin}}$ with probability at least $1-\epsilon$.
  Thus, the total variation distance of our sample from $\mu^{\text{spin}}$ is at most $\epsilon$.
\end{proof}

\section{Stability of the polynomial}\label{sec:stability-of-polynomial}

In this section, we prove \Cref{thm:stability-of-partition-function}.
We decompose the argument into two steps. First, we reduce an arbitrary graph $G$ to a graph $G'$ by contracting \textit{bad} leaves (the degree-one vertices with positive external fields). Second, we prove the stability result for the reduced graph $G'$. The latter step relies crucially on the leaf-contraction operation developed in the reduction step, so we present the reduction first. 

Before we proceed, we explain why this preliminary reduction is needed.
This is due to a technical reason: our proof in the second step cannot directly handle degree-one vertices with positive external fields.
(In particular, \Cref{lem:lambda_avoid} does not hold for leaves.)
A related obstacle also appears in~\cite{GLL20}.
One might try to avoid it by preprocessing the original graph $G$, deleting all degree-one vertices in advance, and then analyzing the weighted subgraph model only on graphs with minimum degree at least two.
This shortcut was used in~\cite{GLL20}, but it is not sufficient here.
Our stability result must hold not only for the original graph, but also after arbitrary pinnings of the edge variables.
Pinning an edge to $0$ is equivalent to deleting that edge.
Thus, even if the original graph has no degree-one vertices after preprocessing, 
the pinning may produce new degree-one vertices that require us to handle.
%We therefore need a reduction that can be applied uniformly to the instances arising after pinning, rather than only to the original graph.

We next recall some basic settings. 
Let $\gamma > \beta > 1$ and $0 < \lambda < \lambda^*(\beta, \gamma)$. Let $\rho = \frac{\beta-1}{\gamma-1}$ and $\theta = \frac{(\gamma-1)^2}{\beta\gamma-1}$.
Let $G = (V,E)$ be a graph and $\boldsymbol{\lambda} = (\lambda_v)_{v \in V}$ be parameters such that $-\rho \lambda \leq \lambda_v \leq \lambda$ for all $v \in V$.
We will call $\lambda_v$ the \emph{external field} of vertex $v$.
Let $w: \{0,1\}^E \to \mathbb{R}$ be the weight function defined in \eqref{eq:w-weight-function} with respect to graph $G$ and parameters $\boldsymbol{\lambda},\rho$. Formally, for any $S \subseteq E$,
\begin{align}\label{eq:w-weight-function-general}
w(S) = \prod_{v \in V} (1+\lambda_v(-\rho)^{\deg_{S}(v)}),
\end{align}
where $\deg_{S}(v)$ is the degree of vertex $v$ in the subgraph $G_S = (V,S)$.
%We may also view $w$ as a function from $\{0,1\}^E \to \mathbb{R}$ by setting $w(\sigma) = w(S^\sigma)=  w(\{e \in E \mid \sigma_e = 1\})$.
The partition function polynomial in~\eqref{eq:conditional-partition-function-polynomial} (without pinning) is defined as
\begin{align}\label{eq:partition-function-polynomial-general}
  Z(\bm{z}) = \sum_{S \subseteq E} w(S) \prod_{e \in S} z_e.
\end{align}

Note that the weight function $w$ in~\eqref{eq:w-weight-function-general} depends on the graph $G= (V,E)$ and the parameters $\boldsymbol{\lambda},\rho$.
Throughout this section, the parameter $\rho = \frac{\beta-1}{\gamma-1}$ is fixed. 
However, we may need to vary the graphs and external fields. 
We will use the following notation to highlight the dependence on the graph and the external fields:
\begin{align*}
 Z_G(\bm{z}; \bm{\lambda}) = \sum_{S \subseteq E} \prod_{v \in V} (1+\lambda_v(-\rho)^{\deg_{S}(v)}) \prod_{e \in S} z_e,
\end{align*}
where the parameter $\rho = \frac{\beta-1}{\gamma-1}$.

\Cref{thm:stability-of-partition-function} is proved in two steps. 
First, we reduce a general graph to a graph without \textit{bad} leaf vertex. Then we prove stability for such reduced graphs. 
To state the two lemmas precisely, we introduce some notation.
For a real interval $I$, define its $\delta$-strip by
$$\mathcal S_{\delta}(I):=\{a\in\mathbb C:\operatorname{dist}(a,I)<\delta\}.$$ 
Recall that 
$$U_\delta(x) := \{a \in \mathbb{C}:\operatorname{dist}(a,x)<\delta \}$$ 
is the disk of radius $\delta$ centered at $x$.

Throughout the next two lemmas, fix constants $\gamma > \beta > 1$ and $0 < \lambda < \lambda^\star(\beta, \gamma)$, and set
\[
\rho = \frac{\beta-1}{\gamma-1},
\qquad
\theta = \frac{(\gamma-1)^2}{\beta\gamma-1},
\qquad
I=[-\rho\lambda,\lambda], \quad I_-= [-\rho\lambda,0].
\]

\begin{condition}\label{condition:Glambda}
Let $\delta_1>0$ be a constant.
Let $G= (V,E)$ be a graph and $\boldsymbol{\lambda} = (\lambda_v)_{v \in V} \in \mathbb{C}^V$. It holds that for all vertices $v \in V$, $\lambda_v \in \mathcal{S}_{\delta_1}(I)$, furthermore, if $\deg_{G}(v)=1$, then $\lambda_v\in \mathcal{S}_{\delta_1}(I_-)$.
\end{condition}

\begin{lemma}[Reduction lemma]\label{lem:reduction-lemma}
  For any constant $\delta_1 > 0$, there exists a constant $\delta_2 = \delta_2(\beta,\gamma,\lambda,\delta_1) > 0$ such that the following holds.
  Let $G = (V,E)$ be a graph and $\boldsymbol{\lambda} = (\lambda_v)_{v \in V} \in \mathbb{R}^V$ be parameters such that $-\rho \lambda \leq \lambda_v \leq \lambda$ for all $v \in V$.
  For any vector  $\boldsymbol{z} = (z_e)_{e \in E}$ with $z_e \in U_{\delta_2}(\theta)$ for all $e \in E$, there exists a subgraph $G'=(V',E')$ of $G$, and complex-valued parameters $\boldsymbol{\lambda}' = (\lambda'_v)_{v \in V'} \in \mathbb{C}^{V'}$ such that $G'$ and $\boldsymbol{\lambda}'$ satisfy \Cref{condition:Glambda} with constant $\delta_1$ and
  \begin{align*}
      Z_{G'}(\boldsymbol{z}_{E'}; \boldsymbol{\lambda}') \ne 0 \textnormal{ if and only if } Z_{G}(\boldsymbol{z}; \boldsymbol{\lambda}) \ne 0.
    \end{align*}
\end{lemma}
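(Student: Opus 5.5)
The plan is to prove the statement by repeatedly pruning bad leaves, tracking how each deletion rescales the field at the neighbouring vertex. First I will establish a single leaf-contraction identity. Let $v$ be a vertex with $\deg_G(v)=1$, let $e=\{v,u\}$ be its edge, and write $x=-\rho$. Summing over whether $e\in S$ and writing $d=\deg_{S\setminus e}(u)$, the two factors at $v$ and $u$ combine as
\begin{align*}
(1+\lambda_v)(1+\lambda_u x^{d}) + z_e(1-\rho\lambda_v)(1+\lambda_u x^{d+1})
= A(\lambda_v,z_e)\bigl(1+\lambda_u\, r(\lambda_v,z_e)\, x^{d}\bigr),
\end{align*}
where
\begin{align*}
A(\lambda_v,z_e)=(1+\lambda_v)+z_e(1-\rho\lambda_v), \qquad r(\lambda_v,z_e)=\frac{(1+\lambda_v)-\rho z_e(1-\rho\lambda_v)}{A(\lambda_v,z_e)}.
\end{align*}
Hence $Z_G(\boldsymbol z;\boldsymbol\lambda)=A\cdot Z_{G-v}(\boldsymbol z_{E-e};\boldsymbol\lambda')$, where $\lambda'_u=\lambda_u r(\lambda_v,z_e)$ and all other fields are unchanged. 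Isolated vertices $w$ contribute only a factor $1+\lambda_w$, so they can be dropped as well. It remains to show this factor is nonzero, which holds because $\lambda_w$ stays in a small neighbourhood of $I$ and $1+a\ne 0$ there: $\rho\lambda<1$ by \Cref{lem:lambda-v-tau}, so $-\rho\lambda>-1$. Consequently, nonvanishing is preserved in both directions as long as every factor $A$ we pull out is nonzero.

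For real inputs both quantities are well behaved. Take $\lambda_v\in I$ and $z=\theta$. Then $1-\rho\lambda_v>0$ and $1+\lambda_v>0$, so $A>0$, bounded below by a constant. The numerator of $r$ equals $(1-\rho\theta)+\lambda_v(1+\rho^2\theta)$, which is positive because $\rho\theta<1$ and $\lambda_v\ge-\rho\lambda$ (this needs a one-line check). It is also smaller than $A$, so $r\in(0,1)$. Therefore, in the real case, $\lambda'_u=\lambda_u r$ stays in $I$ and has the same sign as $\lambda_u$.

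The procedure itself is as follows. While some vertex of degree one has field outside $\mathcal S_{\delta_1}(I_-)$, delete it by the identity above, and discard any isolated vertices created along the way. Each step removes a vertex, so the procedure terminates. At termination, every remaining leaf has field in $\mathcal S_{\delta_1}(I_-)$, which gives \Cref{condition:Glambda} provided every field stays inside $\mathcal S_{\delta_1}(I)$ throughout. The pruned parts form rooted trees hanging off the surviving graph. The field finally received by a surviving vertex $u$ is $\lambda_u$ times a product of factors $r$, each computed recursively from its subtree. This is exactly a Weitz-type tree recursion, with $z_e$ playing the role of the edge activity.

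The main obstacle is controlling the complex perturbation uniformly in the depth and branching of these trees, because $z_e$ ranges over $U_{\delta_2}(\theta)$ rather than being exactly $\theta$. I will first try a potential function $\varphi$ in the style of \cite{GuoL18}, applied to the ratio $r$ or the field, and show that on the real interval the map $(\lambda_v,z)\mapsto\lambda_v r(\lambda_v,z)$ is a strict contraction in the $\varphi$-metric, with a product rule that handles several children at once. A strict contraction extends to a small complex neighbourhood, so an error of order $\delta_2$ introduced at each edge does not accumulate. The aim is to prove by induction over the pruned trees that every field produced lies in $\mathcal S_{\epsilon}(I)$ for some $\epsilon<\delta_1$ depending only on $\beta,\gamma,\lambda,\delta_1$. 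The same induction also shows that each $A$ stays within a constant of a positive real number and hence is nonzero. If a global contraction turns out to be too weak, I would fall back on bounding $|{\arg}|$ or the imaginary part of $\varphi$ directly, using the strict inequality $r<1$ and the compactness of $I$. In either case, $\delta_2$ is then chosen small enough to close the induction.
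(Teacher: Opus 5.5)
\textbf{There is a genuine gap: the complex-perturbation step is only announced, and as you state it, it would not close.}

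Your leaf identity and pruning scheme are essentially the paper's: your $A$ and $r$ are $C_z$ and $\Phi_z$ from \eqref{eq:C-Phi}. You also correctly locate the difficulty, which is keeping all fields in a fixed complex region uniformly in the depth and branching of the pruned trees. But that step is the entire content of the lemma (in the paper it is the closure \Cref{lemma:leaf-contraction}), and you leave it as a plan.

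Worse, your stated induction hypothesis does not propagate. The hypothesis is that every field lies in a uniform strip $\mathcal S_\epsilon(I)$. With $M(a)=\Phi_\theta(a)=\frac{1+\beta a}{\gamma+a}$, take a deleted child with field $x+iy$, where $x>0$ is small and $|y|<\epsilon$. It turns a real parent field $c$ into $cM(x+iy)$, whose imaginary part is approximately $cM'(x)y$, with $cM'(x)=\frac{c(\beta\gamma-1)}{(\gamma+x)^2}$. This factor can exceed $1$: for $\beta=1.1$, $\gamma=10$ one has $\lambda^\star\approx 15.5$ and $cM'(x)\approx c/10$ for small $x$. So the strip is not mapped into itself.

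Likewise, ``a strict contraction extends to a small complex neighbourhood'' is not automatic when the number $k$ of absorbed children is unbounded. The second-order terms and the $z_e$-perturbations have to be summed over $k$ factors, and you do not say how that sum stays bounded.

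The missing ingredient is an invariant region shaped by the potential, together with estimates that are uniform in $k$. The paper uses the following.
\begin{itemize}
\item \textbf{Region and deletion rule.} It takes the tubes $\mathcal U_\eta^+=\bigcup_{x\in[0,\lambda]}D(x,\eta h_+(x))$ and $\mathcal U_\eta^-=\bigcup_{x\in[0,\rho\lambda]}D(-x,\eta h_+(x))$, with Guo--Lu's potential $h_+(x)=\max\{t,x\log(\zeta/x)\}$. It deletes only leaves whose field lies in $\mathcal U_\eta^+$.
\item \textbf{First-order term.} Since $M'(x)/M(x)=\frac{\beta\gamma-1}{(1+\beta x)(\gamma+x)}$ is exactly the Guo--Lu recursion, their inequality bounds the first-order term of $c\prod_i M(x_i+\xi_i)$ by $\kappa\eta h_+(F_0)$. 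Here $F_0=c\prod_i M(x_i)$ and $\kappa<1$ (\Cref{lem:positive-potential-contraction}).
\item \textbf{Higher-order terms.} These are $O(\eta^2)$ uniformly in $k$ because $|M|\le q<1$ on the tube. For instance, $\sum_{r\ge 2}\binom{k}{r}(C\eta)^r q^{k-r}\le C^2\eta^2k^2(q+C\eta)^{k-2}$, which is bounded in $k$ (\Cref{lem:positive-product-error}).
\item \textbf{Perturbing $z$.} Replacing $\theta$ by $z_i\in U_\delta(\theta)$ moves the product by $O(\delta k q_1^{k-1})=O(\delta)$. This is absorbed by the $\Omega(\eta)$ margin, which comes from $h_+\ge t$, once $\delta\ll\eta$.
\end{itemize}
Invariance of this region, not of a strip, is what closes the induction over deletions.

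Separately, your claim that the numerator of $r$ is positive on all of $I$ is false. The numerator equals $\frac{\beta+\gamma-2}{\beta\gamma-1}(1+\beta\lambda_v)$, which is negative at $\lambda_v=-\rho\lambda$ whenever $\beta\rho\lambda>1$. For example, take $\beta=2$, $\gamma=3$ and $1<\lambda<\lambda^\star\approx 1.74$. This error is harmless only because you delete leaves whose fields are near $I_+$. That restriction is what keeps every factor close to $(0,1)$, so it should be stated as essential rather than incidental.
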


\begin{lemma}[Stability of the bad-leaf-reduced core]\label{thm:core-zero-free}
  %Fix $\gamma>\beta>1$ and $0<\lambda<\lambda^\star$.  
  There exists $\delta_1=\delta_1(\beta,\gamma,\lambda)>0$ such that the following holds. Let $G=(V,E)$ and $\boldsymbol{\lambda} = (\lambda_v)_{v \in V} \in \mathbb{C}^V$ be a graph and a vector such that \Cref{condition:Glambda} holds with constant $\delta_1$.
  Then there exists a constants $\delta_3 = \delta_3(\beta,\gamma,\lambda,\delta_1)>0$ such that  for all $\boldsymbol{z} = (z_e)_{e \in E}$ with $z_e \in U_{\delta_3}(\theta)$ for all $e \in E$,
  \[
  Z_G(\bm{z}; \bm{\lambda}) \neq 0.
  \]
  \end{lemma}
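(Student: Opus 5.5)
The plan is to run the Asano-contraction argument sketched in the proof outline. I would work on the sequence $H_0,\dots,H_m$ of partial incidence graphs of $G$, with a region $\Gamma\subset\mathbb{C}$ to be chosen.

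\textbf{Stars.} Start from the disjoint union $H_0$ of stars. For a vertex $v$ with $d=\deg_G(v)$ and half-edge variables $x_1,\dots,x_d$, the star polynomial is
\[
P_v(\boldsymbol x)=\sum_{T\subseteq[d]}\bigl(1+\lambda_v(-\rho)^{|T|}\bigr)\prod_{i\in T}x_i=\prod_{i=1}^d(1+x_i)+\lambda_v\prod_{i=1}^d(1-\rho x_i).
\]
Hence $P_v\neq0$ exactly when
\[
\prod_{i=1}^d\frac{1+x_i}{1-\rho x_i}\neq-\lambda_v .
\]
It is convenient to change variables to $q_i=1/x_i$ or to the Möbius image $y_i=\frac{1+x_i}{1-\rho x_i}$; after this change the condition resembles the edge-contraction step of \cite{GLL20}. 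I would choose $\Gamma$ (in the $y$-coordinate) to be a sector-type region $\{\,|\arg y|<\alpha\,\}$ intersected with a suitable annulus. The requirement is that for $d\ge2$ and every $\lambda_v\in\mathcal S_{\delta_1}(I)$, a product of $d$ points outside $\Gamma$ cannot equal $-\lambda_v$. This is the avoidance statement (the analogue of \Cref{lem:lambda_avoid}), and it is exactly where $\lambda<\lambda^\star$ enters: the angle $\alpha$ is tied to $\tan^{-1}\sqrt{\beta\gamma-1}$ through the fixed point of the Möbius map. For $d=1$ the star is $1+x+\lambda_v(1-\rho x)$. \Cref{condition:Glambda} forces $\lambda_v$ near $I_-$ for leaves, and with $\lambda_v\le0$ the point $y=-\lambda_v$ lies in $[0,\rho\lambda]$, which should be inside $\Gamma$, so a leaf does not violate stability. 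All estimates have to be made robust to $\delta_1$-perturbations of $\lambda_v$, so I would prove strict real inequalities first and then use compactness and continuity to open them up to a $\delta_1$-neighbourhood.

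\textbf{Contraction step.} Suppose a multiaffine polynomial $A+Bx+Cx'+Dxx'$ is nonzero whenever $x,x'\notin\Gamma$ (stated in whichever coordinate is convenient). Identifying the pair of half-edges for $e$ keeps only the $00$ and $11$ terms, giving $A+Dz_e$. By the Asano–Ruelle lemma (or a direct argument), $A+Dz\neq0$ whenever $-z\notin\Gamma'\cdot\Gamma'$ for the appropriate image $\Gamma'$ of $\Gamma$; more precisely, $-A/D$ is a product of two points of the forbidden region. So I need to show that $\theta$, together with a disk $U_{\delta_3}(\theta)$ around it, avoids this product set. This is the dual of the initial-condition calculation of \cite{GLL20}, and I would verify it by computing how the Möbius maps send $\theta$ and checking strict separation. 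Strictness then gives a uniform $\delta_3$.

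\textbf{Induction.} Order the edges and contract them one at a time. I maintain the invariant that the polynomial of $H_i$ is nonzero when each uncontracted half-edge variable lies outside $\Gamma$ and each contracted variable lies in $U_{\delta_3}(\theta)$. Since the contraction step is local to one pair of variables and the other variables are held fixed in their admissible regions, the invariant passes from $H_i$ to $H_{i+1}$. At $H_m=H_G$ this gives $Z_G(\boldsymbol z;\boldsymbol\lambda)\neq0$.

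\textbf{Main obstacle.} I expect the hardest part to be choosing one region $\Gamma$ that does both jobs at once. It must satisfy the star avoidance uniformly over all degrees $d\ge2$ and all complex $\lambda_v$ in the strip, and its product set must still avoid a neighbourhood of $\theta$. I would first try the exact region implicit in \cite{GLL20}, shrink it slightly to create slack, and check the two requirements separately. If the $d$-fold product bound degrades as $d$ grows, I would switch to controlling arguments through a logarithmic potential, additive in $d$, which gives bounds independent of degree.
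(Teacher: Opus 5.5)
Your architecture matches the paper: stars, a one-edge Asano contraction, and induction over $H_0,\dots,H_m$. However, the step you yourself flag as the main obstacle is left open, and the concrete region you propose cannot work.

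In your coordinate $y=\frac{1+x}{1-\rho x}$, the bad region $\Gamma$ is forced to contain several points.
\begin{itemize}
\item It must contain a neighbourhood of $0$. At $y_1=0$ (that is, $x_1=-1$) the star polynomial equals $\lambda_v\prod_i(1-\rho x_i)$, which vanishes for $\lambda_v=0\in I$. Your own leaf analysis also requires $[0,\rho\lambda]\subset\Gamma$.
\item Taking $d=2$ and $y_1=y_2$ gives $y_1y_2=-\lambda_v$ with $\lambda_v=\pm s$. This forces $\pm i\sqrt{s}\in\Gamma$ for $s\in(0,\lambda]$ and $\pm\sqrt{s}\in\Gamma$ for $s\in(0,\rho\lambda]$.
\end{itemize}
The set $\{|\arg y|<\alpha\}\cap\{r_1<|y|<r_2\}$ contains none of these points, so it cannot serve as the bad region.

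Your contraction step is also unsupported as written. ``Computing how the M\"obius maps send $\theta$'' gives no mechanism for controlling the two-variable product set $-\Gamma^2$. Any such mechanism depends on the shape of $\Gamma$.

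What makes the proof work is a disk. Set $u=-y$, which is the paper's $u=\psi(q)$ with $q=1/z$. Take the bad region to be $D(c^\star,r^\star-\epsilon)$, where $D(c^\star,r^\star)$ is the disk of \cite{GLL20} whose boundary passes through both roots $\zeta_\pm=(-1\pm i\sqrt{\beta\gamma-1})/\beta$ of $\beta x^2+2x+\gamma$. In \cite{GLL20} this disk is the safe region of the edge step; here it is the bad region, which is the duality you allude to. Two verifications are then needed.

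(a) Star step.
\begin{itemize}
\item For $d\ge 2$, the equation $(-1)^{d+1}\prod_i u_i=\lambda_v$ with every $u_i$ outside the disk is excluded on $[0,\lambda]$ by Lemma~8 of \cite{GLL20}.
\item This extends to $[-\rho\lambda,0]$ and to the $\delta_1$-strip using $|u|\ge r^\star-c^\star-\epsilon>1$ off the disk. That bound disposes of all large $d$ at once, so no logarithmic potential is needed; compactness handles the finitely many remaining $d$.
\item For $d=1$, $u_1=\lambda_v$ lies near $[-\rho\lambda,0]$, which is inside $D(c^\star,r^\star)$ because $\rho\lambda<1$.
\item In the degenerate case $x_i=1/\rho$, a zero forces some $x_j=-1$, i.e.\ $u_j=0$, which lies in the disk.
\end{itemize}

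(b) Contraction step.
\begin{itemize}
\item The equation $-z_1z_2=\theta$ is algebraically equivalent to $(1-\rho\theta)(\beta u_1u_2+u_1+u_2+\gamma)=0$.
\item The bad region is a circular region, so Grace--Szeg\H{o}--Walsh reduces this to $\beta x^2+2x+\gamma$ having a root in $D(c^\star,r^\star-\epsilon/2)$.
\item That is impossible, since $\zeta_\pm$ lie on $\partial D(c^\star,r^\star)$; this is exactly how $r^\star$ is defined.
\item Strict separation, together with $1/\rho\notin\cl{D(c^\star,r^\star)}$ (so the relevant product set can be truncated to a compact one), yields a uniform $\delta_3$.
\end{itemize}

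Your fallback ``try the region implicit in \cite{GLL20}'' points the right way. Without identifying this disk and carrying out (a) and (b), the proposal does not yet prove the lemma.
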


Assuming the above two lemmas, we prove \Cref{thm:stability-of-partition-function}.
\begin{proof}[Proof of \Cref{thm:stability-of-partition-function}]
 Let $\delta_1 = \delta_1(\beta,\gamma,\lambda) > 0$ be the constant from \Cref{thm:core-zero-free}. Since \Cref{lem:reduction-lemma} holds for every $\delta_1>0$, we apply it with this value of $\delta_1$. Let $\delta_2 = \delta_2(\beta,\gamma,\lambda,\delta_1)$ be the constant from \Cref{lem:reduction-lemma}, and let $\delta_3 = \delta_3(\beta,\gamma,\lambda)$ be the constant from \Cref{thm:core-zero-free}. Define 
 \begin{align*}
 \delta=\min(\delta_2,\delta_3).
 \end{align*}
 Since $\delta_1$ depends only on $\beta,\gamma,\lambda$, so does $\delta$.

 Fix a graph $G = (V,E)$ and a real vector $\boldsymbol{\lambda} = (\lambda_v)_{v \in V} \in \mathbb{R}^V$ such that $-\rho \lambda \leq \lambda_v \leq \lambda$ for all $v \in V$.
 Let $\boldsymbol{z} = (z_e)_{e \in E}$ satisfy $z_e \in U_{\delta}(\theta)$ for all $e \in E$. Since $\delta\le \delta_2$, \Cref{lem:reduction-lemma} produces a reduced graph $G'=(V',E')$ and fields $\boldsymbol{\lambda}'$ with $\lambda'_v\in \mathcal S_{\delta_1}(I)$ for all $v\in V'$ and $\lambda'_v \in \mathcal S_{\delta_1}(I_-)$ for all $v\in V'$ with $\deg_{G'}(v)=1$, such that it is enough to prove $Z_{G'}(\boldsymbol{z}_{E'};\boldsymbol{\lambda}')\ne0$. Since $\delta\le \delta_3$ and $(G',\boldsymbol{\lambda}')$ satisfies \Cref{condition:Glambda} with constant $\delta_1$, this non-vanishing follows from \Cref{thm:core-zero-free}.

 Therefore $Z_G(\boldsymbol{z}; \boldsymbol{\lambda})\ne0$ for all $\boldsymbol{z}\in \prod_{e\in E}U_\delta(\theta)$, which proves the desired stability.
    %By \Cref{lemma:leaf-contraction} and our discussion above, we have that $Z_G^{\text{wsg}}(\bm{z},\bm{\lambda})=C(\bm{z},\bm{\lambda}) Z_{G'}^{\text{wsg}}(\bm{z},\hat{\bm{\lambda}})$ with $C(\bm{z},\bm{\lambda})$ stable for $z_e\in U_{\delta_2}(\theta)$. By \Cref{thm:core-zero-free}, we have that $Z_{G'}^{\text{wsg}}(\bm{z},\hat{\bm{\lambda}})$ is stable for $z_e\in U_{\delta_3}(\theta)$ for every $e\in E'$ for fixed $\hat{\bm{\lambda}}\in \prod_v \mathcal{S}_{\delta_1}(I)$. Therefore, $Z_G^{\text{wsg}}(\bm{z},\bm{\lambda})$ is stable for any $z_e\in U_{\delta_3}(\theta)$ and $\hat{\bm{\lambda}}\in \prod_v \mathcal{S}_{\delta_1}(I)$. Therefore, choosing $\delta=\min(\delta_2,\delta_3)$ proves the theorem.
  \end{proof}

  Next, we prove \Cref{lem:reduction-lemma} in \Cref{subsec:leaf-contraction-reduction} and \Cref{thm:core-zero-free} in \Cref{subsec:stability-of-local-polynomial}.

\begin{figure}[t]
  \centering
  \resizebox{0.8\linewidth}{!}{%
  \begin{tikzpicture}[
      x=0.82cm,y=0.82cm,
      line cap=round,line join=round,
      corev/.style={
          circle,
          draw=blue!45!black,
          fill=blue!30!black!65,
          line width=1pt,
          minimum size=8.5mm,
          inner sep=0pt
      },
      treev/.style={
          circle,          
          draw=orange!85!black,
          fill=orange!18,
          line width=0.9pt,
          minimum size=6.5mm,
          inner sep=0pt
      },
      badv/.style={
          circle,
          draw=green!50!black,
          fill=green!18,
          line width=0.9pt,
          minimum size=6.5mm,
          inner sep=0pt
      },
      halo/.style={
          circle,
          draw=red!70!black,
          fill=red!15,
          line width=0.9pt,
          minimum size=11.5mm,
          inner sep=0pt
      },
      coreedge/.style={draw=blue!20!black, line width=1.1pt},
      treeedge/.style={draw=green!55!black, line width=1pt},
      coreellipse/.style={draw=blue!20!black, line width=1pt},
      corelabel/.style={font=\bfseries\fontsize{20}{20}\selectfont, text=blue!20!black},
      treelabel/.style={font=\bfseries\fontsize{18}{18}\selectfont, text=green!55!black},
      sublab/.style={font=\huge}
  ]
  
  % =========================================================
  % Left panel : (a) General graph G
  % =========================================================
  \begin{scope}[shift={(0,0)}]
  
  % ----- core vertices -----
  \node[corev] (L1) at (0.0, 1.2) {};
  \node[corev] (L2) at (1.8, 1.9) {};
  \node[corev] (L3) at (3.8, 1.2) {};
  \node[corev] (L4) at (4.2,-0.8) {};
  \node[corev] (L5) at (0.1,-0.8) {};
  \node[corev] (L6) at (2.0,-2.0) {};
  
  % ----- core ellipse and labels -----
  % enlarged so that all core vertices are clearly enclosed
  \draw[coreellipse] (2.0,0.0) ellipse (3.55 and 2.65);
  \node[corelabel] at (2.0,3.85) {core};
  \node[treelabel] at (7.05,-4.45) {appended trees};
  
  % ----- core edges -----
  \draw[coreedge] (L1)--(L2);
  \draw[coreedge] (L2)--(L3);
  \draw[coreedge] (L3)--(L4);
  \draw[coreedge] (L4)--(L6);
  \draw[coreedge] (L6)--(L1);
  \draw[coreedge] (L1)--(L5);
  \draw[coreedge] (L5)--(L4);
  \draw[coreedge] (L2)--(L4);
  
  % ----- upper-left appended tree from L1 -----
  \node[treev] (Lul1) at (-1.2, 2.2) {};
  \node[treev] (Lul2) at (-2.3, 3.0) {};
  \node[treev] (Lul3) at (-3.3, 3.7) {};
  \node[badv]  (Lul4) at (-2.25, 1.75) {};
  \draw[treeedge] (L1)--(Lul1)--(Lul2)--(Lul3);
  \draw[treeedge] (L1)--(Lul4);
  
  % ----- lower-left appended tree from L5 -----
  \node[treev] (Lbl1) at (-1.4,-2.45) {};
  \node[treev] (Lbl2) at (-2.8,-1.85) {};
  \node[treev] (Lbl3) at (-4.15,-1.30) {};
  \node[badv]  (Lbl4) at (-2.75,-3.10) {};
  \node[treev] (Lbl5) at (-5.10,-3.65) {};
  \draw[treeedge] (L5)--(Lbl1);
  \draw[treeedge] (Lbl1)--(Lbl2)--(Lbl3);
  \draw[treeedge] (Lbl1)--(Lbl4)--(Lbl5);
  
  % ----- bottom appended tree from L6 -----
  \node[badv]  (Lbo1) at (1.85,-3.45) {};
  \node[treev] (Lbo2) at (0.55,-4.45) {};
  \node[treev] (Lbo3) at (-0.35,-5.20) {};
  \node[treev] (Lbo4) at (3.25,-4.40) {};
  \draw[treeedge] (L6)--(Lbo1);
  \draw[treeedge] (Lbo1)--(Lbo2)--(Lbo3);
  \draw[treeedge] (Lbo1)--(Lbo4);
  
  % ----- upper-right appended tree from L3 -----
  \node[treev] (Lur1) at (4.95, 2.15) {};
  \node[treev] (Lur2) at (6.25, 2.85) {};
  \node[treev] (Lur3) at (6.15, 1.70) {};
  \node[badv]  (Lur4) at (7.55, 2.05) {};
  \draw[treeedge] (L3)--(Lur1);
  \draw[treeedge] (Lur1)--(Lur2);
  \draw[treeedge] (Lur1)--(Lur3)--(Lur4);
  
  % ----- lower-right appended tree from L4 -----
  \node[treev] (Llr1) at (5.75,-1.55) {};
  \node[treev] (Llr2) at (7.10,-1.10) {};
  \node[treev] (Llr3) at (7.10,-2.55) {};
  \node[treev] (Llr4) at (8.45,-3.15) {};
  \draw[treeedge] (L4)--(Llr1);
  \draw[treeedge] (Llr1)--(Llr2);
  \draw[treeedge] (Llr1)--(Llr3)--(Llr4);
  
  \node[sublab] at (2.0,-6.8) {(a) General graph $G$.};
  
  \end{scope}
  
  % =========================================================
  % Right panel : (b) Graph G' after leaf contraction
  % =========================================================
  \begin{scope}[shift={(15.5,0)}]
  
  % ----- red halos FIRST (background), so blue nodes stay on top -----
  \node[halo] at (0.0, 1.2) {};
  \node[halo] at (4.2,-0.8) {};
  \node[halo] at (-1.55,-2.25) {};
  \node[halo] at (-3.10,-3.00) {};
  \node[halo] at (2.0,-3.45) {};
  \node[halo] at (5.45,2.35) {};
  
  % ----- core vertices on top of halo -----
  \node[corev] (R1) at (0.0, 1.2) {};
  \node[corev] (R2) at (1.8, 1.9) {};
  \node[corev] (R3) at (3.8, 1.2) {};
  \node[corev] (R4) at (4.2,-0.8) {};
  \node[corev] (R5) at (0.1,-0.8) {};
  \node[corev] (R6) at (2.0,-2.0) {};
  
  % ----- core ellipse and labels -----
  \draw[coreellipse] (2.0,0.0) ellipse (3.55 and 2.65);
  \node[corelabel] at (2.0,3.85) {core};
  \node[treelabel] at (6.9,-4.15) {appended trees};
  
  % ----- core edges -----
  \draw[coreedge] (R1)--(R2);
  \draw[coreedge] (R2)--(R3);
  \draw[coreedge] (R3)--(R4);
  \draw[coreedge] (R4)--(R6);
  \draw[coreedge] (R6)--(R1);
  \draw[coreedge] (R1)--(R5);
  \draw[coreedge] (R5)--(R4);
  \draw[coreedge] (R2)--(R4);
  
  % ----- left short branch -----
  \node[badv] (Ru0) at (-2.95,1.75) {};
  \draw[treeedge] (R1)--(Ru0);
  
  % ----- lower-left remaining branch -----
  \node[treev] (Rbl1) at (-1.55,-2.25) {};
  \node[badv]  (Rbl2) at (-3.10,-3.00) {};
  \draw[treeedge] (R5)--(Rbl1)--(Rbl2);
  
  % ----- bottom remaining branch -----
  \node[badv] (Rbo1) at (2.0,-3.45) {};
  \draw[treeedge] (R6)--(Rbo1);
  
  % ----- upper-right remaining branch -----
  \node[treev] (Rur1) at (5.45,2.35) {};
  \node[treev] (Rur2) at (6.85,1.75) {};
  \node[badv]  (Rur3) at (8.20,2.05) {};
  \draw[treeedge] (R3)--(Rur1)--(Rur2)--(Rur3);
  
  \node[sublab] at (2.0,-6.8) {(b) Graph $G'$ after bad-leaf deletion.};
  
  \end{scope}
  
  \end{tikzpicture}%
  }
  \caption{An illustration of the leaf-deletion reduction. (a) A general graph $G$ is decomposed into a core of minimum degree at least two and appended trees. In each appended tree, a green vertex $u$ denotes $\lambda_u<0$, while an orange vertex $u$ denotes $\lambda_u\ge 0$. 
  (b) After pruning bad-leaves in $G$, part of the appended trees is removed.
  Removed vertices are absorbed into effective external fields on the circled vertices.}
  \label{fig:leaf-contraction-overview}
  \end{figure}

\subsection{Leaf-deletion reduction}\label{subsec:leaf-contraction-reduction} 

%The purpose of this subsection is to reduce a general graph to a graph of minimum degree at least two. Our approach is to recursively contract degree-one vertices and derive a equivalent new graph.  Each leaf contraction produces an equivalent instance, up to a nonzero multiplicative prefactor. Since all $z_e$ are complex-valued, the contraction operation may produce complex-valued external fields even when the initial field $\bm{\lambda}$ is real.  
%This brings some new technical challenges. The main solution to this challenge is that, we will design a reduction satisfying if the initial external fields lie in the real interval $[-\rho\lambda,\lambda]$, then all external fields generated by the leaf-contraction procedure remain inside a fixed $\delta$-strip containing $[-\rho\lambda,\lambda]$ in the complex plane. So we can just study the stability of the weighted subgraph-world partition function on the reduced graph.

  %Unlike in~\cite{GLL20}, 
%compared to the real-valued case in~\cite{GLL20}.
 %This is because the leaf-contraction operation is a linear operation on the external fields.

We give our reduction from general graph $G$ to the leaf-reduced graph $G'$. 
We first provide some intuition of the proof idea and then give the formal proof.
Let $v$ be a leaf (degree-one vertex) with unique neighbor $u$ and leaf edge $e=\{u,v\}$ in graph $G$.  
We call $u$ the \emph{parent} of $v$.
Suppose the current external field at the leaf is $\lambda_v $ and the current external field at the parent is $\lambda_u$.  
Assume that the external field $\lambda_v \notin \mathcal S_{\delta_1}(I_-)$, which violates \Cref{condition:Glambda}.
We will use a process to delete this bad leaf vertex $v$, and absorb the deleted part into the external field at the parent $u$.

Fix a configuration $S \subseteq E$ on $G$ such that $e \in S$. Let $S'=S\setminus \{e\}$. We write $t=(-\rho)^{\deg_{S'}(u)}$, where $\deg_{S'}(u)$ is the degree of the vertex $u$ in the subgraph $(V,S')$. 
For a configuration $S$ and $S'$,
the edge $e$ together with the two endpoints of the leaf edge $e$ contribute the following factor to the partition function polynomial~\eqref{eq:partition-function-polynomial-general}: 
\begin{align*}
\begin{cases}
    (1+\lambda_v)(1+\lambda_u t) 
    \qquad&\text{for configuration } S', \\
    z_e(1-\rho \lambda_v)(1-\rho \lambda_u t)
    \qquad&\text{for configuration } S.
\end{cases}
\end{align*}
The partition function is a sum over all configurations. Both $S$ and $S'$ are valid configurations. If we add the above  contributions of $u$ and $v$ in both cases together, we get
\begin{align*}
 &(1+\lambda_v)(1+\lambda_ut)+z_e(1-\rho \lambda_v)(1-\rho \lambda_u t)  \\
 &\quad=\bigl[(1+\lambda_v)+z_e(1-\rho \lambda_v)\bigr]
       +\lambda_ut\bigl[(1+\lambda_v)-\rho z_e(1-\rho \lambda_v)\bigr].
\end{align*}
Define the following two functions (one may view $a$ as $\lambda_v$ and $z$ as $z_e$):
\begin{equation}\label{eq:C-Phi}
  C_z(a)=(1+a)+z(1-\rho a),
  \qquad
  \Phi_z(a)=\frac{(1+a)-\rho z(1-\rho a)}{(1+a)+z(1-\rho a)}.
\end{equation}
Then, whenever $C_{z_e}(\lambda_v)\ne0$, it holds that
\begin{equation}\label{eq:leaf-factor}
 (1+\lambda_v)(1+\lambda_u t)+z_e(1-\rho \lambda_v)(1-\rho \lambda_u t)
 =C_{z_e}(\lambda_v)\left(1+\lambda_u t \Phi_{z_e}(\lambda_v)\right).
\end{equation}
Consider the subgraph $G'=G-\{v,e\}$ obtained from $G$ by removing the leaf vertex $v$. We can view the factor $\left(1+\lambda_u t \Phi_{z_e}(\lambda_v)\right)$ in the RHS of~\eqref{eq:leaf-factor} as the new factor contributed at the parent $u$ in the subgraph $G'$.
Therefore, we can write the partition function in the following way:
\begin{equation}\label{eq:leaf-contraction}
Z_G(\bm{z}; \bm{\lambda})=C_{z_e}(\lambda_v)Z_{G'}(\bm{z}; \bm{\lambda}'),
\end{equation}
where for all vertices in graph $G'$, the external fields are updated as follows:
\begin{align*}
\begin{cases}
    \lambda'_u=\lambda_u\Phi_{z_e}(\lambda_v)
    \qquad&\text{for parent } u, \\
    \lambda'_w=\lambda_w \quad &\text{for other vertices } w\neq u.
\end{cases}
\end{align*}
%$\lambda'_u=\lambda_u\Phi_{z_e}(\lambda_v)$ and $\lambda'_w=\lambda_w$ for other nodes $w\neq u$. 
Hence, we transfer the stability of $Z_G(\bm{z}; \bm{\lambda})$ to the stability of $Z_{G'}(\bm{z}; \bm{\lambda}')$ if $C_{z_e}(\lambda_v)\ne0$.

Our proof of \Cref{lem:reduction-lemma} is to apply the above deletion process in a systematic way. 
We introduce the following closure lemma, which plays a crucial role in our proof.
%Define the closure of a open region $\mathcal U$ as $\cl{\mathcal U}$, which is the smallest closed set containing $\mathcal U$.
Set
\[
I_+=[0,\lambda],
\qquad
I_-=[-\rho\lambda,0],
\qquad
I=I_-\cup I_+.
\]

\begin{lemma}\label{lemma:leaf-contraction}
For any constant $\delta_1 > 0$, there exists a constant $\delta_2>0 = \delta_2(\beta,\gamma,\lambda,\delta_1)$, together with two regions $\mathcal U^+,\mathcal U^-\subset\mathbb C$ such that the following hold. Let $\mathcal U=\mathcal U^+\cup\mathcal U^-$.

\begin{enumerate}[label=\textnormal{(\roman*)},leftmargin=2.2em]
\item \label{item:region}  
% $I_+\subset \mathcal U^+$, $I_-\subset \mathcal U^-$, $\cl{\mathcal U}\subset \mathcal{S}_{\delta_1}(I)$, and $-1\notin \cl{\mathcal U}$.
$I \subset \+U \subset \mathcal{S}_{\delta_1}(I)$ and $\+U^- \subset \mathcal{S}_{\delta_1}(I_-)$.
 %Moreover, for any $a\in\cl{\mathcal U}$ and $z\in U_{\delta_2}(\theta)$, then $C_z(a)\ne0$, where $\theta=\frac{(\gamma-1)^2}{\beta\gamma-1}$.
\item \label{item:C_z} For any $a \in \+U^+$ and any $z \in U_{\delta_2}(\theta)$, $C_z(a)\ne0$.
\item \label{item:product} For any integer $k\geq 1$, any sequence $a_i \in {\mathcal U^+}$ and $z_i\in U_{\delta_2}(\theta)$ with $1\leq i\leq k$, it holds that
\[
  \forall c \in I, \quad c\prod_{i=1}^{k}{\Phi_{z_i}(a_i)}\in \mathcal U.
\]
%\[
%  c\prod_{i=1}^{k}{\Phi_{z_i}(a_i)}\in \mathcal U^+
%  \qquad\text{for every }c\in I_+,
%\]
%and
%\[
%  c\prod_{i=1}^{k}{\Phi_{z_i}(a_i)}\in \mathcal U^-
%  \qquad\text{for every }c\in I_-.
%\]
\end{enumerate}
\end{lemma}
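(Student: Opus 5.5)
The plan is to first understand the real map $a\mapsto \Phi_\theta(a)$ on $I_+=[0,\lambda]$, then build $\mathcal U^\pm$ as thin neighbourhoods of $I_\pm$ defined through a contracting potential. Finally, perturbation handles complex $a$ and $z$.

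\textbf{Real-axis behaviour of $\Phi_\theta$.} For real $a\in[0,\lambda]$ we have $1+a>0$ and $1-\rho a>0$, since $\rho\lambda<\rho\lambda^\star<1$ by \Cref{lem:lambda-v-tau}. So $C_\theta(a)>0$, and
\[
\Phi_\theta(a)=\frac{(1+a)-\rho\theta(1-\rho a)}{(1+a)+\theta(1-\rho a)}.
\]
Because $\rho\theta<1$, the numerator is positive and strictly smaller than the denominator. Hence $\Phi_\theta(a)\in[q_0,q_1]\subset(0,1)$ for all $a\in I_+$, with constants $q_0,q_1$ depending only on $\beta,\gamma,\lambda$.

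Consequently, for $c\in I$ and real inputs, every product $c\prod_i\Phi_{\theta}(a_i)$ has the sign of $c$ and modulus at most $q_1^k|c|\le |c|$. It therefore stays in $I_+$ if $c\ge0$ and in $I_-$ if $c\le0$. This is the real shadow of (iii): multiplication by a number in $(0,1)$ keeps $I_\pm$ invariant.

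\textbf{Construction of $\mathcal U^\pm$ and item (iii).} The difficulty is that over complex inputs the factors $\Phi_{z_i}(a_i)$ have small arguments, and these arguments may accumulate along an unbounded product. I would control this by working in logarithmic coordinates. Define
\[
\mathcal U^+=\{c\in\mathbb C: |\arg c|<\epsilon_1,\ |c|<\lambda+\epsilon_2\}\cup U_{\epsilon_3}(0),
\]
and $\mathcal U^-$ analogously around the negative axis, taking the ray at angle $\pi$ and modulus up to $\rho\lambda+\epsilon_2$.

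The key quantitative step is to use $\Phi_{z}(a)\in U_{r}([q_0,q_1])$ for $a\in\mathcal U^+$ and $z\in U_{\delta_2}(\theta)$, with $r\to 0$ as $\epsilon_i,\delta_2\to0$. This follows by continuity, since the real images are bounded away from $0$ and $1$. From this, $|\Phi|\le q_1+r<1$ and $|\arg\Phi|\le r'$. In a product of $k$ factors, the modulus decays like $(q_1+r)^k$ while the argument grows at most like $kr'$.

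A naive angular sector is therefore not invariant. Instead I would use the region
\[
\{c: |\arg(\pm c)|\le A\,\log\!\big(|c_0|/|c|\big)+\epsilon\}.
\]
Equivalently, I would argue directly: after $k$ steps the point lies within distance $|c|(q_1+r)^k\cdot|e^{ikr'}-1|$ of the real segment. Choosing $r'\ll -\log(q_1+r)$ makes $\sup_k (q_1+r)^k k r'$ as small as we want, say below $\delta_1/(2\lambda)$.

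So I would simply define $\mathcal U$ as the set of all reachable products, together with $I$ itself. Then (iii) holds by construction, once $\mathcal U^+$ is allowed inside the $a_i$ slots. The requirement is circular, since the $a_i$ range over $\mathcal U^+$, which is defined by the products. I would break the circularity by fixing $\mathcal U^+:=\mathcal S_{\eta}(I_+)$ for a small $\eta$ first, and only then defining $\mathcal U$ as the reachable set. The reachable set from $c\in I_+$ must then lie in $\mathcal S_\eta(I_+)$ and in $\mathcal S_{\delta_1}(I)$. The bound above gives distance at most $\lambda\sup_k (q_1+r)^k k r'$, which is below $\min(\eta,\delta_1)$ once $\delta_2$ is small relative to $\eta$. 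Points starting in $I_-$ stay within the same distance of $I_-$, which gives $\mathcal U^-\subset\mathcal S_{\delta_1}(I_-)$ and hence item (i).

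\textbf{Item (ii).} This is continuity: $C_\theta(a)\ge 1$ on $I_+$, so $C_z(a)\neq0$ on $\mathcal S_\eta(I_+)\times U_{\delta_2}(\theta)$ once $\eta$ and $\delta_2$ are small.

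\textbf{Main obstacle.} I expect the main obstacle to be the uniform-in-$k$ bound on the accumulated argument. It is resolved by the geometric decay of the modulus against the linear growth of the phase. Care is needed to choose constants in the order $\eta$ (depending on $\delta_1$), then $r,r'$, then $\delta_2$.
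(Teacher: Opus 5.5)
\textbf{The invariance step for $\mathcal U^+$ does not go through.} Your items (i) and (ii) are fine, and so is the $\mathcal U^-$ side, since inputs are never drawn from $\mathcal U^-$. The problem is $\mathcal U^+$.

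Once you fix $\mathcal U^+=\mathcal S_\eta(I_+)$, the phase bound $r'$ is no longer a free parameter. The inputs $a_i$ range over $\mathcal U^+$ itself. Writing $M(a)=\Phi_\theta(a)=\frac{1+\beta a}{\gamma+a}$, you get
\[
r'\approx \eta\max_{x\in[0,\lambda]}\frac{M'(x)}{M(x)}=\eta\,\frac{\beta\gamma-1}{\gamma},
\]
up to an $O(\delta_2)$ term. Only that $\delta_2$-contribution can be made small. Since $\sup_k k(q_1+r)^k$ is also a fixed constant, your bound $\lambda\sup_k(q_1+r)^k k r'$ is $C(\beta,\gamma,\lambda)\,\eta$ for a fixed $C$. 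Whether it is below $\eta$ is a scale-invariant question: does the linearised recursion contract? Shrinking $\eta$ or $\delta_2$ cannot settle it.

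In fact the uniform strip is not invariant in general. Take $k=1$, $c=\lambda$, $z_1=\theta$ and $a_1=iy$ with $0<y<\eta$. Then
\[
\operatorname{Im}\bigl(\lambda M(iy)\bigr)=\frac{\lambda(\beta\gamma-1)\,y}{\gamma^2+y^2},\qquad \operatorname{Re}\bigl(\lambda M(iy)\bigr)=\frac{\lambda(\gamma+\beta y^2)}{\gamma^2+y^2}.
\]
If $\lambda(\beta\gamma-1)/\gamma^2>1$, the output lies at distance more than $\eta$ from $I_+$ once $y$ is close to $\eta$. Its real part is about $\lambda/\gamma$, so for small $\delta_1$ it is not in $\mathcal S_{\delta_1}(I_-)$ either, and hence not in $\mathcal U$. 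This occurs inside the lemma's regime: for example $\beta=2$, $\gamma=100$, $\lambda=55<\lambda^\star\approx 60.1$, where $\lambda(\beta\gamma-1)/\gamma^2\approx 1.09$. A symptom of the problem is that your argument never uses $\lambda<\lambda^\star$ beyond $\rho\lambda<1$.

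\textbf{What is missing is a width profile adapted to the recursion} $F=c\prod_i M(x_i)$. The paper uses the tube $\mathcal U^+_\eta=\bigcup_{x\in[0,\lambda]}D(x,\eta h_+(x))$, where $h_+(x)=\max\{t,x\log(\zeta/x)\}$ is the Guo--Lu potential. Its one-variable inequality is
\[
\frac{M'(x)}{M(x)}h_+(x)\le\kappa\log\frac{1}{M(x)},\qquad \kappa<1,
\]
which is inequality (10) of \cite{GuoL18} and is exactly where $\lambda<\lambda^\star$ enters. Summing over $i$ gives the contraction
\[
c\sum_iM'(x_i)h_+(x_i)\prod_{j\neq i}M(x_j)\le\kappa\,h_+(F),
\]
uniformly in $k$. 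The rest of the paper's argument then runs as follows:
\begin{enumerate}
\item The first-order displacement of the product is at most $\kappa\eta h_+(F_0)$.
\item The higher-order terms are $O(\eta^2)$ uniformly in $k$, because $|M|\le q<1$.
\item Hence the product lies in $D\bigl(F_0,(\kappa+C_1\eta)\eta h_+(F_0)\bigr)$, with an $\Omega(\eta)$ margin to the boundary.
\item Only after this is $\delta_2\ll\eta$ used, to absorb the error from $z_i\neq\theta$.
\end{enumerate}

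Your log-cone region $|\arg c|\le A\log(|c_0|/|c|)+\epsilon$ actually had the right shape. Its half-width at modulus $x$ is about $A\,x\log(\zeta/x)$ with $\zeta=c_0e^{\epsilon/A}$. Its invariance reduces, to first order and factor by factor, to
\[
\frac{M'(x)}{M(x)}\,x\log\frac{\zeta}{x}\le\kappa\log\frac{1}{M(x)}\quad\text{on }[0,\lambda],
\]
which is the same Guo--Lu-type inequality. But that is not equivalent to the uniform-strip estimate you then carry out, and this inequality is the substantive ingredient your proposal omits.
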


\Cref{lemma:leaf-contraction} is proved in \Cref{sec:leaf-contraction-proof}. Assuming the lemma, we define the following leaf-deletion process.
Note that $U^- \subset  \mathcal{S}_{\delta_1}(I_-)$ is a safe region for \Cref{condition:Glambda}. We need to keep deleting leaf vertices with external fields in $U^+$. Formally, consider the following process.

Let $G=(V,E)$ be a graph and $\boldsymbol{\lambda} = (\lambda_v)_{v \in V} \in \mathbb{R}^V$ be a vector of external fields in \Cref{lem:reduction-lemma}.
%Next we repeat the following leaf-contraction process to reduce all the leaf vertices in the graph $G$.
\begin{itemize}
\item Set $\lambda'_v \gets \lambda_v$ for all $v \in V$.
\item Repeat the following process until $G$ does not have leaf (degree-one) vertex $v$ with $\lambda_v' \in \+U^+$:
\begin{itemize}
  \item Choose the leaf vertex $v$ with the smallest index such that $\lambda_v' \in \+U^+$ and denote its parent by $u$.
  \item Delete the edge $e=(u,v)$ and the leaf $v$, 
  %absorb the deleted leaf $v$ into the external field at the parent $u$, 
  and update the external field $\lambda_u'\leftarrow\lambda_u'\Phi_{z_e}(\lambda'_v)$.
\end{itemize}
\end{itemize}

After the whole process terminates,
we denote the partition function of $G'$ by $Z_{G'}(\bm{z},\bm{\lambda}')$. We refer to \Cref{fig:leaf-contraction-overview} for an illustration of the process.
Now, we are ready to prove \Cref{lem:reduction-lemma}.
%After the leaf-contraction process, we get a new graph $G'=(V',E')$ with no leaf vertices. Then by the updating rule of the external field, it holds that 
%\begin{align*}
%\forall u \in V', \quad \lambda'_u=\lambda_u\prod_{(u,v)\in E\setminus E'} \Phi_{z_e}(\lambda_v'),
%\end{align*}
%the structure of the external field of $G'$ will be for every $u\in V'$, 
%where $\lambda_v'$ denotes the external field at the leaf vertex $v$ at the moment when $v$ is deleted. Note that $\lambda_v'$ may be different from the original external field $\lambda_v$.

\begin{proof}[Proof of \Cref{lem:reduction-lemma}]
Fix $\delta_1>0$, and let $\delta_2>0$ and $\mathcal U=\mathcal U^+\cup\mathcal U^-$ be given by \Cref{lemma:leaf-contraction}. 
Also fix $\boldsymbol z$ with $z_e\in U_{\delta_2}(\theta)$ for all $e\in E$.
We run the deterministic leaf-deletion procedure described above, always deleting the leaf of smallest index among the leaves whose \emph{current} external field $\lambda_v' \in \mathcal U^+$. This determines a subgraph $G'=(V',E')$ of $G$. 
%By construction, every degree-one vertex that remains in $G'$ has current external field in $\mathcal U^-$.

We track the external fields and the partition function throughout the process.  During the process, let $\lambda^{(t)}_v$ denote the current external field of a vertex $v$ after $t$ leaf deletions. By \Cref{lemma:leaf-contraction} \ref{item:region}, $\lambda^{(0)}_v=\lambda_v\in I\subset \mathcal U$. We prove by induction on $t$ that the following two properties hold. 
\begin{itemize}
  \item The first property is that for every current vertex $u$, its current field has the form
  \begin{align}\label{eq:field-product-form}
  \lambda^{(t)}_u
  =
  \lambda_u\prod_{e=(u,v)\in D_t(u)}
  \Phi_{z_e}(a_{e}),
  \end{align}
  where $D_t(u)$ is the set of deleted edges that have been absorbed into $u$, and each $a_e$ is the current field of the deleted leaf neighbor $v$ at the moment when $e = \{u,v\}$ was deleted. Moreover, %each such 
  for any $e \in D_t(u)$, $a_e \in \mathcal U^+$, and $\lambda^{(t)}_u\in\mathcal U$.
  \item The second property is that for any $G_t$ and $\boldsymbol\lambda^{(t)}$, the partition function satisfies:
  \begin{align}\label{eq:reduction-induction-partition}
  Z_G(\boldsymbol z;\boldsymbol\lambda)
  =
  \left(\prod_{e \in E\setminus E(G_t)} C_{z_e}(a_e)\right)
  Z_{G_t}(\boldsymbol z_{E(G_t)};\boldsymbol\lambda^{(t)}).
  \end{align}
  where $E(G_t)$ is the set of edges in $G_t$.
\end{itemize}

Both claims are immediate at $t=0$. Suppose they hold after $t$ deletions, and let the next deleted leaf be $v$ with parent $u$ and leaf edge $e=(u,v)$. By the choice of the deletion step, the current field $\lambda^{(t)}_v$ lies in $\mathcal U^+$. Hence \Cref{lemma:leaf-contraction} \ref{item:C_z} gives
$C_{z_e}(\lambda^{(t)}_v)\ne0$. The one-step identity~\eqref{eq:leaf-contraction} gives
\[
Z_{G_t}(\boldsymbol z_{E(G_t)};\boldsymbol\lambda^{(t)})
=
C_{z_e}(\lambda^{(t)}_v)
Z_{G_{t+1}}(\boldsymbol z_{E(G_{t+1})};\boldsymbol\lambda^{(t+1)}),
\]
where the only changed field is
\[
\lambda^{(t+1)}_u
=
\lambda^{(t)}_u\Phi_{z_e}(\lambda^{(t)}_v).
\]
This proves the partition-function identity~\eqref{eq:reduction-induction-partition} at time $t+1$ and updates the field as in~\eqref{eq:field-product-form}.

It remains to check that the updated field $\lambda^{(t+1)}_u$ stays in $\mathcal U$. By the induction hypothesis, $\lambda^{(t)}_u$ can be written as
$
\lambda^{(t)}_u
=
\lambda_u\prod_{i=1}^{k}\Phi_{z_i}(a_i),
$
where $\lambda_u\in I$, each $a_i\in {\mathcal U^+}$, and each $z_i\in U_{\delta_2}(\theta)$. The new update appends one more factor with
$a_{k+1}=\lambda^{(t)}_v\in\mathcal U^+$ and $z_{k+1}=z_e\in U_{\delta_2}(\theta)$. \Cref{lemma:leaf-contraction} \ref{item:product} implies
\[
\lambda^{(t+1)}_u
=
\lambda_u\prod_{i=1}^{k+1}\Phi_{z_i}(a_i)
\in {\mathcal U},\quad \text{because } \lambda_u\in I.
\]
%\[
%\lambda^{(t+1)}_u
%=
%\lambda_u\prod_{i=1}^{k+1}\Phi_{z_i}(a_i)
%\in {\mathcal U^+},\quad \text{if } %\lambda_u\in I_+.
%\]
%\[
%\lambda^{(t+1)}_u
%=
%\lambda_u\prod_{i=1}^{k+1}\Phi_{z_i}(a_i)
%\in {\mathcal U^-},\quad \text{if } \lambda_u\in I_-.
%\]
All other current fields are unchanged, so the induction is complete.

At the end of the process, let $G'=G_T = (V',E')$ and let $\boldsymbol\lambda'=\boldsymbol\lambda^{(T)}$, where $T$ is the total number of leaf deletions. Using \Cref{lemma:leaf-contraction} \ref{item:region}, the induction gives $\lambda'_v\in\mathcal U\subset \mathcal S_{\delta_1}(I)$ for every $v\in V'$. Moreover, by the stopping rule, every degree-one vertex of $G'$ has external field in $\mathcal U^-\subset \mathcal S_{\delta_1}(I_-)$. It also gives
\[
Z_G(\boldsymbol z;\boldsymbol\lambda)
=
\prod_{e\in E\setminus E'} C_{z_e}(a_e)\,
Z_{G'}(\boldsymbol z_{E'};\boldsymbol\lambda'),
%\qquad
%C(\boldsymbol z;\boldsymbol\lambda)
%=
%\prod_{e\in E\setminus E'} C_{z_e}(A_e),
\]
where each $C_{z_e}(a_e)$ is nonzero by the above analysis.
Hence %$C(\boldsymbol z;\boldsymbol\lambda)\ne0$, and this implies that 
$Z_{G'}(\boldsymbol z_{E'};\boldsymbol\lambda')\ne0$ if and only if
$Z_G(\boldsymbol z;\boldsymbol\lambda)\ne0$. This proves the lemma.
\end{proof}

\subsection{Stability of the bad-leaf-reduced core}\label{subsec:stability-of-local-polynomial}

We now explain the strategy for proving \Cref{thm:core-zero-free}. In the proof, we use the Holant representation of the weighted subgraph model as in \Cref{sec:WSG}. 
Let $G = (V,E)$ be a graph. Consider the bipartite graph $H_G = (V_1 \uplus V_2, E_H)$ defined in \Cref{def:graph-HG}, where $V_1 = V$ and $V_2 = E$. For each edge $e = \{u,v\} \in E$, we add two half-edges $e_u = \{u,e\}$ and $e_v = \{v,e\}$ to $H_G$; we call $e_u$ and $e_v$ the half-edges of $e$. To simulate the WSG model, the signature function $f_v$ for $v \in V_1$ is defined as 
\begin{align}\label{eq:def-f-v}
f_v = [1+\lambda_v,1+\lambda_v(-\rho), \ldots, 1 + \lambda_v(-1)^{d_v}],
\end{align}
where $d_v = \deg_G(v)$ is the degree of $v$ in $G$. The signature function $g_e$ for $e \in V_2$ is defined as
\begin{align}\label{eq:def-g-e}
g_e = [1,0,0,z_e],
\end{align}
where $z_e$ is the variable of the edge $e$. 
Let $\+F = \{f_v\}_{v \in V_1}$ and $\+G = \{g_e\}_{e \in V_2}$. Then, it holds that
\begin{align}\label{eq:def-ZG-HG}
Z_G(\boldsymbol z;\boldsymbol\lambda) = \operatorname{Holant}(H_G; \+F \mid \+G)
\end{align}

We use the a contraction procedure to show the stability of the partition function for the WSG model. The contraction procedure is as follows:

First, we split every edge into two half-edges, so that each vertex $v$ contributes a local polynomial defined over a star graph. Specifically, for each vertex $v\in V$, let $S_v$ be the star graph rooted at $v$ with one leaf for each edge incident to $v$.  We define $H_0$ as the disjoint union
\begin{align}\label{eq:def-H0}
        H_0=\bigcup_{v\in V} S_v.
\end{align}
More formally, the initial bipartite graph $H_0$ is defined as follows.
\begin{definition}[initial bipartite graph]
Let $G=(V,E)$ be a graph. The left part of the initial bipartite graph $H_0$ is the vertex set $V_0=V$, and the right part is the set of half-edges $E_0=\{e_u,e_v \mid e = \{u,v\} \in E\}$. For each edge $e \in E$ in graph $G$, we add two edges $e_u = \{u,e_u\}$ and $e_v = \{v,e_v\}$ to $H_0$.
\end{definition}

By the definition, the left part of $H_0$ is the vertex set $V$, and the right part contains $2|E|$ vertices.
We then recover the original edge model by \emph{contracting} the half edges. 
Fix an ordering $e_1,\ldots,e_m$ of the edges of $G$.
Given the graph $H_{i-1}$, the next graph $H_{i}$ is defined by applying the Asano contraction \cite{Asa70}, defined as follows.

\begin{definition}[Asano contraction]\label{def:contraction-operation}
Let $e = e_{i}=\{u,v\}$ be the next edge to be contracted. The contraction operation is defined as follows: find two edges $\{u,e_u\}$ and $\{v,e_v\}$ in the bipartite graph $H_{i-1}$, and then contract the two vertices $e_u$ and $e_v$ to be one vertex $e$. The resulting graph is $H_{i}$.
\end{definition}

\begin{figure}[t]
  \centering
  \begin{tikzpicture}[
      x=1cm,y=1cm,
      >=Stealth,
      font=\small,
      leftv/.style={
          circle, draw=blue!75!black, fill=blue!12,
          very thick, minimum size=8mm, inner sep=0pt
      },
      rightv/.style={
          circle, draw=orange!85!black, fill=orange!15,
          very thick, minimum size=8mm, inner sep=0pt
      },
      rightvfade/.style={
          circle, draw=gray!55, fill=gray!8,
          thick, minimum size=7.5mm, inner sep=0pt
      },
      actedge/.style={draw=orange!85!black, line width=1.1pt},
      fadeedge/.style={draw=gray!55, line width=0.8pt},
      panel/.style={
          rounded corners=4pt,
          draw=gray!30,
          fill=gray!4,
          inner sep=8pt
      },
      lab/.style={font=\small},
      eqtxt/.style={font=\small},
      qlab/.style={font=\small, text=orange!85!black}
  ]
  
  %====================================================
  % Left panel: H_i
  %====================================================
  
  % left part
  \node[leftv] (u) at (0.8,  1.0) {$u$};
  \node[leftv] (v) at (0.8, -1.0) {$v$};
  
  % right part
  \node[rightvfade] (r1) at (3.2,  2.2) {};
  \node at (3.2,1.65) {$\vdots$};
  
  \node[rightv] (eu) at (3.2,  1.0) {$e_u$};
  \node[rightv] (ev) at (3.2, -1.0) {$e_v$};
  
  \node at (3.2,-1.45) {$\vdots$};
  \node[rightvfade] (r2) at (3.2, -2.2) {};
  
  % edges
  \draw[fadeedge] (u) -- (r1);
  \draw[actedge]  (u) -- (eu);
  \draw[actedge]  (v) -- (ev);
  \draw[fadeedge] (v) -- (r2);
  
  % q-labels near nodes (not on edges)
  \node[qlab, right=2pt of eu] {$q_{u,e}$};
  \node[qlab, right=2pt of ev] {$q_{v,e}$};
  
  % top/bottom labels
  \node[lab] at (2.0,3.2) {$H_i$};
  \node[eqtxt] at (2.0,-3.5)
  {$\begin{aligned}
    &P(q_{v,e},q_{u,e})=\\
    &A+Bq_{v,e}+Cq_{u,e}+Dq_{v,e}q_{u,e}
  \end{aligned}$};
  
  % fit box
  \begin{scope}[on background layer]
    \node[panel, fit=(u)(v)(r1)(r2)(eu)(ev)] (Lbox) {};
  \end{scope}
  
  %====================================================
  % Right panel: H_{i+1}
  %====================================================
  
  % left part
  \node[leftv] (uu) at (7.2,  1.0) {$u$};
  \node[leftv] (vv) at (7.2, -1.0) {$v$};
  
  % right part
  \node[rightvfade] (s1) at (9.6,  2.2) {};
  \node at (9.6,1.65) {$\vdots$};
  
  \node[rightv] (e) at (9.6, 0.0) {$e$};
  
  \node at (9.6,-1.45) {$\vdots$};
  \node[rightvfade] (s2) at (9.6, -2.2) {};
  
  % edges
  \draw[fadeedge] (uu) -- (s1);
  \draw[actedge]  (uu) -- (e);
  \draw[actedge]  (vv) -- (e);
  \draw[fadeedge] (vv) -- (s2);
  
  % q-label near node
  \node[qlab, right=3pt of e] {$q_e$};
  
  % top/bottom labels
  \node[lab] at (8.4,3.2) {$H_{i+1}$};
  \node[eqtxt] at (8.4,-3.2) {$Q(q_e)=A+Dq_e$};
  
  % fit box
  \begin{scope}[on background layer]
    \node[panel, fit=(uu)(vv)(s1)(s2)(e)] (Rbox) {};
  \end{scope}
  
  %====================================================
  % Middle arrow
  %====================================================
  
  \path let \p1 = (Lbox.east), \p2 = (Rbox.west) in
        coordinate (Mid) at ($(\p1)!0.5!(\p2)$);
  
  \draw[->, very thick]
      ($(Mid)+(-1.0,0)$) -- ($(Mid)+(1.0,0)$);
  
  \node[lab] at ($(Mid)+(0,0.65)$) {contract};
  \node[lab] at ($(Mid)+(0,0.25)$) {$e_u,e_v \mapsto e$};
  \node[lab] at ($(Mid)+(0,-0.25)$) {\Cref{def:contraction-operation}};
  
  \end{tikzpicture}
  
  \caption{One-step contraction from $H_i$ to $H_{i+1}$, by \Cref{def:contraction-operation} and \Cref{lem:contraction-one-step}.}
  \label{fig:contraction-Hi-Hi1}
  \end{figure}

Using the above contraction operation, we construct a sequence of bipartite graphs
\[
H_0,H_1,\ldots,H_m, \quad \text{for } m = |E|,
\]
where $H_0$ is the disjoint union of the vertex stars and $H_m=H_G$ is the edge-incidence graph. 
Note that each bipartite graph $H_i$ in this sequence can be written as $H_i = (V_1^{(i)} \uplus V_2^{(i)}, E_{H}^{(i)})$, where $V_1^{(i)} = V$ and $V_2^{(i)} = \{e_j \mid j \leq i\} \cup \{e_u,e_v \mid e = e_j = \{u,v\} \land j > i\}$.

Given $H_0,H_1,\ldots,H_m$, we can define a sequence of polynomials $Z_{H_0},Z_{H_1},\ldots,Z_{H_m}$. 
However, due to technical reasons, we define a modified version instead as follows.

\begin{definition}[polynomial sequence]\label{def:modified_partition_function}
Let $H_i$ be a bipartite graph in the sequence. For each $v \in V$ in the left part, let $f_v$ in~\eqref{eq:def-f-v} be the signature function of $v$. For each $w \in V_2^{(i)}$ in the right part, define
\[
\widetilde g_w =
\begin{cases}
[q_w,1], & \text{if } w \text{ is an uncontracted half-edge vertex},\\
[q_w,0,0,1], & \text{if } w \text{ is a contracted edge vertex}.
\end{cases}
\]
Then, the \textit{modified partition function} $\widetilde Z_{H_i}$ is defined as follows:
\begin{align}\label{eq:def-ZHi}
\widetilde Z_{H_i} (\bm q; \bm\lambda) = \operatorname{Holant}(H_i; F^{(i)} \mid \widetilde G^{(i)}),
\end{align}
where $\+F^{(i)} = \{f_v\}_{v \in V_1^{(i)}}$ and $\widetilde G^{(i)} = \{\widetilde g_w\}_{w \in V_2^{(i)}}$.
\end{definition}

%At step $i$, if $e_i=\{u,v\}$, we apply the Asano contraction lemma (\Cref{lem:Asano-contraction} for details) to contract the two half-edge variables associated with $u$ and $v$ into a single edge variable. Intuitively, the contraction step is obtained by merging the two half-edge vertex $u_{e_i}$ and $v_{e_i}$ in $H_i$ to be one vertex in $H_{i+1}$.

In the definition above, uncontracted half-edge vertices use the reciprocal unary signature $[q_w,1]$, while contracted edge vertices use the reciprocal binary equality signature $[q_w,0,0,1]$.
Intuitively, this modification is the change of variables $q_w = 1/z_w$ applied to the signatures $[1,z_w]$ and $[1,0,0,z_w]$, respectively.
This is valid because the variables $z_w$ we consider are bounded away from $0$.
Note that $\widetilde Z_{H_m}(q;\lambda)
=
\left(\prod_{e\in E}q_e\right)
Z^{\text{wsg}}_{G}(1/q,\lambda)$ as defined in~\eqref{eq:def-ZG-HG}.
The proof of \Cref{thm:core-zero-free} has two ingredients: the stability of the initial local star polynomials, and the preservation of stability under each contraction operation. As a result, the final $\widetilde Z_{H_m}$ remains stable, as we desired.

Define the local polynomial of vertex $v$ with degree $d_v$ and external field $\lambda_v$. Here the variables $z_1,\ldots,z_{d_v}$ represent half-edge variables:
\begin{equation}\label{eq:local-poly}
    \begin{aligned}
    P_{v,d_v}(z_1,\ldots,z_{d_v})
    &\defeq \sum_{S\subseteq [d_v]} (1+\lambda_v (-\rho)^{|S|}) \prod_{i\in S} z_i\\
    &=
    \prod_{i=1}^{d_v}(1+z_i)
    +
    \lambda_v\prod_{i=1}^{d_v}(1-\rho z_i)
    \end{aligned}
  \end{equation}
Instead of studying $P_{v,d_v}(z_1,\ldots,z_{d_v})$, we consider a modified local polynomial by setting $q_i= \frac{1}{z_i}$. 
Define the modified local polynomial:
\begin{equation}\label{eq:reciprocal-local-poly}
  \widetilde{P}_{v,d_v}(q_1,\ldots,q_{d_v})
  \defeq
  \left(\prod_{i=1}^{d_v} q_i\right)
  P_{v,d_v}(1/q_1,\ldots,1/q_{d_v}) = \sum_{S\subseteq [d_v]} (1+\lambda_v (-\rho)^{|S|}) \prod_{i\notin S} q_i.
\end{equation}
%Equivalently,
%\begin{equation}\label{eq:reciprocal-local-poly-explicit}
%  \begin{aligned}
%  \widetilde P_{v,d_v}(q_1,\ldots,q_{d_v})
%  &= \sum_{S\subseteq [d_v]} (1+\lambda_v (-\rho)^{|S|}) \prod_{i\notin S} q_i\\
%  &=
%  \prod_{i=1}^{d_v}(q_i+1)
%  +
%  \lambda_v\prod_{i=1}^{d_v}(q_i-\rho)\\
%  &=\prod_{i=1}^{d_v}(q_i-\rho)\left(\prod_{i=1}^{d_v}\left(\frac{1+q_i}{q_i-\rho}\right)+\lambda_v\right),\qquad \text{assuming } q_i\neq \rho \text{ for all }i.
%  \end{aligned}
%\end{equation}
The modified partition function of $H_0$ is as given by \Cref{def:modified_partition_function}:

\begin{align*}
        \widetilde Z_{H_0}(\bm q;\bm\lambda)
        =
        \prod_{v\in V}\widetilde P_{v,d_v}\bigl((q_{v,e})_{e\ni v}\bigr) =\sum_{S\subseteq E_0} \prod_{e\notin S} q_e \prod_{v\in V} (1+\lambda_v (-\rho)^{\deg_S(v)}).
\end{align*}
Here $q_{v,e}$ denotes the half-edge variable at the copy of edge $e$ incident to $v$. As we discussed before, we first study the stability of $\widetilde Z_{H_0}(\bm q;\bm\lambda)$. Since it is a product of local polynomials, it suffices to establish stability of $\widetilde P_{v,d_v}(q_{v,1},\ldots,q_{v,d_v})$ for each vertex $v$.

We first motivate some definitions that will be essential for our proof. Assume for the moment that $q_{v,i}\neq \rho$ for all $1\leq i\leq d_v$. Then the local polynomial can be written as
\begin{align}
  \widetilde P_{v,d_v}(q_{v,1},\ldots,q_{v,d_v})
  &= \prod_{i=1}^{d_v}(q_{v,i}+1)
  +
  \lambda_v\prod_{i=1}^{d_v}(q_{v,i}-\rho)\label{eq:first-product}\\
  \text{(as we assume $q_{v,i}\neq \rho$ for all $i$)}\quad\quad&=\prod_{i=1}^{d_v}(q_{v,i}-\rho)\left(\prod_{i=1}^{d_v}\left(\frac{1+q_{v,i}}{q_{v,i}-\rho}\right)+\lambda_v\right).\label{eq:reciprocal-local-poly-explicit}
\end{align}
The ratio $\frac{1+q_{v,i}}{q_{v,i}-\rho}$ motivates the following M\"obius change of variables:
\begin{equation}\label{eq:psi-q}
  u=\psi(q):=-\frac{1+q}{q-\rho}.
\end{equation}
Under this assumption, if $\widetilde P_{v,d_v}(q_{v,1},\ldots,q_{v,d_v})=0$, then
$\prod_{i=1}^{d_v}u_{v,i}=(-1)^{d_v+1}\lambda_v$, where $u_{v,i}=\psi(q_{v,i})$.
Thus, to prove stability (zero-freeness) of the modified local polynomial, it suffices to rule out this product identity when the variables lie in the desired domain. We will choose a \emph{bad region} for the $u$-variables so that the identity can hold only when some $u_{v,i}$ lies inside this region. More precisely, the bad region is the open disk
\[
        D_\epsilon=D(c^\star,r^\star-\epsilon),
\]
where $\epsilon>0$ will be chosen later. The constants $c^\star$ and $r^\star$ are from~\cite{GLL20}:
\begin{equation}\label{eq:cstar-rstar}
        c^\star\defeq\frac{\gamma \log (\sqrt{\gamma/\beta})}{\sqrt{\gamma\beta-1}\arctan \sqrt{\gamma\beta-1}-\log (\sqrt{\gamma/\beta})},
        \qquad
        r^\star\defeq\sqrt{\left(c^\star+\frac{1}{\beta}\right)^2+\frac{\beta\gamma-1}{\beta^2}}.
  \end{equation}
Pulling this disk back under the map $\psi$, we define the corresponding bad region for the $q$-variables by
\begin{equation}\label{eq:Gamma-epsilon}
  \Gamma_\epsilon
  :=
  \{q\in\mathbb C:\ \psi(q)\in D(c^\star,r^\star-\epsilon)\}.
\end{equation}
The following lemma gives the desired stability of each local polynomial.

\begin{lemma}\label{lem:stability-of-local-polynomial}
There exist constants $\delta_1,\epsilon>0$, depending only on $\gamma,\beta,\lambda$, such that the following holds.
Let $G = (V,E)$ and $\bm{\lambda} \in \mathbb{C}^V$ satisfy \Cref{condition:Glambda} with constant $\delta_1$.
For any $v \in V$, the polynomial $\widetilde P_{v,d_v}(q_1,\ldots,q_{d_v})\neq0$ whenever $q_i\in\mathbb C\setminus \Gamma_\epsilon$ for all $i\in [d_v]$.
  \end{lemma}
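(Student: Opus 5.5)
The plan is to work in the $u$-coordinates given by the M\"obius map $\psi$ in~\eqref{eq:psi-q}. Suppose, for contradiction, that $\widetilde P_{v,d_v}(q_1,\ldots,q_{d_v})=0$ with every $q_i\in\mathbb C\setminus\Gamma_\epsilon$.

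First we dispose of the degenerate values $q_i=\rho$ and $q_i=\infty$. The value $q_i=\infty$ is excluded because $\widetilde P$ is multi-affine. At $q_i=\rho$ we have $\psi(q_i)=\infty$, which lies outside $D_\epsilon$, so such $q_i$ are allowed. For these we use~\eqref{eq:first-product} directly. If some $q_j=\rho$, the $\lambda_v$-term vanishes, and the equation becomes $\prod_i(1+q_i)=0$, which forces some $q_i=-1$. But $\psi(-1)=0$. We will check, as part of the choice of constants, that $0\in D(c^\star,r^\star-\epsilon)$, i.e.\ $|c^\star|<r^\star-\epsilon$. This follows from $r^\star>|c^\star|$, which holds since $(c^\star+1/\beta)^2+(\beta\gamma-1)/\beta^2>(c^\star)^2$ when $c^\star>0$. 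Hence $q_i=-1\in\Gamma_\epsilon$, a contradiction.

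So we may assume all $q_i\neq\rho$. By~\eqref{eq:reciprocal-local-poly-explicit} the zero is then equivalent to $\prod_i u_i=(-1)^{d_v+1}\lambda_v$ with each $u_i=\psi(q_i)\in\widehat{\mathbb C}\setminus D_\epsilon$. It therefore suffices to prove the following avoidance statement: for $u_i$ outside $D_\epsilon$, the product $(-1)^{d}\prod_{i=1}^d u_i$ avoids $-\lambda_v$. This is exactly the kind of region statement established in~\cite{GLL20}, where the disk $D(c^\star,r^\star)$ is chosen so that its complement is closed under the relevant multiplicative recursion. Concretely, I would split according to $d=d_v$.

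\textbf{Case $d\ge2$.} Write $W_d=\{(-1)^d\prod_{i=1}^d u_i : u_i\notin D(c^\star,r^\star)\}$. By the argument in~\cite{GLL20} (the avoidance lemma, \Cref{lem:lambda_avoid}), $W_d$ avoids $-\lambda$ for every real $\lambda\in[-\rho\lambda^\star,\lambda^\star)$. Now shrink the disk to radius $r^\star-\epsilon$ and fatten the field to $\mathcal S_{\delta_1}(I)$. I would show that the distance between $W_d^{(\epsilon)}$ and $-I$ stays bounded below uniformly in $d$, using a continuity/compactness argument. Since $0\in D_\epsilon$, each $|u_i|\ge r^\star-\epsilon-|c^\star|>0$, and the complement of the disk is invariant under the geometry in~\cite{GLL20}. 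For large $d$ the product either leaves a neighborhood of $-I$ or its argument is controlled. One then chooses $\epsilon$ small, followed by $\delta_1$ small.

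\textbf{Case $d=1$.} By \Cref{condition:Glambda}, $\lambda_v\in\mathcal S_{\delta_1}(I_-)$, and we need $u_1=\lambda_v\notin\mathbb C\setminus D_\epsilon$, i.e.\ $\lambda_v\in D_\epsilon$. Since $I_-=[-\rho\lambda,0]$, and $0\in D(c^\star,r^\star)$ with real leftmost point $c^\star-r^\star<-\rho\lambda^\star$ (a calculation to check against the definitions~\eqref{eq:cstar-rstar}), a $\delta_1$-neighborhood of $I_-$ lies in $D_\epsilon$ for small $\epsilon$ and $\delta_1$. This is precisely why bad leaves with positive fields were removed.

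\textbf{Case $d=0$.} The polynomial is the constant $1+\lambda_v$, which is nonzero since $\lambda_v$ is close to $I\subset(-1,\lambda^\star)$ by \Cref{lem:lambda-v-tau}.

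The main obstacle is the uniform-in-$d$ quantitative version of the~\cite{GLL20} avoidance for $d\ge2$ with complex $\lambda_v$ and a shrunken disk. I would attempt it by taking logarithms. Write $u=re^{i\phi}$ outside the disk and use the tangent-line bound on $\log r$ versus $\phi$ that defines $c^\star,r^\star$; this gives a linear inequality between $\sum\log|u_i|$ and $\sum\arg u_i$. Perturbing the disk by $\epsilon$ introduces slack $O(\epsilon)$ per factor. This slack must be absorbed by the strict margin from $\lambda<\lambda^\star$, and the margin scales with $d$ as well, since the constraint is additive.
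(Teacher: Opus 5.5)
Your handling of the degenerate case $q_i=\rho$, of $d_v=1$, and of $d_v=0$ is sound and matches the paper. The fact you leave to check for $d_v=1$, namely $c^\star-r^\star<-\rho\lambda^\star$, follows from $r^\star-c^\star>1>\rho\lambda^\star$.

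The gap is the case $d_v\ge 2$, which you flag as the main obstacle but do not resolve. Degrees are unbounded, while $\epsilon$ and $\delta_1$ may depend only on $\beta,\gamma,\lambda$. So a compactness argument for each fixed $d$ is not enough. The only quantitative input you record, $|u_i|\ge r^\star-\epsilon-|c^\star|>0$, does not by itself prevent long products from coming back near $\mathcal S_{\delta_1}(I)$.

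Your logarithmic heuristic does not close this either. The margin supplied by $\lambda<\lambda^\star$ is the fixed quantity $\log(\lambda^\star/\lambda)$. Shrinking the disk costs $O(\epsilon)$ per factor, i.e.\ $O(d\epsilon)$ in total. Saying that ``the margin scales with $d$'' presupposes a positive per-factor margin, which you never exhibit.

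Separately, the result of \cite{GLL20} only concerns fields in $[0,\lambda]$. Citing it does not justify your claim that $W_d$ also avoids $-\lambda_v$ for $\lambda_v\in I_-$.

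The missing idea is the inequality $r^\star-c^\star>1$ from \Cref{claim:cstar-rstar}. It implies that for small $\epsilon$, every $u\notin D(c^\star,r^\star-\epsilon)$ has $|u|\ge m_0>1$. This one fact supplies everything you lacked:
\begin{itemize}
\item A product of $d$ such factors has modulus at least $m_0^d$. So for $d$ beyond a constant $d_0$, it lies outside $\{|a|<\max(\lambda,\rho\lambda)+1\}\supseteq\mathcal S_1(I)$.
\item Fields in $I_-$ have modulus at most $\rho\lambda<1$, so they are never hit. This is how the paper extends the \cite{GLL20} lemma from $[0,\lambda]$ to $I$.
\item For the finitely many $2\le d\le d_0$, you may truncate to the compact set $K_\epsilon\cap\cl{D(0,R)}$. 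A factor of modulus exceeding $R$ pushes the whole product out, because the other factors have modulus above $1$. Then the Hausdorff-continuity argument you had in mind goes through.
\end{itemize}
This is exactly the content of \Cref{lem:lambda_avoid}, which the paper states already for $K_\epsilon$ and the strip $\mathcal S_{\delta'}(I)$. With it in hand, the $d_v\ge 2$ case is an immediate contradiction, which is how the paper argues.
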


The proof of \Cref{lem:stability-of-local-polynomial} is deferred to \Cref{sec:local-stability}.
Although the definition of $\Gamma_\epsilon$ was motivated by the case $q_{v,i}\neq \rho$ for all $i$, the lemma itself also covers the general case in which some $q_{v,i}$ may equal $\rho$.
Since $\widetilde Z_{H_0}(\bm q;\bm\lambda)$ is a product of local polynomials, \Cref{lem:stability-of-local-polynomial} implies the stability of $\widetilde Z_{H_0}(\bm q;\bm\lambda)$.

\iffalse
We next turn to the contraction step. %where we use the Asano contraction lemma (\Cref{lem:Asano-contraction}) to prove \Cref{thm:core-zero-free}. 
%Recall that we construct a sequence of bipartite graphs $H_0,H_1,\ldots,H_m$, where each $H_{i+1}$ is derived by forcing the two half-edge vertices $v_{e_i}$ and $u_{e_i}$ in $H_i$ to be one vertex in $H_{i+1}$. 

As we discussed above, the modified partition function of initial graph
\[
        \widetilde Z_{H_0}(\bm q;\bm\lambda)
        =
        \prod_{v\in V}\widetilde P_{v,d_v}\bigl((q_{v,e})_{e\ni v}\bigr).
\]
is already stable for each $q_{v,e}\in \mathbb{C}\setminus \Gamma_\epsilon$ with $v\in V$ and $e\in E_v$.  The final graph $H_m$ is the edge-incidence bipartite graph $H$ from \Cref{def:graph-HG}; its right-side variable corresponding to an edge $e$ will be denoted by $q_e$.  By \Cref{prop:spin-holant} and the change of variables $q_e=1/z_e$, the modified partition function on $H$ satisfies
\[
        \widetilde Z_H(\bm q;\bm\lambda)
        =
        \left(\prod_{e\in E}\frac1{z_e}\right)
        Z_G^{\text{wsg}}(\bm z,\bm\lambda).
\]
Thus stability of $\widetilde Z_H$ near $q_e=1/\theta$ implies stability of $Z_G^{\text{wsg}}$ near $z_e=\theta$, since $\theta\neq0$.

\fi
We next turn to the contraction step. 
For each polynomial $Z_{H_i}$ in the sequence, similarly, we define the modified partition function $\widetilde Z_{H_i}$ by replacing each variable $z$ with $q = 1/z$.
%We now construct $H$ from $H_0$ by contracting one original edge at a time.  Fix an ordering $e_1,e_2,\ldots,e_m$ of the edges of the leaf reduced core $G'$.  
Suppose that at step $i+1$ the edge is $e_{i+1}=\{u,v\}$.  The graph $H_i$ still has two half-edge variables $q_{u,e_{i+1}}$ and $q_{v,e_{i+1}}$ corresponding to the two copies of $e_{i+1}$.  The graph $H_{i+1}$ is obtained by replacing these two half-edge variables with a single edge variable $q_{e_{i+1}}$.  Equivalently, if $\bm q'$ denotes all variables not involved in this step, then we define two auxiliary polynomials $P$ and $Q$ as follows:
\begin{align*}
   P(q_{u,e_{i+1}},q_{v,e_{i+1}}) & \defeq
  \widetilde Z_{H_i}(\bm q', q_{u,e_{i+1}},q_{v,e_{i+1}};\bm\lambda);\\
  Q(q_{e_{i+1}})
  &\defeq
  \widetilde Z_{H_{i+1}}(\bm q',q_{e_{i+1}}; \bm\lambda),
\end{align*}
where $Q$ is the contracted version of $P$.
%From an algebraic point of view, this step is the Asano contraction of the two half-edge variables associated with $e_{i+1}$.
%we force the two half-edge variables associated with $e_{i+1}$ to be equal to derive $\widetilde Z_{H_{i+1}}(\bm q',\bm\lambda;q_{e_{i+1}})$.
 The next lemma gives a algebraic description of this one-edge contraction and shows that it preserves the required stability region.

\begin{lemma}\label{lem:contraction-one-step}
  Fix $\bm{q}'$ and $\bm{\lambda}$, and define coefficients $A,B,C,D$ depending on $\bm{q}'$ and $\bm{\lambda}$ such that
  \[
  {P}(q_{v_e}, q_{u_e}) = \widetilde Z_{H_i}(\bm{q}',q_{v_e}, q_{u_e}; \bm{\lambda})= A+Bq_{v_e}+Cq_{u_e}+Dq_{v_e}q_{u_e}.
  \]
  For every \(0<\epsilon<r^\star/2\), if ${P}(q_{v_e}, q_{u_e})\neq 0$ for $q_{v_e}, q_{u_e}\in \mathbb{C}\setminus \Gamma_\epsilon$, then there exists $\delta_3>0$, depending only on $\beta,\gamma,\epsilon$, such that 
  \[Q(q_e) = \widetilde Z_{H_{i+1}}(\bm{q}', q_e;\bm{\lambda})=A+Dq_e\]
   is zero-free for $q_e\in U_{\delta_3}(1/\theta)$.
\end{lemma}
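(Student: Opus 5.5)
The plan is to derive the statement from the classical Asano--Ruelle contraction lemma \cite{Asa70,Rue71}, and then to reduce it to a geometric statement: $1/\theta$ stays a definite distance away from a product set built from $\Gamma_\epsilon$. The Asano--Ruelle lemma says the following. Let $K_1,K_2\subset\mathbb C$ be closed sets whose complements contain a neighbourhood of $\infty$. If $A+Bx+Cy+Dxy\neq 0$ whenever $x\notin K_1$ and $y\notin K_2$, then $A+Dq\neq0$ whenever $q\notin -K_1K_2:=\{-ab: a\in K_1,\ b\in K_2\}$. Applying it with $K_1=K_2=\overline{\Gamma_\epsilon}$, the hypothesis of \Cref{lem:contraction-one-step} gives that $Q(q_e)=A+Dq_e$ is zero-free on $\mathbb C\setminus\bigl(-\overline{\Gamma_\epsilon}\cdot\overline{\Gamma_\epsilon}\bigr)$. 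The lemma then follows by taking $\delta_3$ to be the distance from $1/\theta$ to the compact set $-\overline{\Gamma_\epsilon}\cdot\overline{\Gamma_\epsilon}$, once we know this distance is positive.

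For completeness I would also re-derive the Asano step in our setting. If $D\neq0$, write
\[
P(x,y)=D(x-a)(y-b)+A-Dab,\qquad a=-C/D,\ b=-B/D .
\]
Letting $x\to\infty$ (respectively $y\to\infty$) outside $\Gamma_\epsilon$ and using zero-freeness forces the asymptotic root $b$ (respectively $a$) to lie in $\overline{\Gamma_\epsilon}$. Combining this with the Möbius map $y=-(A+Bx)/(C+Dx)$, which must send $\mathbb C\setminus\Gamma_\epsilon$ into $\Gamma_\epsilon$, one shows that $A+Dq=0$ forces $q\in-\overline{\Gamma_\epsilon}\cdot\overline{\Gamma_\epsilon}$. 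The degenerate case $D=0$ is handled directly: then $P=A+Bx+Cy$, and for $q$ near $1/\theta$ we have $Q=A$, which cannot vanish because $P$ has no zero with both variables large and outside $\Gamma_\epsilon$. Rather than redo this in full, I would cite the standard version.

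The geometric step comes next. The inverse of $\psi$ is $q=\psi^{-1}(u)=\frac{\rho u-1}{u+1}$, so $\Gamma_\epsilon=\psi^{-1}(D(c^\star,r^\star-\epsilon))$ is the image of a disk under a Möbius map. It is a bounded disk provided $-1\notin\overline{D(c^\star,r^\star-\epsilon)}$; I would check this, and the restriction $\epsilon<r^\star/2$ should help here, using the explicit formulas \eqref{eq:cstar-rstar}. Writing $\Gamma_\epsilon=D(\kappa_\epsilon,s_\epsilon)$, the set $-\overline{\Gamma_\epsilon}\cdot\overline{\Gamma_\epsilon}$ is compact. It therefore suffices to show that the positive real number $1/\theta=\frac{\beta\gamma-1}{(\gamma-1)^2}$ is not of the form $-ab$ with $a,b\in\overline{\Gamma_0}$ (the case $\epsilon=0$). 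Since the sets $\Gamma_\epsilon$ are nested and shrink as $\epsilon$ grows, the distance from $1/\theta$ is then bounded below uniformly for $\epsilon\ge\epsilon_0$. Equivalently, after passing through $\psi$, I need the following: for $u_1,u_2\in\overline{D(c^\star,r^\star)}$, one never has $\psi^{-1}(u_1)\psi^{-1}(u_2)=-1/\theta$. This is a Möbius identity relating $u_1,u_2$. I expect it to coincide with the GLL20 condition that the disk $D(c^\star,r^\star)$ is mapped outside itself by the edge recursion, which is exactly the duality remarked on in the overview.

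The main obstacle is this last verification. One has to translate the condition "$-ab=1/\theta$ has no solutions in the closed disk" into the property of $(c^\star,r^\star)$ proved in \cite{GLL20}, and possibly strengthen it to a strict separation. I would parametrise the boundary circle $u=c^\star+r^\star e^{i\phi}$. By the maximum modulus principle together with convexity of the disk, the extremal products $-ab$ arise on the boundary, so it would suffice to show that $1/\theta$ lies outside the region bounded by the curve $-\psi^{-1}(u_1)\psi^{-1}(u_2)$ with $u_1,u_2$ on that circle, and to confirm this algebraically using the defining relation for $r^\star$.
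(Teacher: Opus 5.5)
Your outer architecture matches the paper's: Asano contraction, then show a neighbourhood of $1/\theta$ avoids $-\Gamma_\epsilon^2$. But the step you defer as the ``main obstacle'' cannot work as you formulate it, because the reduction to $\epsilon=0$ is false. With $q_i=\psi^{-1}(u_i)=\frac{\rho u_i-1}{1+u_i}$, the equation $-q_1q_2=1/\theta$ is equivalent to $(1+u_1)(1+u_2)+\theta(1-\rho u_1)(1-\rho u_2)=0$. Substituting the values of $\rho,\theta$, this becomes $(1-\rho\theta)(\beta u_1u_2+u_1+u_2+\gamma)=0$. Now take $u_1=u_2=\zeta$, where $\zeta=\frac{-1\pm i\sqrt{\beta\gamma-1}}{\beta}$ is a root of $\beta x^2+2x+\gamma$. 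By the very definition of $r^\star$ in \eqref{eq:cstar-rstar}, $|\zeta-c^\star|=r^\star$. So $1/\theta\in-\overline{\Gamma_0}\cdot\overline{\Gamma_0}$ and the closed disk $\overline{D(c^\star,r^\star)}$ is exactly critical. Your boundary parametrisation would find $1/\theta$ attained, and $\delta_3$ must tend to $0$ as $\epsilon\to0$, so no bound uniform in $\epsilon$ exists. The missing idea is to keep $\epsilon>0$ and use Grace--Szeg\H{o}--Walsh (\Cref{lem:GSW}). Suppose $u_1,u_2\in\overline{D(c^\star,r^\star-\epsilon)}\subset D(c^\star,r^\star-\epsilon/2)$ satisfy $\beta u_1u_2+u_1+u_2+\gamma=0$. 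Then $\beta x^2+2x+\gamma$ has a root in the circular region $D(c^\star,r^\star-\epsilon/2)$, which contradicts $|\zeta-c^\star|=r^\star$.

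Your picture of $\Gamma_\epsilon$ is also wrong. We have $\psi(\infty)=-1$, and $r^\star-c^\star>1$ by \Cref{claim:cstar-rstar}, so $-1\in D(c^\star,r^\star-\epsilon)$ for all small $\epsilon$. Hence $\Gamma_\epsilon$ is the exterior of a disk and $\mathbb C\setminus\Gamma_\epsilon$ is bounded. Three consequences follow. The Asano--Ruelle version you quote, for sets whose complements contain a neighbourhood of $\infty$, does not apply. Your re-derivation letting $x\to\infty$ outside $\Gamma_\epsilon$ is vacuous. And $-\overline{\Gamma_\epsilon}\cdot\overline{\Gamma_\epsilon}$ is not compact. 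That version of the lemma is in any case false as stated: $P=x$ with $0\in K_1$ satisfies the hypothesis but gives $Q\equiv 0$. The hypothesis you actually need is $0\notin\overline{\Gamma_\epsilon}$, which follows from $\psi(0)=1/\rho\notin\cl{D(c^\star,r^\star)}$ (\Cref{claim:one-over-rho}). This gives $A=P(0,0)\neq0$, so both roots of $A+(B+C)x+Dx^2$ lie in $\Gamma_\epsilon$, and Vieta places the zero $-A/D$ of $Q$ in $-\Gamma_\epsilon^2$. It also gives a bound $|q|\ge a_0>0$ on $\Gamma_\epsilon$. So any $q_1,q_2\in\Gamma_\epsilon$ with $-q_1q_2$ near $1/\theta$ are bounded in modulus, and you may restrict to a bounded piece $\Gamma_\epsilon\cap\overline{D(0,R_0)}$, where your closure argument does go through.
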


We explain why $Q(q_e)$ can be written as $A+Dq_e$ in \Cref{lem:contraction-one-step}.
In the graph $H_i$, before contracting the edge $e=\{u,v\}$, the two half-edge variables $q_{v_e}$ and $q_{u_e}$ are independent. Hence, after fixing all other variables $\bm q'$ and $\bm\lambda$, the polynomial has the form
\[
  P(q_{v_e},q_{u_e})
  =
  A+Bq_{v_e}+Cq_{u_e}+Dq_{v_e}q_{u_e}.
\]
The four coefficients correspond to the four possible states of the two half-edges in reciprocal coordinates: $Dq_{v_e}q_{u_e}$ is the contribution where neither half-edge is selected, $Bq_{v_e}$ and $Cq_{u_e}$ are the two mixed contributions, and $A$ is the contribution where both half-edges are selected. The combinatorial contraction in \Cref{def:contraction-operation} merges the two half-edge vertices into one vertex $e$, whose equality signature allows only the two consistent states: both half-edges are not selected or both are selected. Therefore the two mixed contributions, represented algebraically by $Bq_{v_e}$ and $Cq_{u_e}$, disappear after contraction. 
The contracted signature $[q_e,0,0,1]$ assigns weight $q_e$ to the state where neither half-edge is selected and weight $1$ to the state where both half-edges are selected.
Thus the contracted polynomial is
\[
  Q(q_e)=A+Dq_e.
\]
The proof of the stability of $Q(q_e)$ in \Cref{lem:contraction-one-step} is deferred to \Cref{sec:contraction-stable}.
It relies essentially on an argument first given by Ruelle \cite{Rue71}.

Given \Cref{lem:stability-of-local-polynomial} and \Cref{lem:contraction-one-step}, we are now ready to prove \Cref{thm:core-zero-free}.

\begin{proof}[Proof of \Cref{thm:core-zero-free}.]
  Let \(\epsilon\) be the constant from \Cref{lem:stability-of-local-polynomial}.
  By shrinking \(\epsilon\) if necessary, we may assume that
  \(0<\epsilon<r^\star/2\); this only shrinks the stability domain
  \(\mathbb C\setminus\Gamma_\epsilon\). By \Cref{lem:stability-of-local-polynomial}, the modified partition function of the initial graph $H_0$
  \[
        \widetilde Z_{H_0}(\bm q;\bm\lambda)
        =
        \prod_{v\in V}\widetilde P_{v,d_v}\bigl((q_{v,e})_{e\ni v}\bigr).
\]
  is stable for each half-edge variable $q_{v,e}\in \mathbb{C}\setminus \Gamma_\epsilon$ for $v\in V$, $e\in E_v$. Now we order the edges of \(G\) as $e_1,e_2,\ldots,e_m$.
  Starting from \(\widetilde Z_{H_0}\), apply Asano contraction successively from $H_i$ to $H_{i+1}$. \Cref{lem:contraction-one-step}, applied with this \(\epsilon\), shows that each contraction preserves
  stability for the newly created reciprocal edge variable $q_e$ in the disk $U_{\delta_3}(1/\theta)$, where \(\delta_3=\delta_3(\beta,\gamma,\epsilon)\). Since \(\epsilon\) depends only on \(\beta,\gamma,\lambda\), so does \(\delta_3\). Therefore, after all contractions, the resulting polynomial $\widetilde Z_{H_m}(\bm q;\bm\lambda)$ is stable whenever $q_e\in U_{\delta_3}(1/\theta)$ for every $e\in E$. Finally, by the construction of the $\bm q$-variables, the fully
  contracted polynomial satisfies
  \[
          \widetilde Z_{H_m}(\bm q;\bm\lambda)
          =
          \left(\prod_{e\in E}q_e\right)
          Z^{\text{wsg}}_{G}(1/\bm q,\bm\lambda).
  \]
  The factor \(\prod_{e\in E}q_e\) is nonzero in a small neighborhood of \(1/\theta\) since $\theta\neq 0$. Hence $\widetilde Z_{H_m}(\bm q;\bm\lambda)\neq 0$ implies $Z_{G}^{\text{wsg}}(\bm{z},\bm{\lambda})\neq 0$ for $z_e=1/q_e$. Shrinking the neighborhood if necessary, this gives
  stability of \(Z^{\text{wsg}}_{G}(\bm z,\bm\lambda)\) whenever
  $z_e\in U_\delta(\theta)$ for every $e\in E$ for some $\delta>0$ depending only on $\beta,\gamma,\lambda$. This proves the theorem.
\end{proof}

%\iffalse
\subsection{Proof of the closure lemma}\label{sec:leaf-contraction-proof}

This section is devoted to the proof of \Cref{lemma:leaf-contraction}. Recall that by our analysis in \Cref{subsec:leaf-contraction-reduction}, we have that
\[
  C_z(a)=(1+a)+z(1-\rho a),
  \qquad
  \Phi_z(a)=\frac{(1+a)-\rho z(1-\rho a)}{(1+a)+z(1-\rho a)}.
\]
and the partition function after contracting one leaf edge $e=\{u,v\}$ with leaf vertex $v$ satisfies: 
\[
Z_G(\bm{z}; \bm{\lambda})=C_{z_e}(\lambda_v)Z_{G\setminus\{v,e\}}(\bm{z}; \bm{\lambda}'),
\]
where $\lambda'_u=\lambda_u\Phi_{z_e}(\lambda_v)$ and $\lambda'_w=\lambda_w$ for other nodes $w\neq u$. 

Fix a constant $\delta_1 > 0$ in \Cref{lemma:leaf-contraction}. 
First, we prove \Cref{lemma:leaf-contraction} \ref{item:region}. To this end, we will define the region $\+U^+$ and $\+U^-$ satisfying the conditions in the Lemma. Let $I_+=[0,\lambda]$ and $I_-=[-\rho\lambda,0]$, where $0<\lambda<\lambda^\star$. We use a potential function from~\cite{GuoL18}:
\[
  h_+(x)=\max\left\{t,\ x\log\frac{\zeta}{x}\right\},
  \qquad 0\le x\le\lambda.
\]
The specific choice of $\zeta,t$ can be found in~\cite{GuoL18}.
In particular, the following holds.

\begin{lemma}[Equation (10) of \cite{GuoL18}\protect\footnote{The results in \cite{GuoL18} actually work for $\lambda$ up to the threshold $\lambda_c>\lambda^\star$.}]\label{lem:guo-lu-one-variable}
  There exist constants $\zeta \in (\lambda,\lambda^\star)$ and $t>0$, depending only on $\beta,\gamma,\lambda$, such that the following holds for a constant $\kappa\in(0,1)$:
  \begin{equation}\label{eq:guo-lu-one-variable}
    \frac{(\beta\gamma-1)h_+(x)}
    {(1+\beta x)(\gamma+x)}
    \le
    \kappa\log\frac{\gamma+x}{1+\beta x}
    \qquad \forall\, 0\le x\le\lambda.
  \end{equation}
\end{lemma}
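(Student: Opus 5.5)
The plan is to reduce \eqref{eq:guo-lu-one-variable} to a \emph{strict} inequality with $\kappa=1$ on the compact interval $[0,\lambda]$, and then obtain $\kappa<1$ by compactness.

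\textbf{Step 1: the right-hand side is bounded below.} First I will check that
\[
R(x)\defeq\log\frac{\gamma+x}{1+\beta x}
\]
is strictly positive on $[0,\lambda]$. We have $\gamma+x>1+\beta x$ if and only if $x<\frac{\gamma-1}{\beta-1}=1/\rho$. By \Cref{lem:lambda-v-tau}, $\rho\lambda<\rho\lambda^\star<1$, so $\lambda<1/\rho$. Since $R$ is continuous on the compact interval $[0,\lambda]$, it follows that $r_0\defeq\min_{[0,\lambda]}R>0$.

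\textbf{Step 2: the constant branch of $h_+$.} I will fix $t$ with
\[
0<t<\frac{r_0\,(1+\beta\lambda)(\gamma+\lambda)}{\beta\gamma-1}.
\]
The quantity $(1+\beta x)(\gamma+x)$ is increasing in $x$, so it is at most $(1+\beta\lambda)(\gamma+\lambda)$ on $[0,\lambda]$. Hence on this interval
\[
\frac{(\beta\gamma-1)\,t}{(1+\beta x)(\gamma+x)}\le\frac{(\beta\gamma-1)\,t}{(1+\beta\lambda)(\gamma+\lambda)}<r_0\le R(x).
\]
So the branch $h_+=t$ satisfies the inequality strictly for every $x\in[0,\lambda]$.

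\textbf{Step 3: the logarithmic branch.} Here I need to show that
\[
(\beta\gamma-1)\,x\log\frac{\zeta}{x}<(1+\beta x)(\gamma+x)\,R(x)\qquad\text{for } x\in(0,\lambda].
\]
Write $g(x)\defeq\frac{(1+\beta x)(\gamma+x)R(x)}{\beta\gamma-1}$. Dividing by $(\beta\gamma-1)x>0$, the inequality is equivalent to
\[
\log\zeta<\Theta(x)\defeq\log x+\frac{g(x)}{x}.
\]
This is the computation in \cite{GuoL18}: the infimum of $\Theta$ over $x>0$ is $\log\lambda_c$, where $\lambda_c=(\gamma/\beta)^{\frac{\sqrt{\beta\gamma}}{\sqrt{\beta\gamma}-1}}$. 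To verify it directly, I would differentiate $\Theta$. The critical-point equation reduces to a relation between $x$ and $R(x)$, and at that point the value equals $\log\lambda_c$. I would also check the endpoint behaviour: $\Theta(x)\to+\infty$ as $x\to0^+$ (because $g(0)=\gamma\log\gamma/(\beta\gamma-1)>0$), and the problem is only posed for $x\le\lambda$ anyway.

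Since $\lambda<\lambda^\star<\lambda_c$ (the footnote citing \cite{GLL20}), I can pick any $\zeta\in(\lambda,\lambda^\star)$. Then $\log\zeta<\log\lambda_c\le\Theta(x)$ for all $x>0$, which gives the strict inequality on $(0,\lambda]$.

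\textbf{Step 4: compactness gives $\kappa<1$.} Define the ratio
\[
\frac{(\beta\gamma-1)h_+(x)}{(1+\beta x)(\gamma+x)R(x)}.
\]
Near $x=0$ we have $x\log(\zeta/x)\to0$, so $h_+(x)=t$ there; in particular $h_+$ is continuous on $[0,\lambda]$. The denominator is bounded away from zero by Step 1. So the ratio is continuous on $[0,\lambda]$, and by Steps 2 and 3 it is strictly less than $1$ at every point. Taking $\kappa$ to be its maximum over the compact interval yields $\kappa<1$, and $\kappa>0$ because $h_+\ge t>0$.

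\textbf{Main obstacle.} The delicate step is identifying $\inf_{x>0}\Theta(x)=\log\lambda_c$ in Step 3. I would first try to simply invoke the corresponding computation in \cite{GuoL18}. Failing that, I would carry out the one-variable calculus directly: parametrise by $s=\frac{\gamma+x}{1+\beta x}$ to simplify the critical-point equation.
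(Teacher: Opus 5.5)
Your overall strategy is sound: prove each branch of $h_+$ strictly with $\kappa=1$ on $[0,\lambda]$, then get $\kappa<1$ by compactness. Steps 1 and 4 are fine. The paper gives no argument of its own here; it imports the inequality from Equation (10) of \cite{GuoL18}, whose setting is $\beta\le 1$, so a direct verification is worthwhile. However, two steps fail as written.

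\textbf{Step 2 has the monotonicity reversed.} Write $P(x)\defeq(1+\beta x)(\gamma+x)$. Since $P$ is increasing, $1/P$ is decreasing. So $\frac{(\beta\gamma-1)t}{P(x)}$ is largest at $x=0$, and your inequality $\frac{(\beta\gamma-1)t}{P(x)}\le\frac{(\beta\gamma-1)t}{P(\lambda)}$ is false. This matters. Because $h_+(0)=t$, the lemma at $x=0$ forces $(\beta\gamma-1)t<\gamma\log\gamma$. Your admissible range is only $t<P(\lambda)R(\lambda)/(\beta\gamma-1)$ (here $r_0=R(\lambda)$, since $R'=(1-\beta\gamma)/P<0$), and this can be larger. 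For example, take $\beta=1.1$, $\gamma=10$, $\lambda=1$, which is admissible since $\lambda^\star\approx 15.5$. Your bound allows $t$ up to about $3.82$, whereas one needs $t<\log 10\approx 2.30$. With $t=3$ the inequality fails at $x=0$ for every $\kappa\le 1$. The fix is to use $P(x)\ge P(0)=\gamma$ and choose $0<t<\gamma r_0/(\beta\gamma-1)$.

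\textbf{The claim $\inf_{x>0}\Theta(x)=\log\lambda_c$ in Step 3 is false for $\beta>1$.} As $x\to\infty$, $R(x)\to-\log\beta<0$, so $g(x)/x\to-\infty$ linearly and $\Theta(x)\to-\infty$. This is exactly where the $\beta\le 1$ computation of \cite{GuoL18}, in which $R>0$ on all of $(0,\infty)$, cannot be quoted verbatim. What is true, and suffices because $(0,\lambda]\subset(0,1/\rho)$ by your Step 1, is the bound on $(0,1/\rho)$. Using $R'(x)=(1-\beta\gamma)/P(x)$, the derivative factors cleanly:
\[
\Theta'(x)=\frac{(\beta x^2-\gamma)\,R(x)}{(\beta\gamma-1)\,x^2}.
\]
Thus on $(0,1/\rho)$, where $R>0$, the only critical point is $x^*=\sqrt{\gamma/\beta}$; it is not a transcendental relation between $x$ and $R(x)$. $\Theta$ decreases before $x^*$ and increases after it. Note $x^*<1/\rho$, since $(1/\rho)^2-\gamma/\beta=\frac{(\gamma-\beta)(\beta\gamma-1)}{\beta(\beta-1)^2}>0$. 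At $x^*$ one has $\frac{\gamma+x^*}{1+\beta x^*}=\sqrt{\gamma/\beta}$ and $P(x^*)/x^*=(1+\sqrt{\beta\gamma})^2$. This gives $\Theta(x^*)=\frac{\sqrt{\beta\gamma}}{\sqrt{\beta\gamma}-1}\log\frac{\gamma}{\beta}=\log\lambda_c$. Hence $\Theta\ge\log\lambda_c>\log\zeta$ on $(0,\lambda]$. With the corrected choice of $t$, your Step 4 then completes a self-contained proof that also makes explicit why $\lambda<1/\rho$ is needed when $\beta>1$.
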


Next we define the region $\+U$ in the Lemma. For $\eta>0$, define the positive tube and the negative tube by
\begin{align}\label{eq:U-eta-decomposition}
  \mathcal U_\eta^+
  :=
  \bigcup_{x\in[0,\lambda]}D(x,\eta h_+(x)) \quad\text{ and }\quad
   \mathcal U_\eta^-
  :=
  \bigcup_{x\in[0,\rho\lambda]}D(-x,\eta h_+(x)),
\end{align}
where $D(x,r)$ is the open disk centered at $x$ with radius $r$.
See \Cref{fig:tube-illustration} for an illustration.
We use $\+U^+ = \mathcal U_\eta^+$ and $\+U^- = \mathcal U_\eta^-$ for some small enough constant $\eta = \eta(\beta,\gamma,\lambda,\delta_1)>0$. The value of the constant $\eta$ will be determined later in the proof.

By choosing $\eta$ small enough with respect to $\delta_1$, the region $\mathcal U_\eta=\mathcal U_\eta^+\cup\mathcal U_\eta^-$ satisfies
\[
  I_+\subset\mathcal U_\eta^+,\qquad
  I_-\subset\mathcal U_\eta^-,
  \qquad
  {\mathcal U_\eta}\subset\mathcal S_{\delta_1}(I).
\]
This verifies the first item of the Lemma.

\begin{figure}[t]
  \centering
  \includegraphics[width=0.65\textwidth]{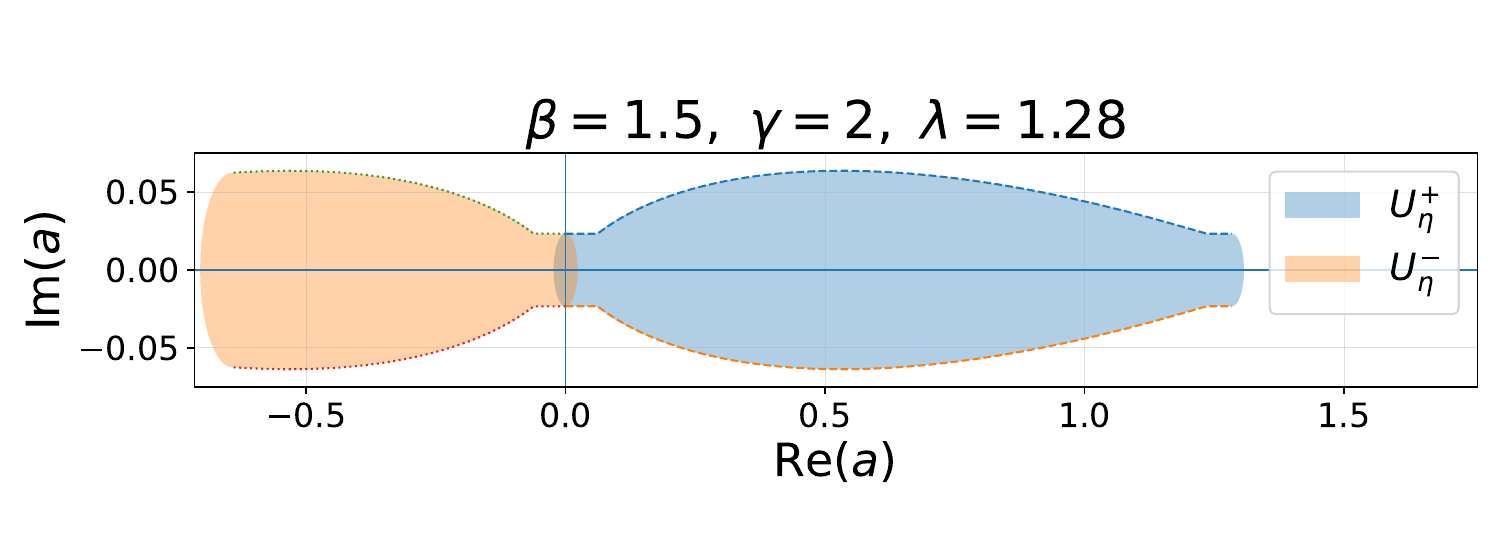}
  \caption{An illustration of the region $\mathcal U_\eta$ used in the closure lemma.}
  \label{fig:tube-illustration}
\end{figure}

Next, we prove \Cref{lemma:leaf-contraction} \ref{item:C_z}. This is guaranteed by continuity of $C_z(x)$ with respect to $z$ and $x$. When choosing $z=\theta$, we have that for any $0\le x\le\lambda$,
\[
  C_\theta(x)=(1+x)+\theta(1-\rho x)>C,
\]
for some constant $C>0$. This is because $1-\rho\theta=\frac{\beta+\gamma-2}{\beta\gamma-1}>0$. (Recall that $\gamma > \beta > 1$, $\rho=\frac{\beta-1}{\gamma-1}$, and $\theta=\frac{(\gamma-1)^2}{\beta\gamma-1}$.) Then compactness gives, after shrinking $\eta$ if necessary and taking a sufficiently small constant $\delta_0>0$, that
\[
  C_z(a)\ne0
  \qquad
  \text{for every }a\in {\mathcal U_\eta^+},\ z\in U_{\delta_0}(\theta).
\]
This verifies the second item of the Lemma.

Finally, we prove \Cref{lemma:leaf-contraction} \ref{item:product}. We introduce an auxiliary Lemma to prove this item. 

\begin{lemma}\label{lem:dist_separation}
 For all sufficiently small $\eta>0$, it holds that for any $c\in I$, $k\geq 1$, and $a_i\in \mathcal U_\eta^+$ with $1\le i\le k$,
  \[
  \textnormal{dist}\left(c\prod_{i=1}^k \Phi_\theta(a_i), \partial \mathcal U_\eta\right) \ge \Omega_{\beta,\gamma,\lambda}(\eta),
  \]
  where $\partial \mathcal U_\eta$ is the boundary of $\mathcal U_\eta$.
\end{lemma}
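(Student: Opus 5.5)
The plan is to identify $\Phi_\theta$ on the real axis with the Guo--Lu tree-recursion map, so that \Cref{lem:guo-lu-one-variable} controls the accumulated imaginary drift. A direct computation using $1-\rho\theta=\frac{\beta+\gamma-2}{\beta\gamma-1}$, $\rho\theta=\frac{(\beta-1)(\gamma-1)}{\beta\gamma-1}$ and $\rho^2\theta=\frac{(\beta-1)^2}{\beta\gamma-1}$ gives
\[
  \Phi_\theta(a)=\frac{1+\beta a}{\gamma+a}.
\]
Both numerator and denominator of $\Phi_\theta$ carry the common factor $\frac{\beta+\gamma-2}{\beta\gamma-1}$, which cancels. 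Hence, for real $x\in[0,\lambda]$, we have $\Phi_\theta(x)\in(0,1)$, using $\rho\lambda<1$ from \Cref{lem:lambda-v-tau}. In fact $\Phi_\theta(x)\le \Phi_\theta(\lambda)<1$ uniformly. Moreover
\[
  (\log\Phi_\theta)'(x)=\frac{\beta\gamma-1}{(1+\beta x)(\gamma+x)}.
\]
This is exactly the quantity on the left of \eqref{eq:guo-lu-one-variable}.

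The first step is to fix $a_i=x_i+iy_i\in\mathcal U_\eta^+$. Since the tube is a union of disks, I may take $x_i\in[0,\lambda]$ with $|y_i|$ and $|a_i-x_i|$ both less than $\eta h_+(x_i)$. I then Taylor expand the analytic branch of $\log\Phi_\theta$, which is valid since $\Phi_\theta$ is bounded away from $0$ and $\infty$ near $[0,\lambda]$:
\[
  \log\Phi_\theta(a_i)=\log\Phi_\theta(x_i)+(a_i-x_i)(\log\Phi_\theta)'(x_i)+O(\eta^2h_+(x_i)^2).
\]
Write $L_i=\log\frac{\gamma+x_i}{1+\beta x_i}>0$ and $L=\sum_i L_i$. \Cref{lem:guo-lu-one-variable} bounds the linear term by $\kappa\eta L_i$. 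Since $h_+$ is bounded and $L_i\ge L_{\min}>0$, the quadratic error is $O(\eta^2)L_i$. Summing over $i$ gives
\[
  \prod_i\Phi_\theta(a_i)=e^{-L+\xi},\qquad |\operatorname{Im}\xi|\le(\kappa+O(\eta))\eta L,\qquad |\operatorname{Re}\xi|\le(\kappa+O(\eta))\eta L.
\]

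The second step is to locate $c\prod_i\Phi_\theta(a_i)$, treating $c\in I_+$ and $c\in I_-$ symmetrically. Let $x'=|c|e^{-L+\operatorname{Re}\xi}$; this real point is the natural centre. Because every factor has modulus at most $\Phi_\theta(\lambda)+O(\eta)<1$, we get $x'\le|c|$. So $x'\in[0,\lambda]$, or $x'\in[0,\rho\lambda]$ when $c\in I_-$, and the centre lies in the right tube. The distance from the point to $\pm x'$ is about $x'|\operatorname{Im}\xi|\le(\kappa+O(\eta))\eta\,x'L$. I will show this is at most $\eta h_+(x')-\Omega(\eta)$, which gives the claimed distance to $\partial\mathcal U_\eta$.

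There are two regimes. If $x'L$ is small, say below $t/2$, the floor $h_+\ge t$ gives the margin $\eta t/2$ directly. Otherwise I use
\[
  L\le\log\frac{|c|}{x'}+O(\eta)L\le(1+O(\eta))\log\frac{\lambda}{x'}.
\]
Combined with $h_+(x')\ge x'\log(\zeta/x')$ and $\zeta>\lambda$, this yields
\[
  \eta h_+(x')-x'|\operatorname{Im}\xi|\ \ge\ \eta x'\Bigl((1-\kappa-O(\eta))\log\frac{\lambda}{x'}+\log\frac{\zeta}{\lambda}\Bigr)=\Omega(\eta),
\]
where the last bound uses $x'L\ge t/2$ together with the strict gaps $1-\kappa>0$ and $\zeta>\lambda$. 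The case $c=0$ is trivial, since then the point is $0$, the centre of a disk of radius $\eta t$.

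The main obstacle will be the uniformity in $k$. All error terms must be charged against the telescoping decrease $L$ rather than against $k$. This is why I track the real parts of $\xi$ as well, so that the perturbed modulus does not push $x'$ past the $\log(\zeta/x')$ budget. It is also why I need $h_+(x)=O(L_i)$, so that the second-order terms are $O(\eta^2)L_i$. Choosing $\eta$ small enough relative to $1-\kappa$, $t$ and $\log(\zeta/\lambda)$ should absorb everything.
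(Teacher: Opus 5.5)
Your proof is correct. It reaches the bound by a different route from the paper's, although both rest on the same inputs: inequality \eqref{eq:guo-lu-one-variable}, the floor $h_+\ge t$, and $h_+(y)\ge y\log(\zeta/y)$.

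The paper linearises the product itself. It expands $\prod_i M(a_i)$ around $F_0=c\prod_i M(x_i)$ and bounds the linear term by $\kappa\eta h_+(F_0)$ via \Cref{lem:positive-potential-contraction}, which combines \eqref{eq:guo-lu-one-variable} with $F\log(c/F)\le h_+(F)$. It then bounds all higher-order terms $E_k$ by an absolute $O(\eta^2)$, uniformly in $k$. This uses a binomial/geometric-series estimate resting on $|M|\le q_1<1$ (the quantities $kq_1^{k-1}$ and $k^2q_1^{k-2}$ are bounded). The floor $t$ absorbs $E_k$, giving the single estimate $|c\prod_i M(a_i)-F_0|\le(\kappa+C_1\eta)\eta h_+(F_0)$ with no case split.

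You linearise $\log\Phi_\theta$ instead. Then \eqref{eq:guo-lu-one-variable} applies factor by factor, and the quadratic remainders are charged to $L_i\ge L_{\min}>0$. Uniformity in $k$ therefore comes directly from the telescoping sum $L$, with no combinatorial expansion.

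Recentring at the modulus $x'$ is what makes this work. Only the phase $\operatorname{Im}\xi$ displaces the point from the real axis, and $|e^{i\operatorname{Im}\xi}-1|\le|\operatorname{Im}\xi|$ holds however large $L$ is. Beyond $x'\le|c|$, you never need to control $e^{\operatorname{Re}\xi}$. The price is tracking $\operatorname{Re}\xi$ to get $L\le(1+O(\eta))\log(\lambda/x')$, plus the two-regime split on $x'L$.

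Two cosmetic points:
\begin{itemize}
\item Writing $a_i=x_i+iy_i$ conflates the disk centre with $\operatorname{Re}a_i$. You only use $|a_i-x_i|<\eta h_+(x_i)$ with $x_i\in[0,\lambda]$ a disk centre, which is fine.
\item In the second regime the term $\log(\zeta/\lambda)$ is only used as being nonnegative. The margin comes from $1-\kappa>0$ and $x'\log(\lambda/x')\gtrsim t$, not from the gap $\zeta>\lambda$.
\end{itemize}
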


We first prove \Cref{lemma:leaf-contraction} \ref{item:product} via \Cref{lem:dist_separation} and then prove \Cref{lem:dist_separation}. Again, the property \ref{item:product} is guaranteed by continuity of $\Phi_z(a)$. For notational convenience, we define a function \( M(a):=\Phi_\theta(a)\). We now simplify this expression.  Recall that
$\rho=\frac{\beta-1}{\gamma-1}$ and $\theta=\frac{(\gamma-1)^2}{\beta\gamma-1}.$
Then, the numerator and denominator of $M(a)$ are given by:
\begin{align*}
  (1+a)-\rho\theta(1-\rho a)
  &=(1-\rho\theta)+(1+\rho^2\theta)a =\frac{\beta+\gamma-2}{\beta\gamma-1}(1+\beta a),\\
  (1+a)+\theta(1-\rho a)&=(1+\theta)+(1-\rho\theta)a =\frac{\beta+\gamma-2}{\beta\gamma-1}(\gamma+a).
\end{align*}
The common factor \((\beta+\gamma-2)/(\beta\gamma-1)\) is nonzero under our
assumption \(\gamma > \beta > 1\).  Hence
\[
  M(a)=\frac{1+\beta a}{\gamma+a}.
\]
We remark that $M(a)$ has the same form as the factor on one edge $e$ contributed by the tree recursion in~\cite{GuoL18}.
In particular, on the real interval \(a\in[-1,1/\rho]\), the denominator is
nonzero since \(\gamma+a\ge\gamma-1>0\).
We have the following properties of $M(a)$:
\[
M(-1)=-\rho\geq -1,\quad M(0)=\frac{1}{\gamma}>0,\quad M(1/\rho)=1,
\]
\[
\text{and} \quad M'(a)=\frac{\beta\gamma-1}{(\gamma+a)^2}>0 \quad \text{for } a\in (-1,1/\rho).\]
By the fact that $\lambda<\lambda^\star<1/\rho$, there is a constant $q<1$ such that
\[
  0<M(x)\le q<1
  \qquad\text{for every }x\in[0,\lambda].
\]
By the definition of $\Phi_z(a)$, we have that $\Phi_z(a)$ is analytic for $a\in \mathcal U_\eta^+$ and for $z$ near $\theta$. Thus $\Phi_z(a)$ is uniformly close to $M(a)$:
\[
  \Phi_z(a)=M(a)+r_z,\quad |r_z|\le C|z-\theta|\text{ for a constant }C>0.
\]
Moreover, for sufficiently small $\eta$ and $\delta$, there exists $q_1<1$ such that
\[
  |\Phi_z(a)|\le q_1\quad\text{and}\quad |M(a)|\le q_1
  \qquad
  (a\in {\mathcal U_\eta^+},\ z\in U_{\delta}(\theta)).
\]
For each $i$, write
\[
  \Phi_{z_i}(a_i)=M(a_i)+r_{z_i},
  \qquad |r_{z_i}|\le C\delta.
\]
Then the difference of the two products can be expanded as
\[
  \prod_{i=1}^k\Phi_{z_i}(a_i)-\prod_{i=1}^kM(a_i)
  =
  \sum_{i=1}^k
  \left(\prod_{j<i}\Phi_{z_j}(a_j)\right)
  r_{z_i}
  \left(\prod_{j>i}M(a_j)\right).
\]
Therefore, uniformly in $k$,
\[
  \left|
  \prod_{i=1}^k\Phi_{z_i}(a_i)-\prod_{i=1}^kM(a_i)
  \right|
  \le
  C\delta\sum_{i=1}^k q_1^{k-1}
  \le
  C\delta\, kq_1^{k-1}
  \le C'\delta,
\]
Since $\sup_k kq_1^{k-1}<\infty$ and $h_+$ is lower bounded by constant, by \Cref{lem:dist_separation}, choosing $\delta\ll\eta$ makes this perturbation smaller than the strict margin obtained above.  Thus the conclusions for $z_i=\theta$ persist for all $z_i\in U_\delta(\theta)$.  Taking $\mathcal U^+=\mathcal U_\eta^+$, $\mathcal U^-=\mathcal U_\eta^-$, and $\delta_2=\delta$ proves \Cref{lemma:leaf-contraction}  \ref{item:product}.

The remaining part of this section is devoted to the proof of \Cref{lem:dist_separation}. 
We write each $a_i$ in $\mathcal U_{\eta}^+$ as 
\begin{align}\label{eq:a-i-decomposition}
  a_i=x_i+\xi_i\in\mathcal U_\eta^+,
  \qquad
  x_i\in[0,\lambda],
  \qquad
  |\xi_i|< \eta h_+(x_i).
\end{align}
Since $M$ is analytic in a complex neighborhood of $[0,\lambda]$, Taylor expansion gives, uniformly in $i$ and $x_i$,
\[
  M(a_i)=M(x_i)+M'(x_i)\xi_i+r_i, \quad \text{where } |r_i| = O_{\beta,\gamma,\lambda}(|\xi_i|^2).
\]
Since $0<M(x_i)<1$ on $[0,\lambda]$, expanding the product gives
\begin{align}\label{eq:product-expansion}
  \prod_{i=1}^kM(a_i)
  =
  \prod_{i=1}^kM(x_i)
  +
  \sum_{i=1}^k(M'(x_i)\xi_i)\prod_{j\ne i}M(x_j)
  +
  E_k,
\end{align}
where $E_k$ is the error term. We need the following lemma to bound the first-order term.

\begin{lemma}\label{lem:positive-potential-contraction}
  There exists $\kappa\in (0,1)$ such that, for every $k\ge1$, every $c\in[0,\lambda]$, and every $x_i\in[0,\lambda]$,
  \begin{equation}\label{eq:positive-potential-contraction}
    c\sum_{i=1}^k
    M'(x_i)h_+(x_i)\prod_{j\ne i}M(x_j)
    \le
    \kappa\,
    h_+\!\left(c\prod_{i=1}^kM(x_i)\right).
  \end{equation}
  \end{lemma}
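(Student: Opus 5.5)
The plan is to reduce \eqref{eq:positive-potential-contraction} to the one-variable inequality \eqref{eq:guo-lu-one-variable} of \Cref{lem:guo-lu-one-variable}, using a logarithmic-derivative identity for $M$. We take $\kappa$ to be exactly the constant given by that lemma. If $c=0$, both sides vanish and there is nothing to prove, so we assume $c\in(0,\lambda]$.

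\textbf{Step 1: factor out the product.} Recall $M(a)=\frac{1+\beta a}{\gamma+a}$, so $0<M(x)<1$ for $x\in[0,\lambda]$. Hence we may divide by $M(x_i)$. Set
\[
y\defeq c\prod_{j=1}^k M(x_j)\in(0,c].
\]
Then the left-hand side of \eqref{eq:positive-potential-contraction} becomes
\[
y\sum_{i=1}^k \frac{M'(x_i)}{M(x_i)}\,h_+(x_i).
\]
Since $M'(a)=\frac{\beta\gamma-1}{(\gamma+a)^2}$, the logarithmic derivative is
\[
\frac{M'(x)}{M(x)}=\frac{\beta\gamma-1}{(1+\beta x)(\gamma+x)}.
\]
So each summand is precisely the left-hand side of \eqref{eq:guo-lu-one-variable}.

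\textbf{Step 2: apply the one-variable bound and telescope.} Applying \eqref{eq:guo-lu-one-variable} termwise (using $y>0$) gives
\[
\mathrm{LHS}\le \kappa\, y\sum_{i=1}^k\log\frac{\gamma+x_i}{1+\beta x_i}
=\kappa\, y\log\prod_{i=1}^k\frac{1}{M(x_i)}
=\kappa\, y\log\frac{c}{y}.
\]

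\textbf{Step 3: compare with the potential.} Since $0<y\le c\le\lambda<\zeta$, we have $\log\frac{c}{y}\le\log\frac{\zeta}{y}$, and both are nonnegative. Therefore
\[
y\log\frac{c}{y}\le y\log\frac{\zeta}{y}\le \max\Bigl\{t,\,y\log\frac{\zeta}{y}\Bigr\}=h_+(y).
\]
Note that $y\in[0,\lambda]$, so $h_+(y)$ is within its domain. Combining the three steps gives \eqref{eq:positive-potential-contraction}, uniformly in $k$, $c$, and the $x_i$.

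The only point requiring care is the identity in Step 1: it is what makes the sum telescope into $\log(c/y)$. This identity is the same tree-recursion structure exploited in \cite{GuoL18}. Beyond that the argument is routine, since all quantities are real and positive on $[0,\lambda]$.
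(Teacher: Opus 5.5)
Your proposal is correct and matches the paper's proof essentially step for step. Both set $F=c\prod_i M(x_i)$, use $M'(x)/M(x)=(\beta\gamma-1)/((1+\beta x)(\gamma+x))$ together with the Guo--Lu inequality to telescope the sum to $\kappa F\log(c/F)$, and then bound $F\log(c/F)\le F\log(\zeta/F)\le h_+(F)$ using $c\le\lambda<\zeta$. One trivial slip: when $c=0$ the right-hand side is $\kappa h_+(0)\ge\kappa t>0$, not $0$ as you wrote, but the inequality is still immediate since the left-hand side vanishes.
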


  \begin{proof}
  Since $0 < M(x_i) < 1$ and $0 \leq c \leq \lambda$, we have $c \prod_{i=1}^kM(x_i) \in [0, \lambda]$, which is in the domain of $h_+$.
  If $c=0$, then \eqref{eq:positive-potential-contraction} is trivial.  Hence assume $c>0$.  Let
  \[
    F=c\prod_{i=1}^kM(x_i).
  \]
  We remark that $F$ is exactly the tree recursion function appeared in~\cite{GuoL18}.
  By $M(x)\in (0,1)$ for $x\in [0,\lambda]$, it holds that $0<F\le c\le\lambda$ and
  \[
    \frac{M'(x)}{M(x)}
    =
    \frac{\beta\gamma-1}{(1+\beta x)(\gamma+x)}.
  \]
  By \Cref{lem:guo-lu-one-variable},
  \[
    \frac{M'(x_i)}{M(x_i)}h_+(x_i)
    \le
    \kappa\log\frac1{M(x_i)}.
  \]
  Therefore
  \[
    c\sum_iM'(x_i)h_+(x_i)\prod_{j\ne i}M(x_j)
    =
    F\sum_i\frac{M'(x_i)}{M(x_i)}h_+(x_i)
    \le
    \kappa F\sum_i\log\frac1{M(x_i)}
    =
    \kappa F\log\frac{c}{F}.
  \]
  Since $h_+(y)\ge y\log(\zeta/y)$ and $0<F\le c\le\lambda<\zeta$, we have
  \[
    F\log\frac{c}{F}\le h_+(F). \qedhere
  \]
  %This proves the Lemma.
  \end{proof}

We next bound the error term in the following lemma.

\begin{lemma}\label{lem:positive-product-error}
There exists a constant $C_1$ such that, for any integer $k\ge1$, any $c\in[0,\lambda]$, it holds that
\[
  c|E_k|
  \le
  C_1\eta^2 h_+\!\left(c\prod_{i=1}^kM(x_i)\right),
\]
where $E_k$ is the error term defined in \eqref{eq:product-expansion}.
\end{lemma}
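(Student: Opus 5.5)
The plan is to factor out the real product and work with relative perturbations. Then use two facts to get bounds uniform in $k$: each factor $M(x_i)$ is bounded away from $1$, and $h_+$ is bounded below by the constant $t>0$. Write $P=\prod_{i=1}^k M(x_i)$, so that $F=cP$. Since $M$ is increasing on $[0,\lambda]$, we have $1/\gamma=M(0)\le M(x_i)\le q<1$. Put
\[
  \varepsilon_i=\frac{M(a_i)-M(x_i)}{M(x_i)},\qquad \delta_i=\frac{M'(x_i)\xi_i}{M(x_i)}.
\]
Then
\[
  \prod_{i=1}^k M(a_i)=P\prod_{i=1}^k(1+\varepsilon_i),
\]
and the first-order term in \eqref{eq:product-expansion} equals $P\sum_i\delta_i$. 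Hence
\[
  E_k=P\Bigl[\Bigl(\prod_{i=1}^k(1+\varepsilon_i)-1-\sum_{i=1}^k\varepsilon_i\Bigr)+\sum_{i=1}^k(\varepsilon_i-\delta_i)\Bigr].
\]

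Next come the per-factor bounds. The function $h_+$ is continuous on $[0,\lambda]$, so it is bounded by some $h_{\max}$. Therefore $|\xi_i|<\eta h_+(x_i)\le \eta h_{\max}$. For $\eta$ small, every $a_i$ then lies in a fixed compact complex neighbourhood of $[0,\lambda]$ on which $M$ is analytic. Combined with $M(x_i)\ge 1/\gamma$, the Taylor expansion stated before \eqref{eq:product-expansion} gives $|\varepsilon_i|\le L\eta$ and $|\varepsilon_i-\delta_i|\le \gamma|r_i|\le C\eta^2$, with constants $L,C$ depending only on $\beta,\gamma,\lambda$.

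The only real issue is uniformity in $k$, since the naive bounds grow linearly in $k$. I would use $M(x_i)\le q$, which gives $k\log(1/q)\le \log(1/P)$, that is, $k\le c_0\log(1/P)$ with $c_0=1/\log(1/q)$. Then $s:=\sum_i|\varepsilon_i|\le L c_0\eta\log(1/P)$. The elementary inequality
\[
  \Bigl|\prod(1+\varepsilon_i)-1-\sum\varepsilon_i\Bigr|\le e^{s}-1-s\le \tfrac12 s^2e^{s}
\]
(compare coefficients of the expansion termwise) then yields
\[
  c|E_k|\le \lambda\Bigl[\tfrac12 (Lc_0)^2\eta^2\, P^{1-Lc_0\eta}\log^2(1/P)+Cc_0\eta^2\,P\log(1/P)\Bigr].
\]
Once $\eta$ is small enough that $Lc_0\eta\le 1/2$, both $P^{1/2}\log^2(1/P)$ and $P\log(1/P)$ are bounded on $(0,1]$ by absolute constants. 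So $c|E_k|\le C_2\eta^2$ uniformly in $k$ and $c$. Finally, $h_+(F)\ge t>0$ by definition of $h_+$, so $c|E_k|\le (C_2/t)\,\eta^2 h_+(F)$, and we take $C_1=C_2/t$.

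I expect the main obstacle to be exactly this $k$-uniformity: controlling the higher-order terms of an unboundedly long product. It is resolved by trading the factor $k$ for $\log(1/P)$ and absorbing it into the decay of $P$. The lower bound $h_+\ge t$ is what lets the crude constant bound suffice, even though $F$ may be tiny.
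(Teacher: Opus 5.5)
Your proof is correct and is essentially the paper's argument: your split $E_k=P\bigl[(\prod_i(1+\varepsilon_i)-1-\sum_i\varepsilon_i)+\sum_i(\varepsilon_i-\delta_i)\bigr]$ is exactly the paper's split into the terms with at least two perturbation factors and the terms with a single remainder $r_i$. You also finish the same way, using $c\le\lambda$ and $h_+\ge t$. The only difference is how you get $k$-uniformity. The paper bounds the higher-order terms by $C^2\eta^2k^2(q+C\eta)^{k-2}$ and uses $\sup_k k^2q_1^{k-2}<\infty$. You instead trade $k$ for $c_0\log(1/P)$ and use that $P^{1/2}\log^2(1/P)$ is bounded, which has the same effect.
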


\begin{proof}
To justify this bound, write
\[
  M(a_i)=m_i+\ell_i+r_i,
  \qquad
  m_i=M(x_i),\quad \ell_i=M'(x_i)\xi_i,
\]
with $|r_i|\le C|\xi_i|^2$ for some constant $C>0$. Since $h_+$ is bounded from above on $[0,\lambda]$, we have
$|\ell_i|\le C\eta$ and $|r_i|\le C\eta^2$ by letting constant $C$ be sufficiently large. Also,
$0<m_i\le q<1$ uniformly on $[0,\lambda]$ for some $q<1$. Choose $\eta$ small enough so that
$q+C\eta<q_1<1$ for some $q_1<1$.   We view the product
$
  \prod_{i=1}^k(m_i+\ell_i+r_i)
$
as a polynomial in $\ell_i$ and $r_i$ with coefficients depending on $m_i$.
Expanding the product and subtracting the constant term and the terms linear in $\ell_i$'s, the remaining terms are of two types.  All the terms with one $r_i$ have total absolute value at most
\[
  C\eta^2k q_1^{k-1}\le C'\eta^2, \text{ because } q_1 < 1.
\]
 Similarly, all the terms with at least two factors from the collection $\{\ell_i,r_i\}$ have total absolute value at most
  $\sum_{r=2}^k \binom{k}{r}(C\eta)^r q^{k-r}$.  
Indeed, for \(r\ge2\),
 $ \binom{k}{r}
  =
  \frac{k(k-1)}{r(r-1)}
  \binom{k-2}{r-2}
  \le
  k^2\binom{k-2}{r-2}$.
Therefore, using \(q+C\eta<q_1\),
\[
\begin{aligned}
  \sum_{r=2}^k \binom{k}{r}(C\eta)^r q^{k-r}
  &\le
  C^2\eta^2 k^2
  \sum_{r=2}^k
  \binom{k-2}{r-2}(C\eta)^{r-2}q^{k-r}  \\
  &=
  C^2\eta^2 k^2(q+C\eta)^{k-2}
  \le
  C^2\eta^2 k^2q_1^{k-2}
  \le
  C''\eta^2,
\end{aligned}
\]
where the last inequality follows from $q_1 < 1$.
Hence $|E_k|\le C_0\eta^2$ uniformly in $k$. Since $c\le\lambda$ and $h_+$ is lower bounded by a positive constant on $[0,\lambda]$, this uniform bound implies \Cref{lem:positive-product-error}.
\end{proof}

Finally, we use \Cref{lem:positive-potential-contraction} and \Cref{lem:positive-product-error} to prove \Cref{lem:dist_separation}. First suppose $c\in I_+$. Define
\[
 F_0= c \prod_{i=1}^kM(x_i)\in I_+,
\]
where each $a_i = x_i + \xi_i$ is as in \eqref{eq:a-i-decomposition}.
Using \Cref{lem:positive-potential-contraction} and $|\xi_i|\le \eta h_+(x_i)$, we get
\[
  \left|
  c\sum_{i=1}^k
  M'(x_i)\xi_i\prod_{j\ne i}M(x_j)
  \right|
  \le
  \eta\,c\sum_{i=1}^k
  M'(x_i)h_+(x_i)\prod_{j\ne i}M(x_j)
  \le
  \kappa\eta h_+(F_0).
\]
Together with \Cref{lem:positive-product-error}, this gives
\begin{equation}\label{eq:positive-product-error}
  \left|
  c\prod_{i=1}^kM(a_i)-F_0
  \right|
  \le
  (\kappa+C_1\eta)\eta h_+(F_0).
\end{equation}
Note that $\kappa < 1$.
Choosing $\eta$ small enough so that $\kappa+C_1\eta<1$, we obtain
\[
  c\prod_{i=1}^kM(a_i)\in D(F_0,\eta h_+(F_0)).
\]
Since $F_0\in[0,\lambda]$, this disk is contained in the union definition $\mathcal U_\eta^+$ of \eqref{eq:U-eta-decomposition}. Hence
\(
  c\prod_{i=1}^kM(a_i)\in\mathcal U_\eta^+
\).
In addition, by \eqref{eq:positive-product-error}, we have $\text{dist}(c\prod_{i=1}^kM(a_i),\partial \mathcal U_\eta^+)\ge \Omega(\eta)$ uniformly for $a_i\in\mathcal U_\eta^+$, since $h_+(F_0)$ is uniformly lower bounded by constant $t$. 
Similarly, for a negative field $-c\in I_-$ with $c\in[0,\rho\lambda]\subset[0,\lambda]$, the same estimate gives
\[
  \left|
  -c\prod_{i=1}^kM(a_i)-(-F_0)
  \right|
  \le
  (\kappa+C_1\eta)\eta h_+(F_0),
\]
where now $F_0=c\prod_iM(x_i)\in[0,\rho\lambda]$. Thus \(-c\prod_{i=1}^kM(a_i)\in\mathcal U_\eta^-\). Similarly, by the above inequality, we have $\text{dist}(-c\prod_{i=1}^kM(a_i),\partial \mathcal U_\eta^-)\ge \Omega(\eta)$. This finishes the proof of \Cref{lem:dist_separation}.

\subsection{Local stability}\label{sec:local-stability}

We prove  \Cref{lem:stability-of-local-polynomial} in this subsection.
%
%This subsection is devoted to the proof of \Cref{lem:stability-of-local-polynomial} and \Cref{lem:contraction-one-step}. First, we prove. 
The following two geometric facts about the disk will be used in our analysis; their proofs are deferred to Section~\ref{sec:other-proofs}. 
We define the closure of a open region $\mathcal U$ as $\cl{\mathcal U}$, which is the smallest closed set containing $\mathcal U$.

\begin{claim}\label{claim:cstar-rstar}
When $\gamma > \beta > 1$, it holds that $c^\star>0$ and $r^\star-c^\star>1$. Therefore, $D(c^\star,r^\star)$ contains $[-1,c^\star]$.
\end{claim}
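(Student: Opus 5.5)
Set $s \defeq \sqrt{\beta\gamma-1}>0$ and $L \defeq \log\sqrt{\gamma/\beta}>0$; both are positive because $\gamma>\beta>1$. With this notation, $c^\star = \gamma L/(s\arctan s - L)$. The plan is to compare $s\arctan s$ with $L$, then handle $r^\star-c^\star$ algebraically, and finally read off the disk containment.

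\textbf{Step 1: $c^\star>0$.} The numerator $\gamma L$ is positive, so it suffices to show $s\arctan s > L$. I would use the elementary bounds $\arctan s \ge s/(1+s^2)\cdot$ (something) or, more simply, compare both sides as functions. Write $\beta\gamma = 1+s^2$ and $x=\gamma/\beta>1$. Then $L=\tfrac12\log x$ while $s$ depends only on the product $\beta\gamma$, so the two sides are not directly comparable in general. The key constraint is $\beta>1$, which gives $x=\gamma/\beta<\beta\gamma=1+s^2$. Hence $L<\tfrac12\log(1+s^2)$. It therefore suffices to prove
\[
\tfrac12\log(1+s^2)\le s\arctan s \quad\text{for all } s\ge0 .
\]
Both sides vanish at $s=0$. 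Differentiating, the left side has derivative $s/(1+s^2)$ and the right side has derivative $\arctan s + s/(1+s^2)$. The right derivative is larger, so the inequality follows.

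\textbf{Step 2: $r^\star - c^\star > 1$.} Since $r^\star>0$, this is equivalent to $r^{\star2} > (c^\star+1)^2$ whenever $c^\star+1>0$, which holds by Step 1. Expanding,
\[
\Bigl(c^\star+\tfrac1\beta\Bigr)^2+\tfrac{\beta\gamma-1}{\beta^2} - (c^\star+1)^2 = 2c^\star\Bigl(\tfrac1\beta-1\Bigr) + \tfrac{\gamma}{\beta} - 1 .
\]
So the claim reduces to
\[
\frac{\gamma-\beta}{\beta} > \frac{2c^\star(\beta-1)}{\beta}, \qquad\text{i.e.}\qquad c^\star < \frac{\gamma-\beta}{2(\beta-1)} .
\]
This is the step I expect to be the main obstacle. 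It needs a \emph{lower} bound on $s\arctan s - L$, namely
\[
s\arctan s - L > \frac{2\gamma(\beta-1)L}{\gamma-\beta} .
\]
To attack it, I would fix $\beta\gamma$ (that is, fix $s$) and vary $x=\gamma/\beta$, with $\beta=\sqrt{(1+s^2)/x}$ and $\gamma=\sqrt{(1+s^2)x}$, over $1<x<1+s^2$. I would then try to show that the resulting function is monotone in $x$, so that the worst case is an endpoint. At the limit $x\to1$ (the Ising case), both sides behave like $O(\log x)$ and the inequality becomes $s\arctan s > \gamma(\beta-1)/\beta$ with $\beta=\gamma$, i.e.\ $s\arctan s>\beta-1$ where $\beta^2=1+s^2$. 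This holds because $\arctan s \ge s/\sqrt{1+s^2}$ gives $s\arctan s\ge s^2/\sqrt{1+s^2}=\beta-1/\beta>\beta-1$. At the endpoint $\beta\to1$ the right-hand side tends to $0$ and the inequality reduces to Step 1. The first thing I would try is to combine these two endpoint inequalities with the bound $\log x \le (x-1)/\sqrt{x}$ to control $L$ uniformly, rather than proving full monotonicity.

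\textbf{Step 3: disk containment.} Once both bounds hold, the real segment of $D(c^\star,r^\star)$ is $(c^\star-r^\star,\,c^\star+r^\star)$. Step 2 gives $c^\star-r^\star<-1$, and $c^\star+r^\star>c^\star$ trivially. Since the disk is convex, it contains $[-1,c^\star]$.
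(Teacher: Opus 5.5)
\textbf{Gap in Step 2.} Your Step 1, your Step 3, and your reduction of $r^\star-c^\star>1$ to $c^\star<\frac{\gamma-\beta}{2(\beta-1)}$ are correct and match the paper. That last inequality is equivalent to
\[
(\gamma-\beta)\,s\arctan s>(2\beta\gamma-\beta-\gamma)L,
\]
and this is the whole content of the second assertion. Your proposal never proves it. You only evaluate it in the limits $\gamma/\beta\to1$ and $\beta\to1$, and both limits lie outside the parameter range. The monotonicity in $x$ that would let these limits control the interior $1<x<1+s^2$ is not established. Without it, the endpoint checks say nothing about any actual pair $\gamma>\beta>1$.

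\textbf{Your fallback closes the gap.} It is also more direct than you suggest: only the Ising-type bound is needed, not the $\beta\to1$ one. Put $t=\sqrt{\gamma/\beta}\ge1$. The bound $2\log t\le t-1/t$ (your $\log x\le(x-1)/\sqrt x$) gives $L\le\frac{\gamma-\beta}{2\sqrt{\beta\gamma}}$. Since $2\beta\gamma-\beta-\gamma>0$, this yields
\[
(2\beta\gamma-\beta-\gamma)L\le(\gamma-\beta)\,\frac{2\beta\gamma-\beta-\gamma}{2\sqrt{\beta\gamma}}.
\]
On the other side, $\arctan s>s/\sqrt{1+s^2}$ gives
\[
(\gamma-\beta)\,s\arctan s>(\gamma-\beta)\,\frac{\beta\gamma-1}{\sqrt{\beta\gamma}}.
\]
The second right-hand side dominates the first exactly when $2(\beta\gamma-1)\ge 2\beta\gamma-\beta-\gamma$, i.e.\ $\beta+\gamma\ge 2$, which holds. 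Writing this out completes your proof.

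\textbf{Comparison with the paper.} The paper fixes $q=\gamma/\beta$ and defines $H_\beta(q)=(q-1)s\arctan s-\frac{(2\beta-1)q-1}{2}\log q$. It uses Step 1 to get $H_1(q)>0$, then shows $\partial_\beta H_\beta(q)>0$ via the same two elementary inequalities ($\arctan s>s/\sqrt{1+s^2}$ and $t-1/t-2\log t>0$). The pointwise chain above avoids both the differentiation and the reduction to the base case $\beta=1$.
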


\begin{claim}\label{claim:one-over-rho}
  It holds that $\frac1\rho \notin \cl{D(c^\star,r^\star)}$.
\end{claim}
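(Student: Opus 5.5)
Since $1/\rho=\frac{\gamma-1}{\beta-1}$ is real and exceeds $1$, and since by \Cref{claim:cstar-rstar} the disk $D(c^\star,r^\star)$ already contains $[-1,c^\star]$ (so its leftmost real point $c^\star-r^\star$ is below $-1$), the only way $1/\rho$ can lie outside $\cl{D(c^\star,r^\star)}$ is to the right of it. So I plan to prove the strict inequality $\frac1\rho-c^\star>r^\star$. Because both sides will be checked to be positive, it suffices to show $\frac1\rho>c^\star$ together with
\[
\Bigl(\tfrac1\rho-c^\star\Bigr)^2>\Bigl(c^\star+\tfrac1\beta\Bigr)^2+\frac{\beta\gamma-1}{\beta^2}.
\]

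Expanding, the $(c^\star)^2$ terms cancel, and the condition becomes linear in $c^\star$:
\[
2c^\star\Bigl(\tfrac1\rho+\tfrac1\beta\Bigr)<\frac1{\rho^2}-\frac\gamma\beta .
\]
I will simplify the coefficients directly. First, $\frac1\rho+\frac1\beta=\frac{\beta\gamma-1}{\beta(\beta-1)}$. Second, $\frac1{\rho^2}-\frac\gamma\beta=\frac{\beta(\gamma-1)^2-\gamma(\beta-1)^2}{\beta(\beta-1)^2}=\frac{(\gamma-\beta)(\beta\gamma-1)}{\beta(\beta-1)^2}$. Hence the claim is equivalent to the clean bound
\[
c^\star<\frac{\gamma-\beta}{2(\beta-1)} .
\]
This bound also gives $c^\star<\frac1\rho$, because $\gamma-\beta<2(\gamma-1)$. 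So everything reduces to this single bound.

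Write $L=\tfrac12\log(\gamma/\beta)$ and $s=\sqrt{\beta\gamma-1}$. By \Cref{claim:cstar-rstar}, $c^\star>0$, so the denominator $s\arctan s-L$ is positive. Clearing it, the target becomes
\[
2\gamma(\beta-1)L<(\gamma-\beta)\,(s\arctan s-L).
\]
I would use $L\le\frac{\gamma-\beta}{2\beta}$ (from $\log x\le x-1$) on both sides, dividing out the factor $\gamma-\beta>0$. This reduces the target to the one-variable-type inequality
\[
\frac{\gamma(\beta-1)}{\beta}+L<s\arctan s .
\]
To finish, I would lower bound $\arctan s$, trying first $\arctan s\ge s/\sqrt{1+s^2}$ and, if that is too weak, the sharper bound $\arctan s\ge 3s/(1+2\sqrt{1+s^2})$. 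This would be combined with a sharper logarithm bound such as $\log x\le (x-1)/\sqrt x$.

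The main obstacle is this last elementary inequality in the two parameters $\gamma>\beta>1$. In particular, the crude logarithm estimate may lose too much when $\gamma/\beta$ is large, or when $\beta\to1$, where both sides degenerate. If the crude route fails, I would go back to the sharper form $2\gamma(\beta-1)L<(\gamma-\beta)(s\arctan s-L)$. I would then fix $\beta$ and argue monotonically in $\gamma$: at $\gamma=\beta$ the two sides agree, and I would compare their derivatives in $\gamma$.
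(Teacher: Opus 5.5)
Your reduction matches the paper's proof: you expand $(\frac1\rho-c^\star)^2-(r^\star)^2$, cancel $(c^\star)^2$, and simplify the two coefficients, which turns the claim into $c^\star<\frac{\gamma-\beta}{2(\beta-1)}$. The extra check $c^\star<1/\rho$ is unnecessary, since $|\frac1\rho-c^\star|>r^\star$ is all that is needed, but it is harmless. The gap is that you never prove this bound, and the route you try first cannot work. After $L\le\frac{\gamma-\beta}{2\beta}$, your sufficient condition $\frac{\gamma(\beta-1)}{\beta}+L<s\arctan s$ is false for any fixed $\beta>1$ once $\gamma$ is large. The left side grows linearly in $\gamma$, while $s\arctan s<\frac{\pi}{2}\sqrt{\beta\gamma-1}$. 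For example, $\beta=2,\gamma=100$ gives roughly $52$ on the left versus $21$ on the right. No lower bound on $\arctan$ can repair this, because the loss happens at the logarithm step.

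The missing observation is that you already have the bound in hand. You invoke \Cref{claim:cstar-rstar} to get $c^\star-r^\star<-1$, i.e.\ $r^\star>c^\star+1$. Since $c^\star>0$, squaring this and using $(r^\star)^2=(c^\star)^2+\frac{2c^\star}{\beta}+\frac{\gamma}{\beta}$ gives exactly $c^\star<\frac{\gamma-\beta}{2(\beta-1)}$. This is how the paper concludes: it refers to the proof of \Cref{claim:cstar-rstar}, where the two conditions are shown to be equivalent.

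If you prefer a self-contained argument, your sharper logarithm bound works, but only if it replaces the crude one rather than being applied afterwards. The bound $\log x<(x-1)/\sqrt{x}$ gives $2\gamma(\beta-1)L<(\gamma-\beta)(\beta-1)\sqrt{\gamma/\beta}$. So it suffices that $(\beta-1)\sqrt{\gamma/\beta}+L\le s\arctan s$. Set $t=\sqrt{\beta\gamma}$ and $h=\sqrt{\gamma/\beta}$. Then the left side equals $t-(h-\log h)$, which is less than $t-1$. On the other side, $\arctan s\ge s/\sqrt{1+s^2}$ gives $s\arctan s\ge t-1/t>t-1$, which finishes the argument.
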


By \Cref{claim:one-over-rho} and $D(c^\star,r^\star)$ being an open set, $\psi(0)=1/\rho\notin D(c^\star,r^\star-\epsilon)$ for any $\epsilon>0$. Hence $0\notin \Gamma_\epsilon$. 
Let  $K_\epsilon\defeq\mathbb C\setminus D(c^\star,r^\star-\epsilon)$, and then we have the following technical Lemma. 
Define the $k$th-power of a set $S \subseteq \mathbb{C}$ as $S^k:=\{\prod_{i=1}^k s_i: s_i\in S\}$.

\begin{lemma}\label{lem:lambda_avoid}
  Fix $0<\lambda<\lambda^\star$, and let $I=[-\rho\lambda,\lambda]$.
  There exist $\epsilon>0$ and $\delta'>0$, depending only on $\beta,\gamma,\lambda$, such that for every integer $d\ge 2$, it holds that
      \[
              (-1)^{d+1}K_\epsilon^{d}\cap \mathcal S_{\delta'}(I)
              =
              \emptyset.
      \]
\end{lemma}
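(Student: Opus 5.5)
The plan is to work with the logarithm of the modulus on the complement of the disk, $K_\epsilon$. For $u\in K_\epsilon$ we have $|u-c^\star|\ge r^\star-\epsilon$. Since $c^\star>0$, this gives $|u|\ge r^\star-\epsilon-c^\star$. By \Cref{claim:cstar-rstar} we have $r^\star-c^\star>1$, so for $\epsilon$ small we get $|u|\ge 1+\epsilon_0$ for some constant $\epsilon_0>0$.

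This modulus bound alone is not enough. Its consequence $|\prod u_i|\ge (1+\epsilon_0)^d$ rules out intersection with $\mathcal S_{\delta'}(I)$ only once $(1+\epsilon_0)^d>\lambda+\delta'$, which fails for small $d$ when $\lambda>1$. So the argument must also use the argument (angle) of the product, which is where the threshold $\lambda^\star$ and the constants $c^\star,r^\star$ from \cite{GLL20} enter. Points of $I$ are nearly real. A point near the positive real axis with modulus at most $\lambda$ must be avoided when $d$ is odd, and a point near the negative real axis with modulus at most $\rho\lambda$ must be avoided when $d$ is even.

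To handle this, I would parametrize $\partial D(c^\star,r^\star)$ and study the function $g(u)=\log|u|$ against $|\arg u|$ on $K_0$. The key inequality from \cite{GLL20} is of the form
\[
\log|u|\ \ge\ \frac{\log(\gamma/\beta)}{\pi}\cdot\phi(|\arg u|)
\]
for a suitable piecewise-linear or concave $\phi$, with equality structure tuned so that a product of $d$ factors whose arguments sum to $0$ or $\pi$ mod $2\pi$ has modulus at least $\lambda^\star$ or $\rho\lambda^\star$, respectively. Here $\log(\gamma/\beta)$ enters through the exponent $\pi/(2\tan^{-1}\sqrt{\beta\gamma-1})$. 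Concretely, I would check the following.
\begin{itemize}
\item Any $u\in K_0$ with $|\arg u|\le \pi-\alpha$ satisfies $|u|\ge (\gamma/\beta)^{\text{(const)}}$, with $\alpha=\tan^{-1}\sqrt{\beta\gamma-1}$ reflecting where the tangent lines from $0$ touch the circle.
\item Summing arguments, if $\arg\prod u_i\equiv 0 \pmod{2\pi}$ with $d\ge 2$ factors, then $\sum|\arg u_i|\ge 2\pi$ or all arguments are near $0$. In either case the product modulus exceeds $\lambda^\star$, using that each factor has modulus $>1$ and that the extremal configuration is two points at the tangent angles. The sign $(-1)^{d+1}$ then converts the target sets appropriately.
\end{itemize}
This is essentially the real-variable content of \cite[Lemma on $D(c^\star,r^\star)$]{GLL20}, which I would restate as a lemma and reverify for $d\ge2$. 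It is also where $d=1$ fails, since a single point of $K_0$ can sit at modulus slightly above $1$ on the positive real axis.

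I then pass to $\epsilon,\delta'>0$ by compactness. The statement for $\epsilon=0$ gives a strict margin: products are bounded away from $I$ by a constant independent of $d$. For large $d$ this holds because $(1+\epsilon_0)^d$ eventually exceeds $\lambda+1$, so only finitely many $d$ need the angular argument. For each of those finitely many $d$, the distance from $(-1)^{d+1}K_0^d\cap\{|w|\le\lambda+1\}$ to $I$ is positive, since the relevant factors lie in a compact set. Continuity in $\epsilon$ then yields uniform $\epsilon,\delta'$.

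The main obstacle is the $\epsilon=0$ angular inequality: showing that $\lambda<\lambda^\star$ is exactly what keeps the minimal product modulus, over all angle splittings, above $\lambda$ (respectively above $\rho\lambda$ in the negative case). I would first try to import this directly from \cite{GLL20}'s proof of zero-freeness for $d\ge 2$, rather than recompute the extremal configuration.
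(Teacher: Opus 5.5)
Your reduction is the one the paper uses:
\begin{itemize}
\item the bound $|u|\ge r^\star-c^\star-\epsilon>1$ on $K_\epsilon$ disposes of all $d$ beyond some $d_0$;
\item when the product is small, every factor lies in the compact set $K_\epsilon\cap\cl{D(0,R)}$;
\item for the finitely many $2\le d\le d_0$, compactness plus continuity in $\epsilon$ upgrades disjointness at $\epsilon=0$ to avoidance of a uniform strip $\mathcal S_{\delta'}(I)$. The paper phrases this step as Hausdorff convergence $L_\epsilon^d\to L_0^d$.
\end{itemize}
Importing the $\epsilon=0$ disjointness from \cite{GLL20} is also what the paper does.

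The problem is the $\epsilon=0$ statement you say you would restate and reverify. As you formulate it, it is false, because the parity bookkeeping is off. For every $d$, the quantity $(-1)^{d+1}\prod_i u_i$ must avoid all of $I$. So the substantive constraint is always $(-1)^{d+1}\prod_i u_i\notin[0,\lambda]$. For even $d$, this says that products landing on the \emph{negative} real axis must have modulus $>\lambda$, not merely $>\rho\lambda$.

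Conversely, your claim that arguments summing to $0 \pmod{2\pi}$ force modulus at least $\lambda^\star$ fails for even $d$. Take $u_1=u_2=c^\star-r^\star\in K$; the product is the positive number $(r^\star-c^\star)^2$. For $\beta=2$, $\gamma=3$ this is about $1.20$, while $\lambda^\star\approx 1.74$. This example is harmless for the lemma, since $(-1)^{3}u_1u_2<-1<-\rho\lambda$. But it is fatal for the inequality you planned to verify.

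The correct split, used in \Cref{cor:lambda_GLL_strip}, has two parts:
\begin{itemize}
\item take \cite[Lemma 8]{GLL20} in its parity-aware form $(-1)^{d+1}K^d\cap[0,\lambda]=\emptyset$ for $d\ge 2$;
\item handle $[-\rho\lambda,0]$ purely by modulus: every $u\in K$ has $|u|>1$, while $\rho\lambda<\rho\lambda^\star<1$ by \Cref{lem:lambda-v-tau}. Your proposal never invokes this fact.
\end{itemize}
With that statement in place of your bullets, the rest of your argument goes through.
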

Note that \Cref{lem:lambda_avoid} does not hold for $d=1$ and that is why we contract bad leaves first.

We first use \Cref{lem:lambda_avoid} to prove \Cref{lem:stability-of-local-polynomial} and then we prove \Cref{lem:lambda_avoid}.

\begin{proof}[Proof of \Cref{lem:stability-of-local-polynomial}]
 Let $\epsilon_0$ and $\delta_0$ be the two constants $\epsilon$ and $\delta'$ from \Cref{lem:lambda_avoid}.
 We will show \Cref{lem:stability-of-local-polynomial} for some constants $(\epsilon,\delta_1)$ such that $0 < \epsilon < \epsilon_0$ and $0 < \delta_1 < \delta_0$.
 We remark that since \Cref{lem:lambda_avoid} holds for  $\epsilon_0$ and $\delta_0$, it also holds for smaller constants $\epsilon$ and $\delta_1$.

  Assume first that $q_i\ne \rho$ for every $i$.  Then, by the definition $u_i=\psi(q_i)$, we have
  \begin{align*}
          \widetilde P_{v,d_v}(q_1,\ldots,q_{d_v})
          &=
          \prod_{i=1}^{d_v}(q_i-\rho)
          \left((-1)^{d_v}\prod_{i=1}^{d_v}u_i+\lambda_v\right).
  \end{align*}
  Thus a zero can occur only if $(-1)^{d_v+1}\prod_{i=1}^{d_v}u_i=\lambda_v$.
  Since $q_i\notin \Gamma_\epsilon$, we have $u_i\in K_\epsilon$ for every $i$. For $d_v\ge 2$, this contradicts \Cref{lem:lambda_avoid} because $\lambda_v \in \mathcal S_{\delta_1}(I)$. For $d_v=1$, by \Cref{condition:Glambda}, we have $\lambda_v\in \mathcal S_{\delta_1}((-\rho\lambda,0))$, and if zero can occur, we must have $u_1\in \mathcal S_{\delta_1}((-\rho\lambda,0))$. We show this cannot happen for sufficiently small $\epsilon$ and $\delta_1$. Note that the following two facts hold simultaneously.
  \begin{itemize}
    \item  $u_1\notin D(c^\star,r^\star-\epsilon)$ because $q_1 \notin \Gamma_\epsilon$.
    \item  By \Cref{claim:cstar-rstar}, $[-1,c^\star]$ is contained in $D(c^\star,r^\star)$, where $c^\star,r^\star > 0$ are constants. 
    Recall that $\lambda < \lambda^\star$. By \Cref{lem:lambda-v-tau}, $-\rho\lambda > -\rho \lambda^\star > -1$. Hence, both $-\rho \lambda$ and $0$ are inner points of  $D(c^\star,r^\star)$. For sufficiently small constants $\epsilon$ and $\delta_1$, it holds that $S_{\delta_1}((-\rho\lambda,0)) \subset D(c^\star,r^\star - \epsilon)$.
  \end{itemize}
  Combining these two facts, we deduce that $u_1\in \mathcal S_{\delta_1}((-\rho\lambda,0))$ is impossible.

  It remains to handle the case where $q_i=\rho$ for some $i$.  Then the second product in the RHS of \eqref{eq:first-product} vanishes. (Recall that~\eqref{eq:first-product} still holds for this case.)  The first product can vanish only if $q_j=-1$ for some $j$.  But $q_j=-1$ gives $\psi(q_j)=0$. Note that by \Cref{claim:cstar-rstar}, $0\in D(c^\star,r^\star-\epsilon)$ for sufficiently small constant $\epsilon$, and hence $q_j\in \Gamma_\epsilon$, contradicting the assumption $q_j \notin \Gamma_\epsilon$ in \Cref{lem:stability-of-local-polynomial}.  Therefore no zero occurs in this case either.
  \end{proof}

Next we prove \Cref{lem:lambda_avoid}. We need the following lemma from~\cite{GLL20}.
Recall that $K_\epsilon=\mathbb C\setminus D(c^\star,r^\star-\epsilon)$. 
Let $K \defeq \mathbb C\setminus D(c^\star,r^\star) \subset K_\epsilon$.

\begin{lemma}[\text{\cite[Lemma 8]{GLL20}}]\label{lem:lambda_GLL}
    For any $d\geq 2$ and any $0<\lambda<\lambda^\star$, 
        $(-1)^{d+1}K^{d}\cap [0,\lambda]=\emptyset.$
\end{lemma}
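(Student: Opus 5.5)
The statement is \Cref{lem:lambda_GLL}, quoted from \cite[Lemma 8]{GLL20}. I would reprove it by showing that, for $d\ge2$, the set $(-1)^{d+1}K^d$ meets the positive real axis only at points $\ge\lambda^\star$, so it misses $[0,\lambda]$.

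\emph{Reduction to real outputs.} Since $0\notin K$ (by \Cref{claim:cstar-rstar}, $0\in D(c^\star,r^\star)$), every product of elements of $K$ is nonzero, so $0$ is never hit. Write $u_j=R_je^{i\phi_j}$. A product $(-1)^{d+1}\prod_j u_j$ lies on the positive real axis exactly when $\sum_j\phi_j\equiv(d+1)\pi \pmod{2\pi}$. For each angle $\phi$, let $R_{\min}(\phi)$ be the smallest modulus of a point of $K$ on the ray $e^{i\phi}\mathbb{R}_+$. This ray leaves the disk $D(c^\star,r^\star)$ at modulus $c^\star\cos\phi+\sqrt{r^{\star2}-c^{\star2}\sin^2\phi}$, and points of $K$ on the ray are exactly those beyond this value. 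The claim then reduces to the inequality
\[
\sum_{j=1}^d \log R_{\min}(\phi_j)\ \ge\ \log\lambda^\star \quad\text{whenever } \sum_{j}\phi_j\equiv(d+1)\pi \pmod{2\pi}.
\]

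\emph{Concavity and the extremal configuration.} Set $g(\phi)=\log R_{\min}(\phi)$. Since $c^\star>0$, $g$ is even and decreasing on $[0,\pi]$. I would show that $g$ is concave on the relevant range. Given concavity and the angle constraint, the minimum of $\sum_j g(\phi_j)$ should occur either at equal angles or when all angles but one sit at the boundary $\pm\pi$. The angles there have modulus $R_{\min}(\pi)=r^\star-c^\star>1$ by \Cref{claim:cstar-rstar}, so the corresponding factors are harmless. The case $d=2$ should be the binding one: larger $d$ only adds factors of modulus more than $1$ near the favourable directions.

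\emph{Identification of the threshold.} I then expect the extremal value $\min\sum_j g(\phi_j)$ to equal exactly $\frac{\pi}{2\tan^{-1}\sqrt{\beta\gamma-1}}\log\frac{\gamma}{\beta}=\log\lambda^\star$. This is where the specific values of $c^\star$ and $r^\star$ in \eqref{eq:cstar-rstar} come from. In \cite{GLL20} they are tuned so that the circle $|u-c^\star|=r^\star$ passes through the fixed points of the M\"obius map attached to $\beta,\gamma$: note $r^{\star2}=(c^\star+1/\beta)^2+(\beta\gamma-1)/\beta^2$, so the circle passes through $-1/\beta\pm i\sqrt{\beta\gamma-1}/\beta$. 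The $\arctan$ then appears as the angle subtended.

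\emph{Main obstacle.} The hardest step is the explicit optimisation showing that the minimum equals $\log\lambda^\star$, rather than some larger value, uniformly in $d$. Since the lemma is taken from \cite{GLL20}, in the paper I would simply cite it and not reproduce this computation.
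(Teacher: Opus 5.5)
Citing the lemma, as you ultimately propose, is exactly what the paper does: it gives no argument and imports the statement verbatim from \cite[Lemma 8]{GLL20}. Your reformulation is also correct: on each ray the points of $K$ are those of modulus at least $R_{\min}(\phi)=c^\star\cos\phi+\sqrt{r^{\star2}-c^{\star2}\sin^2\phi}$, so the lemma is equivalent to $\sum_j g(\phi_j)\ge\log\lambda^\star$ whenever $\sum_j\phi_j\equiv(d+1)\pi$. The reproof you sketch beyond that point, however, rests on two claims that fail.

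First, $g=\log R_{\min}$ is not concave on $[0,\pi]$. The power of the origin with respect to the circle gives $R_{\min}(\phi)R_{\min}(\pi-\phi)=r^{\star2}-c^{\star2}$, so $g''(\phi)=-g''(\pi-\phi)$. A direct expansion gives $g''(\pi)=c^\star/r^\star>0$, so $g$ is convex near $\pi$ and concave near $0$. The step ``concavity sends minimisers to the ends'' therefore has no basis.

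Second, $d=2$ is not the binding case. For $d=2$ the constraint forces $\phi_2\equiv\pi-\phi_1$, so by the same identity the product is \emph{identically} $r^{\star2}-c^{\star2}$. Larger $d$ can do strictly better, because each factor may then sit closer to the negative real axis. For $\beta=2$, $\gamma=3$ one gets $\lambda^\star\approx1.740$ and $r^{\star2}-c^{\star2}\approx1.757$, while three boundary points at angle $2\pi/3$ give $R_{\min}(2\pi/3)^3\approx1.742$. More decisively, consider $\zeta_\pm=(-1\pm i\sqrt{\beta\gamma-1})/\beta$. These lie on the circle, hence in $K$, with $|\zeta_\pm|=\sqrt{\gamma/\beta}$ and $\arg\zeta_\pm=\pm(\pi-\alpha)$, where $\alpha=\tan^{-1}\sqrt{\beta\gamma-1}$. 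If $d=\pi/\alpha$ is an integer, then $(-1)^{d+1}\zeta_+^d=(\gamma/\beta)^{\pi/(2\alpha)}=\lambda^\star$ exactly, with $d\ge3$, and such $d$ are unbounded as $\beta\gamma\downarrow1$. Your heuristic that larger $d$ ``only adds factors of modulus more than $1$'' overlooks that every point of $K$, including $\zeta_\pm$, has modulus more than $1$. What limits the product is the angle budget, not the number of factors.

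The missing idea is what $c^\star$ encodes. Set $\psi(u)=\pi-|\arg u|\in[0,\pi]$; the angle constraint forces $\sum_j\psi(u_j)\ge\pi$. For a circle centred at $c>0$ passing through $\zeta_\pm$, the boundary curve $R(\psi)=R_{\min}(\pi-\psi)$ satisfies $(\log R)'(\psi)=\frac{c\sin\psi}{R(\psi)+c\cos\psi}$. At $\zeta_\pm$ this equals $\frac{c\sqrt{\beta\gamma-1}}{\gamma+c}$, and equating it with $\log\sqrt{\gamma/\beta}/\alpha$ yields exactly $c=c^\star$ from \eqref{eq:cstar-rstar}. So $r^\star$ is fixed by passing through $\zeta_\pm$, while $c^\star$ is fixed by making the line $\log|u|=\frac{\log\sqrt{\gamma/\beta}}{\alpha}\psi(u)$ tangent to the boundary at $\zeta_\pm$.

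A bound $\log|u|\ge\frac{\log\sqrt{\gamma/\beta}}{\alpha}\psi(u)$ on $K$ would give $\prod_j|u_j|\ge\lambda^\star$ for all $d$ at once; this is where the exact form of $\lambda^\star$ comes from. But the bound cannot hold near the positive real axis when $r^\star+c^\star<\lambda^\star$ (in the example, $r^\star+c^\star\approx1.607$), and this is exactly why $d=1$ fails. A complete proof must therefore treat configurations with a factor near $\arg u=0$ separately, and this is where $d\ge2$ enters, for instance through the power-of-a-point identity. Either carry this out, or cite \cite{GLL20} as the paper does.
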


It is straightforward to extend \Cref{lem:lambda_GLL} to the following corollary.

\begin{corollary}\label{cor:lambda_GLL_strip}
    For any $d\geq 2$ and any $0<\lambda<\lambda^\star$, 
        $(-1)^{d+1}K^{d}\cap I =\emptyset$ for $I=[-\rho\lambda,\lambda]$.
\end{corollary}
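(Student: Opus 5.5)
The plan is to reduce to \Cref{lem:lambda_GLL} by splitting $I=[-\rho\lambda,\lambda]$ into three pieces: the positive interval $(0,\lambda]$, the point $0$, and the negative interval $[-\rho\lambda,0)$. For the negative interval, the idea is to append one extra factor to the product so that its sign and parity flip together, landing the product back in $[0,\lambda]$.

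\textbf{The piece $(0,\lambda]$.} For any $d\ge 2$ the intersection $(-1)^{d+1}K^d\cap(0,\lambda]$ is empty directly by \Cref{lem:lambda_GLL}.

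\textbf{The point $0$.} This point is already covered by \Cref{lem:lambda_GLL}. It can also be seen directly. By \Cref{claim:cstar-rstar}, $0\in[-1,c^\star]\subset D(c^\star,r^\star)$, so $0\notin K$. Every factor $u_i\in K$ is therefore nonzero, and so is any product $\prod_{i=1}^d u_i$.

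\textbf{The piece $[-\rho\lambda,0)$.} Suppose for contradiction that $(-1)^{d+1}\prod_{i=1}^d u_i=-x$ for some $u_i\in K$ and $x\in(0,\rho\lambda]$. Equivalently, $(-1)^{d}\prod_{i=1}^d u_i=x$. Set $u_{d+1}\defeq 1/\rho$. By \Cref{claim:one-over-rho}, $1/\rho\notin\cl{D(c^\star,r^\star)}$, and in particular $u_{d+1}\in K$. Now consider the product of $d'=d+1\ge 3$ elements of $K$. Since $(-1)^{d'+1}=(-1)^{d}$,
\[
  (-1)^{d'+1}\prod_{i=1}^{d'}u_i=(-1)^{d}\prod_{i=1}^{d}u_i\cdot\frac1\rho=\frac{x}{\rho}\in(0,\lambda].
\]
This means $(-1)^{d'+1}K^{d'}\cap[0,\lambda]\neq\emptyset$ with $d'\ge 2$, which contradicts \Cref{lem:lambda_GLL}.

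Combining the three pieces gives $(-1)^{d+1}K^d\cap I=\emptyset$ for every $d\ge 2$.

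The only step that needs care is verifying that $1/\rho$ is a legitimate element of $K$. This is exactly \Cref{claim:one-over-rho}, which we may assume, since $K$ is defined as the complement of the open disk $D(c^\star,r^\star)$. The remaining steps are bookkeeping of signs and parities. Note that the interval bound $x/\rho\le\lambda$ follows from $x\le\rho\lambda$.
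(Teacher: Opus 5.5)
Your proof is correct, but it handles the negative part of $I$ by a different route from the paper.

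\textbf{The paper's route.} It disposes of $[-\rho\lambda,0]$ with a modulus argument. By \Cref{claim:cstar-rstar}, every $u\in K$ satisfies $|u|\ge r^\star-c^\star>1$, so every element of $K^d$ has modulus greater than $1$. Every point of $[-\rho\lambda,0]$ has modulus at most $\rho\lambda<1$ by \Cref{lem:lambda-v-tau}.

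\textbf{Your route.} You append the factor $1/\rho\in K$, which is legitimate by \Cref{claim:one-over-rho}. This turns a hypothetical hit of $-x\in[-\rho\lambda,0)$ at degree $d$ into a hit of $x/\rho\in(0,\lambda]$ at degree $d+1$. You then invoke \Cref{lem:lambda_GLL} again, which is valid since it holds for every degree $\ge 2$.

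\textbf{What each buys.} Your argument needs neither $\rho\lambda<1$ nor the modulus bound. It also has a natural interpretation: $1/\rho=\psi(0)$ corresponds to $q=0$, i.e.\ a half-edge forced to be selected. Setting one $q_i=0$ in $\widetilde P_{v,d+1}$ with field $\lambda'$ gives $\widetilde P_{v,d}$ with field $-\rho\lambda'$. So your trick is exactly the reverse of the pinning self-reduction $\lambda\mapsto-\rho\lambda$.

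The paper's argument treats the negative part without calling the black-box \Cref{lem:lambda_GLL} at all. It also gives an explicit uniform gap (modulus $>1$ versus $\le\rho\lambda<1$). The same estimate $|u|\ge r^\star-c^\star>1$ drives the large-degree case in the proof of \Cref{lem:lambda_avoid}.

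The underlying analytic input is the same in both routes. \Cref{claim:one-over-rho} and \Cref{claim:cstar-rstar} both reduce to the inequality $2c^\star<(\gamma-\beta)/(\beta-1)$.

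As you note, your separate treatment of the point $0$ is redundant, since $0\in[0,\lambda]$ is already covered by \Cref{lem:lambda_GLL}.
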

\begin{proof}
    By \Cref{lem:lambda_GLL}, we only need to show that $(-1)^{d+1}K^{d}\cap [-\rho\lambda,0]=\emptyset$. For $a\in[-\rho\lambda,0]$, we have $|a|\le \rho\lambda<1$ due to \Cref{lem:lambda-v-tau}. By \Cref{claim:cstar-rstar}, we have $r^\star-c^\star>1$. Therefore, every $u\in K$ satisfies $|u|>1$. Every product in $K^d$ has modulus greater than $1$, which cannot equal any point of $(-1)^{d+1}[-\rho\lambda,0]$.
\end{proof}

%Using \Cref{lem:lambda_GLL} together with \Cref{claim:cstar-rstar}, \Cref{claim:one-over-rho}, we can prove that the region obtained after adding a small perturbation $\epsilon$ avoids $\mathcal S_{\delta'}(I)$ for some $\delta'>0$, i.e. \Cref{lem:lambda_avoid}. 
%Before we prove \Cref{lem:lambda_avoid}, we need some notations. We say that compact sets $A_\epsilon$ converge to $A$ in Hausdorff distance if $\operatorname{dist}_{\mathrm H}(A_\epsilon,A)\to 0$ as $\epsilon\to 0$.

Now we can prove \Cref{lem:lambda_avoid}.
Comparing to \Cref{cor:lambda_GLL_strip}, the difference is that we need to consider $K_{\epsilon}$, which is slightly larger than $K$, and we want to avoid a strip around the interval $I$.
This strengthening is necessary for \Cref{lem:contraction-one-step} later.
\begin{proof}[Proof of \Cref{lem:lambda_avoid}]
By \Cref{claim:cstar-rstar}, $r^\star-c^\star>1$.  Set constants
\[
        M:=\max\{\lambda,\rho\lambda\},\qquad
        m_0:=\frac{r^\star-c^\star+1}{2}>1,
        \qquad
        d_0:=\max\{2,\lceil \log_{m_0}(M+1)\rceil\}.
\]

Choose $\epsilon_0>0$ small enough that
$r^\star-c^\star-\epsilon_0>m_0 > 1$.  Then, for every
$0<\epsilon\le \epsilon_0$ and every $u\in K_\epsilon$,
$|u|\ge |u - c^\star| - c^\star \geq r^\star-c^\star-\epsilon\ge m_0$.
In particular, if $u_1,\ldots,u_d\in K_\epsilon$, then
$|u_1\cdots u_d|\ge m_0^d$.
If $d>d_0$, then $m_0^d>M+1$.  
Since $\mathcal S_1(I)\subseteq\{a\in\mathbb C: |a|<M+1\}$,
we have $(-1)^{d+1}K_\epsilon^d\cap \mathcal S_{1}(I)=\emptyset$ for $\delta_0 = 1$, any $d > d_0$, and any $0<\epsilon\le\epsilon_0$.

It remains to handle degrees $d$ between $2$ and $d_0$.
Let
\[
        R:=\max\{r^\star-c^\star,M+1\},
        \qquad
        L_\epsilon:=K_\epsilon\cap \cl{D(0,R)}.
\]
For $0<\delta\le1$, avoiding $\mathcal S_\delta(I)$ for products from
$L_\epsilon$ is enough.  Indeed, if $u_1,\ldots,u_d\in K_\epsilon$ and
some $u_i\notin L_\epsilon$, then $|u_i|>R$, while
$|u_j|\ge m_0 > 1$ for $j\ne i$.  Hence
$|\prod_{j=1}^d u_j|>R m_0^{d-1}\ge M+1$.
Thus $(-1)^{d+1}\prod_j u_j\notin \mathcal S_\delta(I)$, because
every point of $\mathcal S_\delta(I)$ has modulus less than $M+1$.
Consequently,
\begin{equation}\label{eq:lambda-avoid-2}
        (-1)^{d+1}L_\epsilon^d\cap \mathcal S_\delta(I)=\emptyset
        \quad\Longrightarrow\quad
        (-1)^{d+1}K_\epsilon^d\cap \mathcal S_\delta(I)=\emptyset .
\end{equation}

We now prove the left-hand side of \eqref{eq:lambda-avoid-2} for each
fixed $2\le d\le d_0$ and some $\delta_d$ depending on $d$. 
Note that the set $L_\epsilon$ is compact. We need some standard notations.
For a point $a\in\mathbb C$ and a nonempty set $S\subseteq\mathbb C$, write $\operatorname{dist}(a,S):=\inf_{z\in S}|a-z|$. For nonempty compact sets $A,B\subseteq\mathbb C$, the Hausdorff distance is defined by
        \[
        \operatorname{dist}_{\mathrm H}(A,B)
            :=
                \max\Bigl\{\sup_{a\in A}\operatorname{dist}(a,B),\ \sup_{b\in B}\operatorname{dist}(b,A)\Bigr\}.
        \]
The compact sets $L_\epsilon$ converge, as
$\epsilon\to0$, in Hausdorff distance to
$L_0:=K\cap \cl{D(0,R)}.$
Since multiplication is uniformly continuous on the fixed compact set
$\cl{D(0,R)}^d$, the product sets
$
        L_\epsilon^d=\{\prod_{i=1}^d u_i: u_i\in L_\epsilon\}
$
also converge in Hausdorff distance to $L_0^d$.  Moreover,
$L_0\subseteq K$, so \Cref{cor:lambda_GLL_strip} gives
$
        (-1)^{d+1}L_0^d\cap I=\emptyset .
$
Both sets are compact, and thus their distance is positive.  Write
\[
        \eta_d:=\dist\bigl(I,(-1)^{d+1}L_0^d\bigr)>0,
\]
where $\eta_d > 0$ is a positive constant depending on $\beta,\gamma,\lambda,d$.
Since $L_\epsilon^d$ converges in Hausdorff distance to $L_0^d$, there exists $\epsilon_d>0$ such that the following holds:
\[
        \operatorname{dist}_{\mathrm H}
        \bigl((-1)^{d+1}L_\epsilon^d,(-1)^{d+1}L_0^d\bigr)
        <\eta_d/3
        \qquad\text{whenever }0<\epsilon\le\epsilon_d .
\]
Taking $\delta_d:=\eta_d/3$, we get
\[
        (-1)^{d+1}L_\epsilon^d\cap \mathcal S_{\delta_d}(I)=\emptyset
        \qquad\text{for all }0<\epsilon\le\epsilon_d .
\]

Finally choose constants $\epsilon$ and $\delta'$ as follows:
$\epsilon:=\min\{\epsilon_0,\epsilon_2,\ldots,\epsilon_{d_0}\}$ and $\delta':=\min\{\delta_0,\delta_2,\ldots,\delta_{d_0}\}.$
The large-degree argument covers all $d>d_0$, while the finite-degree
argument and \eqref{eq:lambda-avoid-2} cover all $2\le d\le d_0$.
Therefore
        $(-1)^{d+1}K_\epsilon^d\cap \mathcal S_{\delta'}(I)=\emptyset$ for every $d\ge2$,
as required.
\end{proof}

\subsection{Contraction preserves stability}
\label{sec:contraction-stable}

Next we prove \Cref{lem:contraction-one-step}. We first state several auxiliary lemmas needed for the proof. Consider the single-variable polynomial $P(x)=A+2Bx+Cx^2$ with parameters $A,B,C\in \mathbb{C}$. Its corresponding polar form in two variables $x_1,x_2$ is defined as
\[
  \widehat{P}(x_1,x_2)=A+B(x_1+x_2)+Cx_1x_2.
\]
Let $\Gamma$ be an open connected region in $\mathbb{C}$. We say that $\Gamma$ is a circular region if it is a disk, or the complement of a disk in $\mathbb{C}$. The classical Grace--Szeg\H{o}--Walsh coincidence theorem has the following immediate consequence.

\begin{proposition}[\text{Grace--Szeg\H{o}--Walsh coincidence theorem \cite[Theorem 11]{BorceaBranden2009}}]\label{lem:GSW}
Let $\Gamma$ be a circular region and $C>0$ in $\widehat{P}(x_1,x_2)$. The polynomial $\widehat{P}(x_1,x_2)$ is $\Gamma \times \Gamma$-stable if and only if $P(x)$ is $\Gamma$-stable.
\end{proposition}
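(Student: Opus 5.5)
The plan is to prove the two directions separately. The direction ``$\widehat{P}$ stable $\Rightarrow$ $P$ stable'' is immediate: $P(x)=\widehat{P}(x,x)$, so a zero $x\in\Gamma$ of $P$ gives the zero $(x,x)\in\Gamma\times\Gamma$ of $\widehat{P}$. For the converse we argue by contraposition. We assume $\widehat{P}(x_1,x_2)=0$ for some $x_1,x_2\in\Gamma$, and we aim to show that $P$ has a root in $\Gamma$.

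Since $C\neq 0$, the polynomial $P$ has exactly two roots $z_1,z_2\in\mathbb{C}$, counted with multiplicity. We write $P(x)=C(x-z_1)(x-z_2)$, which gives $A=Cz_1z_2$ and $2B=-C(z_1+z_2)$. Substituting into the polar form, the identity $\widehat{P}(x_1,x_2)=0$ becomes
\[
  z_1z_2-\tfrac{1}{2}(z_1+z_2)(x_1+x_2)+x_1x_2=0 .
\]
If $z_1=z_2=z$, this reads $(x_1-z)(x_2-z)=0$. Hence $z=x_1$ or $z=x_2$, so the root $z$ lies in $\Gamma$ and we are done.

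So we may assume $z_1\neq z_2$. Rewriting the identity as a cross-ratio condition, the displayed equation says exactly that $(x_1,x_2;z_1,z_2)=-1$, that is, the pair $\{x_1,x_2\}$ is harmonic with respect to $\{z_1,z_2\}$. We then apply the Möbius map $\phi(x)=(x-z_1)/(x-z_2)$. It sends $z_1\mapsto 0$ and $z_2\mapsto\infty$, and it preserves cross-ratios. The harmonic condition therefore becomes $\phi(x_2)=-\phi(x_1)$; here the degenerate case where some $x_i$ equals some $z_j$ is trivial, since then that root already lies in $\Gamma$.

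Now suppose for contradiction that $z_1,z_2\notin\Gamma$. Then $\phi(\Gamma)$ is a circular region on the Riemann sphere, and it omits both $0$ and $\infty$. A circular region omitting $\infty$ is an open disk or an open half-plane, hence convex. It contains $\phi(x_1)$ and $-\phi(x_1)$, so by convexity it contains their midpoint $0$. This contradicts $0\notin\phi(\Gamma)$, so one of $z_1,z_2$ lies in $\Gamma$.

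The step needing the most care is this Möbius normalisation. One has to check that the image of a disk or disk-complement under $\phi$ is again a generalized disk. One also has to confirm that omitting $\infty$ forces the image to be a disk or half-plane and not the complement of a disk; this holds because the complement of a disk always contains $\infty$ on the sphere. Both facts are standard. Finally, verifying the cross-ratio reformulation of the displayed identity is a routine algebraic check, and that completes the argument.
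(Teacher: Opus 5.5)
Your proof is correct. It takes a different route from the paper, which gives no argument at all: it simply invokes the general Grace--Szeg\H{o}--Walsh coincidence theorem of Borcea--Br\"and\'en, which covers symmetric multi-affine polynomials of any degree.

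You instead prove the bivariate case directly, in three steps.
\begin{enumerate}
\item Factor $P$ through its two finite roots $z_1,z_2$.
\item Observe that $\widehat{P}(x_1,x_2)=0$ is exactly the harmonic condition $(x_1-z_1)(x_2-z_2)+(x_1-z_2)(x_2-z_1)=0$.
\item Apply the M\"obius map $\phi(x)=(x-z_1)/(x-z_2)$. The condition becomes $\phi(x_2)=-\phi(x_1)$, and the claim reduces to a midpoint-convexity argument.
\end{enumerate}

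The citation buys brevity and full generality. Your argument buys a self-contained, elementary proof. It also shows exactly where $C\neq 0$ is needed: both roots are finite, so $\phi$ sends the point $z_2$, which lies outside the extended region, to $\infty$. When $C=0$, one root effectively sits at $\infty$, and the statement genuinely fails for disk complements. For example, take $P(x)=2x$ and $\Gamma=\{|x|>1\}$; then $P$ is $\Gamma$-stable, but $\widehat{P}(2,-2)=0$.

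One small point needs tidying. When $\Gamma$ is the complement of a disk, $\phi(\Gamma)$ is not itself a circular region. It is an open disk or half-plane with the point $\phi(\infty)=1$ removed, and that set is not convex. Run the midpoint argument on $\phi(\Gamma\cup\{\infty\})$ instead. This set is an open disk or open half-plane and contains $\phi(x_1)$ and $-\phi(x_1)$. It still omits both $0=\phi(z_1)$ and $\infty=\phi(z_2)$, because $z_1$ and $z_2$ are finite points outside $\Gamma$. With that adjustment the contradiction goes through exactly as you wrote it.
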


The next ingredient is the classical Asano contraction lemma \cite{Asa70,Rue71}. Here we use a simplified version for quadratic polynomials.

\begin{lemma}[Asano contraction]\label{lem:Asano-contraction}
Let $\Gamma\subset \mathbb{C}$ with $0\notin \Gamma$. If the multivariate polynomial
\[
P(x_1,x_2)=A+Bx_1+Cx_2+Dx_1x_2
\]
is stable for $(x_1,x_2)\in (\mathbb{C}\setminus \Gamma)\times (\mathbb{C}\setminus \Gamma)$, then
\[
Q(x):=A+Dx
\]
is stable for $x\in \mathbb{C}\setminus \left(-\Gamma^2\right)$, where $\Gamma^2:=\{x\cdot y: x,y\in \Gamma\}$.
\end{lemma}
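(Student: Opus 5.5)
We argue by contradiction in the standard Asano--Ruelle way. Suppose $Q(x_0)=A+Dx_0=0$ for some $x_0\in\mathbb C\setminus(-\Gamma^2)$. The plan is to produce a pair $(x_1,x_2)\in(\mathbb C\setminus\Gamma)^2$ with $P(x_1,x_2)=0$, contradicting the hypothesis. The key device is to restrict $P$ to the hyperbola $x_1x_2=x_0$, which turns the question into one about a single-variable quadratic.

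First consider the degenerate case $D=0$. Then $A=0$ as well, so $P(x_1,x_2)=Bx_1+Cx_2$. This vanishes at $(0,0)$, and $0\notin\Gamma$, so $P$ has a zero in $(\mathbb C\setminus\Gamma)^2$, a contradiction. Hence $D\ne0$, and $x_0=-A/D$.

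Now assume $D\neq 0$. If $x_0=0$, then $A=0$ and again $P(0,0)=A=0$ with $0\notin\Gamma$, a contradiction. So we may also assume $x_0\ne0$.

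Restrict $P$ to the hyperbola by setting $x_2=x_0/x_1$ for $x_1\neq0$. Using $A+Dx_0=0$, we get
\[
x_1P(x_1,x_0/x_1)=Ax_1+Bx_1^2+Cx_0+Dx_0x_1=Bx_1^2+(A+Dx_0)x_1+Cx_0=Bx_1^2+Cx_0 .
\]
We need a nonzero root $x_1$ of $Bx_1^2+Cx_0$ such that both $x_1$ and $x_2=x_0/x_1$ lie outside $\Gamma$.

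There are three cases, according to which of $B$, $C$ vanish.

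\textbf{Case $B\ne0$ and $C\ne0$.} The roots are $x_1=\pm\sqrt{-Cx_0/B}\neq0$, and each gives $x_2=x_0/x_1$ with $x_1x_2=x_0$. If both $x_1\in\Gamma$ and $x_2\in\Gamma$, then $x_0=x_1x_2\in\Gamma^2$. We must rule out this case, but the hypothesis excludes $-\Gamma^2$, not $\Gamma^2$, so the sign has to be reconciled.

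The most natural fix is to work instead with the contracted signature. Recall that contraction multiplies the two half-edge variables in the reciprocal normalisation with a sign coming from identifying the states. Concretely, I would restrict to the curve $x_1x_2=-x_0$, which should produce exactly $Q$'s zero condition with $-\Gamma^2$. I would first recheck this sign against the definition of $Q=A+Dx$. If the sign is as stated, then the root pair gives $x_1x_2=-x_0$; points of $\Gamma\times\Gamma$ would place $x_0\in-\Gamma^2$, which is excluded. Hence at least one of $x_1,x_2$ lies outside $\Gamma$.

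Knowing only that one of the two is outside $\Gamma$ is not yet enough; we need both outside. Here the freedom between the two roots $\pm$ helps: swapping $x_1\leftrightarrow x_2$ (using the symmetric choice of roots) still lands on zeros of $P$ on the hyperbola. I would choose the root for which both coordinates avoid $\Gamma$.

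\textbf{Case exactly one of $B,C$ is zero.} Say $B=0$ and $C\neq0$. Then the restricted quadratic is $Cx_0$, which has no roots, so the hyperbola alone does not produce a zero. Instead, send a coordinate to infinity: take $x_2\to\infty$ with $x_1=-A/D\cdot(\ldots)$ chosen suitably. Alternatively, use the fact that $P$ is then affine in each variable, and find a zero with $x_1=0\notin\Gamma$ and $x_2$ solving $A+Cx_2=0$. If this $x_2$ lands in $\Gamma$, push it with a perturbation of $x_1$ inside the complement.

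\textbf{Case $B=C=0$.} Then $P=A+Dx_1x_2$. Any factorisation $x_1x_2=x_0$ with both factors outside $\Gamma$ works. This exists unless every factorisation has a factor in $\Gamma$. That is excluded as long as $\Gamma$ is small (for instance, take $x_1=0$ limits, or large $|x_1|$ with $x_0/x_1$ near $0\notin\Gamma$, provided $\Gamma$ is bounded away from $0$ near the origin).

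The main obstacle I expect is this last issue. We must guarantee both coordinates are outside $\Gamma$, not just one, and handle the degenerate coefficient cases; this relies on $0\notin\Gamma$ and on the sign convention for $-\Gamma^2$. I would resolve it by exploiting $0\notin\Gamma$ together with openness to move along the hyperbola toward $x_1\to 0$ or $x_1\to\infty$ where one coordinate is near $0$, and invoke continuity of roots.
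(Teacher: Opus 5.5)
\textbf{There is a genuine gap in your main case.} On the hyperbola $x_1x_2=x_0$, the hypothesis only says that each zero of $P$ has \emph{at least one} coordinate in $\Gamma$. As you note yourself, excluding $x_0$ from $\pm\Gamma^2$ then only yields a zero with one coordinate outside $\Gamma$, and a contradiction needs both.

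Neither of your proposed repairs closes this.
\begin{itemize}
\item Switching to the curve $x_1x_2=-x_0$ gives, using $A=-Dx_0$, the quadratic $x_1P(x_1,-x_0/x_1)=Bx_1^2+2Ax_1-Cx_0$, which has no useful structure.
\item ``Choosing the root for which both coordinates avoid $\Gamma$'' is unjustified, and the zeros on your hyperbola carry too little information to force $x_0\in-\Gamma^2$. With $s^2=-Cx_0/B$, those zeros are $(s,-Bs/C)$ and $(-s,Bs/C)$. The hypothesis is met as soon as $s,-s\in\Gamma$, and that only gives $Cx_0/B=-s^2\in\Gamma^2$, not $x_0\in-\Gamma^2$.
\end{itemize}
The degenerate cases $B=0$ and $B=C=0$ are also handled only heuristically. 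Perturbation, openness and boundedness of $\Gamma$ are all unavailable: the lemma assumes only $0\notin\Gamma$.

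\textbf{The missing idea is to restrict to the diagonal $x_1=x_2=x$ instead of the hyperbola.} There, ``one coordinate outside $\Gamma$'' is the same as ``both coordinates outside $\Gamma$''. Your reductions to $D\neq 0$ and $x_0\neq 0$ are correct, so keep them. The hypothesis then says $P(x,x)=A+(B+C)x+Dx^2$ has no zero in $\mathbb{C}\setminus\Gamma$, so both of its roots $\zeta_1,\zeta_2$ (with multiplicity) lie in $\Gamma$. By Vieta, $\zeta_1\zeta_2=A/D$. Hence the unique zero of $Q$ is $-A/D=-\zeta_1\zeta_2\in-\Gamma^2$.

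This argument needs no case split on $B,C$ and no topological assumptions on $\Gamma$. It also explains the sign: $Q$'s root is $-A/D$, while the product of the diagonal roots is $+A/D$. In your contradiction framing, if $x_0\notin-\Gamma^2$ then some $\zeta_i\notin\Gamma$, and $(\zeta_i,\zeta_i)$ is a zero of $P$ in $(\mathbb{C}\setminus\Gamma)^2$. This is exactly the paper's proof.
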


\begin{proof}
  Since $0\notin \Gamma$, we have $A=P(0,0)\neq 0$. Without loss of generality, we assume $D\neq 0$, since otherwise the polynomial $Q$ is a constant and thus always stable. By the assumption on $P$, we have $\widetilde{P}(x)=A+(B+C)x+Dx^2$ is stable for $x\in \mathbb{C}\setminus \Gamma$. Thus $\widetilde{P}(x)=0$ can occur only when $x\in \Gamma$. Suppose the two roots of $\widetilde{P}(x)=0$ are $\zeta_1, \zeta_2 \in \Gamma$. By Vieta's formula, we have $\zeta_1\zeta_2=A/D$. On the other hand, $Q(x)=0$ can occur only for $x=-A/D\in -\Gamma^2$. Therefore, $Q(x)$ is stable for $x\in \mathbb{C}\setminus \left(-\Gamma^2\right)$.
\end{proof}

For our analysis, we need an auxiliary property of the disk $D(c^\star,r^\star-\epsilon)$, as well as an inequality for the parameters. The proofs of the following two results are deferred to \Cref{sec:other-proofs}.

\begin{lemma}\label{lem:disk-separation}
  Let $\zeta_1,\zeta_2$ be the two roots of $\beta x^2+2x+\gamma=0$. For any $0<\epsilon<r^\star/2$, it holds that $$\cl {D(c^\star,r^\star-\epsilon)}\cap \{\zeta_1, \zeta_2\}=\emptyset.$$
\end{lemma}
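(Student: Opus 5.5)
Since $\beta\gamma>1$ the quadratic $\beta x^2+2x+\gamma$ has negative discriminant $4-4\beta\gamma$, so I would start from the explicit conjugate pair of roots
\[
\zeta_{1,2}=\frac{-1\pm i\sqrt{\beta\gamma-1}}{\beta}.
\]
The plan is to compute their distance to the centre $c^\star$ exactly and show that both roots lie \emph{on} the circle $\partial D(c^\star,r^\star)$. Once that is known, they are automatically outside the strictly smaller closed disk.

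By \Cref{claim:cstar-rstar}, $c^\star$ is a real number (indeed positive). Hence, for either root,
\[
|\zeta_j-c^\star|^2=\Bigl(c^\star+\frac1\beta\Bigr)^2+\frac{\beta\gamma-1}{\beta^2}=(r^\star)^2,
\]
and the last equality is exactly the definition of $r^\star$ in \eqref{eq:cstar-rstar}. This gives $|\zeta_j-c^\star|=r^\star$ for $j=1,2$.

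Now fix $0<\epsilon<r^\star/2$. Then $r^\star-\epsilon>r^\star/2>0$, so $D(c^\star,r^\star-\epsilon)$ is a genuine nonempty disk. Its closure is $\{x:|x-c^\star|\le r^\star-\epsilon\}$. Since $|\zeta_j-c^\star|=r^\star>r^\star-\epsilon$, neither root belongs to this closure, which proves the lemma.

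I do not expect any real obstacle here. The only points to check are that $c^\star$ is real, which is clear from its formula, so that the squared modulus splits as above, and that the upper bound on $\epsilon$ keeps the radius $r^\star-\epsilon$ positive.
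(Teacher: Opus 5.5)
Your proposal is correct and matches the paper's proof: it writes the roots explicitly as $\zeta_\pm=\frac{-1\pm i\sqrt{\beta\gamma-1}}{\beta}$ and observes that $|\zeta_\pm-c^\star|^2=(r^\star)^2$ is exactly the defining formula for $r^\star$. Both roots therefore lie on $\partial D(c^\star,r^\star)$ and hence outside the closed disk of radius $r^\star-\epsilon$.
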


\begin{claim}\label{claim:rho-theta}
  It holds that $\rho \theta<1$.
\end{claim}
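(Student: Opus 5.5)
Since $\rho = \frac{\beta-1}{\gamma-1}$ and $\theta = \frac{(\gamma-1)^2}{\beta\gamma-1}$, we compute directly
\[
  \rho\theta = \frac{(\beta-1)(\gamma-1)}{\beta\gamma-1},
\]
so the claim reduces to $(\beta-1)(\gamma-1) < \beta\gamma - 1$ together with positivity of the denominator. We use the standing assumption $\gamma > \beta > 1$, which gives $\beta\gamma > 1$, so the denominator $\beta\gamma-1$ is strictly positive. Note also that $\rho\theta>0$.

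Expanding the numerator gives $(\beta-1)(\gamma-1) = \beta\gamma - \beta - \gamma + 1$. Hence
\[
  1 - \rho\theta = \frac{(\beta\gamma-1) - (\beta\gamma-\beta-\gamma+1)}{\beta\gamma-1} = \frac{\beta+\gamma-2}{\beta\gamma-1},
\]
which is the identity already recorded in the proof of \Cref{lemma:leaf-contraction}. Since $\beta,\gamma>1$, we have $\beta+\gamma-2>0$, and so $1-\rho\theta>0$, i.e.\ $\rho\theta<1$. In fact $0<\rho\theta<1$.

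There is no genuine obstacle here; the only point requiring care is to invoke $\beta\gamma>1$ so that dividing by $\beta\gamma-1$ preserves the direction of the inequality.
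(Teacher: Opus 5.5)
Your proof is correct and follows the paper's argument exactly: compute $\rho\theta = \frac{(\beta-1)(\gamma-1)}{\beta\gamma-1}$, then observe that $\beta\gamma-1-(\beta-1)(\gamma-1)=\beta+\gamma-2>0$. Your explicit remark that $\beta\gamma-1>0$ is needed to divide without flipping the inequality is a point the paper leaves implicit.
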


Finally, we are ready to prove \Cref{lem:contraction-one-step}.

\begin{proof}[Proof of \Cref{lem:contraction-one-step}.]
  Fix \(0<\epsilon<r^\star/2\). 
  Applying \Cref{lem:Asano-contraction} with the region \(\Gamma_\epsilon\), it
  suffices to show that a sufficiently small neighborhood of \(1/\theta\) is
  contained in \(\mathbb C\setminus(-\Gamma_\epsilon^2)\).

  By \Cref{claim:one-over-rho} and the continuity of \(\psi\) at \(0\), we have
  \(0\notin\cl{\Gamma_\epsilon}\), where $\cl{\cdot}$ denotes the closure of a set. Hence there is a constant \(a_0>0\) such
  that \(\Gamma_\epsilon\cap D(0,a_0)=\emptyset\). Set
  \[
        B_0:=\left|1/\theta\right|+1,
        \qquad
        R_0:=\frac{B_0}{a_0},
        \qquad
        \Gamma_{\epsilon,R}:=\Gamma_\epsilon\cap\cl{D(0,R_0)}.
  \]
  If \(q_1,q_2\in\Gamma_\epsilon\) and
  \(-q_1q_2\in U_1(1/\theta)\), then \(|q_1q_2|<B_0\), while
  \(|q_i|\ge a_0\) for \(i=1,2\). Thus \(|q_i|<R_0\) for \(i=1,2\), and hence
  \(q_1,q_2\in \Gamma_{\epsilon,R}\). Therefore, to find a neighborhood of
  \(1/\theta\) avoiding \(-\Gamma_\epsilon^2\), it suffices to show that
  $\frac1\theta\notin \cl{-\Gamma_{\epsilon,R}^2}$.
  Now, we are working with a bounded set $\Gamma_{\epsilon,R}$.

  Suppose, for contradiction, that
  \(1/\theta\in \cl{-\Gamma_{\epsilon,R}^2}\). Since
  \(\Gamma_{\epsilon,R}\) is bounded, the continuity of multiplication gives
  \(q_1,q_2\in\cl{\Gamma_{\epsilon,R}}\subseteq \cl{\Gamma_\epsilon} \) such that
  $
        1/\theta=-q_1q_2.
  $
  Since \(q_i\in\cl{\Gamma_\epsilon}\), we have \(q_i\ne\rho\), and the
 Möbius variables \(u_i:=\psi(q_i)\) lie in
  \(\cl{D(c^\star,r^\star-\epsilon)}\). Moreover \(u_i\ne -1\), since
  \(\psi(q)=-1\) has no finite solution. Solving
  \(u_i=-(1+q_i)/(q_i-\rho)\), we obtain for $i=1,2$ that
  $
        q_i=\frac{\rho u_i-1}{1+u_i}.
  $
Using \(\frac1\theta=-q_1q_2\), we get the following equation:
  \[
        \frac1\theta
        =
        -\frac{(1-\rho u_1)(1-\rho u_2)}
        {(1+u_1)(1+u_2)}.
  \]
Equivalently,
\[
        (1+u_1)(1+u_2)+\theta(1-\rho u_1)(1-\rho u_2)
        =
        0.
\]
Expanding and using the definitions of \(\theta = (\gamma-1)^2/(\beta\gamma-1)\) and \(\rho = (\beta-1)/(\gamma-1)\), this becomes
\[
        (1-\rho\theta)(\beta u_1u_2+u_1+u_2+\gamma)=0.
\]
By \Cref{claim:rho-theta}, \(1-\rho\theta\ne0\), and therefore
$
        \beta u_1u_2+u_1+u_2+\gamma=0.
$
Since \(u_1,u_2\in\cl{D(c^\star,r^\star-\epsilon)}
\subset D(c^\star,r^\star-\epsilon/2)\), \Cref{lem:GSW} applied to the circular
region \(D(c^\star,r^\star-\epsilon/2)\) implies that
$
        \beta x^2+2x+\gamma=0
$
for some \(x\in D(c^\star,r^\star-\epsilon/2)\). This contradicts
\Cref{lem:disk-separation}, applied with \(\epsilon/2\).

Thus \(1/\theta\notin \cl{-\Gamma_{\epsilon,R}^2}\). Hence there exists
\(0<\delta_3\le 1\) such that
$ U_{\delta_3}(1/\theta)\subseteq \mathbb C\setminus\cl{-\Gamma_{\epsilon,R}^2}$.
To finish the proof, we still need to go back to $\Gamma_\epsilon$ . 
Take any $z\in U_{\delta_3}(1/\theta)$. 
If $z=-q_1q_2$ for some $q_1,q_2\in\Gamma_\epsilon$, 
then $z\in U_1(1/\theta)$, since $\delta_3\le1$. 
By the choice of $R_0$, 
this forces $q_1,q_2\in\Gamma_{\epsilon,R}$, 
and hence $z\in-\Gamma_{\epsilon,R}^2$. 
This contradicts the choice of $\delta_3$. 
Therefore, $U_{\delta_3}(1/\theta)\subseteq\mathbb C\setminus(-\Gamma_\epsilon^2)$.
The construction of $\delta_3$ depends only on $\beta,\gamma,\epsilon$. 
This proves the lemma.
\end{proof}

\section{Simulated annealing reductions}\label{sec:simulated-annealing}

The sampling algorithm in \Cref{thm:sampling} implies an FPRAS for
$Z_G^{\text{spin}}(\beta,\gamma,\lambda)$ by annealing.
We could apply the adaptive procedure of \cite{SVV09}, but here a simple non-adaptive annealing for the external field $\lambda$ in the original spin system will do. %This avoids annealing
%the weighted subgraph model and gives a shorter second-moment calculation.
We include a proof for completeness.

Fix constants $\beta,\gamma,\lambda > 0$ such that $\gamma>\beta>1$ and
$\lambda<\lambda^\star$. Let $G=(V,E)$ and $m=|E|$. 
Assume $0<\epsilon\le 1$. Set
\[
    q=1-\frac{1}{5n}.
\]
Choose $\ell=O_\lambda(n\log(n/\epsilon))$ large enough so that, for the
sequence
\[
    \lambda_\ell=\lambda,\qquad
    \lambda_{i-1}=q\lambda_i \quad \text{for } i=1,\ldots,\ell,
\]
we have $\lambda_0\le \epsilon/(20n)$. For notational convenience, set
$\lambda_{\ell+1}=\lambda_\ell/q$. For $0\le i\le \ell+1$, write
\[
    Z_i=Z_G^{\text{spin}}(\beta,\gamma,\lambda_i),
    \qquad
    \mu_i=\mu_G^{\text{spin}}(\beta,\gamma,\lambda_i)
    \quad (0\le i\le \ell).
\]
The target partition function is $Z_\ell$, and
\[
    Z_\ell
    =
    \frac{Z_\ell}{Z_{\ell-1}}\cdot
    \frac{Z_{\ell-1}}{Z_{\ell-2}}\cdots
    \frac{Z_1}{Z_0}\cdot Z_0 .
\]

\begin{lemma}\label{lem:initial-partition-function}
For the above choice of $\ell$,
\[
    \gamma^m \le Z_0 \le e^{\epsilon/10}\gamma^m .
\]
\end{lemma}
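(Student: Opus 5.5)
The lower bound comes from a single configuration, and the upper bound from a crude union-style count over the vertices that are in state $0$. For the lower bound, note that every term in $Z_0 = \sum_\sigma \beta^{m_0(\sigma)}\gamma^{m_1(\sigma)}\lambda_0^{n_0(\sigma)}$ is nonnegative. The all-$1$ configuration has $m_1=m$, $m_0=0$ and $n_0=0$, so it contributes exactly $\gamma^m$. Hence $Z_0\ge\gamma^m$.

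For the upper bound, I would group configurations by their set $U=\sigma^{-1}(0)$ of vertices in state $0$. For a fixed $U$, the edges split into three groups: those inside $U$ (weight $\beta$), those inside $V\setminus U$ (weight $\gamma$), and those across the cut (weight $1$). Since $1<\beta<\gamma$, every edge has weight at most $\gamma$. The weight of $\sigma$ is therefore at most $\gamma^m\lambda_0^{|U|}$. Summing over all $U$ gives
\[
  Z_0\le \gamma^m\sum_{U\subseteq V}\lambda_0^{|U|}=\gamma^m(1+\lambda_0)^n\le \gamma^m e^{n\lambda_0}.
\]

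It remains to check the exponent. By the choice of $\ell$ we have $\lambda_0\le \epsilon/(20n)$, so $n\lambda_0\le \epsilon/20\le \epsilon/10$, which gives $Z_0\le e^{\epsilon/10}\gamma^m$. The only point to confirm is that such an $\ell=O_\lambda(n\log(n/\epsilon))$ exists. We have $\lambda_0=q^\ell\lambda\le e^{-\ell/(5n)}\lambda$, so taking $\ell\ge 5n\log(20n\lambda/\epsilon)$ suffices. I do not expect any real obstacle here; the only care needed is to use $\beta\le\gamma$ and $1\le\gamma$ to bound every edge weight by $\gamma$.
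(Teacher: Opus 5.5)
Your proof is correct and matches the paper's argument: the all-$1$ configuration gives the lower bound $\gamma^m$, and bounding every edge weight by $\gamma$ (using $1<\beta<\gamma$) gives $Z_0\le\gamma^m(1+\lambda_0)^n\le e^{\epsilon/10}\gamma^m$ via $\lambda_0\le\epsilon/(20n)$. Your extra check that $\ell\ge 5n\log(20n\lambda/\epsilon)$ achieves this choice of $\lambda_0$ is a harmless addition the paper leaves implicit.
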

\begin{proof}
The all-$1$ configuration contributes exactly $\gamma^m$, so $Z_0\ge \gamma^m$.
For any configuration $\sigma$ with $k=n_0(\sigma)$ vertices in spin $0$,
the edge contribution is at most $\gamma^m$, because $\gamma>\beta>1$.
Therefore
\[
    Z_0
    \le
    \gamma^m \sum_{k=0}^n \binom{n}{k}\lambda_0^k
    =
    \gamma^m(1+\lambda_0)^n
    \le
    e^{\epsilon/10}\gamma^m ,
\]
where the last inequality follows from $\lambda_0\le \epsilon/(20n)$.
%The lower bound $e^{-\epsilon/10}\gamma^m\le Z_0$ is immediate from
%$Z_0\ge \gamma^m$.
\end{proof}

To estimate the telescoping product, consider the following estimator:
\[
    W_i(\sigma)
    =
    \frac{
        w^{\text{spin}}_{G}(\sigma;\beta,\gamma,\lambda_i)
    }{
        w^{\text{spin}}_{G}(\sigma;\beta,\gamma,\lambda_{i-1})
    }
    =
    \left(\frac{\lambda_i}{\lambda_{i-1}}\right)^{n_0(\sigma)}
    \quad \text{where } \sigma\sim\mu_{i-1}.
\]
Let $W_1,W_2,\ldots,W_\ell$ be independent random variables, with $W_i$
sampled using $\mu_{i-1}$, and define $W=\prod_{i=1}^{\ell}W_i$.
A simple calculation shows that
\begin{align}\label{eqn:W-exp}
\Ex[W_i] = \frac{Z_i}{Z_{i-1}} \quad \text{and} \quad \Ex[W] = \frac{Z_\ell}{Z_0}.
\end{align}

The following lemma bounds the variance of $W_i$ and $W$.
\begin{lemma}\label{lem:variance-one-step}
For every $1 \leq i \leq \ell$,
\[
    \Var{}{W_i}
    \leq \Ex[W_i^2]
    =
    \frac{Z_{i+1}}{Z_{i-1}}.
\]
As a consequence,
\begin{align*}
\Var{}{W} \leq \Ex[W^2]
\leq O_{\beta,\gamma,\lambda}(1)\left(\frac{Z_\ell}{Z_0}\right)^2.
\end{align*}
\end{lemma}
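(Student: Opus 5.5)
The first claim follows by computing $\Ex[W_i^2]$ directly from the definition of $W_i$. Since $\lambda_i/\lambda_{i-1}=1/q$ for every $1\le i\le \ell$ (and $\lambda_{i+1}=\lambda_i/q$, including $i=\ell$ by the convention $\lambda_{\ell+1}=\lambda_\ell/q$), we have
\[
\Ex[W_i^2]=\sum_{\sigma}\frac{w^{\text{spin}}_G(\sigma;\beta,\gamma,\lambda_{i-1})}{Z_{i-1}}\, q^{-2n_0(\sigma)}
=\frac{1}{Z_{i-1}}\sum_\sigma \beta^{m_0(\sigma)}\gamma^{m_1(\sigma)}\bigl(\lambda_{i-1}q^{-2}\bigr)^{n_0(\sigma)}
=\frac{Z_{i+1}}{Z_{i-1}},
\]
because $\lambda_{i-1}q^{-2}=\lambda_{i+1}$. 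The bound $\Var{}{W_i}\le\Ex[W_i^2]$ is trivial.

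For the consequence, I will use independence of the $W_i$ and telescope:
\[
\Ex[W^2]=\prod_{i=1}^{\ell}\frac{Z_{i+1}}{Z_{i-1}}=\frac{Z_{\ell+1}Z_\ell}{Z_1Z_0}
=\left(\frac{Z_\ell}{Z_0}\right)^2\cdot\frac{Z_{\ell+1}}{Z_\ell}\cdot\frac{Z_0}{Z_1}.
\]
It then remains to show that both ratios $Z_{\ell+1}/Z_\ell$ and $Z_0/Z_1$ are $O(1)$.

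The ratio $Z_0/Z_1$ is at most $1$, since $\lambda_0<\lambda_1$ and $Z$ is coefficientwise increasing in $\lambda$.

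For the other ratio, every configuration $\sigma$ satisfies $w(\sigma;\lambda_{\ell+1})=q^{-n_0(\sigma)}w(\sigma;\lambda_\ell)\le q^{-n}w(\sigma;\lambda_\ell)$. Hence
\[
\frac{Z_{\ell+1}}{Z_\ell}\le q^{-n}=\Bigl(1-\tfrac{1}{5n}\Bigr)^{-n}\le e^{1/4}
\]
for $n\ge1$, using $(1-x)^{-1}\le e^{x/(1-x)}$ with $x=1/(5n)\le 1/5$.

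Combining the two ratio bounds gives $\Ex[W^2]\le e^{1/4}(Z_\ell/Z_0)^2$. No step is really an obstacle; the only care needed is the boundary index $i=\ell$, which the convention $\lambda_{\ell+1}=\lambda_\ell/q$ handles. Note also that $Z_{\ell+1}$ is evaluated at a field possibly exceeding $\lambda^\star$, which is harmless because we only need the crude per-configuration comparison, not sampling at that field.
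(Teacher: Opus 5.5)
Your proposal is correct and follows essentially the same route as the paper: you use $\lambda_{i-1}q^{-2}=\lambda_{i+1}$ to get $\Ex[W_i^2]=Z_{i+1}/Z_{i-1}$, telescope by independence to $Z_{\ell+1}Z_\ell/(Z_1Z_0)$, and bound $Z_0/Z_1\le 1$ and $Z_{\ell+1}/Z_\ell\le q^{-n}$. Your explicit constant $e^{1/4}$ simply makes the paper's $O(1)$ concrete.
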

\begin{proof}
Since the dependence on $\lambda$ in
$w^{\text{spin}}_{G}(\sigma;\beta,\gamma,\lambda)$ is exactly
$\lambda^{n_0(\sigma)}$, and since
$\lambda_{i-1}\lambda_{i+1}=\lambda_i^2$, we have
\[
    \frac{
        w^{\text{spin}}_{G}(\sigma;\beta,\gamma,\lambda_i)^2
    }{
        w^{\text{spin}}_{G}(\sigma;\beta,\gamma,\lambda_{i-1})
    }
    =
    w^{\text{spin}}_{G}(\sigma;\beta,\gamma,\lambda_{i+1}).
\]
Hence
\[
    \Ex[W_i^2]
    =
    \frac{1}{Z_{i-1}}
    \sum_{\sigma:V\to\{0,1\}}
    w^{\text{spin}}_{G}(\sigma;\beta,\gamma,\lambda_{i+1})
    =
    \frac{Z_{i+1}}{Z_{i-1}}.
\]
Independence gives
\[
    \Ex[W^2]
    =
    \prod_{i=1}^{\ell}\frac{Z_{i+1}}{Z_{i-1}}
    =
    \frac{Z_{\ell+1}Z_\ell}{Z_1Z_0}.
\]
For every adjacent pair $\lambda_j=q^{-1}\lambda_{j-1}$,
\[
    1
    \le
    \frac{Z_j}{Z_{j-1}}
    =
    \Ex_{\mu_{j-1}}\left[
        \left(\frac{\lambda_j}{\lambda_{j-1}}\right)^{n_0(\sigma)}
    \right]
    \le q^{-n}
    =O(1).
\]
Thus $Z_0/Z_1\le 1$ and $Z_{\ell+1}/Z_\ell=O(1)$, which implies
\[
    \Ex[W^2]
    \le
    O(1)\left(\frac{Z_\ell}{Z_0}\right)^2 .
\]
The variance bounds follow from $\Var{}{X}\le \Ex[X^2]$.
\end{proof}

\begin{lemma}\label{lem:estimator-spin}
Assume that we can draw independent samples from $\mu_0,\mu_1,\ldots,\mu_{\ell-1}$. There is an estimator $\widehat{Z}_\ell$ such that, using $N=O_{\beta,\gamma,\lambda}(\epsilon^{-2})$ independent samples of the random variable $W$, 
\[
    \Pr\left[\widehat{Z}_\ell \in (1\pm \epsilon) Z_\ell\right] \geq \frac{9}{10}.
\]
\end{lemma}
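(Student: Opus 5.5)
The estimator is the empirical mean of independent copies of $W$, multiplied by the known approximation $\gamma^m$ to $Z_0$. We draw $N$ independent copies $W^{(1)},\dots,W^{(N)}$ of $W=\prod_{i=1}^{\ell}W_i$. Each copy uses fresh independent samples $\sigma_{i-1}\sim\mu_{i-1}$ for $i=1,\dots,\ell$, so generating one copy costs $\ell$ samples. We then set
\[
\overline{W}=\frac1N\sum_{j=1}^N W^{(j)}, \qquad \widehat{Z}_\ell=\gamma^m\,\overline{W}.
\]
By \eqref{eqn:W-exp}, $\Ex[\overline{W}]=Z_\ell/Z_0$.

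The first step is concentration of $\overline{W}$. By \Cref{lem:variance-one-step}, $\Var{}{\overline{W}}=\Var{}{W}/N\le K\,(Z_\ell/Z_0)^2/N$ for a constant $K=O_{\beta,\gamma,\lambda}(1)$. Chebyshev's inequality then gives
\[
\Pr\left[\left|\overline{W}-\frac{Z_\ell}{Z_0}\right|>\frac{\epsilon}{3}\cdot\frac{Z_\ell}{Z_0}\right]\le\frac{9K}{N\epsilon^2}.
\]
Choosing $N=\lceil 90K/\epsilon^2\rceil=O_{\beta,\gamma,\lambda}(\epsilon^{-2})$ makes this failure probability at most $1/10$.

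The second step is to replace the unknown $Z_0$ by $\gamma^m$. By \Cref{lem:initial-partition-function}, $Z_0/\gamma^m\in[1,e^{\epsilon/10}]$, which gives
\[
\widehat{Z}_\ell=\gamma^m\overline{W}\in\left[e^{-\epsilon/10},\,1\right]\cdot Z_0\overline{W}.
\]
On the good event, $Z_0\overline{W}\in(1\pm\epsilon/3)Z_\ell$. For $0<\epsilon\le1$ we have $(1+\epsilon/3)\le1+\epsilon$ and $e^{-\epsilon/10}(1-\epsilon/3)\ge1-\epsilon$. Hence $\widehat{Z}_\ell\in(1\pm\epsilon)Z_\ell$ with probability at least $9/10$.

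The only delicate point is this multiplicative bookkeeping between the two error factors; the rest is routine. Note that $K$ depends only on $\beta,\gamma,\lambda$: it comes from $q^{-n}=(1-\tfrac{1}{5n})^{-n}\le e^{1/4}$ and from $Z_{\ell+1}/Z_\ell\le q^{-n}$. Therefore $N$ is independent of $n$, as the statement requires.
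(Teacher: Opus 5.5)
Your proposal is correct and follows the paper's proof essentially line by line. Both use the estimator $\widehat{Z}_\ell=\gamma^m\overline{W}$, apply Chebyshev at relative accuracy $\epsilon/3$ via \Cref{lem:variance-one-step}, and absorb the factor $\gamma^m/Z_0$ using \Cref{lem:initial-partition-function} for $0<\epsilon\le 1$. Your one-sided bound $\gamma^m/Z_0\in[e^{-\epsilon/10},1]$ is slightly sharper than the two-sided bound the paper quotes, and your explicit $N=\lceil 90K/\epsilon^2\rceil$ makes the constant concrete.
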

\begin{proof}
For each $r=1,\ldots,N$, independently sample $\sigma_i^{(r)}\sim \mu_{i-1}$ for every $i=1,\ldots,\ell$, and define
\[
    W_i^{(r)}
    =
    \left(\frac{\lambda_i}{\lambda_{i-1}}\right)^{n_0(\sigma_i^{(r)})},
    \qquad
    W^{(r)}=\prod_{i=1}^{\ell} W_i^{(r)}.
\]
Let
\[
    \overline{W}=\frac{1}{N}\sum_{r=1}^{N} W^{(r)}
    \qquad\text{and}\qquad
    \widehat{Z}_\ell=\gamma^m\overline{W}.
\]
By \eqref{eqn:W-exp}, $\Ex[W^{(r)}]=Z_\ell/Z_0$. By \Cref{lem:variance-one-step},
\[
    \Var{}{W^{(r)}}\leq O_{\beta,\gamma,\lambda}(1)\left(\frac{Z_\ell}{Z_0}\right)^2.
\]
Since $(W^{(r)})_r$ are independent,
\[
    \Var{}{\overline{W}}
    \leq
    \frac{O_{\beta,\gamma,\lambda}(1)}{N}\left(\frac{Z_\ell}{Z_0}\right)^2.
\]
Taking $N=O_{\beta,\gamma,\lambda}(\epsilon^{-2})$ sufficiently large and applying Chebyshev's inequality gives
\[
    \Pr\left[
    \overline{W}\in \left(1\pm \frac{\epsilon}{3}\right)\frac{Z_\ell}{Z_0}
    \right]\geq \frac{9}{10}.
\]
On this event,
$\frac{\widehat{Z}_\ell}{Z_\ell}
    =
    \frac{\gamma^m}{Z_0}
    \cdot
    \frac{\overline{W}}{Z_\ell/Z_0}$.
By the estimate for $Z_0$ in \Cref{lem:initial-partition-function},
$e^{-\epsilon/10}\leq \frac{\gamma^m}{Z_0}\leq e^{\epsilon/10}$.
Thus, for $0<\epsilon\leq 1$, we have the following bound
\[
    \frac{\widehat{Z}_\ell}{Z_\ell}
    \in
    \left[e^{-\epsilon/10}\left(1-\frac{\epsilon}{3}\right),
    e^{\epsilon/10}\left(1+\frac{\epsilon}{3}\right)\right]
    \subseteq [1-\epsilon,1+\epsilon].
\]
This proves the claim. %The success probability can be amplified in the standard way by taking independent repetitions and returning the median.
\end{proof}

\begin{proof}[Proof of \Cref{thm:fpras-spin}]
We implement the estimator in \Cref{lem:estimator-spin} using the sampler from
\Cref{thm:sampling}. Since all annealing fields satisfy
$\lambda_i\leq \lambda<\lambda^\star$ for all $i=0,\ldots,\ell-1$, the constant in the sampling bound can
be chosen uniformly as $C=C(\beta,\gamma,\lambda)$.

Let $N=O_{\beta,\gamma,\lambda}(\epsilon^{-2})$ be the number of independent
copies of $W$ used by the estimator. Each copy of $W$ requires one sample from
each of $\mu_0,\mu_1,\ldots,\mu_{\ell-1}$, so the total number of samples
needed is $N\ell$. For each required sample, run the sampler of
\Cref{thm:sampling} with total variation error at most $1/(100N\ell)$. Its
running time is
\[
    O\left(\Delta^C n \log(100N\ell n)\right).
\]
By the union bound and the standard coupling interpretation of total variation
distances, all approximate samples can be coupled with exact independent
samples with probability at least $99/100$. Conditional on the success of the
coupling, \Cref{lem:estimator-spin} gives a $(1\pm\epsilon)$ approximation to
$Z_\ell=Z_G^{\text{spin}}(\beta,\gamma,\lambda)$ with probability at least
$9/10$. Hence the overall success probability is at least
$9/10-1/100>2/3$.

The number of annealing stages is
$\ell=O_{\beta,\gamma,\lambda}(n\log(n/\epsilon))$, and each stage requires $N=O_{\beta,\gamma,\lambda}(\epsilon^{-2})$ samples.
Also $\log(100N\ell n)=O_{\beta,\gamma,\lambda}(\log(n/\epsilon))$. 
Therefore the overall running time is
\[
    O\left(
        \Delta^C \cdot \frac{n^2}{\epsilon^2}\cdot
        \log^2\frac{n}{\epsilon}
    \right),
\]
after adjusting the constant $C=C(\beta,\gamma,\lambda)$. This proves the
claimed FPRAS.
\end{proof}

\section{Missing proofs} \label{sec:missing-proofs}

In this section we collate deferred proofs.

\subsection{Validity of the Gibbs distribution}
\label{sec:validify-gibbs}

The positivity result in \Cref{thm:mixing-time-sgw} is a consequence of \Cref{lem:lambda-v-tau}.
\begin{proof}[Proof of \Cref{lem:lambda-v-tau}]
    
  By $\rho<1$, we only need to prove $\rho\lambda^\star<1$. Set parameters
    \[
            t=\sqrt{\beta\gamma},
            \qquad
            h=\sqrt{\gamma/\beta},
            \qquad
            \alpha=\arctan \sqrt{\beta\gamma-1}.
    \]
    Then $t=\sec\alpha$ for $1<h<t$, where \(h>1\) follows from \(\gamma>\beta\), and \(h<t\) follows from
    \(\beta=t/h>1\). Moreover, by definition of $\lambda^\star$, we have
    \[
            \lambda^\star
            =
            \left(\frac{\gamma}{\beta}\right)^{\pi/(2\alpha)}
            =
            h^{\pi/\alpha},
    \]
    and
    \[
            \frac1\rho
            =
            \frac{\gamma-1}{\beta-1}
            =
            \frac{th-1}{t/h-1}.
    \]
    Thus it is enough to prove \(h^{\pi/\alpha}<\frac{th-1}{t/h-1}\).
    Let \(y=\log h\), so \(0<y<\log t\), and define
    \[
            F(y)
            =
            \log\frac{te^y-1}{te^{-y}-1}
            -
            \frac{\pi}{\alpha}y.
    \]
    Then \(F(0)=0\).  We prove \(F'(y)>0\) for
    \(0<y<\log t\). Then it holds that $F(y)>0$. In particular, for $y=\log h$, we have $\log\frac{th-1}{t/h-1}>\frac{\pi}{\alpha}\log h$. So $\frac{1}{\rho}=\frac{th-1}{t/h-1}>h^{\pi/\alpha}=\lambda^\star$. Indeed,
    \[
            F'(y)
            =
            \frac{te^y}{te^y-1}
            +
            \frac{te^{-y}}{te^{-y}-1}
            -
            \frac{\pi}{\alpha}.
    \]
    To prove $F'(y)>0$, we let
    \[
            S(y)
            =
            \frac{te^y}{te^y-1}
            +
            \frac{te^{-y}}{te^{-y}-1}.
    \]
    Since the function \(r\mapsto r/(r-1)^2\) is decreasing for \(r>1\),
    and since \(te^{-y}\le te^y\), we have
    \[
            S'(y)
            =
            -\frac{te^y}{(te^y-1)^2}
            +
            \frac{te^{-y}}{(te^{-y}-1)^2}
            \ge 0.
    \]
    Therefore \(S(y)\ge S(0)=\frac{2t}{t-1}\). Using \(t=\sec\alpha\), this gives \(S(y)\ge \frac{2}{1-\cos\alpha}\).
    Since \(0<\alpha<\pi/2\), the concavity of \(\cos\) on
    \([0,\pi/2]\) implies
    \[
            \cos\alpha>1-\frac{2\alpha}{\pi} \quad\Longrightarrow\quad \frac{2}{1-\cos\alpha}>\frac{\pi}{\alpha}.
    \]
    Thus
    \[
            F'(y)=S(y)-\frac{\pi}{\alpha}>0.
    \]
    Since \(F(0)=0\), we obtain \(F(y)>0\) for every \(y>0\). 
    Equivalently, $\rho\lambda^\star<1$.
\end{proof}

\subsection{Proof of the marginal lower bound}
\label{sec:lb}

\begin{proof}[Proof of \Cref{lem:marginal-lower-bound}]
Fix an edge $e = \{u,v\} \in E$. To prove the marginal lower bound, it suffices to consider the worst pinning $\tau \in \{0,1\}^{E \setminus \{e\}}$ for all edges except $e$. Suppose $d_u$ edges incident to $u$ are pinned to $1$ and $d_v$ edges incident to $v$ are pinned to $1$. Then, the conditional distribution $\mu^\tau_{e}$ satisfies
\begin{align*}
\mu^\tau_{e}(1) &= \frac{\theta (1 + \lambda (-\rho)^{d_u+1}) (1 + \lambda (-\rho)^{d_v+1}) }{\theta (1 + \lambda (-\rho)^{d_u+1}) (1 + \lambda (-\rho)^{d_v+1}) + (1 + \lambda (-\rho)^{d_u}) (1 + \lambda (-\rho)^{d_v})}\\
&= \frac{1}{1 + \frac{(1 + \lambda (-\rho)^{d_u}) (1 + \lambda (-\rho)^{d_v})}{\theta(1 + \lambda (-\rho)^{d_u+1}) (1 + \lambda (-\rho)^{d_v+1})}} \geq \frac{1}{1+ \frac{(1+\lambda)^2}{\theta (1-\lambda \rho)^2} } = \Omega_{\theta, \lambda, \rho}(1).
\end{align*}
The last inequality holds because $\lambda \rho < 1$ by \Cref{lem:lambda-v-tau} and $\rho < 1$ so that $(1 + \lambda (-\rho)^{d_u}) (1 + \lambda (-\rho)^{d_v}) \leq (1 + \lambda)^2$ and $\theta (1 + \lambda (-\rho)^{d_u+1}) (1 + \lambda (-\rho)^{d_v+1}) \geq \theta(1 - \lambda \rho)^2$.
Similarly,
\begin{align*}
\mu^\tau_{e}(0) &= \frac{ (1 + \lambda (-\rho)^{d_u}) (1 + \lambda (-\rho)^{d_v})}{\theta (1 + \lambda (-\rho)^{d_u+1}) (1 + \lambda (-\rho)^{d_v+1}) + (1 + \lambda (-\rho)^{d_u}) (1 + \lambda (-\rho)^{d_v})} \\
&= \frac{1}{1 + \frac{\theta(1 + \lambda (-\rho)^{d_u+1}) (1 + \lambda (-\rho)^{d_v+1})}{(1 + \lambda (-\rho)^{d_u}) (1 + \lambda (-\rho)^{d_v})}} \geq \frac{1}{1+ \frac{\theta (1+\lambda)^2}{(1-\lambda \rho)^2} } = \Omega_{\theta, \lambda, \rho}(1).
\end{align*}
By taking the minimum of the two values, we obtain the marginal lower bound $b = \Omega_{\theta, \lambda, \rho}(1)$.
\end{proof}

\subsection{Other missing proofs in \texorpdfstring{\Cref{sec:stability-of-polynomial}}{Section 4}.}
\label{sec:other-proofs}

\begin{proof}[Proof of \Cref{claim:cstar-rstar}]
    
  First we prove $c^\star>0$. To see this, by \Cref{eq:cstar-rstar}, we write $c^\star=\gamma \log (\sqrt{\gamma/\beta})/\Phi$ with $\Phi=\sqrt{\gamma\beta-1}\arctan \sqrt{\gamma\beta-1}-\log (\sqrt{\gamma/\beta})$. We prove $\Phi>0$. We let
  \[
    s=\sqrt{\gamma\beta-1}\quad\Longrightarrow\quad 1+s^2=\gamma\beta
  \]

  We consider
  \[
  h(t)=t\arctan t-\frac{1}{2}\log(1+t^2)\quad\text{for }t>0.
  \]
  \[
  h'(t)=\arctan t+\frac{t}{1+t^2}-\frac{t}{1+t^2}=\arctan t>0\quad\text{for }t>0.
  \]

  We note that $h(0)=0$, so $h(s)>0$ for $s>0$. Therefore, $s\arctan s>\frac{1}{2}\log(1+s^2)=\frac{1}{2}\log(\gamma\beta)$ for $s>0$. By $\beta>1$, it holds that $\gamma\beta>\frac{\gamma}{\beta}$. Therefore,
  \[
  \frac{1}{2}\log(\gamma\beta)>\frac{1}{2}\log\left(\frac{\gamma}{\beta}\right)=\log \left(\sqrt{\frac{\gamma}{\beta}}\right).
  \]
  Therefore, $\Phi=\sqrt{\gamma\beta-1}\arctan \sqrt{\gamma\beta-1}-\log (\sqrt{\gamma/\beta})>0$, which implies $c^\star>0$.

  Next we prove $r^\star-c^\star>1$. By \Cref{eq:cstar-rstar},
    \[
      (r^\star)^2=(c^\star)^2+\frac{2c^\star}{\beta}+\frac{\gamma}{\beta}.
    \]
    Since $c^\star>0$, proving $r^\star-c^\star>1$ is equivalent to prove $(r^\star)^2>(c^\star+1)^2$, which is equivalent to show
    \[
      \frac{2c^\star}{\beta}+\frac{\gamma}{\beta}>2c^\star+1.
    \]
    Rearranging gives
    \[
      c^\star<\frac{\gamma-
    \beta}{2(\beta-1)}.
    \]
    For notational simplicity, we substitute $c^\star=\gamma L/(s\alpha-L)$ with parameters $L=\log \sqrt{\frac{\gamma}{\beta}}$, $s=\sqrt{\gamma\beta-1}$, $\alpha=\arctan s$, we reduce this to show
    \begin{equation}\label{eq:Hineq}
      (\gamma-
    \beta)s\alpha>(2\beta\gamma-
    \beta-
    \gamma)L.
    \end{equation}
    To verify \eqref{eq:Hineq}, we write $q=\gamma/\beta>1$ and define
    \[
      H_\beta(q)=(q-1)s\arctan s-
      \frac{(2\beta-1)q-1}{2}\log q,
      \qquad s=\sqrt{\beta^2q-1}.
    \]
    Then \eqref{eq:Hineq} is equivalent to $H_\beta(q)>0$.  At $\beta=1$,
    \[
      H_1(q)=(q-1)\left(s\arctan s-
      \frac12\log(1+s^2)\right),
      \qquad s=\sqrt{q-1},
    \]
    By former computation, we already prove $H_1(q)=h(q)>0$.  For $\beta>1$, we prove $H_\beta(q)$ increases as $\beta$ increases. We prove $\partial H_\beta(q)/\partial \beta>0$. By
    \[
    s=\sqrt{\beta^2q-1}\quad\Longrightarrow\quad \frac{\partial s}{\partial \beta}=\frac{q\beta}{s}.
    \]
    Then it holds that
    \[
    \frac{\partial H_\beta(q)}{\partial \beta}=(q-1)\left(\frac{q\beta}{s}\arctan s+\frac{1}{\beta}\right)-q\log q.
    \]
    
    Using $\arctan s>\frac{s}{\sqrt{1+s^2}}$, we have
    \[
    \frac{q\beta}{s}\arctan s>\frac{q\beta}{\sqrt{1+s^2}}=\frac{q\beta}{\sqrt{q}\beta}=\sqrt{q}.
    \]
    Then it holds that
    \begin{align*}
    \frac{\partial H_\beta(q)}{\partial \beta}&>(q-1)\sqrt{q}-q\log q\\
    &=(t^2-1)t-2t^2\log t,\quad \text{where  } t=\sqrt{q}\\
    &=t^2\left(t-\frac{1}{t}-2\log t\right).
    \end{align*}

    Let $f(t)=t^2(t-\frac{1}{t}-2\log t)$. By direct computation, we have $f(1)=0$ and $f'(t)>0$ for $t>1$. Therefore, $f(t)>0$ for $t>1$. Therefore, $\partial H_\beta(q)/\partial \beta>0$. Therefore, $H_\beta(q)>H_1(q)>0$. So we have $c^\star<\frac{\gamma-\beta}{2(\beta-1)}$, which is equavalent to $r^\star-c^\star>1$.
  \end{proof}

\begin{proof}[Proof of \Cref{claim:one-over-rho}]
    We prove that \(1/\rho \notin \operatorname{cl}(D(c^\star,r^\star))\).
    Since \(1/\rho\) and \(c^\star\) are real, it is enough to show
    \[
    \left(\frac1\rho-c^\star\right)^2>(r^\star)^2 .
    \]
    Recall that
    \[
    \rho=\frac{\beta-1}{\gamma-1},
    \qquad
    \frac1\rho=\frac{\gamma-1}{\beta-1},
    \]
    and
    \[
    (r^\star)^2=\left(c^\star+\frac1\beta\right)^2
    +\frac{\beta\gamma-1}{\beta^2}.
    \]
    Hence
    \[
    \left(\frac1\rho-c^\star\right)^2-(r^\star)^2
    =
    \left(\frac1\rho\right)^2
    -\frac{\gamma}{\beta}
    -2c^\star\left(\frac1\rho+\frac1\beta\right).
    \]
    A direct simplification gives
    \[
    \left(\frac1\rho\right)^2-\frac{\gamma}{\beta}
    =
    \frac{(\gamma-\beta)(\beta\gamma-1)}
    {\beta(\beta-1)^2},
    \]
    and
    \[
    \frac1\rho+\frac1\beta
    =
    \frac{\beta\gamma-1}{\beta(\beta-1)}.
    \]
    Therefore
    \[
    \left(\frac1\rho-c^\star\right)^2-(r^\star)^2
    =
    \frac{\beta\gamma-1}{\beta(\beta-1)}
    \left(
    \frac{\gamma-\beta}{\beta-1}-2c^\star
    \right).
    \]
    Thus it suffices to prove
    \[
    2c^\star<\frac{\gamma-\beta}{\beta-1}.
    \]
    which is already justified in the proof of \Cref{claim:cstar-rstar}.
\end{proof}

\begin{proof}[Proof of \Cref{lem:disk-separation}]
    The two roots of $\beta x^2+2x+\gamma=0$ are $\zeta_\pm=\frac{-1\pm i\sqrt{\beta\gamma-1}}{\beta}$.
    By the definition of \(r^\star\), we have 
    \[
    (r^\star)^2=\left(c^\star+\frac1\beta\right)^2+\frac{\beta\gamma-1}{\beta^2}=|\zeta_\pm-c^\star|^2.
    \] 
    Therefore both roots lie on the boundary of the disk \(D(c^\star,r^\star)\). Since \(\epsilon>0\), we have
    \[
    |\zeta_\pm-c^\star|=r^\star>r^\star-\epsilon. 
    \]
    Thus neither root belongs to the closed disk \(\cl{D(c^\star,r^\star-\epsilon)}\). Equivalently, \(\cl{D(c^\star,r^\star-\epsilon)}\cap\{\zeta_1,\zeta_2\}=\emptyset\).
\end{proof}

\begin{proof}[Proof of \Cref{claim:rho-theta}]
    Recall that
    \[
            \rho=\frac{\beta-1}{\gamma-1},
            \qquad
            \theta=\frac{(\gamma-1)^2}{\beta\gamma-1}.
    \]
    Therefore
    \[
            \rho\theta
            =
            \frac{\beta-1}{\gamma-1}\cdot
            \frac{(\gamma-1)^2}{\beta\gamma-1}
            =
            \frac{(\beta-1)(\gamma-1)}{\beta\gamma-1}.
    \]
    Since \(\beta>1\) and \(\gamma>1\), we have
    \[
            \beta\gamma-1-(\beta-1)(\gamma-1)
            =
            \beta+\gamma-2
            >
            0.
    \]
    Hence
    \[
            (\beta-1)(\gamma-1)<\beta\gamma-1,
    \]
    and consequently $\rho\theta<1$.
\end{proof}

\ifthenelse{\boolean{doubleblind}}
{
}
{
    \section*{Acknowledgments} Weiming Feng acknowledges the support of ECS grant 27202725 from Hong Kong RGC. Heng Guo has received funding from the European Research Council (ERC) under the European Union's Horizon 2020 research and innovation programme (grant agreement No.~947778).
}

\bibliographystyle{alpha}
\bibliography{ref}

\newcommand{\etalchar}[1]{$^{#1}$}
\begin{thebibliography}{DGGJ04}

\bibitem[AFF{\etalchar{+}}25]{AFFGW25}
Konrad Anand, Weiming Feng, Graham Freifeld, Heng Guo, and Jiaheng Wang.
\newblock Approximate counting for spin systems in sub-quadratic time.
\newblock {\em TheoretiCS}, 4, 2025.

\bibitem[ALO24]{ALO24}
Nima Anari, Kuikui Liu, and Shayan {Oveis Gharan}.
\newblock Spectral independence in high-dimensional expanders and applications
  to the hardcore model.
\newblock {\em {SIAM} J. Comput.}, 53(6):S20--1, 2024.

\bibitem[Asa70]{Asa70}
Taro Asano.
\newblock Theorems on the partition functions of the {H}eisenberg ferromagnets.
\newblock {\em J. Phys. Soc. Japan}, 29:350--359, 1970.

\bibitem[Bar16]{Bar16}
Alexander~I. Barvinok.
\newblock {\em Combinatorics and Complexity of Partition Functions}, volume~30
  of {\em Algorithms and combinatorics}.
\newblock Springer, 2016.

\bibitem[BB09]{BorceaBranden2009}
Julius Borcea and Petter Br{\"a}nd{\'e}n.
\newblock P{\'o}lya--{S}chur master theorems for circular domains and their
  boundaries.
\newblock {\em Ann. of Math.}, 170(1):465--492, 2009.

\bibitem[CCLZ26]{CCLZ26}
Xiaoyu Chen, Zongchen Chen, Kuikui Liu, and Xinyuan Zhang.
\newblock Subquadratic counting via perfect marginal sampling.
\newblock {\em arXiv}, abs/2604.02235, 2026.

\bibitem[CCYZ25]{CCYZ25}
Xiaoyu Chen, Zongchen Chen, Yitong Yin, and Xinyuan Zhang.
\newblock Rapid mixing at the uniqueness threshold.
\newblock In {\em {STOC}}, pages 879--890. {ACM}, 2025.

\bibitem[CG24]{CG24}
Zongchen Chen and Yuzhou Gu.
\newblock Fast sampling of $b$-matchings and $b$-edge covers.
\newblock In {\em {SODA}}, pages 4972--4987. {SIAM}, 2024.

\bibitem[CLV21]{CLV21}
Zongchen Chen, Kuikui Liu, and Eric Vigoda.
\newblock Optimal mixing of {G}lauber dynamics: entropy factorization via
  high-dimensional expansion.
\newblock In {\em {STOC}}, pages 1537--1550. {ACM}, 2021.

\bibitem[CLV23]{CLV23}
Zongchen Chen, Kuikui Liu, and Eric Vigoda.
\newblock Rapid mixing of {G}lauber dynamics up to uniqueness via contraction.
\newblock {\em {SIAM} J. Comput.}, 52(1):196--237, 2023.

\bibitem[CLV24]{CLV24}
Zongchen Chen, Kuikui Liu, and Eric Vigoda.
\newblock Spectral independence via stability and applications to {H}olant-type
  problems.
\newblock {\em TheoretiCS}, 3, 2024.

\bibitem[CZ23]{CZ23}
Xiaoyu Chen and Xinyuan Zhang.
\newblock A near-linear time sampler for the {I}sing model with external field.
\newblock In {\em {SODA}}, pages 4478--4503. {SIAM}, 2023.

\bibitem[DGGJ04]{DGGJ04}
Martin~E. Dyer, Leslie~Ann Goldberg, Catherine~S. Greenhill, and Mark Jerrum.
\newblock The relative complexity of approximate counting problems.
\newblock {\em Algorithmica}, 38(3):471--500, 2004.

\bibitem[ES88]{ES88}
Robert~G. Edwards and Alan~D. Sokal.
\newblock Generalization of the {F}ortuin-{K}asteleyn-{S}wendsen-{W}ang
  representation and {M}onte {C}arlo algorithm.
\newblock {\em Phys. Rev. D (3)}, 38(6):2009--2012, 1988.

\bibitem[FGW23]{FGW23}
Weiming Feng, Heng Guo, and Jiaheng Wang.
\newblock Swendsen-{W}ang dynamics for the ferromagnetic {I}sing model with
  external fields.
\newblock {\em Inf. Comput.}, 294:105066, 2023.

\bibitem[FGY26]{feng2026rapid}
Weiming Feng, Heng Guo, and Minji Yang.
\newblock Rapid mixing in positively weighted restricted {B}oltzmann machines.
\newblock In {\em {FOCS}}, 2026.
\newblock to appear, available at \url{abs/2604.00963}.

\bibitem[FK72]{FK72}
Cees~M. Fortuin and Piet~W. Kasteleyn.
\newblock On the random-cluster model. {I}. {I}ntroduction and relation to
  other models.
\newblock {\em Physica}, 57:536--564, 1972.

\bibitem[GJ09]{GJ09}
Geoffrey~R. Grimmett and Svante Janson.
\newblock Random even graphs.
\newblock {\em Electron. J. Comb.}, 16(1), 2009.

\bibitem[GJ18]{GuoJ18}
Heng Guo and Mark Jerrum.
\newblock Random cluster dynamics for the {I}sing model is rapidly mixing.
\newblock {\em Ann. Appl. Probab.}, 28(2):1292--1313, 2018.

\bibitem[GJP03]{GoldbergJP03}
Leslie~Ann Goldberg, Mark Jerrum, and Mike Paterson.
\newblock The computational complexity of two-state spin systems.
\newblock {\em Random Struct. Algorithms}, 23(2):133--154, 2003.

\bibitem[GL18]{GuoL18}
Heng Guo and Pinyan Lu.
\newblock Uniqueness, spatial mixing, and approximation for ferromagnetic
  2-spin systems.
\newblock {\em {ACM} Trans. Comput. Theory}, 10(4):17:1--17:25, 2018.

\bibitem[GLL20]{GLL20}
Heng Guo, Jingcheng Liu, and Pinyan Lu.
\newblock Zeros of ferromagnetic 2-spin systems.
\newblock In {\em {SODA}}, pages 181--192. {SIAM}, 2020.

\bibitem[GLLZ21]{GLLZ21}
Heng Guo, Chao Liao, Pinyan Lu, and Chihao Zhang.
\newblock Zeros of {H}olant problems: Locations and algorithms.
\newblock {\em {ACM} Trans. Algorithms}, 17(1):4:1--4:25, 2021.

\bibitem[G{\v{S}}V16]{GSV16}
Andreas Galanis, Daniel {\v{S}}tefankovi\v{c}, and Eric Vigoda.
\newblock Inapproximability of the partition function for the antiferromagnetic
  {I}sing and hard-core models.
\newblock {\em Combin. Probab. Comput.}, 25(4):500--559, 2016.

\bibitem[JS93]{JerrumS93}
Mark Jerrum and Alistair Sinclair.
\newblock Polynomial-time approximation algorithms for the {I}sing model.
\newblock {\em {SIAM} J. Comput.}, 22(5):1087--1116, 1993.

\bibitem[Kol18]{Kol18}
Vladimir Kolmogorov.
\newblock A faster approximation algorithm for the {G}ibbs partition function.
\newblock In {\em {COLT}}, volume~75 of {\em Proceedings of Machine Learning
  Research}, pages 228--249. {PMLR}, 2018.

\bibitem[LLY13]{LLY13}
Liang Li, Pinyan Lu, and Yitong Yin.
\newblock Correlation decay up to uniqueness in spin systems.
\newblock In {\em {SODA}}, pages 67--84. {SIAM}, 2013.

\bibitem[LLZ14]{LiuLZ14}
Jingcheng Liu, Pinyan Lu, and Chihao Zhang.
\newblock The complexity of ferromagnetic two-spin systems with external
  fields.
\newblock In {\em {RANDOM}}, LIPIcs, pages 843--856. Schloss Dagstuhl -
  Leibniz-Zentrum f{\"{u}}r Informatik, 2014.

\bibitem[PR17]{PR17}
Viresh Patel and Guus Regts.
\newblock Deterministic polynomial-time approximation algorithms for partition
  functions and graph polynomials.
\newblock {\em {SIAM} J. Comput.}, 46(6):1893--1919, 2017.

\bibitem[Rue71]{Rue71}
David Ruelle.
\newblock Extension of the {L}ee-{Y}ang circle theorem.
\newblock {\em Phys. Rev. Lett.}, 26:303--304, 1971.

\bibitem[Sly10]{Sly10}
Allan Sly.
\newblock Computational transition at the uniqueness threshold.
\newblock In {\em {FOCS}}, pages 287--296. {IEEE} Computer Society, 2010.

\bibitem[SS14]{SS14}
Allan Sly and Nike Sun.
\newblock Counting in two-spin models on {$d$}-regular graphs.
\newblock {\em Ann. Probab.}, 42(6):2383--2416, 2014.

\bibitem[SST14]{SST14}
Alistair Sinclair, Piyush Srivastava, and Marc Thurley.
\newblock Approximation algorithms for two-state anti-ferromagnetic spin
  systems on bounded degree graphs.
\newblock {\em J. Stat. Phys.}, 155(4):666--686, 2014.

\bibitem[SVV09]{SVV09}
Daniel Stefankovic, Santosh~S. Vempala, and Eric Vigoda.
\newblock Adaptive simulated annealing: {A} near-optimal connection between
  sampling and counting.
\newblock {\em J. {ACM}}, 56(3):18:1--18:36, 2009.

\bibitem[Val08]{Valiant08}
Leslie~G. Valiant.
\newblock Holographic algorithms.
\newblock {\em {SIAM} J. Comput.}, 37(5):1565--1594, 2008.

\bibitem[vdW41]{Wae41}
Bartel~L. van~der Waerden.
\newblock Die lange {R}eichweite der regelm{\"a}{\ss}igen {A}tomanordnung in
  {M}ischkristallen.
\newblock {\em Zeitschrift f{\"u}r Physik}, 118(7):473--488, 1941.

\bibitem[Wei06]{weitz2006counting}
Dror Weitz.
\newblock Counting independent sets up to the tree threshold.
\newblock In {\em {STOC}}, pages 140--149. {ACM}, 2006.

\end{thebibliography}

\end{document}